\keywords{%
  Beth definability,
  Determinacy,
  Nested relational calculus,
  Nested relations,
  Proof theory,
  Rewriting,
  Synthesis,
  Views.
} 
\newcommand{\leqnomode}{\tagsleft@true\let\veqno\@@leqno}
\newcommand{\reqnomode}{\tagsleft@false\let\veqno\@@eqno}
\definecolor{darkred}{rgb}{0.7,0,0}
\definecolor{darkblue}{rgb}{0,0,0.7}
\definecolor{colch}{rgb}{0.1,0.5,0.2}
\newcommand{\membersub}{\ni}
\newcommand{\vv}[1]{\vec{#1}}
\newcommand{\vvi}{\vv{i}}
\newcommand{\vva}{\vv{a}}
\newcommand{\vvb}{\vv{b}}
\newcommand{\kw}[1]{{\mathsf{#1}}\xspace}
\newcommand{\bigO}{\mathcal{O}}
\newcommand{\trans}{{\mathcal T}}
\newcommand{\freevars}{\kw{FV}}
\newcommand{\funall}{\kw{Fun}_{\kw{All}}}
\newcommand{\funfin}{\kw{Fun}_{\kw{Fin}}}
\newcommand{\groupq}{\kw{Group}}
\newcommand{\projq}{\kw{Proj}}
\newcommand{\filterq}{\kw{Filter}}
\newcommand{\convert}{\kw{Convert}}
\newcommand{\obj}{\kw{o}}
\newcommand{\oneobjth}{\kw{O}}
\newcommand{\oneobj}{\kw{obj}}
\newcommand{\inobj}{\kw{o}_{in}}
\newcommand{\outobj}{\kw{o}_{out}}
\newcommand{\map}{\kw{Map}}
\newcommand{\flatten}{\kw{Flatten}}
\newcommand{\booltt}{\kw{tt}}
\newcommand{\boolff}{\kw{ff}}
\newcommand{\nrcwget}{\nrc}%
\newcommand{\collapse}{\kw{Collapse}}
\newcommand{\exptime}{\kw{EXPTIME}}
\newtheorem{definition}[thm]{Definition}       
\renewcommand{\phi}{\varphi}
\newcommand{\issing}{\kw{IsSing}}
\newcommand{\sing}{\kw{Sing}}
\newcommand{\istwo}{\kw{IsTwo}}
\newcommand{\image}{\kw{Im}}
\newcommand{\allpairs}{\kw{AllPairs}}
\newcommand{\pair}{\kw{Pair}}
\newcommand{\subtype}{\leq}
\newcommand{\interpsort}{\tau}
\newcommand{\isnonuniformdefiner}{\delta}
\newcommand{\vflat}{V}
\newcommand{\basenest}{B}
\newcommand{\lossless}{\kw{lossless}}
\newcommand{\modelssem}{\models_{\kw{nested}}}
\newcommand{\equi}{\leftrightarrow}
\newcommand{\imp}{\rightarrow}
\newcommand{\inq}{\in}
\newcommand{\memmac}{\mathrel{\hat{\in}}}
\definecolor{colch}{rgb}{0.1,0.5,0.2}
\newcommand{\toconsider}[1]{} %
\newcommand{\la}{\langle}
\newcommand{\<}{\la}
\newcommand{\ra}{\rangle}
\renewcommand{\>}{\ra}
\newcommand{\ipol}[2]{\; :\; \la #1, #2\ra}
\newcommand{\ipolnarrow}[2]{: \la #1, #2\ra}
\newcommand{\defname}[1]{\emph{#1}}
\newcommand{\name}[1]{\emph{#1}}
\newcommand{\valid}{\models}
\newcommand{\entails}{\models}
\newcommand{\sep}{;\;\;}
\newcommand{\D}{\mathcal{D}\xspace}
\newcommand{\vs}{\vv{v}}
\newcommand{\tabrulename}[1]{\raisebox{-2.0ex}[0pt][0pt]{\small #1}}
\newcommand{\tabrulecond}[1]{\raisebox{-2.0ex}[0pt][0pt]{$#1$}}
\newcommand{\rulename}[1]{\textsc{#1}}
\newcommand{\SL}{\Gamma_L}
\newcommand{\SR}{\Gamma_R}
\newcommand{\ADDVARSLAM}{\vv{l}}
\newcommand{\ADDVARSRHO}{\vv{r}}
\newcommand{\ADDPREDSLAM}{\pred(\lambda)}
\newcommand{\ADDPREDSRHO}{\pred(\rho)}
\newcommand{\pdepd}{\kw{PDEPD}}
\newcommand{\ADDLVARS}{\FV^{\mathsf{RHS}}}
\newcommand{\PREDD}{\pred^{\mathsf{RHS}}}
\newcommand{\exdef}{\exists^{\mathsf{RHS}}}
\newcommand{\substdef}[2]{[#2/#1]^{\mathsf{RHS}}}%
\newcommand{\pospolarity}{\kw{EL}}%
\newcommand{\pospolaritysuper}{\kw{EL}}%
\newcommand{\subst}[2]{[#2/#1]}%
\newcommand{\prooftreesymbol}
{{\footnotesize
\setlength{\arraycolsep}{0pt}
\begin{array}{rcl}
.\;\;\;\;\;&.&\;\;\;\;.\\[-1.8ex]  
.\;\;\;\;&.&\;\;\;.\\[-1.8ex]
.\;\;\;&.&\;\;.\\[-1.8ex]
.\;\;&.&\;.\\[-1.8ex]
.\;&.&.\\[-1.8ex]
&.&
\end{array}}}
\newcolumntype{Z}[1]{>{\raggedleft\let\newline\\\arraybackslash\hspace{0pt}$}p{#1}<$}
\newcolumntype{X}[1]{>{\raggedright\let\newline\\\arraybackslash\hspace{0pt}$}p{#1}<$}
\newcolumntype{Y}[1]{>{\centering\let\newline\\\arraybackslash\hspace{0pt}$}p{#1}<$}          
\newenvironment{arrayprf}
               {\begin{array}{Z{1.8em}@{\hspace{1em}}l@{\hspace{0.5em}}l}}
               {\end{array}}
\newcounter{subprop}[thm]
\newcommand{\mset}{multiset\xspace}
\newcommand{\msets}{multisets\xspace}
\newcommand{\seq}{\vdash}
\newcommand{\pred}{\mathit{PR}}
\newcommand\colL[1]{\textcolor{darkred}{#1}}
\newcommand\colR[1]{\textcolor{darkblue}{#1}}
\newcommand\IH{^{\mathsf{IH}}}
\newcommand{\proves}{\vdash}
\newcommand{\myparagraph}[1]{\paragraph{#1}}
\newcommand{\myeat}[1]{}
\newcommand{\deltazero}{\Delta_0}
\newcommand{\incontext}{\in}
\newcommand\bnfeq{\mathrel{::=}}
\newcommand\bnfalt{\; | \;}
\newcommand\sett{{\sf Set}}
\newcommand{\nrc}{\kw{NRC}}
\newenvironment{myexmp}{\refstepcounter{myexmp}\par\medskip
\noindent\textbf{Example~\themyexmp.}}{\null\hfill$\triangleleft$\medskip}
\newcounter{myexmp}[section]
\renewcommand{\themyexmp}{\thesection.\arabic{myexmp}}
\newcommand{\inschema}{\aschema_{in}}
\newcommand{\outschema}{\aschema_{out}}
\newcommand{\aschema}{{\mathcal SCH}}
\newcommand{\domainof}{\kw{Domain}}
\newcommand{\unit}{\kw{Unit}}
\newcommand{\unitobj}{\<\>}
\newcommand\lam\lambda
\newcommand\case{{\sf case}}
\newcommand\bbN{\mathbb{N}}
\newcommand\eqdef{\mathrel{:=}}
\newcommand{\interp}{{\mathcal I}}
\newcommand{\nrcget}{\textsc{get}}
\newcommand\ur{\mathfrak{U}}
\newcommand{\ptime}{\kw{PTIME}}
\newcommand{\booltype}{\kw{Bool}}
\newcommand{\sig}{{\mathcal SIG}}
\newcommand{\sorts}{\kw{Sorts}}
\newcommand{\bool}{\booltype} %
\newcommand{\smallsort}{\sort_0}
\newcommand{\bigsort}{\sort_1}
\newcommand{\smallsorts}{\sorts_0}
\newcommand{\bigsorts}{\sorts_1}
\newcommand{\sort}{\kw{S}}
\newcommand{\ursort}{\ur}
\newcommand{\true}{\kw{True}}
\newcommand{\verify}{\kw{Verify}}
\newcommand\FV{FV}
\newcommand\tuple[1]{\langle #1 \rangle}
\newcommand\myLeftLabel[1]{\LeftLabel{\scriptsize #1}}
\newcommand{\normalized}{$\EL$-normalized\xspace}
\newcommand{\normalization}{$\EL$-normalization\xspace}
\newcommand{\Normalization}{$\EL$-Normalization\xspace}
\newcommand{\EL}{\pospolaritysuper}
\newcommand{\rulepair}{\times_\eta}
\newcommand{\ruleproj}{\times_\beta}
\newcommand{\ruledominated}{rule-do\-mi\-na\-ted\xspace}
\newcommand{\ALruledominated}{${\lor}{\land}\forall$-rule-do\-mi\-na\-ted\xspace}
\newcommand{\fdot}{\,.\,}
\newcommand{\nequ}{\neq_{\ur}}
 \newcommand{\focused}{$\EL$-normalized\xspace}
\newcommand{\folfocused}{FO-normalised\xspace}
\newcommand\cA{\mathcal{A}}
\newcommand\cB{\mathcal{B}}
\newcommand\cC{\mathcal{C}}
\newcommand\cD{\mathcal{D}}
\newcommand\cE{\mathcal{E}}
\newcommand\cF{\mathcal{F}}
\newcommand\cG{\mathcal{G}}
\newcommand\cH{\mathcal{H}}
\newcommand\cI{\mathcal{I}}
\newcommand\cJ{\mathcal{J}}
\newcommand\cK{\mathcal{K}}
\newcommand\cL{\mathcal{L}}
\newcommand\cM{\mathcal{M}}
\newcommand\cN{\mathcal{N}}
\newcommand\cO{\mathcal{O}}
\newcommand\cP{\mathcal{P}}
\newcommand\cQ{\mathcal{Q}}
\newcommand\cR{\mathcal{R}}
\newcommand\cS{\mathcal{S}}
\newcommand\cT{\mathcal{T}}
\newcommand\cU{\mathcal{U}}
\newcommand\cV{\mathcal{V}}
\newcommand\cW{\mathcal{W}}
\newcommand\cX{\mathcal{X}}
\newcommand\cY{\mathcal{Y}}
\newcommand\cZ{\mathcal{Z}}
\newcommand\bA{\mathbb{A}}
\newcommand\bB{\mathbb{B}}
\newcommand\bC{\mathbb{C}}
\newcommand\bD{\mathbb{D}}
\newcommand\bE{\mathbb{E}}
\newcommand\bF{\mathbb{F}}
\newcommand\bG{\mathbb{G}}
\newcommand\bH{\mathbb{H}}
\newcommand\bI{\mathbb{I}}
\newcommand\bJ{\mathbb{J}}
\newcommand\bK{\mathbb{K}}
\newcommand\bL{\mathbb{L}}
\newcommand\bM{\mathbb{M}}
\newcommand\bN{\mathbb{N}}
\newcommand\bO{\mathbb{O}}
\newcommand\bP{\mathbb{P}}
\newcommand\bQ{\mathbb{Q}}
\newcommand\bR{\mathbb{R}}
\newcommand\bS{\mathbb{S}}
\newcommand\bT{\mathbb{T}}
\newcommand\bU{\mathbb{U}}
\newcommand\bV{\mathbb{V}}
\newcommand\bW{\mathbb{W}}
\newcommand\bX{\mathbb{X}}
\newcommand\bY{\mathbb{Y}}
\newcommand\bZ{\mathbb{Z}}
\newcommand{\varpathref}{\mathcal{R}\mathit{ef}}
\newcommand{\existsvars}[1]{\exists #1|}
\newcommand{\elimprojvar}{\mathit{Elim}\text{-}\pi\text{-}\mathit{var}}
\newcommand{\elimprojpair}{\mathit{Elim}\text{-}\pi\text{-}\mathit{pair}}
\newcommand{\goalList}{\tilde\cG}
\newenvironment{highlight}
{\begin{equation*}\begin{minipage}{0.85\textwidth}\it}
{\end{minipage}\end{equation*}}
\begin{document}

\title[Synthesizing Nested Relational Queries]
      {Synthesizing Nested Relational Queries from Implicit Specifications: Via Model Theory and via
        Proof Theory}

\author[M.~Benedikt]{Michael Benedikt\lmcsorcid{0000-0003-2964-0880}}[a]
\author[C.~Pradic]{C\'ecilia Pradic\lmcsorcid{0000-0002-1600-8846}}[b]
\author[C.~ Wernhard]{Christoph Wernhard\lmcsorcid{0000-0002-0438-8829}}[c]

\address{University of Oxford, UK}
\email{michael.benedikt@cs.ox.ac.uk}

\address{University of Swansea, UK}
\email{c.pradic@swansea.ac.uk}

\address{University of Potsdam, Germany}
\email{christoph.wernhard@uni-potsdam.de}

\begin{abstract}
\noindent Derived datasets can be defined implicitly or explicitly. An
\emph{implicit definition} (of dataset $O$ in terms of datasets $\vec I$) is a
logical specification involving two distinguished sets of relational symbols.
One set of relations is for the ``source data'' $\vec I$, and the other is for the
``interface data'' $O$. Such a specification is a valid definition of $O$ in terms of $\vec I$, if any two
models of the specification agreeing on $\vec I$ agree on $O$. In contrast, an
\emph{explicit definition} is a transformation (or ``query'' below) that produces $O$ from $\vec I$.
Variants of \emph{Beth's theorem} \cite{beth} state that one can convert
implicit definitions to explicit ones. Further, this conversion can be done
effectively given a proof witnessing implicit definability in a suitable proof
system.

We prove the analogous implicit-to-explicit result for nested relations:
implicit definitions, given in the natural logic for nested relations, can be
converted to explicit definitions in the nested relational calculus ($\nrc$).
We first provide a model-theoretic argument for this result, which makes some
additional connections that may be of independent interest, between $\nrc$
queries, \emph{interpretations}, a standard mechanism for defining
structure-to-structure translation in logic, and between interpretations and
implicit to definability ``up to unique isomorphism''. The latter connection
uses a variation of a result of Gaifman concerning ``relatively categorical'' theories. We also provide a
proof-theoretic result that provides an effective argument: from a proof
witnessing implicit definability, we can efficiently produce an $\nrc$
definition. This will involve introducing the appropriate proof system for
reasoning with nested sets, along with some auxiliary Beth-type results for
this system.
As a consequence, we can effectively extract rewritings of $\nrc$ queries in
terms of $\nrc$ views, given a proof witnessing that the query is determined
by the views.
\end{abstract}

\maketitle

\section{Introduction}

One way of describing a virtual datasource is via \emph{implicit definition}:
a  specification $\Sigma$ -- e.g. in logic --  involving
 symbols for the  ``virtual'' object $O$ and the stored ``input'' data $\vec I$. The specification
may mention other data objects (e.g. auxiliary views). But to be an implicit definition,
any two models of $\Sigma$ that agree on $\vec I$ must agree on $O$.
In the case where $\Sigma$ is in first-order logic,
this hypothesis can be expressed as a first-order entailment, using two copies of the vocabulary, primed
and unprimed, representing the two models:

\begin{equation}
  \tag{$\star$}
\Sigma \wedge \Sigma' \wedge \bigwedge_{I_i \in \vec I} \forall \vec x_i ~ [I_i(\vec x_i) \leftrightarrow I'_i(\vec x_i)] \models
\forall \vec x ~ [O(\vec x) \leftrightarrow O'(\vec x)]
\end{equation}

Above $\Sigma'$ is a copy of $\Sigma$ with primed versions of each predicate.

A fundamental result in logic states that we can replace an implicit definition
with an \emph{explicit definition}: a first-order query $Q$ such that whenever
$\Sigma(\vec I, O, \ldots)$ holds, $O=Q(\vec I)$. The original result of this kind
is \emph{Beth's theorem} \cite{beth}, which deals with classical first-order logic.
Segoufin and Vianu's \cite{svconf} looks at the case where
$\Sigma$ is in \emph{active-domain first-order logic}, or equivalently a Boolean
relational algebra expression. Their conclusion is that one can produce an
explicit definition of $O$ over $\vec I$ in  relational
algebra. \cite{svconf} focused on the special case where $\Sigma(I_1 \ldots I_j,
\vec B, O)$
specifies each $I_i$ as  a view defined
by an active-domain first-order formula $\phi_{V_i}$ over base
data $\vec B$, and also defines $O$ as an active-domain first-order query $\phi_Q$ over
$\vec B$. In this case, $\Sigma$ implicitly defining $O$ in terms of $\vec I$
is called ``determinacy of the query  by the views''.
Segoufin and Vianu's result implies that \emph{whenever  a relational algebra  query
$Q$ is determined by relational algebra views $\vec V$,
then  $Q$ is rewritable over the views by a relational algebra query}.

Prior Beth-style results like \cite{beth,svconf}  are effective. From a proof of the entailment $(\star)$
 in a suitable
proof system, one can extract an explicit definition effectively,
even in polynomial time.
While in early proofs of Beth's theorem, the proof systems were custom-designed
for  the task of proving implicit definitions, and the bounds were not stated,
later on
standard proof systems such as tableaux
\cite{smullyan} or resolution \cite{huang} were employed, and the polynomial claim was explicit.
It is important that in our definition of implicit definability, we require the
existence of a proof witness. By the completeness theorem for first-order logic,
requiring such a proof witness is equivalent to demanding that implicit definability
of $O$ over $\vec I$ holds for all instances, not just finite ones.

\emph{Nested relations} are a natural data model for hierarchical data.
Nested relations are objects within a type system built up from basic types via tupling
and a set-former.
In the 1980's  and 90's, a number of algebraic languages were proposed for defining
transformations on 
nested collections. 
Eventually a standard
language emerged, the
 \emph{nested relational calculus} ($\nrc$). 
The language is strongly-typed and  functional, with transformations 
built up via tuple manipulation
operations  as well as operators for lifting 
transformations over a type $T$ to transformations taking 
as input a set of objects of type $T$, such as singleton
constructors and a mapping operator.
One common formulation of these uses variables and
a ``comprehension'' operator for forming new objects from old ones
\cite{natj}, while  an alternative
algebraic formalism
 presents the language as a set of operators that can be freely composed. It was
shown that each $\nrc$ expression can be evaluated in  polynomial time in the size
of a finite data input, and that 
when the input and output is ``flat'' (i.e. only one level of nesting), $\nrc$ expresses exactly the 
transformations in the standard relational database  language relational algebra.
Wong's thesis \cite{limsoonthesis} summarizes the argument made by this line of work
 ``$\nrc$ can be profitably regarded
as the `right' core for nested relational languages''. 
$\nrc$ has been the basis
for most work on transforming nested relations. It is the basis for a number of 
commercial   tools \cite{dremel}, including those embedding nested data transformations  in
programming languages \cite{linq}, in addition to having influence in the effective implementation of data transformations in 
functional programming languages   \cite{gibbonshenglein,ringads}.

Although $\nrc$ can be applied to other collection types, such as bags and lists,
we will focus here on just nested sets.  We will  show a new connection between
$\nrc$ and first-order logic.

There is a natural logic for describing properties of nested relations, the well-known
\emph{$\deltazero$ formulas}, built up from equalities using
quantifications $\exists x \in \tau$ and $\forall y \in \tau$ where $\tau$ is a term.
For example, formula $\forall x \in c ~ \pi_1(x) \in \pi_2(x)$ might describe a property
of a nested relation $c$ that
is a set of pairs, where the first component of a pair is of
some type $T$  and the second component is a set containing elements
of type $T$.
A $\deltazero$ formula $\Sigma(\inobj^1 \ldots \inobj^k, \outobj)$ over 
variables $\inobj^1 \ldots \inobj^k$ and variable $\outobj$  thus defines
a relationship between 
$\inobj^1 \ldots \inobj^k$ and $\outobj$. For such a  formula to define a transformation it must
be \emph{functional}: it must enforce that $\outobj$ is determined by the values of 
$\inobj^1 \ldots \inobj^k$.
More generally, if we have a formula $\Sigma(\inobj^1 \ldots \inobj^k,  \outobj, \vec a)$, we say that 
$\Sigma$
\emph{implicitly defines $\outobj$ as a function of $\inobj^1 \ldots \inobj^k$} if:
\begin{equation}
  \label{eqtext-defines}
  \begin{minipage}{0.85\textwidth}\it
    For each two bindings $\sigma_1$ and $\sigma_2$
    of the variables $\inobj^1 \ldots \inobj^k, \vec a, \outobj$ 
    to nested relations satisfying $\Sigma$, if $\sigma_1$ and $\sigma_2$
    agree on each  $\inobj^i$, then  they agree on $\outobj$.
  \end{minipage}
\end{equation}

That is, $\Sigma$ entails that the value of $\outobj$ is a partial function of  the value
of $\inobj^1 \ldots \inobj^k$.

Note that when we say ``for each binding of variables to  nested relations''  in
the definitions above, we include infinite nested relations as well
as finite ones. 
An alternative characterization of $\Sigma$ being an implicit definition, which will be more relevant 
to us in the second part of the paper,  is that there is a proof that $\Sigma$ defines  a functional relationship.
Note that (\ref{eqtext-defines}) could be expressed as a first-order entailment: 
$\Sigma(\inobj^1 \ldots \inobj^k, \outobj, \vec a) \wedge \Sigma(\inobj^1 \ldots \inobj^k, \outobj', \vec a')
\models \outobj=\outobj'$
where in the entailment we omit some first-order ``sanity axioms''
about tuples and sets.
We refer to a proof of (\ref{eqtext-defines}) for a given $\Sigma$ and subset of
the input variables $\inobj^1 \ldots \inobj^k$, as a \emph{proof that $\Sigma$ implicitly defines
 $\outobj$ as a function
of  $\inobj^1 \ldots \inobj^k$}, or simply a \emph{proof of functionality}
dropping $\Sigma$, $\outobj$, and $\inobj^1 \ldots \inobj^k$ when they are clear
from context.
By the completeness theorem of first-order logic,  whenever $\Sigma$
defines $\outobj$ as a function of
$\inobj^1 \ldots \inobj^k$
according to the semantic definition above, this could be witnessed by a
proof in any of the standard complete proof calculi for classical first-order logic
(e.g. tableaux, resolution).
Such a proof will use  the sanity axioms referred to above,
which capture extensionality of sets, the compatibility
of the membership relation with the type hierarchy, and properties of projections and tupling. This notion of proof is only
presented as an illustration. In the second half of the paper, rather than using
a general-purpose first-order system, we will present more restrictive proof 
calculi that are tailored to reasoning about equivalence of  nested sets
relative to  $\deltazero$ theories.

\begin{exa} \label{ex:invertible}
We consider a specification in logic involving two nested collections, $F$ and $G$.
The collection $F$ is  of type $\sett(\ur \times \ur)$, where $\ur$
refers to the basic set of elements, the ``Ur-elements'' in the sequel.
That is, $F$ is a set of pairs. The collection $G$ is of
type $\sett(\ur \times \sett(\ur))$, a set whose
members are pairs, the first component an element
and the second a set.

Our specification $\Sigma$ will state that for each element $g$ in $G$
there is an element $f_1$ appearing as the first component of a pair in $F$, such that
$g$ represents $f_1$, in the sense that its first component is $f_1$ and
its second component accumulates all elements paired with $f_1$ in $F$.
 This can be specified easily by a $\deltazero$ formula:
\begin{align*}
\forall g \in G ~\exists f \in F \quad&  \pi_1(g)=\pi_1(f)~
\wedge~ \forall x \in \pi_2(g) ~  \<\pi_1(f), x \>  \in F \\
\wedge~~&\forall f' \in F ~~ \left[\pi_1(f')=\pi_1(f) \rightarrow \pi_2(f')  \in \pi_2(g)\right]
\end{align*}

$\Sigma$ also states that for each element $f_1$ lying  within 
a pair in $F$  there is a corresponding element $g$ of $G$ that pairs $f_1$
with all of the elements  linked  with $f$ in $F$. 
\begin{align*}
\forall f  \in F ~ \exists g \in G \quad& \pi_1(g)=\pi_1(f) ~\wedge~ \forall x \in \pi_2(g) ~ \<\pi_1(f), x \> \in F\\
\wedge~~& \forall f' \in F ~~
\left[\pi_1(f')=\pi_1(f) \rightarrow ~ \pi_2(f') \in \pi_2(g)\right]
\end{align*}

We can argue that in a nested relation satisfying
$\Sigma$, $G$ is a function of $F$. Thus $\Sigma$
implicitly defines a function from $F$ to $G$. 

We give the argument
informally here. Fixing $F,G$ and $F,G'$ satisfying $\Sigma$, we will
prove that if $g \in G$ then $g \in G'$. The proof begins by using the conjunct in
the first item to obtain an $f \in F$. We can then use the  
second  item on $G'$ to obtain 
a $g' \in G'$. 
We now need to prove that $g' = g$. Since $g$ and $g'$ are pairs, it suffices to show that their two projections are the same. We can easily see that $\pi_1(g)=\pi_1(f)=\pi_1(g')$, so it suffices
to prove $\pi_2(g')=\pi_2(g)$. Here we will make use of extensionality, arguing for 
containments between $\pi_2(g')$ and $\pi_2(g)$ in both directions.
In one direction we consider an $x \in \pi_2(g')$, and we need  to show
$x$ is in $\pi_2(g)$. By the second conjunct in the second item we have $\<\pi_1(f), x\> \in F$.
Now using the first item we can argue that $x \in \pi_2(g)$. 
In the other direction we consider $x \in \pi_2(g)$, we can apply the first
item to claim $\<\pi_1(f), x\> \in F$ and then employ the second item
to derive $x \in \pi_2(g')$.

Now let us consider $G$ as the input and $F$ as the output.
We cannot say that $\Sigma$ describes $F$ as a \emph{total} function of $G$, since 
$\Sigma$  enforces constraints on $G$:
that the second component of a pair in $G$ cannot be empty, and
that any two pairs in $G$ that agree on the first component must agree on the second.
But we can prove from $\Sigma$ that $F$ is a partial function of $G$: fixing
$F,G$ and $F',G$ satisfying $\Sigma$, we can prove that $F=F'$. 
\end{exa}

Our first contribution is to show
that whenever a $\deltazero$ formula
$\Sigma$ implicitly defines a function $\trans$, that 
function can be expressed in a slight variant of $\nrc$. 
The result can be seen as an analog of the well-known Beth definability
theorem for first-order logic \cite{beth}.

The argument that we employ in our first contribution will go through some
connections that give further insight. We first note that $\nrc$-expressible functions have the
same expressiveness as \emph{interpretations}, a standard way of defining structure-to-structure
transformations using logic. We will need a special notion of interpretation appropriate for nested
sets. We will then establish an equivalence between interpretations and transformations
that are \emph{implicitly definable up to unique isomorphism}. This equivalence will hold in a much more
general setting  of multi-sorted first-order logic.
The argument will be  model-theoretic and non-constructive,
relying on a variation of \emph{Gaifman's coordinatisation theorem} \cite{hodgesbook}. Putting these
two connections together will establish our result.

Our second contribution is an effective version. We will start
by  providing a proof  system 
 which we show is complete for entailments involving $\deltazero$
formulas over nested relations.
An advantage of our system compared to a
 classical first-order proof system mentioned above, is that
we do not require special axioms about sets, like extensionality.
In particular, we never need to reason about formulas that are not $\deltazero$.
We give two variations of the system, one that
is lower-level (and less succinct).
Our effective variant is then:
\begin{equation}
  \begin{minipage}{0.85\textwidth}\it
  From a proof $p$ that $\Sigma$ implicitly defines $o$
  in terms of $\vec i$, we can obtain, in $\ptime$,  an $\nrc$ expression $E$ that explicitly
  defines $o$ from $\vec i$, relative to $\Sigma$.
  \end{minipage}
\end{equation}

\medskip

The $\ptime$ claim refers to the lower-level calculus.
The proof of the effective Beth result uses completely distinct techniques from the model-theoretic 
argument. We synthesize $\nrc$ expressions directly, rather than going through interpretations.
The key to our proof-theoretic analysis is an auxiliary result about equalities between $\deltazero$
expressions, the $\nrc$ Parameter Collection Theorem, Theorem~\ref{thm:mainflatcasebounded}.
It is a kind of interpolation theorem for parameterized definability. Roughly speaking, it
says that when the conjunction of two $\deltazero$  formulas proves that two variables are equal, then
there must be an expression definable with parameters that are common to both formulas that sits between them.
We also make use of the interplay between reasoning about equivalence of nested sets ``up to extensionality'',
without extensionality axioms, and the equality of nested sets in the presence of extensionality axioms.

A special case of our results will be in the case of $\nrc$ views and queries.

\begin{exa} \label{ex:views}
We consider  a variation of Example \ref{ex:invertible},  where our  specification $\Sigma(Q,V,B)$ 
describes  a view  $V$, a query $Q$, as well as some constraints on
the base data $B$.
Our base data $B$ is
of type $\sett(\ur \times \sett(\ur))$, 
where $\ur$
refers to the basic set of elements, the ``Ur-elements''.
That is, $B$ is a set of pairs, where the first item is a data item
and the second is a set of data items.  
View  $V$ is of type $\sett(\ur \times \ur)$, a set of pairs, given
by the query that is the usual ``flattening'' of $B$: in $\nrc$ this
can be expressed as $\{ \< \pi_1(b) , c \> \mid c \in \pi_2(b) \mid b \in B\}$.
The view  definition can be converted to a  specification in our logic.

A query $Q$ might ask for a selection of the pairs in $B$, those whose first
component is contained in the second:
$\{ b \in B ~ | ~ \pi_1(b) \in \pi_2(b) \}$.
The definition of $Q$ can also be incorporated
into our specification.

View $V$ is not sufficient to answer $Q$ in general. This is
the case if 
we assume  as part of $\Sigma$ an integrity constraint stating that the 
first component of $B$ is a key.
We can argue that 
$\Sigma(Q,V,B)$ implicitly defines $Q$ in terms of $V$.

From first our main contribution, it follows that there is an $\nrc$ rewriting of $Q$ in terms of $V$.
From our second contribution, it follows that there is a polynomial time function
that takes as input a proof in a certain proof system formalizing
the implicit definability of $Q$ in terms of $V$,
which produces as output an $\nrc$  rewriting of $Q$ in terms of $V$.
\end{exa}

Overall our results show a close connection between logical specifications
of transformations on nested collections and the functional
transformation language $\nrc$, a result which is not anticipated
by the prior theory.

\myparagraph{Organization}
After a discussion of related work in Section \ref{sec:related},
we provide preliminaries in Section \ref{sec:prelims}. 
Section \ref{sec:bethmodeltheoretic} presents our first main result, the expressive equivalence
of implicit and explicit definitions, proven using model-theoretic techniques. As mentioned above, this
requires an excursion into the relationship between $\nrc$ and a notion of first-order interpretation.
Section \ref{sec:effective} presents the effective version, which relies on a proof system
for reasoning with nested relations.
We close with discussion in Section \ref{sec:conc}.
Some proofs of a routine nature, as well as some auxiliary results,  are deferred to the appendix.

\section{Related work} \label{sec:related}
In addition to the theorems of  Beth and Segoufin-Vianu mentioned in the introduction,
there are numerous works on effective Beth-style results for other logics.
Some concern fragments of classical first-order logic, such as the guarded fragment
\cite{HMO,guardedinterpj};
others deal with non-classical logics such as description logics \cite{balderbethdl}.  The Segoufin-Vianu result is
closely related to variations of Beth's theorem and Craig interpolation  for relativized quantification, such as Otto's
interpolation theorem
\cite{otto}.
There are also effective interpolation and definability results for logics richer
than or incomparable to
first-order logic,
such as fragments of fixpoint logics \cite{interpolationmucalc,
lmcsusinterpolfixedpoint}. There are even Beth-style results for full infinitary logic
\cite{lopezescobar}, but there one can not hope for effectivity.
The connection between Beth-style results and view rewriting originates in \cite{svconf,NSV}.
The idea of using effective Beth results to generate view rewritings from proofs
appears in \cite{franconisafe}, and is explored in more detail first in
\cite{tomanweddell} and later in \cite{interpbook}.

Our first main result relates to Beth theorems  ``up-to-isomorphism''.  Our implicit
definability hypothesis is that two models that satisfy a  specification and
agree on the inputs must agree on the output nested relations, where ``agree on the output''
means up to extensional equivalence
of sets, which is a special (definable) kind of isomorphism.
Beth-like theorems up to isomorphism originate in Gaifman's \cite{gaifman74}.  are studied
extensively by Hodges and his collaborators (e.g. \cite{hodgesnormal,hodgesdugald,hodgesbook}).
  The focus of these works is model-theoretic, with emphasis
on  connections with categoricity and classification in classical model theory.
More specifically, 
\cite{hodgesbook} defines
the notion of \emph{rigidly relatively categorical} which is the single-sorted
analog of the notion of  implicitly interpretable which we will introduce in the model-theoretic part of this work. 
\cite{hodgesbook} does not prove
any  connection of this notion to explicit interpretability, although he proves the equivalence
with a related notion called ``coordinatisability''.
Most  of the ingredients in our main model-theoretic arguments are present in his exposition.
The later unpublished draft \cite{madarasz} extends these ideas to a multi-sorted setting, 
but without full proofs.

\myparagraph{Conference versions}
Our first contribution comes from the conference paper \cite{benediktpradicpopl}. 
Our second result derives from the conference paper \cite{bethpods}.

\section{Preliminaries} \label{sec:prelims}

\subsection{Nested relations}
We deal with schemas that describe objects of various
 \emph{types} given by the following grammar.
$$T, \; U \bnfeq \ur \bnfalt T \times U \bnfalt \unit \bnfalt \sett(T)$$
For simplicity throughout the remainder
we will assume only two basic types. There is the one-element type $\unit$, which will
be used to construct Booleans.
And there is  $\ur$, the ``scalars'' or \emph{Ur-elements} whose inhabitants are not specified further.
From the Ur-elements and a unit type we can build up the set of types via
product and the power set operation.  
We use standard conventions for abbreviating types, with the $n$-ary product abbreviating
an iteration of binary products.
A \emph{nested relational schema} consists of declarations of 
variable names associated to objects of given types. 

\begin{exa} \label{ex:aschema}
An example nested relational schema declares two objects
$R: \sett(\ur \times \ur)$ and $S: \sett( \ur \times \sett(\ur))$.
That is, $R$ is a set of pairs of Ur-elements: a standard ``flat'' binary relation.
$S$ is a collection of pairs whose first elements are Ur-elements
and whose second elements are sets of Ur-elements.
\end{exa}

The types have a natural interpretation.
The unit type has a unique member and the members of $\sett(T)$
are the sets of members of $T$.
An \emph{instance} of such a schema is defined in the obvious way.

For the schema in Example \ref{ex:aschema} above, assuming that $\ur = \mathbb{N}$, one possible instance has
$R = \{ \<4,  6\>, \<7,  3\>\}$ and 
$S = \{ \<4 , \{  6,  9 \} \> \}$.

\subsection{\texorpdfstring{$\deltazero$}{Δ₀} formulas}
We need a logic appropriate for talking about nested relations.
A natural and well-known subset of first-order logic formulas with
a set membership relation are the $\deltazero$ formulas.
They are built up from equality of Ur-elements via Boolean operators
as well as relativized existential and universal quantification. All terms involving
tupling and projections are allowed. 
Our definition of $\deltazero$ formula is a variation of a well-studied notion in set theory \cite{jech}.

Formally, we deal with multi-sorted first-order logic, with sorts corresponding
to each of our types. We use the following syntax for $\deltazero$ formulas and terms.
Terms are built from variables using tupling and projections.
All formulas and terms are assumed to be well-typed in the obvious way, with the
expected sort of $t$ and $u$ being $\ursort$ in expressions
$t =_\ursort u$ and $t \neq_\ursort u$, and
in $\exists t \in_T u ~ \phi$  the sort of $t$ is $T$ and the sort of $u$ is  $\sett(T)$.
$$
\begin{array}{lcl}
t, u &\bnfeq& x \bnfalt \unitobj \bnfalt \< t, u \>  \bnfalt \pi_1(t) \bnfalt \pi_2(t)
\\
\varphi, \psi &\bnfeq& t =_\ursort t' \bnfalt t \neq_\ursort t' \bnfalt \top \bnfalt \bot \bnfalt \varphi \vee \psi \bnfalt \phi \wedge \psi \bnfalt \\
& & \forall x \in_T t~ \varphi(x) \bnfalt \exists x \in_T t~ \varphi(x)
\end{array}
$$
We call a formula \emph{atomic} if it does not have a strict subformula.
In the syntax presented above, atomic formulas are either of the form $t =_\ursort t'$ or
$t \neq_\ursort t'$ (note that there are no equalities for sorts other
than $\ursort$).
Negation $\neg \phi$ will be  defined as a macro
by induction on $\phi$ by dualizing every connective and then the atomic
formulas recursively.
Other connectives can be derived in the usual way on top of negation:
 $\phi \rightarrow \psi$ by $\neg \phi \vee \psi$.

More crucial is the fact that 
membership is not itself a formula, it is only used in the relativized quantifiers.
An \emph{extended $\deltazero$ formula} allows also membership atomic formulas
$x \in_T y$, $x \notin_T y$ and equalities $x =_T y$, $x \neq_T y$ at every type $T$.

The notion of an extended $\deltazero$  formula $\phi$ \emph{entailing} another formula
$\psi$ is the standard one in first-order logic,
meaning that every model of $\phi$ is a model of $\psi$.
We emphasize here that by every model, we include models where membership is not
extensional. We will require only one ``sanity axiom'':    projection and 
tupling commute.
An important point is that there is no distinction between entailment in such ``general models'' and entailment over nested relations.

\begin{prop}
For $\phi$ and $\psi$ are $\deltazero$, rather than extended $\deltazero$, $\phi$ entails $\psi$
iff every nested relation that models $\phi$ is a model of $\psi$.
\end{prop}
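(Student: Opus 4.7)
The forward direction is immediate, since every nested relation, equipped with its natural tupling and membership, is a multi-sorted first-order structure satisfying the sanity axiom. For the nontrivial direction the plan is to show that any general model $M$ of the sanity axiom admits a truth-preserving collapse to a nested relation $N$, via a family of maps $h_T \colon M_T \to N_T$ indexed by types, such that for every $\deltazero$ formula $\chi$ and assignment $\vec{a}$ in $M$, one has $M \models \chi[\vec{a}]$ iff $N \models \chi[h(\vec{a})]$. Given such a collapse, if $M \models \phi[\vec{a}]$ then $N \models \phi[h(\vec{a})]$, hence by the nested-relation entailment $N \models \psi[h(\vec{a})]$, which transfers back to $M \models \psi[\vec{a}]$.

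I would define the collapse by recursion on types: set $N_\ur := M_\ur$ with $h_\ur$ the identity; set $N_{T \times U} := N_T \times N_U$ with $h_{T \times U}(p) := \langle h_T(\pi_1^M(p)),\, h_U(\pi_2^M(p)) \rangle$; and set $N_{\sett(T)} := \mathcal{P}(N_T)$ with $h_{\sett(T)}(s) := \{ h_T(a) \mid a \in_T^M s \}$. By construction $N$ has genuine pairing and extensional set membership, so it is a nested relation.

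The preservation lemma is then proved by simultaneous induction on terms and formulas. For terms, $h(t^M[\vec{a}]) = t^N[h(\vec{a})]$ is immediate for variables and the unit, while the pairing and projection cases invoke the sanity axiom (e.g.\ $h(\pi_1^M(t^M)) = \pi_1^N(h(t^M))$ by the product clause of $h$). For formulas, the atomic cases reduce to the fact that the only equalities in $\deltazero$ occur at sort $\ur$, where $h$ is the identity and hence preserves both $=_\ursort$ and $\neq_\ursort$ exactly; the Boolean cases are routine. The main work is in the quantifier case: for $\exists x \in_T u\; \varphi$, the members $a \in_T^M u^M$ correspond, by the definition of $h_{\sett(T)}$, to members $h(a) \in^N h(u^M)$, and the inductive hypothesis transfers satisfaction of $\varphi$ across $h$; the universal case is dual.

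The conceptual obstacle, and the reason the analogous statement would fail for extended $\deltazero$ formulas, is that $h$ can identify two set-elements of $M$ that share an extension but were distinct in $M$. This is harmless for $\deltazero$ because neither set-level equalities $x =_T y$ nor free-standing memberships $x \in_T y$ appear as atomic formulas: membership is only probed via bounded quantifiers and equality only at sort $\ur$, so no formula can witness the collapse. Restoring any such atomic formula would break the base case of the induction, and counterexamples would be easy to produce by placing two distinct coextensive sets inside a general model $M$.
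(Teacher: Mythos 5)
Your proof is correct and follows essentially the same route as the paper: the paper quotients a general model to an extensional one without disturbing $\deltazero$ truth values and then invokes the Mostowski collapse, while you simply fuse those two steps into a single type-recursive collapse map $h$ into the standard nested-relation universe and verify preservation by induction on $\deltazero$ formulas. The key point in both arguments is identical, namely that $\deltazero$ formulas contain no membership atoms and no equalities above sort $\ur$, so identifying coextensive sets is invisible to them.
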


The point above is due to two facts. First,
we have neither $\in$ or  equality at higher types as predicates.
This guarantees that any model 
can be modified, without changing the truth value of $\deltazero$ formulas,
into a model satisfying extensionality: if we have $x$ and $y$ with
$(\forall z \in_T x~~ z \in_T y) ~~\wedge~~ (\forall z \in_T y~~ z \in_T x)$
then  $x$ and $y$ must be the same.
Secondly, a well-typed extensional  model is isomorphic to a nested relation,
by the well-known
Mostowski collapse construction that iteratively identifies elements that
have the same members.
The lack of  primitive membership and equality relations in $\deltazero$ formulas
allows us to avoid having to consider extensionality axioms, which would require
special handling in our proof system.  

Equality, inclusion and membership predicates ``up to extensionality''
may be defined as macros by induction on the involved types, while staying within $\deltazero$
formulas. Formally we have:
\begin{definition} \label{def:macros}
$$
\begin{array}{rcl}
t \memmac_T u &\eqdef& \exists z \in_T u\; t \equiv_T z \\
t \subseteq_T u &\eqdef& \forall z \in_T t ~~ z \memmac_T u\\
t \equiv_{\sett(T)} u &\eqdef& {t \subseteq_T u} ~~\wedge ~~{u \subseteq_T t} \\
t \equiv_\unit u &\eqdef& \top\\
t \equiv_\ur u &\eqdef& t =_\ur u\\
t \equiv_{T_1 \times T_2} u &\eqdef& {\pi_1(t) \equiv_{T_1} \pi_1(u)}~~ \wedge~~ {\pi_2(t) \equiv_{T_2} \pi_2(u)
}\\
\end{array}
$$
\end{definition}
We will use small letters for variables in $\deltazero$
formulas, except in examples when we sometimes use capitals
to  emphasize that an object is of set type.
We drop the type subscripts $T$ in bounded quantifiers, primitive memberships,
and macros $\equiv_T$ when clear.
Of course  membership-up-to-equivalence $\memmac$ and  membership  $\in$ agree
on extensional models. But $\in$ and $\memmac$ are not interchangeable on general models,
and hence are not interchangeable in $\deltazero$ formulas. For example:
\[
x \in y ,  x \in y' \models \exists z \in y ~ z \in y'
\]
But we do not have
\[
x \memmac y, x \memmac y' \models  \exists z \in y ~ z \in y'
\]

\subsection{Nested Relational Calculus}
We review the main language for declaratively transforming  nested relations, Nested
Relational Calculus ($\nrc$). Variables occurring in expressions
are  typed, and each expression is  associated with 
an \emph{output type}, both of these being in the type system described above.
We let $\booltype$ denote the type $\sett(\unit)$. Then $\booltype$ has exactly
two elements, and will be used to simulate Booleans.
The grammar for $\nrc$ expressions  is presented in
Figure \ref{fig:nrc_type}.

\begin{figure}[t]
  \[
    \begin{array}{rl@{}l@{~~}r}
      E, E' \bnfeq& x& \bnfalt \unitobj \bnfalt \tuple{E, E'} \bnfalt \pi_1(E) \bnfalt \pi_2(E) \bnfalt
      & {\footnotesize \text{(variable, (un)tupling)}} 
      \\
       && \{E\} \bnfalt \nrcget_T(E) \bnfalt \bigcup \{ E \mid x \in E' \} 
      & {\footnotesize \text{((un)nesting, binding union)}}
\\
       && \bnfalt \emptyset \bnfalt E \cup E' \bnfalt E \setminus E'
& {\footnotesize \text{(finite unions, difference)}}
  \end{array}\]
  \vspace{-1em}
  \caption{$\nrc$ syntax (typing rules omitted)}
\label{fig:nrc_type}
\end{figure}

The definition
of the free and bound variables of an expression is standard,
the union operator $\bigcup \{ E \mid x \in R \}$ binding the variable $x$.
The semantics of these expressions should be fairly evident, see
\cite{limsoonthesis}.
If $E$ has type $T$, and has input (i.e. free) variables $x_1 \ldots x_n$
of types $T_1 \ldots T_n$, respectively, then the semantics associates with $E$
a function that 
given a binding associating each free variable with a value of the appropriate type,
returns an object of type $T$.
For example, the expression $\unitobj$ always returns the empty tuple, while
$\emptyset_T$ returns the empty set of type $T$.

As explained in prior work (e.g. \cite{limsoonthesis}), on top of
the $\nrc$ syntax above we can support richer operations as  ``macros''.
For every type $T$ there is an $\nrc$ expression
$=_T$ of type $\booltype$ representing equality of elements of type $T$.
In particular, there is an expression $=_\ur$ representing equality between
Ur-elements. 
 For every type $T$ there is an $\nrc$ expression $\in_T$ of type
$\booltype$ representing membership between an element of type $T$
in an element of type $\sett(T)$. We can define conditional expressions, 
joins, projections on $k$-tuples, and $k$-tuple formers.
Arbitrary arity tupling and projection operations  $\<E_1,\ldots E_n\>$, $\pi_j(E)$ for $j >2$
can be seen as abbreviations for a composition of binary operations.
Further
\begin{itemize}
\item
If $B$ is an expression of type $\bool$ and $E_1, E_2$ expressions of
type $T$, then there is an expression $\case(B,E_1,E_2)$ of type $T$ that implements ``if $B$ then $E_1$ else $E_2$''.
\item If $E_1$ and $E_2$ are expressions of type $\sett(T)$, then there are  expressions
$E_1 \cap E_2$  and $E_1  \setminus E_2$ of type $\sett(T)$.
\end{itemize}
The derivations of these are not difficult.
For example, the conditional
required by the first item is given by:
\[
\bigcup \{\{E_1\} \mid ~ x \in B\} \cup \bigcup \{\{E_2\} \mid x \in (\neg ~ B)\} 
\]

Finally, we note that $\nrc$ is closed under \emph{$\deltazero$ comprehension}:
if $E$ is in $\nrc$, $\phi$ is an extended 
$\deltazero$ formula, then we can efficiently form an expression  
$\{z \in E \mid \phi \}$ which returns the subset of $E$ such
that $\phi$ holds.
We  make use of these macros freely in our examples of $\nrc$,
such as Example \ref{ex:views}.

\begin{exa} \label{ex:nrc}
Consider an input schema including a binary relation $F: \sett( \ur \times \ur ) $.
The query $\trans_{\projq}$ with input
$F$ returning the projection of $F$ on the first component can be expressed in
$\nrc$ as  $\bigcup \{ \{\pi_1(f) \} \mid f \in F \}$.
The query $\trans_{\filterq}$ with input $F$ and also $v$ of type
$\ur$ that filters $F$ down to those pairs which agree with $v$ on the first component can
be expressed in $\nrc$ as $\bigcup \left\{\case([\pi_1(f) =_\ur v], \{f\}, \emptyset) \mid f \in F \right\}$.
Consider now the query $\trans_{\groupq}$
that groups $F$ on the first component, returning an object of type
$\sett( \ur \times  \sett(\ur))$.
The query
 can be expressed in $\nrc$ as $\bigcup \left\{ \{\< v,  \bigcup \{ \{\pi_2(f)\} \mid f \in \trans_{\filterq} \} \>\} \mid v \in \trans_{\projq}\right\}$.
Finally, consider the %
query $\trans_{\flatten}$ 
that flattens an input $G$ of type $\sett(\ur \times \sett(\ur))$ .
This can be expressed in $\nrc$ as 
\begin{align*}
\bigcup \left\{   \bigcup \{ \{ \<\pi_1(g), x\>  \} \mid x \in \pi_2(g)  \}  \mid g \in G \right\}
\end{align*}
\end{exa}

The language $\nrc$ as originally defined cannot express certain natural
transformations whose output type is $\ur$.
To get a canonical language for such transformations, above we included in our
$\nrc$ syntax a family of operations
$\nrcget_T : \sett(T) \to T$ that extracts the unique element from a singleton.
$\nrcget$ was considered in \cite{limsoonthesis}.
The semantics are: if $E$ returns a singleton set $\{x\}$, then $\nrcget_T(E)$
returns $x$; otherwise it returns some default object of the appropriate type.
In \cite{suciuthesis}, it is shown that $\nrcget$ is not expressible
in $\nrc$ at sort $\ursort$.
However, $\nrcget_T$ for general $T$ is definable from $\nrcget_\ursort$
and the other $\nrc$ constructs. 

Since $\nrcget$ will be needed for our key results, in the remainder of the paper, \emph{we will  write simply $\nrc$ for $\nrc$ as defined as usual, augmented
with $\nrcget$}. The role of
$\nrcget$ will only be for transformations that return something of type $\ursort$.

\subsection{Connections between $\nrc$ queries using \texorpdfstring{$\deltazero$}{Δ₀} formulas}
Since we have a Boolean type in $\nrc$, one may ask about the expressiveness
of $\nrc$ for defining transformations of shape $T_1, \ldots, T_n \to \bool$.
It turns out that they are equivalent to $\deltazero$ formulas.
This gives one justification for focusing on $\deltazero$ formulas.

\begin{prop} \label{prop:verify} There is a polynomial time
algorithm taking an extended $\deltazero$ formula $\phi(\vec x)$ as input and producing
 an $\nrc$ expression $\verify_\phi(\vec x)$ of type $\bool$
such that, for  any valuation in any nested relation,  $\verify_\phi(\vec x)$ returns true if and only if $\phi(\vec x)$ holds.
\end{prop}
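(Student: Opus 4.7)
The plan is to proceed by structural induction on the extended $\deltazero$ formula $\phi$, producing an $\nrc$ expression $\verify_\phi$ of type $\bool = \sett(\unit)$ that evaluates to $\{\unitobj\}$ (encoding \textbf{true}) precisely when $\phi$ holds, and to $\emptyset$ (encoding \textbf{false}) otherwise. The construction is syntax-directed and each case introduces only constant-size overhead on top of the recursively translated subformulas, which will yield both correctness and the polynomial-time bound in one swoop.

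First I would handle the atomic cases. For $\top$ I use $\{\unitobj\}$, and for $\bot$ I use $\emptyset_{\unit}$. For $t =_\ur t'$ I invoke the $\nrc$ macro $=_\ur$ mentioned in the preliminaries; since we are translating extended $\deltazero$, I similarly appeal to the macros $=_T$ and $\in_T$ at arbitrary types $T$, both of which are derivable in $\nrc$ at every type. Disequalities and non-memberships are handled by applying boolean negation $\{\unitobj\} \setminus (\cdot)$ to the corresponding atom. For the boolean connectives, conjunction $\phi \wedge \psi$ is translated as the intersection $\verify_\phi \cap \verify_\psi$ (using the derived $\cap$ macro on $\bool$) and disjunction $\phi \vee \psi$ as $\verify_\phi \cup \verify_\psi$, with correctness immediate from the semantics of $\{\unitobj\}$ and $\emptyset$.

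The quantifier cases are the mildly interesting ones. For $\exists x \in_T t~ \varphi$ I would use the binding union and take $\bigcup \{ \verify_\varphi \mid x \in t \}$: this returns $\{\unitobj\}$ exactly when some witness $x$ in $t$ makes $\verify_\varphi$ equal to $\{\unitobj\}$, and $\emptyset$ otherwise. For $\forall x \in_T t~ \varphi$, the cleanest route is the de Morgan translation, defining $\verify_{\forall x \in t~ \varphi}$ as $\{\unitobj\} \setminus \bigcup\{\{\unitobj\} \setminus \verify_\varphi \mid x \in t\}$, which uses difference and the existential case. Alternatively, one can appeal to the already-defined macro for negation on $\deltazero$ and just translate $\neg \exists x \in t~ \neg \varphi$.

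The main obstacle, and the reason to be careful, is making sure that the translation behaves correctly on \emph{general}, not-necessarily-extensional models: since the proposition quantifies over valuations in any nested relation (equivalently, any model of the sanity axioms), I must verify that each translation clause commutes with the interpretation on such models, using only that the derived macros $=_T$ and $\in_T$ behave as specified by their own defining $\nrc$ expressions. Once this is checked, the polynomial bound follows because $|\verify_\phi|$ is bounded by a constant factor times $|\phi|$ plus the total size of the invoked macros, each of which has size polynomial in the type annotations appearing in $\phi$; the overall algorithm is a single recursive pass over the formula tree.
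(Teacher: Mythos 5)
Your proposal is correct and follows essentially the same route as the paper: a syntax-directed induction translating atoms via $\nrc$-definable equality/membership tests, $\vee$/$\wedge$ via union/intersection on $\booltype$, and bounded quantifiers via $\bigcup\{\cdot \mid x \in t\}$ (with duals by De Morgan), giving a linear-size, hence polynomial-time, translation. Your final worry about non-extensional models is unnecessary — $\nrc$ semantics, and hence the proposition, is only stated over nested relations, where the macros $=_T$ and $\in_T$ behave as intended — but this does not affect the correctness of your argument.
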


This useful result is proved by an easy induction over $\phi$: see
Appendix~\ref{app:fointerpbool} for details.

In the opposite direction, given an $\nrc$ expression $E$ with input relations $\vec i$,
we can create 
a $\deltazero$ formula $\Sigma_E(\vec i, o)$ that
is an \emph{input-output specification of $E$}:
a formula such that $\Sigma_E$ implies $o=E(\vec i)$ and whenever
$o=E(\vec i)$ holds there is a set of objects including $\vec i$ and $o$ satisfying  $\Sigma_E$.
For 
the ``composition-free'' fragment -- in
which comprehensions $\bigcup$ can only be over input
variables -- this conversion can be done in $\ptime$. But it cannot be done
efficiently for general $\nrc$, under  complexity-theoretic hypotheses
 \cite{koch}.

We  also write entailments that use $\nrc$ expressions. For example, we write:
\[
\phi(x, \vec c \ldots) \modelssem x \in E(\vec c)
\] 
for $\phi$ $\deltazero$
and $E \in \nrc$. An entailment
with $\modelssem$  involving $\nrc$ expressions
 means that in every \emph{nested relation} satisfying $\phi$, $x$
is in the output of $E$ on $\vec c$. Note that the semantics
of $\nrc$ expressions is only defined on nested relations.

\section{Synthesizing via model theory: the expressive equivalence\texorpdfstring{\\}{} of $\nrc$, interpretations, and implicit definitions} \label{sec:bethmodeltheoretic}

Our first result will show the expressive equivalence of several specification languages
for transformations. We will show that $\nrc$ expressions are equivalent to implicit definitions,
but in the process we will show that both transformation languages are equivalent
to transformations given in a natural logical language which we call $\deltazero$ interpretations.
While the equivalence between interpretations and $\nrc$ will be effective, the key direction from
implicit definitions to interpretations will be a model-theoretic argument.

\subsection{Statement of the first result: equivalence between implicit and explicit} \label{subsec:nestedbethstatement}

We consider an input  schema $\inschema$ with one
input object $\inobj$ and an output schema
with one output object $\outobj$. Using product objects, we can
easily model any nested relational transformation in this way.
We deal with a $\deltazero$ formula $\phi(\inobj, \outobj, \vec a)$
with distinguished variables $\inobj, \outobj$.
Recall from the introduction that such  a formula
\emph{implicitly defines $\outobj$ as a function of $\inobj$} if for each nested relation
$\inobj$ there is at most
one $\outobj$ such that $\phi(\inobj, \outobj, \vec a)$ holds for some
$\vec a$.
A formula $\phi(\inobj, \outobj, \vec a)$ \emph{implicitly defines a function $\trans$ from $\inobj$ to $\outobj$}
if  for each $\inobj$, $\phi(\inobj, \outobj, \vec a)$ holds for some
$\vec a$ if and only $\trans(\inobj)=\outobj$.

\begin{exa}\label{ex:impnrc}
Consider the transformation $\trans_{\groupq}$ from Example \ref{ex:nrc}.
It has a simple implicit $\deltazero$ definition, %
which we can restate as follows. First, define the auxiliary formula
$\chi(x, p,R)$ stating that $\pi_1(p)$ is $x$ and $\pi_2(p)$ is the set of the $y$s such
that $\<x, y\>$ is in $R$ (the ''fiber of $R$ above $x$''):
$$\chi(x,p,R) ~~~\eqdef~~~ \pi_1(p) = x~~ \wedge~~ \left( \forall t' \in R~
\left[\pi_1(t') = x ~\imp~ \pi_2(t') \in \pi_2(p)  \right]\right) ~~\wedge~~
\forall z\in \pi_2(p)~ \< x,z \>  \in R$$
Then the transformation $T_{\groupq}$ from $Q$ to $R$ is implicitly defined
  by \[\forall t \in R~~\exists p \in Q~~\chi(\pi_1(t),p,R)) \wedge \forall p
  \in Q~~ \chi(\pi_1(p),p,R)\]
\end{exa}

We can now state  our first main result:

\begin{thm}[Implicit-to-explicit for nested relations via model theory] \label{thm:bethinterp}    
For any
$\deltazero$ formula $\Sigma(\inobj, \outobj, \vec a)$ which
implicitly defines  $\outobj$ as a function of $\inobj$, there is an $\nrc$ expression 
$E$ such that whenever $\Sigma(\inobj, \outobj, \vec a)$ holds,
then $E(\inobj)= \outobj$. 
\end{thm}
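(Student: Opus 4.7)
The plan is to factor the result through an intermediate notion of $\deltazero$ \emph{interpretation}, establishing two equivalences which together yield the theorem: $\nrc$ expressions are expressively equivalent to $\deltazero$ interpretations, and $\deltazero$ interpretations are expressively equivalent to implicit $\deltazero$ definitions (up to unique isomorphism). Composing these recovers the implicit-to-explicit statement. I would first define a notion of $\deltazero$ interpretation tailored to nested relations: for each sort occurring in the output type of $\outobj$, a $\deltazero$ formula in the input vocabulary (with parameters $\vec a$) carving out a set of ``codes'' for elements at that sort, together with $\deltazero$ formulas defining the membership and equality-up-to-extensionality predicates on codes, so that the iterated Mostowski collapse of this system of codes is isomorphic to $\outobj$.

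The equivalence between $\nrc$ and $\deltazero$ interpretations is largely a matter of direct translation and should be effective. For the direction $\nrc \Rightarrow$ interpretation, I would proceed by structural induction on an $\nrc$ expression $E(\inobj)$, reading off $\deltazero$ formulas describing the codes at each sort of the output type; this uses standard properties of $\deltazero$ comprehension and projection. For the converse, I would use Proposition~\ref{prop:verify} to convert each defining $\deltazero$ formula into a Boolean $\nrc$ expression, then package everything using $\nrc$'s closure under $\deltazero$ comprehension, together with the $\nrcget$ operator to extract Ur-elements from singleton codes. The resulting $\nrc$ expression materializes the interpretation provided the input validates the corresponding sanity conditions.

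The hard direction, from implicit definability to the existence of a $\deltazero$ interpretation, is where the model-theoretic machinery is needed and is where I expect the main obstacle to lie. The idea is to adapt \emph{Gaifman's coordinatisation theorem} (equivalently, the equivalence between relative categoricity and coordinatisability in the sense of Hodges) to our multi-sorted nested setting. An implicit definition $\Sigma(\inobj, \outobj, \vec a)$ amounts to asserting that $\outobj$ is relatively categorical over $\inobj$ up to extensional equivalence. By a compactness argument together with an orbit analysis, each potential element of $\outobj$ at each sort must be pinned down, up to extensional equivalence, by a $\deltazero$ formula over $\inobj$ and $\vec a$; collecting these formulas yields the desired interpretation. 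The delicate points are: (i) lifting the single-sorted Gaifman statement to a multi-sorted, hierarchical type system where the target sorts themselves quotient by a definable equivalence; (ii) making the interpretation genuinely uniform in the parameters $\vec a$, which requires a careful use of compactness to absorb the existential quantification over $\vec a$; and (iii) ensuring that the ``up to unique isomorphism'' framing for nested sets really is the right notion of extensional equivalence, so that the interpretation recovers $\outobj$ on the nose and not merely up to some coarser equivalence. Because the argument is model-theoretic, the resulting conversion is inherently non-constructive; the effective, proof-to-program version of the theorem is deferred to Section~\ref{sec:effective}.
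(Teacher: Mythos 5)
Your overall route is the paper's: you interpose $\deltazero$ interpretations, prove the effective equivalence with $\nrc$ (the paper's Theorem \ref{thm:coddnrc}), and obtain the interpretation from the implicit definition by adapting Gaifman's coordinatisation theorem in Hodges's relative-categoricity form, via compactness/omitting types plus an automorphism (``orbit'') argument --- this is exactly Subsections \ref{subsec:fointerp}--\ref{subsec:proofmultisorted}, including the reduction to a purely multi-sorted statement (Theorem \ref{thm:general}) and the final step of converting formulas over the output sorts into formulas over the input sorts (Lemma \ref{lem:uniformreduction}), which your ``orbit analysis'' covers only implicitly.

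The one place your plan as written would break is your point (ii), the parameters $\vec a$. You propose to ``absorb the existential quantification over $\vec a$'' inside the compactness/coordinatisation argument, but with the auxiliary objects present the structure is \emph{not} determined by $\inobj$ up to unique isomorphism --- $\vec a$ itself is generally not a function of $\inobj$ --- so the relative-categoricity hypothesis that drives the coordinatisation machinery is simply unavailable, and what the argument would deliver is at best element-wise definability of $\outobj$ with parameters drawn from $\vec a$ as well; such parameters cannot be quantified away inside a $\deltazero$ interpretation, since they range over arbitrary nested objects rather than tuples from the input sorts. The paper removes this obstacle \emph{before} any model theory: two applications of $\deltazero$ interpolation (Proposition \ref{prop:interpolationmodeltheoretic}) yield Lemma \ref{thm:dropprojective}, producing a parameter-free $\Sigma'(\inobj,\outobj)$ implied by $\Sigma$ that still implicitly defines $\outobj$; only then does it reduce to monadic schemas (Proposition \ref{prop:reducemonadic}) and to the multi-sorted Theorem \ref{thm:general}. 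Since interpolation is already among your tools, that is the fix to adopt; compactness alone does not do it.
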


In particular, suppose that, in addition, for each $\inobj$ there is some $\outobj$ and $\vec a$ such that
$\Sigma(\inobj, \outobj, \vec a)$ holds:  then the expression
and the formula define the same transformation.

Recall that our notion of  implicit definitions allows extra parameters $\vec a$.
Sometimes these are called ``projective'' implicit definitions in the literature.
From Theorem \ref{thm:bethinterp} we easily see that no additional expressiveness
is gained by allowing parameters:

\begin{cor} \label{cor:nrcbeth}
The following are equivalent for a transformation $\trans$:

\begin{enumerate}
\item $\trans$ is implicitly definable by a $\deltazero$ formula
\item $\trans$ is implicitly definable by a $\deltazero$ formula $\phi(\inobj, \outobj)$
\item $\trans$ is $\nrcwget$ definable
\end{enumerate}
\end{cor}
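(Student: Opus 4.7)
The plan is to close a cycle of implications. The implication $(2)\Rightarrow(1)$ is trivial, since any formula $\phi(\inobj,\outobj)$ is the case $\vec{a} = \emptyset$ of $\phi(\inobj,\outobj,\vec{a})$. The implication $(1)\Rightarrow(3)$ is precisely the content of Theorem~\ref{thm:bethinterp}. All that remains is $(3)\Rightarrow(2)$: from an $\nrc$ expression $E$ defining $\trans$ (with output type $T$), produce a parameter-free $\deltazero$ formula $\phi(\inobj,\outobj)$ that implicitly defines $\trans$.

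The naive candidate $\phi(\inobj,\outobj) \eqdef \outobj \equiv_T E(\inobj)$ is not itself a $\deltazero$ formula, since $E(\inobj)$ is not a $\deltazero$ term. The plan is to recurse on the structure of $E$, but rather than build a single monolithic formula $\chi_E(\inobj, y)$ asserting $y \equiv_T E(\inobj)$---which would demand unbounded existentials at projections and comprehensions---to build, for each subexpression $E'$ of output type $T'$, a family of \emph{component} $\deltazero$ formulas indexed by the shape of $T'$: at product types, formulas describing the two projections; at a set type $\sett(T'')$, a formula describing membership in $E'(\inobj)$; and so on down to $\ursort$. All of these have free variables only among $\inobj$ and a single positional variable, and employ only bounded quantification. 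These are exactly the formulas of a $\deltazero$ interpretation for $\trans$; that $\nrc$-definable transformations admit such interpretations is the content of the equivalence between $\nrc$ and $\deltazero$ interpretations developed in Section~\ref{sec:bethmodeltheoretic} as part of the proof of Theorem~\ref{thm:bethinterp}. Assembling the components via the macros $\equiv_T$ and $\memmac_T$ yields the desired parameter-free $\phi(\inobj,\outobj)$.

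The main obstacle I expect is handling the comprehension $\bigcup\{E_1 \mid x \in E_2\}$ without smuggling in an unbounded existential: a direct induction wants to write ``$\exists w\; w \equiv E_2(\inobj) \wedge \cdots$'', but such a set-valued $w$ has no bounding term available in $\deltazero$. The fix is to inline the membership component of $E_2$ wherever the bound variable $x$ appears, rather than witnessing the whole intermediate set at once. Once this inlining is carried out systematically through the type hierarchy, the three implications close the cycle and give the corollary.
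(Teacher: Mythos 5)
Your cycle of implications is sound, and $(2)\Rightarrow(1)$ and $(1)\Rightarrow(3)$ are exactly as in the paper, but your $(3)\Rightarrow(2)$ takes a genuinely different route from the paper's. The paper closes the loop by first passing from an $\nrc$ expression to its input--output specification $\Sigma_E(\vec i, o, \ldots)$ of Section \ref{sec:prelims}, which carries auxiliary variables for the values of subexpressions and therefore only yields item (1); the equivalence of (1) and (2) is then Lemma \ref{thm:dropprojective}, proved by a double application of $\deltazero$ interpolation (Proposition \ref{prop:interpolationmodeltheoretic}) --- machinery the paper needs anyway as the first reduction step in proving Theorem \ref{thm:bethinterp}. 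You instead manufacture a parameter-free implicit definition directly, via the $\nrc$-to-interpretation direction of Theorem \ref{thm:coddnrc}; this is a valid alternative and buys a corollary proof that never invokes interpolation, at the price of carrying out the construction you only sketch. One caution there: the constituent formulas of a $\deltazero$ interpretation mention \emph{only} tuples of input elements (codes), so they are not literally ``formulas with free variables among $\inobj$ and a single positional variable'' over the output, and the assembly step is the real content. What is needed is a type-indexed family of $\deltazero$ correspondence formulas $\kappa_{T'}(\vec x, y)$, defined by induction on $T'$ from $\phi_{\in}$, $\phi_{\equiv}$ and the macros $\equiv_{T'}$ and $\memmac$, asserting that the subobject $y$ of $\outobj$ is extensionally equal to the output element coded by $\vec x$; bounded quantification over codes is legitimate because all interpretation domains lie beneath $\inobj$, and taking $\phi(\inobj,\outobj)$ to be the top-type correspondence applied to the code of the output constant gives the parameter-free definition. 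Your comprehension worry is then subsumed: a set-valued bound variable has no bounding term in $\deltazero$, but its code does, and quantifying over codes rather than intermediate sets is the disciplined version of your ``inlining''.
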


\myparagraph{Finite instances versus all instances}
In Theorem \ref{thm:bethinterp} and Corollary \ref{cor:nrcbeth} we emphasize that our 
results concern the
class $\funall$ of transformations $\trans$ such  that there is a $\deltazero$ formula 
$\Sigma$  which
defines a functional relationship between  $\inobj$ and $\outobj$
on all instances, finite and infinite, and where the function
agrees with $\trans$. We can consider $\funall$ as a class of transformations on all instances or on all
finite instances, but the class  is defined by reference to all instances
for $\inobj$.  Expressed semantically
\[
\Sigma(\inobj, \outobj, \vec a) \wedge  \Sigma(\inobj,\outobj', \vec a') \models \outobj'= \outobj
\]
An  equivalent characterization of $\funall$  is \emph{proof-theoretic}: these are the 
transformations such that there is a  classical proof of functionality in a complete first-order proof system.
There are various choices for the system. The most straightforward choice would be a system
using some basic axioms about Ur-elements, products and projection functions, and the extensionality axiom
for the membership relation. We will see a slightly different approach to proof systems in Section \ref{sec:effective}.

Whether one thinks of $\funall$ semantically or proof-theoretically, our results say that 
$\funall$ is identical with the set of transformations  given by $\nrc$ expressions.
But the proof-theoretic perspective is crucial in order to even talk about an  effective synthesis procedure.

It is natural to  ask about the analogous class  $\funfin$ 
of transformations $\trans$ over \emph{finite inputs} for which there is a $\deltazero$ 
$\Sigma_\trans$ which
is functional, when only finite inputs are considered,  and where the corresponding function  agrees with  $\trans$.
It is well-known that $\funfin$ is not identical to $\nrc$ and is not so well-behaved. 
The transformation returning the powerset of a given input relation $\inobj$ is in $\funfin$: the powerset of
a finite input $\inobj$ is the unique collection $\outobj$ of subsets of $\inobj$
  that contains the empty set and such that for each element $e$ of $\inobj$,
if a set $s$ is in $\outobj$ then $s-\{e\}$ and $s \cup \{e\}$ are in $\outobj$.
From this we can see that $\funfin$
contains transformations of high complexity. Indeed, even when considering transformations from flat relations
to flat relations, $\funfin$ contains transformations whose membership in polynomial time  would imply that $\kw{UP} \cap \kw{coUP}$, the class of problems such that both the problem
and its complement  can be solved by an unambiguous non-deterministic polynomial time  machine,
is identical to $\ptime$ \cite{kolaitisimpdef}.
Most importantly for our goals, membership in $\funfin$ is \emph{not} witnessed by proofs in any effective proof system,
since this set is not computably enumerable.

\myparagraph{Total versus partial functions}
When we have a proof  that
$\Sigma(\inobj, \outobj, \vec a)$ defines $\outobj$ as a function of $\inobj$,
the corresponding function may still be partial. 
Our procedure will synthesize an expression $E$ defining a total function
that agrees with the  partial function defined by $\Sigma$.
If $\vec a$ is empty, we can also synthesize a Boolean $\nrc$ expression $\verify_\kw{InDomain}$
that verifies whether a given $\inobj$ is in the domain of the function: that is whether
there is $\outobj$ such that $\Sigma(\inobj, \outobj)$ holds.
$\verify_\kw{InDomain}$ can be taken as:
$$\bigcup \{ \verify_{\Sigma}(\inobj, e) \mid e \in \{ E(\inobj)\}\}$$
where $\verify_\Sigma$ is from Proposition \ref{prop:verify}.

When $\vec a$ is not empty we can not generate a domain check $\verify_\kw{InDomain}$, since
the auxiliary parameters might enforce some second-order property of
$\vec i$: for example $\Sigma(i_0, i_1, a, o)$ might state that $a$ is a bijection
from $i_0$ to $i_1$ and $o=\<i_1, i_2\>$. This clearly defines
a functional relationship between $i_1, i_2$ and $o$, but
the domain consists of $i_1, i_2$ that have the same cardinality, which cannot
be expressed in first-order logic.

\myparagraph{Organization of the proof of the theorem}
Our proof of Theorem \ref{thm:bethinterp} will go through a notion of $\deltazero$ interpretation, 
which we introduce in Subsection \ref{subsec:fointerp}. We will show that $\deltazero$ interpretations
define the same transformations as $\nrc$, which will allow us to restate the main result
as moving from implicit definitions to interpretations.
We then proceed first by some reductions
(Subsection \ref{subsec:reduction}),
showing that it suffices to prove a general result about implicit
definability and definability by interpretations
in  multi-sorted first-order logic, rather than
dealing with higher-order logic  and $\deltazero$ formulas.
In Subsection \ref{subsec:proofmultisorted} we give the argument for this
multi-sorted logic theorem.

\myparagraph{Interpolation for $\deltazero$ formulas} Often a key ingredient in 
moving from implicit to explicit definition is  an \emph{interpolation theorem},
stating that for each entailment between formulas $\phi_L$ and $\phi_R$ there is an intermediate formula
(an \emph{interpolant} for the entailment), 
which is entailed by $\phi_L$ and entails $\phi_R$ while using only symbols common to $\phi_L$ and
$\phi_R$.
We can show using any of the standard  approaches to interpolation (e.g. \cite{fittingbook}) that 
$\deltazero$ formulas admit
interpolation.

\begin{prop}
\label{prop:interpolationmodeltheoretic}
Let $\Gamma_L$, $\Gamma_R$, and $\psi$  be $\deltazero$ formulas
and call $C = \freevars(\Gamma_L) \cap (\freevars(\Gamma_R) \cup \freevars(\psi))$ the set of common free variables of
$\Gamma_L$ on the one hand and $\Gamma_R$ or $\psi$ on the other hand.
If we have an entailment $$\Gamma_L, \; \Gamma_R \models \psi$$
then there exists a $\deltazero$ formula $\theta$ with $\freevars(\theta) \subseteq C$ such that the following holds
$$\Gamma_L \models \theta \qquad \qquad \text{and} \qquad \qquad \Gamma_R, \theta \models \psi$$
\end{prop}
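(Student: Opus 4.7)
The plan is to establish interpolation for $\deltazero$ via a standard proof-theoretic route (Maehara's method), working in a sound and complete sequent calculus tailored for $\deltazero$ formulas over the general (non-extensional) models that we consider. The alternative would be to invoke classical Craig interpolation for full multi-sorted first-order logic and then argue that the interpolant can be taken to be $\deltazero$; but this requires a separate ``relativisation'' argument that the interpolant never needs quantifiers ranging outside some bounding term, so the direct proof-theoretic route is actually cleaner.

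First I would introduce a two-sided sequent calculus $\mathrm{LK}_{\deltazero}$ whose formulas are $\deltazero$ in the sense of the paper: propositional rules, equality rules for $=_\ur$, tupling/projection axioms encoded as term rewriting, and four rules for the bounded quantifiers, e.g.\ a right rule for $\exists$:
\begin{equation*}
\infer{\Gamma \vdash \Delta,\, \exists x \in_T u\; \phi(x)}{\Gamma \vdash \Delta,\, \phi(t) \qquad \Gamma \vdash \Delta,\, t \in_T u}
\end{equation*}
(with analogous left/right rules for $\forall$ and the dual for $\exists$-L using an eigenvariable), and a membership predicate treated only in the controlled role it plays inside bounded quantifiers. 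One proves soundness and completeness of this calculus with respect to the semantics on general (not necessarily extensional) models by the usual Henkin-style argument, which in the $\deltazero$ setting requires no extensionality axiom because membership is not a formula. Cut-elimination then goes through by the standard Gentzen reductions, since each quantifier rule only introduces bounded quantifiers whose bounding term already appears in the premise.

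Given a cut-free proof of $\Gamma_L, \Gamma_R \vdash \psi$, I would then run Maehara's procedure: split each sequent $\Sigma \vdash \Pi$ appearing in the proof into a left part and a right part according to whether each formula originates from $\Gamma_L$ or from $\Gamma_R \cup \{\psi\}$, and define by induction on the proof a $\deltazero$ interpolant $\theta$ for the split, whose free variables lie in the common set $C$. The cases for propositional rules, equality and term manipulation are routine. The main obstacle, and the only place that deserves genuine care, is the bounded quantifier rules: when, say, $\exists x \in_T u\; \phi$ is introduced by a right rule on the left side of the split while the bounding term $u$ occurs only on the right side of the split, one must avoid producing an interpolant that mentions $u$. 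This is handled in the usual way by guarding the interpolant with the relevant bounded quantifier on the side where $u$ is available: when the introduced formula lies on the opposite side from $u$, one replaces the inductive interpolant $\theta'$ by $\forall x \in_T u\; \theta'$ or $\exists x \in_T u\; \theta'$ as dictated by the polarity, exactly as in Maehara's treatment of relativised quantifiers. Because these guards are bounded by terms already in play and because eigenvariables of $\forall$-R/$\exists$-L rules are fresh and hence cannot escape the common signature, the resulting $\theta$ is $\deltazero$ and has $\freevars(\theta) \subseteq C$.

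Finally I would verify the invariant $\Gamma_L \models \theta$ and $\Gamma_R, \theta \models \psi$ at each step, which follows directly from the soundness of the calculus together with the inductive construction, concluding the proposition. The whole argument mirrors the standard proof for first-order logic verbatim, with the single extra observation that the interpolating connectives used for bounded quantifiers (namely bounded $\forall$ and $\exists$) are themselves available in $\deltazero$, so there is no escape from the fragment.
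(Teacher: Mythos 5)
Your overall route --- a sound and complete sequent calculus for $\deltazero$ formulas over general (non-extensional) models, followed by a Maehara-style induction that splits each sequent and extracts an interpolant, with the bounded quantifiers as the only delicate case --- is exactly the paper's route: the model-theoretic statement is obtained by these standard means, and the detailed induction is carried out for the effective variant (Theorem~\ref{thm:interpolationdeltafocus}, Appendix~\ref{app:deltazerointerpolation}).

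However, the one case you yourself flag as needing genuine care is handled incoherently as written, and that is where the real content of the $\deltazero$ version lies. The scenario you describe --- $\exists x \in_T u\,\phi$ introduced on the left part of the split ``while the bounding term $u$ occurs only on the right part'' --- cannot arise, since $u$ is free in the principal formula and hence occurs on the left part; and your remedy, guarding the inductive interpolant by $\exists x \in_T u\,\theta'$ or $\forall x \in_T u\,\theta'$, would itself violate the free-variable condition precisely in a situation where $u$ is not common. The genuine difficulty is elsewhere: when the existential rule instantiates a right-part (resp.\ left-part) formula $\exists x \in_T b\,\phi$ with a witness term $t$ whose membership $t \in_T b$ is supplied by the other part, the inductive interpolant may contain the non-common variables of $t$ (similarly for eigenvariables), and these must be purged by quantification which, inside $\deltazero$, has to be \emph{bounded}. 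What makes this legitimate is the observation that in exactly this mixed case the bound $b$ occurs in both parts and is therefore common, so the offending variables can be re-bound by quantifying over $b$; and since $t$ may be a tuple-term, one needs the projection-and-equality bookkeeping the paper packages as $\existsvars{x_1 \ldots x_n} t \in b$ and its universal dual (with $=_T$ replaced by $\equiv_T$ to stay in plain $\deltazero$, harmless here since no efficiency is claimed). Relatedly, your calculus puts membership atoms on the right of sequents (as premises of the $\exists$ rule), so mixed axioms can push primitive atoms $t \in_T u$ into intermediate interpolants; you then still owe an argument that these are always absorbed when the relevant variables are bounded away (or you should confine memberships to a left-hand $\in$-context as the paper does), because a primitive membership on common variables is not expressible in plain $\deltazero$ and cannot be traded for $t \memmac u$ over general models.
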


A stronger effective statement -- stating that the interpolant can be found efficiently in the size
of a proof of the entailment --  will be proven later in the paper: see Theorem
\ref{thm:interpolationdeltafocus}.

The interpolation result above should be thought of  as giving us the result we want for 
implicit definitions of \emph{Boolean} variables.
From it we can derive that whenever we have 
a $\deltazero$ $\Sigma(\vec i \ldots o)$ which implicitly defines
a Boolean variable
$o$ in terms of input
variables $\vec i$, there  must be a $\deltazero$ $\phi(\vec i, o)$ 
that defines $o$  in terms of $\vec i$. Setting $o$ to be true we get
a formula $\phi'(\vec i)$ that defines the inputs that correspond to true.
By Proposition \ref{prop:verify}, there is an
 $\nrc$ expression outputting a Boolean that explicitly defines $o$.

\subsection{Interpretations and nested relations} \label{subsec:fointerp}
Our first goal will be to show that for any $\deltazero$ implicit definitions there is an
$\nrc$ query that realizes it.
For this result, it will be useful to have  another characterization of $\nrc$,
an equivalence with transformations defined by \emph{interpretations}.

We first review the notion of an interpretation, which has become
a common way of defining transformations using logical expressions \cite{interpmikolaj,interpcolcombet}.
Let $\inschema$ and $\outschema$ be  multi-sorted vocabularies.
A first-order interpretation with input signature $\inschema$ and output
signature $\outschema$ consists of:

\begin{itemize}
\item for each output sort $\sort'$, a sequence of input sorts $\interpsort(\sort') = \vec{\sort}$,
\item a formula $\phi^{\sort'}_\equiv(\vec x_1, \vec x_2)$ for each output sort $\sort'$ in $\outschema$ (where both tuples of variables $\vec x_1$ and $\vec x_2$ have types $\interpsort(\sort')$),
\item a formula $\phi^{\sort'}_\domainof(\vec x_1)$ for each output sort $\sort'$ in $\outschema$ (where the variables $\vec x_1$ have types $\interpsort(\sort')$),
\item a formula $\phi_R(\vec x_1 , \ldots \vec x_n)$ for every relation $R$
of arity $n$ in  $\outschema$ (where the variables $\vec x_i$ have types $\interpsort(\sort'_i)$, provided the $i^{th}$ argument of $R$ has sort $\sort'_i$),
\item for every function symbol $f(x_1, \ldots, x_k)$ of $\outschema$ with output sort $\sort'$ and input $x_i$ of sort $\sort_i$, a sequence of terms
  $\overline{f}_1(\vec x_1, \ldots, \vec x_k), \ldots, \overline{f}_m(\vec x_1, \ldots, \vec x_k)$ with sorts enumerating $\interpsort(\sort_{out})$ (so in
    particular $m$ corresponds to the length of $\interpsort(\sort_{out})$) and $\vec x_i$ of sorts $\interpsort(\sort_i)$.
\end{itemize}
subject to the following constraints

\begin{itemize}
\item $\phi_\equiv^\sort(\vec x, \vec y)$ should define a partial equivalence relation, i.e. be symmetric
and transitive,
\item $\phi_\domainof^\sort(\vec x)$ should be equivalent to $\phi_\equiv^\sort(\vec x, \vec x)$,
\item $\phi_R(\vec x_1, \ldots, \vec x_n)$ and
$\phi_\equiv^{\sort_i}(\vec x_i, \vec y_i)$ for $1 \le 1 \le n$
(where $\sort_i$ is the output sort associated with position $i$
of the relation $R$) should jointly imply
$\phi_R(\vec y_1, \ldots, \vec y_n)$.
\item for every output function symbol $f(x_1, \ldots, x_k)$ represented by terms $\vec{\overline{f}}(\vec{\vec{x}})$, we have
$$\forall \vec{\vec x} \; \vec{\vec y}~~ \left(\bigwedge_i \varphi^{\sort_i}_\equiv(\vec x_i, \vec y_i) \rightarrow
    \varphi^{\sort'}_\equiv(\vec{\overline{f}}(\vec{\vec x}), \vec{\overline{f}}(\vec{\vec y})) \right)$$
where $\sort'$ is the sort of the output of $f$ and the $\sort_i$ correspond to the arities.
\end{itemize}

In $\phi^{\sort}_\equiv$ and $\phi^{\sort}_\domainof$,  each $\vec x_1, \vec x_2$ is a tuple
containing
variables of sorts agreeing with the prescribed sequence of input sorts for $\sort'$.
Given a structure $M$ for the input sorts and a sort $\sort$ we call
 a binding of these variables to input elements of the appropriate
input sorts an \emph{$M,\sort$ input match}.
If in output relation $R$
 position $i$ is of sort $\sort_i$, then  in $\phi_R(\vec t_1 , \ldots \vec t_n)$ we
require
$\vec t_i$ to be a tuple of variables
of sorts agreeing with the prescribed sequence of input sorts for $\sort_i$.
Each of the above formulas
is over the vocabulary of $\inschema$.

An interpretation
$\interp$ defines a function from structures over vocabulary $\inschema$ to structures
over vocabulary $\outschema$ as follows:

\begin{itemize}
\item The domain of sort $\sort'$ is the set of equivalence classes of the partial equivalence
relation defined by $\phi_\equiv^{\sort'}$ over the $M,\sort'$ input matches.
\item A relation $R$ in the output schema is interpreted by the set of those tuples $\vec a$ such
that
$\phi_R(\vec t_1 , \ldots \vec t_n)$ holds for some $\vec t_1 \dots \vec t_n$ with each $\vec t_i$ a 
representative of $a_i$.
\end{itemize}

An interpretation $\interp$ also defines a map $\phi \mapsto \phi^*$ from formulas over $\outschema$ to
formulas over $\inschema$ in the obvious way. This map commutes with all logical connectives and thus
preserves logical consequence.

In the sequel, we are concerned with interpretations preserving certain theories
consisting of sentences in first-order logic. Recall that a \emph{theory}
in first-order logic is a deductively closed set of sentences.
A sentence belonging to a given theory is called one of its \emph{theorems}.
Given a theory $\Sigma$ over $\inschema$ and a theory $\Sigma'$ over $\outschema$, we say that
$\interp$ is an interpretation of $\Sigma'$ within $\Sigma$ if $\interp$ is an interpretation
such that for every theorem $\phi$ of $\Sigma'$, $\phi^*$ is a theorem of $\Sigma$.
Since $\phi \mapsto \phi^*$ preserves logical consequence, if $\Sigma'$ is generated by a set
of axioms $A$, it suffices to check that $\Sigma'$ proves $\phi$ for $\phi \in A$.

Finally, we are also interested in interpretations restricting to the identity on part of the input.
Suppose that $\outschema$ and $\inschema$ share a sort $\sort$. An interpretation $\interp$ of $\outschema$
within $\inschema$ is said to preserve $\sort$ if the output sort associated to $\sort$ is $\sort$ itself
and the induced map of structures is the identity over $\sort$. Up to equivalence, that
means we  fix $\phi_\domainof^T(x)$ to be, up to equivalence, $\top$,
$\phi_\equiv^\sort(x,y)$ to be the equality $x = y$
and map constants of type $\sort$ to themselves.

\myparagraph{Interpretations defining nested relational transformations}
We now consider how to define  nested relational transformations via interpretations.
The main idea will be to restrict all the constituent formulas to be
$\deltazero$ and to relativize the notion of interpretation to a background theory
that corresponds to our sanity axioms about tupling and sets.

We define the notion of \emph{subtype} of a type $T$ inductively as follows.
\begin{itemize}
\item $T$ is a subtype of $\sett(T')$ if $T=\sett(T')$ or if it is a subtype of $T'$.
\item $T$ is a subtype of $T_1 \times T_2$ if $T = T_1 \times T_2$ or if it is a subtype of either $T_1$ or $T_2$.
\item The only subtypes of $\ur$ and $\unit$ are themselves.
\end{itemize}

For every type $T$, we build a multi-sorted vocabulary $\aschema_T$ as follows.
\begin{itemize}
\item The sorts are all subtypes of $T$, $\unit$ and $\booltype = \sett(\unit)$.
\item The function symbols are the projections, tupling,
the unique element of type $\unit$, the constants $\boolff, \booltt$ of sort $\booltype$
representing $\emptyset, \{\unitobj\}$
and a special constant $\obj$ of sort $T$.
\item The relation symbols are the equalities at every sort and the membership
predicates $\in_T$.
\end{itemize}
Let $T_\oneobj$ be a type  which will represent  the type of
a complex object $\oneobj$.
We build a theory $\Sigma(T_\oneobj)$ on top of $\aschema_{T_{\oneobj}}$ from the following axioms:
\begin{itemize}
\item Equality should satisfy the congruence axioms for every formula $\phi$
$$\forall x y~~ (x = y ~\wedge~ \phi ~~\imp~~ \phi[y/x])$$
Note that it is sufficient to require this for atomic formulas to infer it for
all formulas.
\item We require that projection and tupling obey the usual laws for every type of $\aschema_{T_{\oneobj}}$.
$$
\forall x^{T_1} ~ y^{T_2}~ \pi_1(\< x, y \>) = x \qquad
\forall x^{T_1} ~ y^{T_2}~ \pi_2(\< x, y \>) = y \qquad
\forall x^{T_1 \times T_2} ~ \< \pi_1(x), \pi_2(x) \> = x$$
\item We require that $\unit$ be a singleton and every $\sett(T)$ in $\aschema_{T_\oneobj}$
$$\forall x^\unit ~ \unitobj = x$$
\item Lastly our theory imposes set extensionality
$$\forall x^{\sett(T)} ~ y^{\sett(T)} ~~ \left([\forall z^T~ (z \in_T x \equi z \in_T y)] \imp x =_T y\right)$$
\end{itemize}

Note that in interpretations we associate the input to
a structure that includes a distinguished constant.
For example,
an input of type $\sett(\ur)$ will be coded by a structure
with an element relation, an Ur-element  sort, and a constant whose  sort is the
type  $\sett(\ur)$.
In other contexts, like $\nrc$ expressions and implicit definitions
of transformations, we considered inputs to be \emph{free variables}.
This is only a change in terminology, but it reflects the fact that in evaluating the
interpretation on any input $i_0$ we will keep the interpretation of the associated
constant fixed, while we need to look at multiple bindings of the variables in each formula in order to form
the output structure.

We will show that $\nrcwget$ expressions defining
transformations from a nested relation of type $T_1$ to 
a nested relation of type $T_2$  correspond to a subset of interpretations
of $\Sigma(T_2)$ within $\Sigma(T_1)$ that preserve $\ursort$.
The only additional restriction we impose is that all
formulas $\phi_\domainof^T$ and $\phi_\equiv^T$ in the definition of such an interpretation
must be $\deltazero$. This forbids, for instance, universal quantification over the whole
set of Ur-elements. We thus call a first-order interpretation of
$\Sigma(T_2)$ within $\Sigma(T_1)$ consisting of $\deltazero$ formulas
a \emph{$\deltazero$ interpretation of $\Sigma(T_2)$ within $\Sigma(T_1)$}.

We now describe what it means for such an interpretation to define
a transformation from an instance of one nested relational schema
to another; that is, to map one  object to another.
We will denote the distinguished constant lying in the input sort by
$\inobj$ and the distinguished constant in the output sort by
$\outobj$.
Given any object $o$ of type $T$, define $M_o$, a structure satisfying $\Sigma(T)$,
as the least structure such that
\begin{itemize}
\item every subobject of $o$ is part of $M_o$
\item when $T_1 \times T_2$ is a subtype of $T$ and $a_1, a_2$ are objects of sort $T_1, T_2$ of $M_o$, then $\<a_1,a_2\>$ is an object of $M_o$
\item a copy of $\emptyset$ is part of $M_o$ for every sort $\sett(T)$ in $\aschema_T$
\item $\unitobj$ and $\{\unitobj\}$ are in $M_o$ at sorts $\unit$ and $\bool$.
\end{itemize}
The map $o \mapsto M_o$ shows how to translate an object to a logical
structure that is appropriate as the input of an interpretation.
Note that the further constraints ensure that every sort has at least one element
in $M_o$ and that there is one sort, $\ursort$, which contains at least two elements;
these technicality are important to ensure that interpretations are expressive enough.

We now discuss how the output of an interpretation is mapped back to an object.
The output of an interpretation is a multi-sorted structure
with a distinguished constant $\outobj$ encoding the output nested relational schema,
but it is not technically a nested relational instance
as required by our semantics for nested relational transformations.
We can convert the output to a semantically appropriate entity
via a modification of the well-known Mostowski collapse \cite{mostowski}.
We define $\collapse(e,M)$ on elements $e$ of the domain of a structure $M$ for the multi-sorted encoding of
a schema, by structural induction on the type of $e$:
\begin{itemize}
\item If $e$ has sort $T_1 \times T_2$ then we set
$\collapse(e,M)= \<\collapse(\pi_1(e), M), \collapse(\pi_2(e), M)\>$
\item If $e$ has sort $\sett(T)$, then we set
$\collapse(e,M)=\{\collapse(t,M) \mid t \in e\}$
\item Otherwise, if $e$ has sort $\unit$ or $\ursort$, we set $\collapse(e,M) = e$
\end{itemize}

We now formally describe how $\deltazero$ interpretations define functions between objects
in the nested relational data model.

\begin{defi}
We say that a nested relational transformation $\trans$ from  $T_1$ to $T_2$
is defined by a $\deltazero$ interpretation $\interp$ if, 
for every object $\inobj$ of type $T_1$,
the structure $M$ associated with $\inobj$ is mapped to $M'$ where $\trans(\inobj)$ is equal to $\collapse(\outobj,M')$.
\end{defi}
We will often identify a $\deltazero$ interpretation with the corresponding transformation, speaking
of its input and output as  a nested relation (rather than the corresponding structure).
For such an interpretation $\interp$ and an input object $\inobj$ we write
$\interp(\inobj)$ for the output of the transformation   defined by $\interp$ on $\inobj$.

\begin{exa} \label{ex:nrcinterp}
Consider an input schema consisting of a single binary relation $R: \sett( \ur \times \sett(\ur ))$,
so an input object is a set of pairs, with each pair consisting of an Ur-element and a set
of Ur-elements. The corresponding theory is $\Sigma(\sett(\ur \times \sett(\ur)))$,
which has sorts $T_{in} = \sett(\ur \times \sett(\ur))$, $\ur \times \sett(\ur)$ and $\ur$ and
relation symbols $\in_\ur$ and $\in_{\ur \times \sett(\ur)}$ on top of equalities.

If we consider the following instance of the nested relational schema
\[
R_0= \{ \<a, \{a,b\}\>, \<a, \{a, c\}\>, \<b, \{a, c\}\> \}
\]
Then the corresponding encoded structure $M$ consists of:
\begin{itemize}
\item $M^{T_{in}}$ containing only the constant $R_0$
\item $M^{\ur \times \sett(\ur)}$ consisting of the elements of $R_0$,
\item $M^\ur$ consisting of  $\{a, b, c\}$ 
\item  $M^{\sett(\ur)}$ consisting of the sets $\{a,b\}$, $\{a, c\}$,
\item $M^\unit = \{\unitobj\}$ and $M^\booltype = \{\emptyset, \{\unitobj\}\}$
\item the element relations interpreted in the natural way
\end{itemize}

Consider the transformation that groups on the first component, returning
an output object of type $O=\sett(\ur \times \sett(\sett(\ur)))$.
This is a variation of the grouping transformation from 
Example \ref{ex:nrc}.
On the example input $R_0$ the transformation would return
\[
O_0= \{ \< a, \{ \{a,b\}, \{a, c\} \}\> , \<b, \{\{a, c\}\}\> \}
\]

The output  would be represented by
a structure having  sorts $T_{out} = \sett(\ur \times \sett(\sett(\ur)))$, $\ur \times \sett(\sett(\ur)$, $\ur$, $\sett(\sett(\ur))$ and $\sett(\ur)$ in addition to $\unit$ and $\bool$.
It is easy to capture this transformation with
a  $\deltazero$ interpretation. For example,
the interpretation  could code
the output sort $\sett(\ur \times \sett(\sett(\ur))$ by  the sort $\sett(\ur \times
\sett(\ur))$, representing
each group by the corresponding Ur-element. 
\end{exa}

We will often make use of the following observation about interpretations:

\begin{prop}
\label{prop:fointerp-comp}
$\deltazero$ interpretations can be composed, and their composition correspond to
the underlying composition of transformations.
\end{prop}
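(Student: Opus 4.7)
The plan is to explicitly construct the composition $\interp = \interp_2 \circ \interp_1$ of two $\deltazero$ interpretations by substitution, and then verify both that $\interp$ satisfies the definition of a $\deltazero$ interpretation and that the associated transformation on structures agrees with the composition of the transformations associated to $\interp_1$ and $\interp_2$. Say $\interp_1$ goes from $\Sigma(T_1)$ to $\Sigma(T_2)$ and $\interp_2$ from $\Sigma(T_2)$ to $\Sigma(T_3)$.

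First, I would define $\interp$ by concatenating the sort assignments: for each output sort $\sort''$ in $\aschema_{T_3}$, $\interp_2$ assigns a tuple $\vec{\sort'}$ of sorts in $\aschema_{T_2}$; applying $\interpsort_1$ to each $\sort'_i$ and concatenating yields a tuple of sorts in $\aschema_{T_1}$, which I take as $\interpsort(\sort'')$. The formulas $\phi^{\sort''}_\equiv$, $\phi^{\sort''}_\domainof$, and $\phi_R$ defining $\interp$ are then obtained by taking the corresponding formulas of $\interp_2$ (over the vocabulary of $\Sigma(T_2)$) and applying the starring translation $\phi \mapsto \phi^{*_1}$ induced by $\interp_1$; similarly, output-function terms, in particular the representation of the distinguished constant $\obj$ of sort $T_3$, are obtained by substituting the $\interp_1$-terms for the symbols of $\aschema_{T_2}$ occurring in the $\interp_2$-terms.

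The main verifications split into three points: (a) the starring operation of a $\deltazero$ interpretation sends $\deltazero$ formulas to $\deltazero$ formulas, so $\interp$ is indeed a \emph{$\deltazero$} interpretation; (b) $\interp$ satisfies the interpretation axioms (partial equivalence, agreement of $\phi_\domainof$ with $\phi_\equiv$ on the diagonal, and congruence of output relations and functions); and (c) the induced map on structures is the composition of the induced maps of $\interp_1$ and $\interp_2$, and in particular $\collapse(\obj,\interp(M)) = \collapse(\obj,\interp_2(\interp_1(M)))$ for every input structure $M$. Point (b) is routine from the fact that the starring operation commutes with logical connectives and preserves entailment, combined with the assumed interpretation axioms of $\interp_1$ and $\interp_2$. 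Point (c) is essentially a two-level Mostowski collapse argument, checking by structural induction on types that the equivalence classes of $\phi^{\sort''}_\equiv$ composed with $\collapse$ precisely reproduce the two-step process.

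The main obstacle is point (a). A bounded quantifier $\exists x \in_{\sort'} t~ \psi$ in a $\deltazero$ formula over $\aschema_{T_2}$ gets translated by replacing $x$ with a tuple $\vec x$ of sorts $\interpsort_1(\sort')$ and $\in_{\sort'}$ with its $\deltazero$ translation $\phi^{*_1}_{\in_{\sort'}}$, producing $\exists \vec x~(\phi^{*_1}_{\in_{\sort'}}(\vec x, \vec t) \wedge \psi^{*_1})$, which a priori has unbounded existentials on the input side. To discharge this, I would prove a lemma that the translation of each output membership predicate is a bounded formula whose free variable $\vec x$ can be replaced by a quantifier ranging over subobjects of $\vec t$ or of $\obj$; this is possible because in $\aschema_{T_1}$ every element reachable by a $\deltazero$ formula lies within iterated projections and members of the distinguished constant $\obj$, so canonical bounding terms are always available in the input vocabulary. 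The same trick allows the universal-sort quantifiers coming from the domain formulas $\phi_\domainof^{\sort'}$ in the translation to be turned into bounded ones, closing the induction.
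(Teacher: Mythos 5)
Your proposal follows essentially the same route as the paper, which simply composes the two interpretations by the usual substitution of FO-interpretations and treats the verification of the additional requirements---your points (a)--(c)---as an easy check; in particular, your bounding lemma for (a) is the same relativize-quantifiers-beneath-$\obj$ trick the paper itself uses later when converting the multi-sorted interpretation back into a $\deltazero$ one. The only caveat is that in the encoded structures $M_o$ the copies of $\emptyset$ at the various set sorts (and the elements added by closing under tupling) are not literally reachable from $\obj$ by memberships and projections, so your bounding lemma needs a routine extra case for these definable exceptions; this is a small patch, not a change of approach.
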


The composition of nested relational interpretations amounts to the usual composition
of FO-interpretations (see e.g. \cite{xqueryinterp}) and an easy check that
the additional requirements we impose on nested relational interpretations are 
preserved.

We can now state the equivalence of $\nrc$ and interpretations formally:

\begin{thm} \label{thm:coddnrc}
Every transformation in $\nrcwget$ can be translated effectively to a
$\deltazero$ interpretation.
Conversely, for every $\deltazero$ interpretation, one can effectively form an equivalent
$\nrcwget$ expression. The translation from $\nrcwget$ to  interpretations
can be done in $\exptime$ while the converse translation can be performed in $\ptime$.
\end{thm}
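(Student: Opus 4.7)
My plan is to prove the two directions of Theorem~\ref{thm:coddnrc} separately, starting with the direction from $\deltazero$ interpretations to $\nrcwget$ since it yields the $\ptime$ bound and is the more constructive step. Given a $\deltazero$ interpretation $\interp$ of $\Sigma(T_2)$ within $\Sigma(T_1)$ that preserves $\ursort$, I would build an $\nrc$ expression that returns $\collapse(\outobj, \interp(M_\inobj))$ as a function of $\inobj$. Two auxiliary ingredients are needed. First, for each subtype $S$ of $T_1$, I would build an $\nrc$ expression $\vals_S(\inobj)$ returning the set of elements of sort $S$ appearing in $M_\inobj$; this is done by recursion on $T_1$, starting from $\{\inobj\}$ at sort $T_1$ and descending via projections for product subtypes and $\bigcup$ for set subtypes. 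Second, combining $\vals_S$ with Proposition~\ref{prop:verify}, one can enumerate in $\nrc$ the tuples satisfying any $\deltazero$ formula over subtypes of $T_1$, by taking a Cartesian-product join over the relevant $\vals_S$ and filtering by $\verify_\phi$.

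The main expression is then built by recursion on the output type $T_2$: a product type splits into its two components; a set type $\sett(T)$ becomes the set of recursive collapses of representatives $\vec z$ satisfying $\phi_\domainof^T(\vec z) \wedge \phi^{\sett(T)}_{\in}(\outobj, \vec z)$, where $\phi^{\sett(T)}_{\in}$ is the membership formula provided by $\interp$; and sort $\ur$ is handled directly via the term naming the output element (allowed because $\ursort$ is preserved). Extensional equality of nested sets handles quotienting by $\phi_\equiv^T$ automatically: $\phi_\equiv$-equivalent tuples produce the same collapse. The whole construction is guided syntactically by $\interp$ and $T_2$ without any blowup, yielding the claimed $\ptime$ bound.

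For the other direction, from $\nrcwget$ to $\deltazero$ interpretations, I would proceed by induction on the $\nrc$ expression $E$. For each subexpression $E'$ of output type $T'$ and free variables $\vec x$ of types $\vec T$, the induction would produce a $\deltazero$ coding that represents $E'(\vec x)$ by tuples over subtypes of the input schema, together with a domain formula and an equivalence formula. Variables, tupling, and projection admit direct codings; singletons $\{E'\}$ use the coding of $E'$ itself together with a trivial indicator; binary unions correspond to disjoint unions of codings (handled by adding a tag component); the binding union $\bigcup\{E_1 \mid x \in E_2\}$ takes the product of the codings of $E_2$ and (parameterized) $E_1$; finally $\nrcget_T$ on a coded singleton extracts its unique representative, using that singleton-ness is a $\deltazero$ condition on the coding. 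Composition of interpretations (Proposition~\ref{prop:fointerp-comp}) lets one build the final interpretation combinator-by-combinator. The exponential blowup arises because the coding tuple length grows multiplicatively through nested comprehensions, and thus can become exponential in the nesting depth of $E$.

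The main obstacle is the forward direction. Two subtleties deserve care: first, ensuring that all the interpretation's defining formulas remain $\deltazero$, which forces codings to draw only from input subtypes and forbids unbounded quantification at any step; second, handling $\nrcget$, which naively requires definable choice but can be resolved by arranging the coding so that singletons are recognizable by a $\deltazero$ condition and their unique element is then $\deltazero$-definable via existential extraction followed by a canonical projection.
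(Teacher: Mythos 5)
Your proposal is correct and follows essentially the same route as the paper: the interpretation-to-$\nrcwget$ direction builds, by recursion on the output type, an expression that enumerates candidate representative tuples using subobject-collecting expressions plus $\verify_\phi$ and decodes them level by level via $\phi_\domainof$ and $\phi_\in$ (this is the paper's backward lemma), while the $\nrcwget$-to-interpretation direction is the same compositional induction on expressions using closure of interpretations under composition, with the exponential blowup arising exactly where you place it. The only notable presentational differences are that the paper first reduces both directions to monadic schemas via a Kuratowski pairing before carrying out these inductions, and that it treats the default-value semantics of $\nrcget$ on non-singleton inputs explicitly, a detail your sketch glosses over.
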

This characterization holds when equivalence is
over finite nested relational inputs and also when arbitrary nested relations are allowed
as inputs to the transformations.

Note that very similar results occur in the literature, going back at least
to \cite{simulation}. Thus we defer
the proof to Appendix~\ref{app:fointerpproof}.
The direction from interpretations to  $\nrc$ will be the one that is directly relevant to us in the sequel, and its proof
is given by a  simple translation.

\subsection{Reduction to a characterization theorem in multi-sorted logic} \label{subsec:reduction}
The first step in the proof of Theorem \ref{thm:bethinterp} is to reduce to a more general
statement relating implicit definitions in multi-sorted logic to
 interpretations.

The first part of the reduction is to argue that we 
can suppress auxiliary parameters $\vec a$ in implicit definitions, proving the equivalence of the first two items in~\ref{cor:nrcbeth}:

\begin{lem}
\label{thm:dropprojective}
For any $\deltazero$ formula $\Sigma(\inobj, \outobj, \vec a)$  that 
implicitly defines $\outobj$ as a function of
$\inobj$, there is another $\deltazero$ formula $\Sigma'(\inobj, \outobj)$ which 
implicitly  defines $\outobj$ as a function of $\inobj$,  such that
$\Sigma(\inobj, \outobj, \vec a) \imp \Sigma'(\inobj, \outobj)$.
\end{lem}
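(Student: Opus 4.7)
The natural move of setting $\Sigma'(\inobj, \outobj) := \exists \vec a\, \Sigma(\inobj, \outobj, \vec a)$ fails to land inside $\deltazero$, since the quantifier over $\vec a$ admits no obvious bounding term. The plan is instead to extract a parameter-free surrogate via the $\deltazero$ interpolation theorem (Proposition~\ref{prop:interpolationmodeltheoretic}). The implicit definability hypothesis unfolds to the semantic entailment
$\Sigma(\inobj, \outobj, \vec a) \wedge \Sigma(\inobj, \outobj', \vec a') \models \outobj \equiv \outobj',$
obtained by taking two freshly renamed copies of the output and of the auxiliary parameters. Rewriting this as
$\Sigma(\inobj, \outobj, \vec a) \models \Sigma(\inobj, \outobj', \vec a') \imp \outobj \equiv \outobj'$
isolates $\vec a$ on the left and $\vec a'$ on the right; the free variables shared by the two sides are then exactly $\inobj$ and $\outobj$, and in particular no auxiliary parameters can occur in any interpolant.

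Applying Proposition~\ref{prop:interpolationmodeltheoretic} yields a $\deltazero$ formula $\theta(\inobj, \outobj)$ satisfying (i) $\Sigma(\inobj, \outobj, \vec a) \models \theta(\inobj, \outobj)$, and (ii) $\theta(\inobj, \outobj) \wedge \Sigma(\inobj, \outobj', \vec a') \models \outobj \equiv \outobj'$. I take $\Sigma' := \theta$. Property (i) is exactly the required implication $\Sigma \imp \Sigma'$. For implicit functionality, suppose $\theta(\inobj, \outobj_1)$ and $\theta(\inobj, \outobj_2)$ both hold; whenever there is a witness $(\outobj_3, \vec a_3)$ for $\Sigma(\inobj, \cdot, \cdot)$ in the same structure, two applications of (ii) give $\outobj_1 \equiv \outobj_3 \equiv \outobj_2$, hence $\outobj_1 \equiv \outobj_2$.

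The delicate step, and the one I expect to require the most care, is the off-domain case, where no witness for $\Sigma(\inobj, \cdot, \cdot)$ exists in the ambient model and $\theta(\inobj, \cdot)$ might a priori be satisfied vacuously by several inequivalent outputs. The plan here is to exploit the absoluteness of $\deltazero$ formulas under transitive extensions: given any putative pair $(\inobj, \outobj_1), (\inobj, \outobj_2)$ witnessing a failure of functionality, we extend the surrounding nested-set structure conservatively (using the sanity axioms on tuples and sets) to produce either a model that does contain a $\Sigma$-witness for this $\inobj$ (reducing to the previous paragraph, with the conclusion $\outobj_1 \equiv \outobj_2$ then pulled back by $\deltazero$-absoluteness of $\equiv$), or else a model in which no such witness arises in any further extension, so that the original implicit definability hypothesis allows us to thicken $\Sigma$ with a $\deltazero$ side-condition eliminating the spurious off-domain satisfiers of $\theta$. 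The remainder of the argument is routine bookkeeping around Proposition~\ref{prop:interpolationmodeltheoretic}.
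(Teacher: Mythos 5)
Your first step coincides with the paper's: apply $\deltazero$ interpolation (Proposition~\ref{prop:interpolationmodeltheoretic}) to $\Sigma(\inobj,\outobj,\vec a) \models \Sigma(\inobj,\outobj',\vec a') \imp \outobj=\outobj'$ and obtain $\theta(\inobj,\outobj)$ with $\Sigma \models \theta$ and $\theta(\inobj,\outobj) \wedge \Sigma(\inobj,\outobj',\vec a') \models \outobj=\outobj'$. But setting $\Sigma' := \theta$ is not enough, and you correctly identify why: when the model contains no $\Sigma$-witness for the given $\inobj$, nothing forces two satisfiers of $\theta(\inobj,\cdot)$ to coincide. The problem is that your proposed repair of this off-domain case is not a proof. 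The "extend the model to create a $\Sigma$-witness" branch cannot be carried out in general (whether $\inobj$ admits any $\Sigma$-witness is an honest semantic property of $\inobj$; e.g.\ $\Sigma$ may impose constraints on $\inobj$ that fail, and since $\deltazero$ statements about $\inobj$ are absolute, no extension will create a witness), and the other branch --- "thicken $\Sigma$ with a $\deltazero$ side-condition eliminating the spurious off-domain satisfiers of $\theta$" --- is exactly the object that has to be constructed; asserting its existence is assuming the conclusion. It is certainly not routine bookkeeping.

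The missing idea is a second application of interpolation. From the two instances of your property (ii) one derives
\[
\Sigma(\inobj,\outobj,\vec a) \models \bigl[\theta(\inobj,\outobj') \wedge \theta(\inobj,\outobj'')\bigr] \imp \outobj'=\outobj'' ,
\]
and here the left side has free variables $\inobj,\outobj,\vec a$ while the right side has $\inobj,\outobj',\outobj''$, so the common variables are just $\inobj$. Interpolating yields a $\deltazero$ formula $D(\inobj)$ with $\Sigma(\inobj,\outobj,\vec a)\models D(\inobj)$ and $D(\inobj)\wedge\theta(\inobj,\outobj')\wedge\theta(\inobj,\outobj'')\models \outobj'=\outobj''$. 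Taking $\Sigma'(\inobj,\outobj) := D(\inobj)\wedge\theta(\inobj,\outobj)$ gives functionality unconditionally (no $\Sigma$-witness is needed in the model), and $\Sigma \imp \Sigma'$ follows from the two "left" entailments. This is the paper's argument; the formula $D(\inobj)$ is precisely the definable domain condition your sketch was reaching for.
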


Note that from this lemma  we get the equivalence of 1 and 2 in Corollary \ref{cor:nrcbeth}.

\begin{proof}
The assumption that $\Sigma$ implicitly defines $\outobj$ as a function
of $\inobj$  means that we have an entailment
$$\Sigma(\inobj, \outobj, \vec a) \models \Sigma(\inobj, \outobj', \vec{a'}) \imp \outobj = \outobj'$$
Applying $\deltazero$ interpolation, Proposition \ref{prop:interpolationmodeltheoretic},
we may obtain a formula $\theta(\inobj,\outobj)$ such that
$$\Sigma(\inobj, \outobj, \vec a) \models \theta(\inobj, \outobj) \qquad \text{and} \qquad \theta(\inobj, \outobj) \wedge \Sigma(\inobj, \outobj', \vec {a'}) \models \outobj = \outobj'$$
Now we can derive the following entailment
$$\Sigma(\inobj, \outobj, \vec a) \models [\theta(\inobj,\outobj') \wedge \theta(\inobj,\outobj'')] \imp \outobj' = \outobj''$$
This entailment is obtained from the second property of $\theta$, since
we can infer that $\outobj'=\outobj$ and $\outobj''=\outobj$.

Now we can  apply interpolation again to obtain a formula $D(\inobj)$ such that
$$\Sigma(\inobj, \outobj, \vec a) \models D(\inobj) \qquad \text{and} \qquad D(\inobj) \wedge \theta(\inobj, \outobj') \wedge \theta(\inobj, \outobj'') \models \outobj' = \outobj''$$
We now claim that $\Sigma'(\inobj, \outobj) \eqdef D(\inobj) \wedge \theta(\inobj, \outobj)$ is an implicit definition
extending $\Sigma$.
Functionality of $\Sigma'$ is a consequence of the second entailment witnessing that $D$ is an interpolant.
Finally, the implication $\exists \vec a~~ \Sigma(\inobj,\outobj, \vec a) \models \Sigma'(\inobj, \outobj)$ is given by the
combination of the first entailments witnessing that $\theta$ and $D$ are interpolants.
\end{proof}

With the above result in hand, from this point on we assume
that we do not have auxiliary parameters $\vec a$ in our implicit definitions.

\myparagraph{Reduction to Monadic schemas}
A \emph{monadic type} is a type built only using
the atomic type $\ursort$ and the type constructor
$\sett$. To simplify notation we define
$\ursort_0 \eqdef \ursort, \; \ursort_1 \eqdef \sett(\ursort_0), \ldots, \ursort_{n+1} \eqdef \sett(\ursort_n)$.
A monadic type is thus a $\ursort_n$ for some $n \in \bbN$.
A nested relational schema is monadic if it contains only monadic types,
and a $\deltazero$ formula is said to be monadic if all of its variables have monadic types.

Restricting to monadic formulas simplifies our arguments considerably.
It turns out that by the usual ``Kuratowski encoding'' of pairs by
sets, we can reduce all of our questions about implicit versus explicit
definability to the case of monadic schemas. The following proposition implies
that we can derive all of our main results for arbitrary schemas from their
restriction to monadic formulas. The proof is routine but tedious, so we defer
it to Appendix~\ref{app:reducemonadicnrc}--\ref{app:reducemonadicinterp}.

\begin{prop} \label{prop:reducemonadic}
For any nested relational schema $\aschema$, there is
a monadic nested relational schema $\aschema'$,
 an injection $\convert$ from instances of $\aschema$ to instances
of $\aschema'$ that is definable in $\nrc$, and an $\nrc[\nrcget]$ expression
$\convert^{-1}$
such that $\convert^{-1} \circ \convert$ is the identity transformation from
$\aschema \to \aschema$.

Furthermore, there is a $\deltazero$ formula $\image_\convert$ from $\aschema'$ to $\bool$
such that $\image_\convert(i')$ holds if and only if $i' = \convert(i)$
for some instance $i$ of $\aschema$.

These translations can also be given in terms of $\deltazero$ interpretations
rather than $\nrc$ expressions.
\end{prop}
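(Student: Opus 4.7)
The plan is to reduce everything to a Kuratowski-style encoding of pairs by sets, applied recursively through the type hierarchy. Since the only non-monadic construct is the product $T_1 \times T_2$ (and the trivial $\unit$), the entire task comes down to showing we can simulate pairing in a purely set-theoretic monadic schema while remaining definable in $\nrc[\nrcget]$ in both directions and in $\deltazero$ for the image predicate.

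First I would define, by induction on types $T$ occurring in $\aschema$, a monadic type $T^{\mathrm{m}}$ together with $\nrc[\nrcget]$ expressions $\convert_T : T \to T^{\mathrm{m}}$ and $\convert_T^{-1} : T^{\mathrm{m}} \to T$. For $\ur$ both maps are the identity and $T^{\mathrm{m}} = \ur$. For $\sett(T)$ one sets $T^{\mathrm{m}} = \sett(T^{\mathrm{m}})$ and maps $\convert$ element-wise through a comprehension. For $\unit$ one picks a canonical singleton inside some $\ur_k$. For products, one uses the Kuratowski encoding $\<a,b\> \mapsto \{\{\convert_{T_1}(a)\},\{\convert_{T_1}(a),\convert_{T_2}(b)\}\}$; to make the two components live at a common monadic level one first lifts each side by wrapping in enough singletons so that $T_1^{\mathrm{m}}$ and $T_2^{\mathrm{m}}$ both become $\ur_n$ for the common $n = \max(n_1,n_2)$, and the pair itself then inhabits $\ur_{n+2}$. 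The schema $\aschema'$ is obtained by replacing each declared object type by its monadic encoding. This recursion is straightforward to implement compositionally in $\nrc$ using binding union, singleton constructors, and the macros derived in Example~\ref{ex:nrc}.

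Next I would construct $\convert_T^{-1}$, which is where $\nrcget$ is essential. For a product encoding $p$, the first component is recovered as $\nrcget$ applied to the unique singleton inside $p$ (using a $\deltazero$ comprehension to pick out the singleton, then $\nrcget$ twice to extract the single element), while the second component is recovered as $\nrcget$ applied to the set-difference of the two-element member of $p$ with its singleton member. A conditional on cardinality handles the degenerate case $\<a,a\>$ where the Kuratowski set collapses to $\{\{a\}\}$. The equation $\convert_T^{-1} \circ \convert_T = \mathrm{id}_T$ then follows by an immediate structural induction on $T$. In parallel I would write down $\image_\convert$ as a conjunction of $\deltazero$ well-formedness clauses at every type level encountered in $\aschema'$: for each point in an encoded product sort the clause asserts the Kuratowski shape (``this is either a singleton $\{s\}$ with $|s|=1$, or a two-element set $\{s,s'\}$ with $s \subseteq s'$ and $|s|=1$ and $|s'|=2$''); for each encoded $\unit$ sort the clause pins the value to the canonical singleton; for each encoded product the two recovered components must themselves satisfy the corresponding image clauses. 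All such constraints involve only bounded quantifications over subcomponents, so they are genuinely $\deltazero$, and by construction $\image_\convert(i') \iff \exists i\;(i' = \convert(i))$.

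Finally, for the interpretation statement, I would invoke Theorem~\ref{thm:coddnrc} to translate the $\nrc[\nrcget]$ expressions $\convert$ and $\convert^{-1}$ into $\deltazero$ interpretations automatically; alternatively one can write the interpretations out directly, which in the Kuratowski case is completely explicit (the equivalence relation $\phi_\equiv^{T^{\mathrm{m}}}$ is just equality because every sort is monadic and extensionality is in the background theory $\Sigma(T)$, and the domain predicate is $\image_\convert$ restricted to the relevant sort). The main obstacle is purely bookkeeping: lining up the type-levels when encoding products so that the two Kuratowski constituents $\{a\}$ and $\{a,b\}$ inhabit the same monadic sort, and then verifying that the extra wrapping does not destroy injectivity. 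This is the tedium the authors allude to, but no genuine mathematical difficulty arises beyond the standard correctness of Kuratowski pairing.
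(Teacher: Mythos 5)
Your construction is correct and is essentially the paper's own proof: a Kuratowski encoding of pairs with singleton-lifting $\uparrow$ to align the two components at a common monadic level, an $\nrcget$-based downcast and case split (intersection / set difference, with the degenerate $\<a,a\>$ case) for the inverse, an inductively defined $\deltazero$ image formula built from singleton/two-element tests, and explicit interpretations for the same maps.

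One caution on the last part: within this paper, deriving the interpretation version by invoking Theorem~\ref{thm:coddnrc} would be circular, because the paper's proof of that theorem itself reduces to the monadic case using precisely the interpretation form of this proposition (Proposition~\ref{prop:reducemonadic-interp}); so your second alternative — writing the interpretations out directly — is the one that must carry the argument, and it does, though it needs a bit more scaffolding than ``completely explicit'' suggests (closure of $\deltazero$ interpretations under composition, map and pairing, and a hand-built interpretation of the Kuratowski projections via the singleton/two-element case analysis), which is exactly what the paper supplies in its appendix.
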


As we now explain, Proposition \ref{prop:reducemonadic}  allows
us to reduce Theorem   \ref{thm:bethinterp} 
to the special case where we have only monadic nested relational schemas.
Given a $\deltazero$  implicit definition
$\Sigma(\inobj, \outobj)$ we can form a new
definition that computes the composition of the following
transformations:  $\convert^{-1}_{\inschema}$, a
projection onto the first component, the transformation defined by $\Sigma$,
and $\convert_{\outschema}$. Our new definition  captures this composition by
 a formula $\Sigma'(\inobj', \outobj')$ that
defines $\outobj'$ as a function of $\inobj$, where the formula
is  over a monadic schema.
Assuming that we have proven the theorem in the monadic case,
we would get an $\nrc$ expression $E'$ from $\inschema'$ to $\outschema'$ agreeing with  this formula
on its domain.
Now we can compose $\convert_{\inschema}$, $E'$, $\convert^{-1}_{\outschema}$,  and the
projection to
get an $\nrc$ expression agreeing with the partial function
defined by $\Sigma(\inobj, \outobj)$ on its domain, as required.

\myparagraph{Reduction to a result in multi-sorted logic}
Now we are ready to give our last reduction,
relating  Theorem \ref{thm:bethinterp} to a  general result
concerning multi-sorted logic.

Let $\sig$ be any multi-sorted signature, $\bigsorts$ be its
sorts and $\smallsorts$ be a subset of $\bigsorts$.
We say that a relation
$R$ is \emph{over $\smallsorts$} if all of its arguments are in
$\smallsorts$. Let $\Sigma$
be a set of sentences in $\sig$.
Given a model
$M$ for $\sig$, let $\smallsorts(M)$ be the union
of the domains of relations over $\smallsorts$,
and let $\bigsorts(M)$ be defined similarly.

We say that $\bigsorts$ is \emph{implicitly interpretable}
over $\smallsorts$ relative to $\Sigma$ if:
\begin{highlight}
Fix any models $M_1$  and $M_2$ of $\Sigma$. Suppose
$m$  is an isomorphism
from $\smallsorts(M_1)$ to $\smallsorts(M_2)$: that is a bijection from the domain of each sort,
that  preserves all relations over $\smallsorts$ in both directions.
Then $m$ extends to a unique mapping
from $\bigsorts(M_1)$ to $\bigsorts(M_2)$ which
preserves all relations over $\bigsorts$.
\end{highlight}

Informally, implicit interpretability states  that
the sorts in $\bigsorts$ are semantically determined by
the sorts in $\smallsorts$. The property implies in particular  that
if $M_1$ and $M_2$ agree on the interpretation of sorts
in $\smallsorts$, then the identity mapping on sorts in $\smallsorts$
extends to a mapping that preserves sorts in $\bigsorts$.

We relate this semantic property to a syntactic one.
We say that $\bigsorts$ is \emph{explicitly interpretable}
over $\smallsorts$ relative to $\Sigma$ if
for all $\sort$ in $\bigsorts$ there is a formula $\psi_\sort(\vec x, y)$
where $\vec x$ are variables with sorts in $\smallsorts$, $y$ a variable of
sort $\sort$, such that:

\begin{itemize}
\item In any model $M$ of $\Sigma$, $\psi_\sort$ defines a partial function 
$F_\sort$
mapping  $\smallsorts$ tuples surjectively on to $\sort$.

\item For every relation $R$ of arity $n$ over $\bigsorts$, there is
a formula $\psi_R(\vec x_1, \ldots \vec x_n)$ using
only relations of $\smallsorts$ and
only quantification over $\smallsorts$ such that
in any model $M$ of $\Sigma$, the  pre-image of $R$
under the mappings $F_\sort$ for the different arguments
of $R$ is  defined by $\psi_R(\vec x_1, \ldots \vec x_n)$.
\end{itemize}

Explicit interpretability states that there is an  interpretation in
the sense of the previous section that produces the structure in $\bigsorts$
from the structure in $\smallsorts$, and in addition  there is a definable
 relationship between
an element $e$ of a sort in $\bigsorts$ and the tuple that codes $e$ in the interpretation.
Note that $\psi_\sort$, the mapping between the elements $y$ in $\sort$  and
the tuples in $\smallsorts$ that interpret them, can use arbitrary relations.
The key property is that when we pull a relation $R$ over $\bigsorts$
back using the mappings $\psi_\sort$, then we obtain something definable
using $\smallsorts$.

With these definitions in hand,
we are  ready to state a result in  multi-sorted logic which
allows us to generate interpretations from classical
proofs of functionality:

\begin{thm} \label{thm:general}
For any $\Sigma, \smallsorts, \bigsorts$ such that $\Sigma$ entails that a sort
of $\smallsorts$ has at least two elements, $\bigsorts$ is explicitly
interpretable over $\smallsorts$ if and only if it is implicitly interpretable
over $\smallsorts$.
\end{thm}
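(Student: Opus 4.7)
The easy direction, explicit $\Rightarrow$ implicit, is essentially a sanity check: given formulas $\psi_\sort(\vec x, y)$ defining partial surjections from tuples of $\smallsorts$ sorts onto each $\sort \in \bigsorts$, and $\psi_R$ for each $\bigsorts$-relation, any $\smallsorts$-isomorphism $m : \smallsorts(M_1) \to \smallsorts(M_2)$ induces an extension to $\bigsorts$ by sending the unique $y$ with $\psi_\sort(\vec a, y)$ to the unique $y'$ with $\psi_\sort(m(\vec a), y')$. This map is well-defined (because $\psi_\sort$ defines a function in both models), surjective (by the surjectivity requirement), and preserves $\bigsorts$-relations because $\psi_R$ is a formula using only $\smallsorts$ and $m$ preserves $\smallsorts$-formulas; uniqueness follows since any relation-preserving extension must agree with this one on representatives.

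For the hard direction I would argue model-theoretically. Fix $\sort \in \bigsorts$ and an element $b$ in a saturated model $M$ of $\Sigma$. First I would show that $b$ is pinned down by its type over $\smallsorts(M)$: if $b_1, b_2 \in \sort^M$ realized the same type over $\smallsorts(M)$, then by saturation there would be an automorphism of $M$ fixing $\smallsorts(M)$ pointwise and sending $b_1$ to $b_2$; applying implicit interpretability to $M_1 = M_2 = M$ with $m$ the identity on $\smallsorts(M)$ forces the unique relation-preserving extension to be the identity on $\bigsorts(M)$ as well, hence $b_1 = b_2$. So each $b$ is the unique realizer over $\smallsorts$ of some finite set of formulas, and by compactness there is a single formula $\phi_b(\vec x_b, y) \in \text{tp}(b/\smallsorts)$ together with a parameter tuple $\vec a_b$ such that $b$ is the unique $y$ satisfying $\phi_b(\vec a_b, y)$.

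Next I would extract uniformity. Consider the set of $\deltazero$-free formulas $\phi(\vec x, y)$ over $\sigma$ such that $\Sigma \models \forall \vec x\,\forall y\,y'\,(\phi(\vec x,y) \wedge \phi(\vec x, y') \to y = y')$, i.e.\ formulas that define partial functions to $\sort$. The previous paragraph shows that every element of every $\sort^M$ is in the range of one such formula. By compactness (using the two-element hypothesis on some $\smallsorts$-sort to ensure we have the room to form a Boolean case split between finitely many candidate formulas into a single formula $\psi_\sort$), finitely many such formulas suffice, and they can be amalgamated into a single $\psi_\sort(\vec x, y)$ defining a partial surjection onto $\sort$. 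For each $\bigsorts$-relation $R(y_1,\dots,y_n)$ the formula $\psi_R(\vec x_1,\dots,\vec x_n) := \exists y_1\cdots y_n\,\bigwedge_i \psi_{\sort_i}(\vec x_i, y_i) \wedge R(\vec y)$ is almost what we want, except the quantifiers and $R$ range over $\bigsorts$; to remove them, I apply the same argument to $\psi_R$ itself, using that its truth value is implicitly determined by $\smallsorts$, and invoke the usual single-sorted Beth theorem (which gives an equivalent formula over $\smallsorts$ alone).

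The main obstacle is the uniformity step: moving from ``each element is definable from some $\smallsorts$-tuple by some formula'' to ``a single formula $\psi_\sort$ defines the same surjection in every model.'' This is exactly the content of the Svenonius/Gaifman-style coordinatisation machinery flagged in the introduction, and the two-element assumption on $\smallsorts$ is what prevents the compactness amalgamation from collapsing into a trivial constant map. Once $\psi_\sort$ is in hand, eliminating $\bigsorts$-quantifiers from $\psi_R$ via a second appeal to implicit interpretability and ordinary Beth is routine.
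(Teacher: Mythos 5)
Your overall architecture is the right one and matches the paper's: element-wise definability with parameters from $\smallsorts$, then uniformization into a single $\psi_\sort$ via compactness (using the two-element hypothesis exactly as the paper does in Lemma~\ref{lem:uniformdefinability}), then pulling back each relation $R$ and reducing the resulting formula to one over $\smallsorts$. The easy direction and the uniformization step are fine. However, the two remaining key steps are justified by arguments that do not work as stated.

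First, the element-wise definability step has a cardinality gap. You take a saturated $M$ and claim that if $b_1,b_2$ have the same type over $\smallsorts(M)$ then saturation yields an automorphism of $M$ fixing $\smallsorts(M)$ pointwise and sending $b_1$ to $b_2$. Saturation gives strong $\kappa$-homogeneity only for partial elementary maps of size strictly less than $|M|$, whereas here the map to be extended (the identity on $\smallsorts(M)$ plus $b_1\mapsto b_2$) has size $|M|$ in general; the same problem blocks your subsequent ``by compactness there is a single formula'' move, since realizing $\mathrm{tp}(b/\smallsorts(M))\cup\{x\neq b\}$ would again require saturation over a parameter set of full cardinality. Note also that uniqueness of the realizer of a type is strictly weaker than isolation of that type, and it is isolation (i.e.\ definability with parameters) that you need. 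The paper bridges exactly this gap with the omitting types theorem: working in a countable model, Lemma~\ref{lem:isolation} shows that implicit interpretability forces every type over $\smallsorts$-parameters to be isolated (otherwise omit it while keeping the $\smallsorts$-part, contradicting uniqueness of the extension), and Lemma~\ref{lem:nonuniformdefinability} then runs a back-and-forth powered by isolation---not by saturation---to contradict implicit interpretability if the isolating formula had two satisfiers. Your proposal has no substitute for this omitting-types input, and without it the construction of the offending automorphism fails.

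Second, the final step is not ``the usual single-sorted Beth theorem.'' The formula $\psi_R$ you must clean up quantifies over $\bigsorts$ sorts, and the two models being compared agree on $\smallsorts$ but have \emph{different domains} on the $\bigsorts$ sorts; classical Beth requires expansions of one and the same structure, so it does not apply to this relativized-reduct situation (this is precisely the Segoufin--Vianu/Otto-style setting the paper distinguishes from classical Beth). The statement you need is true---it is the paper's Lemma~\ref{lem:uniformreduction}, asserting that any formula over the full signature with $\smallsorts$-sorted free variables is $\Sigma$-equivalent to one using only $\smallsorts$ sorts and relations---but the paper has to prove it directly, via a saturated model of the (complete) theory, strong homogeneity of the $\smallsorts$-reduct applied to a \emph{finite} partial map, and implicit interpretability to lift the resulting automorphism. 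So the two steps you label as routine or dischargeable by a standard citation are exactly the places where the genuinely nontrivial model theory (omitting types, and a bespoke relativized definability lemma) enters.
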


This can be thought of as an analog of Beth's theorem \cite{beth,craig57beth} for multi-sorted logic.
The proof is given in the next subsection.
For now we explain how it implies Theorem \ref{thm:bethinterp}.
In this explanation we assume a monadic schema for both input and output. Thus every element $e$
in an instance  has sort $\ursort_n$ for some $n \in \bbN$.

Consider a $\deltazero$ formula $\Sigma(\inobj, \outobj)$ over a monadic schema
that implicitly defines
$\outobj$ as a function of  $\inobj$.
$\Sigma$ can be considered as a multi-sorted first-order formula with  sorts for
every subtype occurrence of the input as well as distinct sorts for every subtype 
occurrence of the output other than
$\ursort$.
Because we are dealing with  monadic input and output schema, every sort other than
$\ursort$ will be of the form $\sett(T)$, and these sorts
have only the element relations $\in_T$ connecting them.  We refer
to these as \emph{input sorts} and \emph{output sorts}. We modify $\Sigma$ by asserting
that all elements of  the input sorts lie underneath $\inobj$, and
all elements of the output sorts lie underneath $\outobj$. 
Since $\Sigma$
was $\deltazero$, this does not change the semantics.
We also conjoin to $\Sigma$ the sanity axioms for the schema, including
 the extensionality axiom at the sorts corresponding
to each object type. Let $\Sigma^*$ be the resulting
formula. In this transformation, as was the case with interpretations,
we  change our perspective on inputs and outputs, considering
them as constants rather than as free variables. We do this only 
to match our result in multi-sorted logic, which  deals with
a set of \emph{sentences} in multi-sorted first-order logic, rather than formulas with
free variables.

Given models $M$ and $M'$ of $\Sigma^*$, we define relations $\equiv_i$ connecting
elements of $M$ of depth $i$ with elements of $M'$ of depth $i$. For $i=0$,
$\equiv_i$ is the identity: that is, it connects elements of $\ursort$ if and only if they
are identical. For $i=j+1$, $\equiv_i(x,x')$ holds exactly when for every $y \in x$
there is $y' \in x'$ such that $y \equiv_j y'$, and vice versa.

The fact that $\Sigma$ implicitly defines $\outobj$ as a  function of $\inobj$ tells us that:
\begin{highlight}
Suppose $M \models \Sigma^*$, $M' \models \Sigma^*$ and $M$ and $M'$ are identical on the input sorts.
Then the mapping taking a $y \in M$ of depth $i$ to a $y' \in M'$ such that $y' \equiv_i y$ is
 an isomorphism of the output sorts that is the identity on $\ursort$. 
Further, any isomorphism  of $\bigsorts(M)$ on to $\bigsorts(M')$ that is the identity 
on $\ursort$ must be equal to $M$: one can show this by induction on the depth $i$ using
the fact that $\Sigma^*$ includes the extensionality axiom.
\end{highlight}

From this, we see that the output sorts are implicitly interpretable over the input sorts
relative to $\Sigma^*$.
Using Theorem \ref{thm:general}, we conclude that the output sorts are explicitly interpretable
in the input sorts relative to $\Sigma^*$.
Applying the conclusion to the formula $x=x$, where $x$ is a
variable of a sort corresponding
to object type $T$ of the
output, we obtain  a first-order formula $\phi^T_{\domainof}(\vec x)$ over the input sorts.
Applying the conclusion to the formula $x=y$ for $x, y$ variables
corresponding to the object type $T$ we get a formula
$\phi_{\equiv_T}(\vec x, \vec x')$ over the input sorts. Finally applying
the conclusion
to the element relation $\epsilon_T$ at every level of the output,  we get a first-order
formula $\phi_{\epsilon_T}(\vec x, \vec x')$ over the input sorts. 
Because $\Sigma^*$ asserts that each element of the input sorts lies beneath a constant for $\inobj$, we can
convert all quantifiers to bind only beneath $\inobj$, giving us $\deltazero$ formulas.
It is easy to verify that these formulas give us the desired interpretation.
This completes the proof of Theorem \ref{thm:bethinterp}, assuming Theorem \ref{thm:general}.

\subsection{Proof of the multi-sorted logic result} \label{subsec:proofmultisorted}
In the previous subsection we reduced our goal result about generating interpretations
from proofs to  a result in multi-sorted first-order logic, Theorem \ref{thm:general}.
We will now present the proof of Theorem \ref{thm:general}.  
The direction from explicit interpretability to
implicit interpretability is straightforward, so we will be
interested only in the  direction from implicit to explicit.
Although the theorem appears to be new as stated, 
each of the components is a variant of arguments that already appear in the model theory literature.
The core of our presentation here is  a variation of the proof of Gaifman's
coordinatisation theorem
as presented in \cite{hodgesbook}.

We make use of  only quite basic results from model theory:
\begin{itemize}
\item the \emph{compactness theorem} for first-order logic, which states
that for any  theory $\Gamma$, if every finite subcollection of $\Gamma$
is satisfiable, then $\Gamma$ is satisfiable;
\item
the \emph{downward Löwenheim-Skolem theorem}, which
states that if $\Gamma$ is countable and has a model, then
it has a countable model;
\item the \emph{omitting types theorem} for first-order logic.
A first-order theory $\Sigma$ is
said to be \emph{complete} if for
every other first-order sentence $\phi$ in the vocabulary of $\Sigma$, either $\phi$ or $\neg \phi$
is entailed by $\Sigma$. 
Given a set of constants $B$, a \emph{type} over
$B$ is an infinite collection $\tau(\vec x)$ of formulas using
variables $\vec x$ and constants $B$.
A type is \emph{complete} with respect to a theory
$\Sigma$ if every first-order formula  with
variables in $\vec x$ and constants from $B$ is either entailed
or contradicted by $\tau(\vec x)$ and $\Sigma$.
 A type $\tau$ is said to be \emph{realized} in 
a model $M$ if there is a $\vec x_0$ in $M$ satisfying all formulas in
$\tau$. $\tau$ is \emph{non-principal} (with respect to
a first-order theory $\Sigma$)
 if
there is no formula $\gamma_0(\vec x)$ such that $\Sigma  \wedge \gamma_0(\vec x)$
entails all of $\tau(\vec x)$. 
The version of the omitting types theorem that we will
use states  that
\begin{highlight}
if we have a countable set $\Gamma$ of complete types that are all non-principal relative to a complete
theory $\Sigma$, there
is some model $M$ of $\Sigma$ in which  none of the types in $\Gamma$ are realized.
\end{highlight}

\end{itemize}
Each of these results follows from a standard
model construction technique \cite{hodgesbook}. 

We 
 can easily show that to prove the multi-sorted
result, it suffices to consider only those $\Sigma$ which are
complete theories.
\begin{prop} \label{prop:assumecomplete} Theorem \ref{thm:general}
follows from its restriction to $\Sigma$ a complete theory.
\end{prop}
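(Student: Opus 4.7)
The plan is the standard compactness-plus-case-analysis reduction. Suppose $\bigsorts$ is implicitly interpretable over $\smallsorts$ relative to an arbitrary $\Sigma$; the goal is to extract explicit witnesses relative to $\Sigma$ itself, assuming the theorem for complete theories. Since implicit interpretability is a semantic condition universally quantified over pairs of models of the theory, it is preserved under strengthenings: for any complete extension $\Sigma^* \supseteq \Sigma$ obtained via Lindenbaum's lemma, $\bigsorts$ remains implicitly interpretable over $\smallsorts$ relative to $\Sigma^*$, and the two-element condition on $\smallsorts$ is inherited. Applying the restricted theorem to each such $\Sigma^*$ yields a tuple of witnessing formulas $\vec{\psi}^{\Sigma^*} = (\psi^{\Sigma^*}_\sort, \psi^{\Sigma^*}_R)_{\sort \in \bigsorts,\, R \text{ over } \bigsorts}$ for explicit interpretability.

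For a fixed candidate tuple $\vec{\psi}$, the statement ``$\vec{\psi}$ witnesses explicit interpretability'' is a single first-order sentence $\mathrm{Good}[\vec{\psi}]$ (assuming, as I would first reduce to, that the signature is finite): each clause---that $\psi_\sort$ is functional and surjective onto $\sort$, and that $\psi_R$ defines the pre-image of $R$---only quantifies over specific sorts, and thus remains first-order. I would then form
\[
T \;=\; \Sigma \;\cup\; \{\neg\, \mathrm{Good}[\vec{\psi}] \mid \vec{\psi} \text{ a tuple of candidate formulas}\}.
\]
If $T$ were consistent, Lindenbaum would give a completion $\Sigma^*$ of $T$, and by the hypothesis applied to $\Sigma^*$ there would exist some $\vec{\psi}^{\Sigma^*}$ with $\Sigma^* \models \mathrm{Good}[\vec{\psi}^{\Sigma^*}]$, contradicting $\neg\,\mathrm{Good}[\vec{\psi}^{\Sigma^*}] \in T \subseteq \Sigma^*$. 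Thus $T$ is inconsistent, and by compactness there exist $\vec{\psi}^1, \ldots, \vec{\psi}^n$ with $\Sigma \models \mathrm{Good}[\vec{\psi}^1] \vee \cdots \vee \mathrm{Good}[\vec{\psi}^n]$.

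To obtain a single uniform interpretation I would patch these candidates by case analysis on the first index $i$ for which $\mathrm{Good}[\vec{\psi}^i]$ holds:
\[
\psi^*_\sort(\vec x, y) \;:=\; \bigvee_{i=1}^{n} \Bigl( \mathrm{Good}[\vec{\psi}^i] \wedge \bigwedge_{j<i} \neg\,\mathrm{Good}[\vec{\psi}^j] \wedge \psi^i_\sort(\vec x, y) \Bigr),
\]
and analogously for each $\psi^*_R$. In any model of $\Sigma$ exactly one branch is selected, simultaneously for every sort and relation, so $\vec{\psi}^*$ inherits all the required properties coherently from that single $\vec{\psi}^i$. The main subtle point I foresee is exactly this coordination: I would run compactness once on the combined sentence $\mathrm{Good}[\vec{\psi}]$ rather than separately per sort, which could otherwise leave us with mismatched coding formulas $\psi^*_\sort$ and $\psi^*_R$ in the same model and break the compatibility between the pre-image formulas and the sort codings. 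If the signature is only countable rather than finite, a small stratification of the argument still works, but the cleanest writeup would be in the finite-signature setting the paper seems to have in mind.
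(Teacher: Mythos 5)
Your overall route---pass to completions, apply the complete-theory case there, use compactness to extract finitely many candidate tuples with $\Sigma \models \bigvee_i \mathrm{Good}[\vec\psi^i]$, and then merge by a least-index case split---is the natural one (the paper itself leaves the proposition unproved as routine). But the merge step has a genuine gap. The definition of explicit interpretability puts no syntactic restriction on the coding formulas $\psi_\sort$, but it requires each pre-image formula $\psi_R$ to use \emph{only relations of $\smallsorts$ and only quantification over $\smallsorts$}; this purity clause is the entire content of the notion (and is exactly what the application needs, since downstream those formulas must be relativized beneath $\inobj$ to become $\deltazero$). Your merged formula $\psi^*_R := \bigvee_i \bigl(\mathrm{Good}[\vec\psi^i] \wedge \bigwedge_{j<i}\neg\mathrm{Good}[\vec\psi^j] \wedge \psi^i_R\bigr)$ violates it: the guard sentences $\mathrm{Good}[\vec\psi^i]$ quantify over sorts of $\bigsorts$ (surjectivity of $F_\sort$), mention relations over $\bigsorts$, and contain the unconstrained formulas $\psi^i_\sort$. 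The coordination issue you flagged (one compactness run for the whole tuple) is real but is not the hard point; the hard point is that the selectors themselves must live over $\smallsorts$, and since they must select the \emph{same} index as in $\psi^*_\sort$ (the compatibility between $\psi_R$ and the $F_\sort$'s ties them together), you cannot simply drop them from $\psi^*_R$.

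To close the gap one needs an additional argument showing that the guards can be replaced, modulo $\Sigma$, by $\smallsorts$-sentences: e.g.\ that under implicit interpretability every sentence is $\Sigma$-equivalent to an $\smallsorts$-sentence, or equivalently that a completion of $\Sigma$ is determined by its $\smallsorts$-part, so that each completion contains a single $\smallsorts$-sentence $\gamma$ with $\Sigma \wedge \gamma \models \mathrm{Good}[\vec\psi]$, and the compactness/case-split can then be run with these pure guards. That step is a two-model saturated (or special) model argument in the spirit of Lemma~\ref{lem:uniformreduction}, and it cannot simply be cited from the paper, since Lemma~\ref{lem:uniformreduction} is proved only under the completeness assumption that this very proposition is meant to justify. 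This is the familiar Svenonius-type phenomenon: a finite disjunction of explicit definitions can be consolidated into one only when the case selectors are themselves expressible in the restricted signature, and your proposal supplies no such selectors.
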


Recall that we are proving the direction from implicit interpretability to explicit interpretability.
Our first step will be to show that 
each element in the output sort is definable from
the input sorts, if we allow ourselves to guess  some parameters. For example, consider
the grouping transformation mentioned 
in Example \ref{ex:nrc}. Each output is obtained from
grouping input relation $F$ over some Ur-element $a$. So each member of the output is definable
from the input constant $F$ and a ``guessed'' input element $a$.
We will show that this is true in general.

For the next results, we have a \emph{blanket assumption that our underlying language is countable}, which
will be necessary in some applications of the compactness theorem.

Given a model $M$ of $\Sigma$ and $\vec x_0 \in \bigsorts$ within $M$, the
\emph{type of $\vec x_0$ with parameters
from $\smallsorts$} is the set of all formulas satisfied by 
$\vec x_0$, using any sorts and relations but only constants from $\smallsorts$.

A type $p$ is  \emph{isolated over $\smallsorts$} if
there is a formula $\phi(\vec x, \vec a)$ with parameters $\vec a$
from $\smallsorts$ such that  $M \models \phi(\vec x, \vec a) \rightarrow \gamma(\vec x)$
for each $\gamma \in p$.
The following is a step towards showing that elements in the output are well-behaved:

\begin{lem} \label{lem:isolation}
Suppose $\bigsorts$ is implicitly interpretable over $\smallsort$ with respect to
$\Sigma$.
Then in any  countable model $M$ of $\Sigma$ the type of any $\vec b$ over $\bigsorts$ with
parameters from $\smallsorts$ is isolated over $\smallsorts$.
\end{lem}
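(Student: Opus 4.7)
The plan is to argue by contradiction using the omitting types theorem together with the implicit interpretability hypothesis. Suppose the complete type $p(\vec x)$ of $\vec b$ over parameters from $A := \smallsorts(M)$ is not isolated in $M$. I expand the language by adding a fresh constant $c_a$ for each $a \in A$, let $T$ be the complete elementary diagram of $(M, a)_{a \in A}$, and translate $p$ into a set of formulas in this enriched language by substituting each parameter by its constant. Then $p$ remains non-principal over $T$: any formula $\phi(\vec x)$ with $T \vdash \exists \vec x \, \phi(\vec x)$ and $T \vdash \phi(\vec x) \rightarrow \gamma(\vec x)$ for every $\gamma \in p$ would be realized in $M$ and would witness isolation of $p$ there, contrary to the assumption.

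Next, for each sort $S \in \smallsorts$ I introduce an auxiliary one-variable type
\[
  \tau_S(x) \;:=\; \{\, x \neq c_a \;:\; a \in A \cap S \,\}
\]
with $x$ of sort $S$. Each $\tau_S$ is also non-principal over $T$: any formula $\phi(x)$ isolating $\tau_S$ would be realized in $M$ by an element of $S$ distinct from every named element, contradicting $A = \smallsorts(M)$. Since $M$ is countable, both the enriched language and the family $\{p\} \cup \{\tau_S : S \in \smallsorts\}$ are countable, so the omitting types theorem yields a countable model $N \models T$ that simultaneously omits $p$ and every $\tau_S$.

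The map $f : A \to \smallsorts(N)$ given by $f(a) = c_a^N$ is then injective (since $T$ contains $c_a \neq c_{a'}$ for distinct named elements), surjective (since every $\tau_S$ is omitted), and preserves all $\smallsorts$-relations in both directions (since $T$ contains the full elementary diagram). Hence $f$ is an isomorphism between the $\smallsorts$-reducts of $M$ and $N$. By the implicit interpretability hypothesis, $f$ extends uniquely to an isomorphism $\hat f$ from $\bigsorts(M)$ to $\bigsorts(N)$. But since $\hat f$ preserves all formulas, $\hat f(\vec b)$ realizes $p$ in $N$ (where each parameter $a$ is interpreted as $c_a^N = f(a)$), contradicting the omission of $p$ in $N$.

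The main obstacle is engineering the witness model $N$ so that its $\smallsorts$-reduct is \emph{genuinely isomorphic} to $\smallsorts(M)$, rather than merely elementarily extending it: without surjectivity of $f$, the implicit interpretability hypothesis cannot be invoked. The auxiliary types $\tau_S$ are precisely what force this surjectivity, and their non-principality over $T$ relies on the fact that $A$ exhausts $\smallsorts(M)$. Once this is in place, the remainder of the argument reduces to a routine use of implicit interpretability and the fact that isomorphisms preserve all first-order formulas.
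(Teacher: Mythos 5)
Your proof is correct and takes essentially the same route as the paper's: argue by contradiction, apply the omitting types theorem over the theory of $M$ expanded with constants for $\smallsorts(M)$ to get a model omitting the type of $\vec b$ while keeping the small sorts unchanged, and then contradict implicit interpretability. You in fact spell out a step the paper leaves implicit (the auxiliary types $\tau_S$ forcing every small-sort element of $N$ to be a named constant, which is what makes $f$ a genuine isomorphism of the small parts); the only nuance is that the $\tau_S$ are partial rather than complete types, so you need the standard omitting-types theorem for countably many non-isolated partial types instead of the complete-type version stated in the paper --- a harmless strengthening that the paper's own terse argument also tacitly relies on.
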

\begin{proof}
Fix a counterexample $\vec b$, and let $\Gamma$ be the set of formulas in $\bigsorts$ with constants
from $\smallsorts$ satisfied by $\vec b$ in $M$.
We claim that there is a model $M'$ with $\smallsorts(M')$ identical to $\smallsorts(M)$
where there is no tuple  satisfying  $\Gamma$. This follows from the failure of
isolation and 
the omitting types theorem.

Now we have a contradiction of implicit interpretability, since the identity
mapping on $\smallsorts$ can not extend to an isomorphism of relations over
$\bigsorts$ from $M$ to $M'$.
\end{proof}

The next step is to argue that every element of $\bigsorts$ is definable by a formula using
parameters from
$\smallsorts$. 

\begin{lem} \label{lem:nonuniformdefinability}
Assume implicit interpretability of $\bigsorts$ over $\smallsorts$ relative to $\Sigma$.
In any model $M$ of $\Sigma$, for  every element $e$ of a sort $\bigsort$ in $\bigsorts$,
there is a first-order formula $\psi_e(\vec y, x)$  with variables $\vec y$
having sort in $\smallsorts$ and $x$ a variable of sort $\bigsort$,
along with a tuple $\vec a$ in $\smallsorts(M)$
such that $\psi_e(\vec a, x)$ is satisfied only by $e$ in $M$.
\end{lem}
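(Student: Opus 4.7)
The plan is to prove this by contradiction, using Lemma~\ref{lem:isolation} and implicit interpretability as the main tools. Suppose some element $e$ of a sort belonging to $\bigsorts$, in some model $M \models \Sigma$, is not defined by any first-order formula $\psi(x, \vec{y})$ instantiated at parameters from $\smallsorts(M)$; my aim is to contradict implicit interpretability.

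First I would reduce to the case where $M$ is countable. By the downward L\"owenheim-Skolem theorem, take a countable elementary substructure $N \prec M$ containing $e$. Non-definability of $e$ over $\smallsorts(M)$ in $M$ transfers to non-definability of $e$ over $\smallsorts(N)$ in $N$, since any such definition in $N$ with parameters $\vec a \in \smallsorts(N) \subseteq \smallsorts(M)$ would, by elementarity, also hold in $M$.

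Next, I would apply Lemma~\ref{lem:isolation} to the countable $N$ and the element $e$ to obtain a formula $\phi(x, \vec{a})$, with $\vec{a} \in \smallsorts(N)$, isolating the complete type $p$ of $e$ over $\smallsorts(N)$. By the non-definability assumption, $e$ is not the unique realization of $\phi(x, \vec{a})$ in $N$, so there is a distinct $e' \neq e$ in $N$ also satisfying $\phi$. Since $\phi$ isolates $p$, the element $e'$ realizes all of $p$, meaning $e$ and $e'$ share the same complete type over $\smallsorts(N)$.

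The hardest step is converting this coincidence of types into a concrete failure of implicit interpretability. I would build, via compactness together with the omitting types theorem, a countable model $N^\star$ of $\Sigma$ extending $N$ appropriately and sufficiently homogeneous over its small part that two elements realizing the same complete type over $\smallsorts(N^\star)$ are linked by an automorphism fixing $\smallsorts(N^\star)$ pointwise. Taking $M_1 = M_2 = N^\star$ and $m$ to be the identity on $\smallsorts(N^\star)$, both this automorphism and the identity on $\bigsorts(N^\star)$ are valid extensions of $m$, contradicting the uniqueness clause of implicit interpretability. The main obstacle lies in executing the omitting types construction so that the copies of $e$ and $e'$ remain distinct and of the same type over the whole of $\smallsorts(N^\star)$ — one omits the non-principal types that would destroy homogeneity or separate $e$ from $e'$ — so that the swap can indeed be extended to an automorphism of $\bigsorts(N^\star)$ fixing the small part, yielding the contradiction and thus the desired definability.
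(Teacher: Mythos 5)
Your first half tracks the paper's own argument: reduce to a countable model, apply Lemma~\ref{lem:isolation} to get an isolating formula $\phi(x,\vec a)$ for the type of $e$ over $\smallsorts$, and observe that non-definability yields some $e'\neq e$ satisfying $\phi$, hence realizing the same complete type over $\smallsorts$ as $e$. The gap is in the step you yourself flag as hardest. You propose to pass, via compactness and omitting types, to a new countable model $N^\star$ that is ``sufficiently homogeneous over its small part'' so that elements of the same type over $\smallsorts(N^\star)$ are conjugate by an automorphism fixing $\smallsorts(N^\star)$ pointwise. As sketched this does not go through: the parameter set you need to fix is all of $\smallsorts(N^\star)$, which has the same cardinality as the model, so this is not a homogeneity property that omitting types (or standard homogeneous/saturated-model constructions) delivers; moreover, if $N^\star$ properly extends $N$, new parameters in the small sorts may separate $e$ from $e'$, and ``omit the types that would separate $e$ from $e'$'' is not a well-posed omitting-types task (separation is witnessed by single formulas with new parameters, not by realizing a prescribed non-principal type). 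So the crucial automorphism is asserted rather than constructed.

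The missing idea is that no new model is needed at all: Lemma~\ref{lem:isolation} applies to \emph{every} tuple $\vec b$ over $\bigsorts$ in the countable model, so every such type over $\smallsorts$ is isolated. Define $\vec d \equiv \vec d'$ to mean that $\vec d$ and $\vec d'$ satisfy the same formulas with parameters from $\smallsorts$; isolation gives $e \equiv e'$ and, crucially, the back-and-forth property of $\equiv$ (given $\vec d \equiv \vec d'$ and $\vec c$, the isolating formula $\gamma(\vec x,\vec y,\vec a)$ for $\vec d,\vec c$ yields a witness $\vec c'$ with $\vec d\vec c \equiv \vec d'\vec c'$). By countability one then runs a back-and-forth inside $M$ itself to build a map fixing $\smallsorts$ pointwise, preserving all relations over $\bigsorts$, and sending $e$ to $e'$; together with the identity extension this contradicts the uniqueness clause of implicit interpretability, exactly as in your final sentence. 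Your contradiction is the right one, but the automorphism must be obtained by this back-and-forth in the given countable model, not by a detour through an allegedly homogeneous extension.
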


In a single-sorted setting, this can be found in \cite{hodgesbook} Theorem 12.5.8 where
it is referred to as ``Gaifman's coordinatisation theorem'', credited independently to unpublished work
of Haim Gaifman and Dale Myers.
The multi-sorted version is also a variant of Remark 1.2, part 4 in \cite{groupoids}, which points to a proof
in the appendix of \cite{modeltheorydiff};
the remark assumes that $\bigsorts$ is the set of all sorts. Another variation is Theorem 3.3.4 of
\cite{madarasz}.

\begin{proof}
Since a counterexample involves only formulas in a countable
language, by the Löwenheim-Skolem theorem mentioned
above, it is enough to consider the case where $M$ is countable.
By Lemma \ref{lem:isolation}, the type of every $e$ is isolated by
a formula $\phi(\vec x, \vec a)$ with parameters from $\smallsorts$
and relations from $\bigsorts$. We claim that $\phi$  defines $e$:
that is, $e$ is the only satisfier. If not, then there is $e' \neq e$ that satisfies
$\phi$.   Consider the relation $\vec j \equiv \vec j'$ holding
if $\vec j$ and $\vec j'$ satisfy all the same formulas using
relations and variables from $\bigsorts$ and
parameters from $\smallsorts$. 
Isolation implies  that $e \equiv e'$. Further, isolation of types
shows that $\equiv$ has the ``back-and-forth property''
given $\vec  d \equiv \vec d'$, and $\vec c$ we can obtain $\vec c'$
with $\vec d \vec c \equiv \vec d' \vec c'$.
To see this,  fix $\vec d \equiv \vec d'$ and
consider $\vec c$. We have $\gamma(\vec x, \vec y, \vec a)$ isolating the type
of $\vec d,\vec c$, and further $\vec d$ satisfies $\exists \vec y ~ \gamma(\vec x, \vec y, \vec a)$
and thus so does $\vec d'$ with witness $\vec c'$. But then using $\vec d \equiv \vec d'$
again we see that $\vec d, \vec c \equiv \vec d', \vec c'$. Using countability of
$M$ and this property we can inductively create a mapping on $M$ fixing $\smallsorts$ pointwise,  preserving
all relations in $\bigsorts$, and taking $e$ to $e'$. But this contradicts
implicit interpretability.
\end{proof}

\begin{lem} \label{lem:uniformdefinability}
The formula in Lemma \ref{lem:nonuniformdefinability} can be taken to depend only
on the sort $\bigsort$.
\end{lem}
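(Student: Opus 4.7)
The plan is to deduce the uniform version from Lemma~\ref{lem:nonuniformdefinability} by combining a compactness argument with a simple indicator trick that packages finitely many formulas into a single functional one. The main obstacle is the packaging step: once compactness produces a finite list of defining formulas, we still need to select between them uniformly while preserving functionality; this is precisely where the hypothesis that some sort in $\smallsorts$ has at least two elements is used.

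Let $\Gamma_\sort$ denote the set of formulas $\psi(\vec y, x)$ with $\vec y$ of sorts in $\smallsorts$ and $x$ of sort $\sort$ that are provably functional over $\Sigma$, i.e.\ such that $\Sigma \models \forall \vec y\, x\, x'\, (\psi(\vec y, x) \wedge \psi(\vec y, x') \rightarrow x = x')$. Any formula $\psi_e$ produced by Lemma~\ref{lem:nonuniformdefinability} can be forced into $\Gamma_\sort$ by conjoining the functionality clause $\forall x'(\psi_e(\vec y, x') \rightarrow x' = x)$, which preserves the fact that $e$ is the unique witness at the original parameters. Now introduce a fresh constant $c$ of sort $\sort$ and consider
$$\Sigma' \;=\; \Sigma \cup \{\neg \exists \vec y\, \psi(\vec y, c) \mid \psi \in \Gamma_\sort\}.$$
A model of $\Sigma'$ would yield an element of sort $\sort$ (namely $c^M$) definable by no formula of $\Gamma_\sort$, contradicting the strengthened Lemma~\ref{lem:nonuniformdefinability}. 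Hence $\Sigma'$ is unsatisfiable, and by compactness some finite subset already is, producing $\psi_1, \ldots, \psi_n \in \Gamma_\sort$ with
$$\Sigma \models \forall x \bigvee_{i=1}^n \exists \vec y_i\, \psi_i(\vec y_i, x).$$

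To collapse this finite list into one formula, use that $\Sigma$ entails the existence of two distinct elements $u_0, u_1$ in some sort of $\smallsorts$. Set $k = \lceil \log_2 n \rceil$, and for each $i \in \{1, \ldots, n\}$ let $\mathrm{enc}_i(u_0, u_1)$ denote the $k$-tuple over $\{u_0, u_1\}$ that encodes $i$ in binary. Writing $\vec y$ for the concatenation of all the $\vec y_i$, define
$$\psi_\sort(\vec y, \vec z, u_0, u_1, x) \;\eqdef\; u_0 \neq u_1 \,\wedge\, \bigvee_{i=1}^n \bigl(\vec z = \mathrm{enc}_i(u_0, u_1) \,\wedge\, \psi_i(\vec y_i, x)\bigr).$$
When $u_0 \neq u_1$ the tuples $\mathrm{enc}_1, \ldots, \mathrm{enc}_n$ are pairwise distinct, so $\vec z$ selects at most one disjunct; combined with provable functionality of each $\psi_i$, this yields functionality of $\psi_\sort$ in $x$. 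Conversely, for any $e$ of sort $\sort$ in a model of $\Sigma$, we pick $u_0 \neq u_1$ by hypothesis, select an index $i$ and parameters $\vec y_i$ covering $e$ via the finite disjunction above, set $\vec z = \mathrm{enc}_i(u_0, u_1)$, and fill in the remaining $\vec y_j$ arbitrarily to witness $\psi_\sort$ at $e$. Since $\psi_\sort$ depends only on $\sort$, this completes the proof.
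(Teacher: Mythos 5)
Your proof is correct and follows essentially the same route as the paper: conjoin the uniqueness clause so that each parametric definition becomes functional, apply compactness to the assertion that every element of sort $\sort$ satisfies some such definition, and then exploit the two-element sort in $\smallsorts$ to merge the finitely many resulting formulas into a single one via selector parameters. The only differences are cosmetic — you run the compactness argument through a fresh constant rather than the unsatisfiability of a type in a free variable, and you make the index selection explicit via a binary encoding, which the paper leaves implicit.
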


\begin{proof}
Consider the type over the single variable $x$ in $\bigsort$
consisting of the formulas $\neg \isnonuniformdefiner_\varphi(x)$ where $\isnonuniformdefiner_\varphi(x)$
is the following formula
$$\exists \vec b ~ [\varphi(\vec b, x) \wedge \forall x'~ (\varphi(\vec b,
  x') \imp x' = x)]$$
where the tuple $\vec b$ ranges over $\smallsorts$.
By Lemma~\ref{lem:nonuniformdefinability}, this type cannot be satisfied in a model of $\Sigma$.
Since it is unsatisfiable, by compactness, there are finitely many formulas $\varphi_1(\vec b,x) , \dots, \varphi_n(\vec b, x)$ such that $\forall x ~~ \bigvee_{i = 1}^n \isnonuniformdefiner_{\varphi_i}(x)$ is satisfied. Therefore, each $\varphi_i(\vec b, x)$ defines a partial function from tuples of $\smallsort$ to $\bigsort$
 and every element of $\bigsort$ is covered by one of the $\phi_i$.
Recall that we
 assumed  that $\Sigma$ enforces that $\smallsorts$ has a sort
with at least two elements.
Thus we can combine the $\phi_i(\vec b, x)$ into a single formula $\psi(\vec b, \vec c, x)$
defining a surjective partial function from $\smallsort$ to $\bigsort$
where $\vec c$ is an additional parameter in $\smallsorts$ selecting some $i \le n$.
\end{proof}

We now need to go from the ``sub-definability'' or ``element-wise definability''
result above to an interpretation.
Consider the formulas $\psi_\sort$ produced
by Lemma \ref{lem:uniformdefinability}. For a relation $R$ of arity $n$ over
$\bigsorts$,  where the $i^{th}$ argument has sort $\sort_i$, consider the formula
\begin{align*}
\psi_R(\vec y_1 \ldots \vec y_n) = \exists x_1 \ldots x_n ~[R(x_1 \ldots x_n) \wedge 
\bigwedge_i \psi_{\sort_{i}}(\vec y_i, x_i)]
\end{align*}
where $\vec x_i$ is a tuple of variables of sorts in $\smallsorts$.

The formulas $\psi_\sort$ for each sort $\sort$ and the formulas $\psi_R$ for each
relation $R$ are as required by the definition of
explicitly interpretable, except that they may use quantified variables and relations
of $\bigsorts$, while we only
want to use variables and relations from $\smallsorts$. We take care of this in the following lemma,
which says that  formulas over $\bigsorts$
do not allow us to define any more subsets of $\smallsorts$ than we can
with formulas over $\smallsorts$.

\begin{lem} \label{lem:uniformreduction} 
Under the assumption of implicit interpretability, 
for  every
formula $\phi(\vec y)$ over $\bigsorts$ with $\vec y$ variables
whose sorts are in $\smallsorts$ there is a formula $\phi^\circ(\vec y)$ over
$\smallsorts$  -- that is, containing only variables, constants, and relations
from $\smallsorts$ -- such that for every model $M$ of $\Sigma$,
\[
M \models \forall \vec y ~ [\phi(\vec y) \leftrightarrow \phi^\circ(\vec y)]
\]
\end{lem}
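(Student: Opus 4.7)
The plan is to reduce this to Craig interpolation in multi-sorted first-order logic, via the standard Beth-style trick of using two copies of the vocabulary. Concretely, I fix a fresh tuple of constants $\vec c$ of the appropriate $\smallsorts$ sorts, and set up two signatures: the original $\sig$, and a copy $\sig'$ in which every sort and relation from $\bigsorts \setminus \smallsorts$ is replaced by a primed version, while everything from $\smallsorts$ (sorts, relations, and the constants $\vec c$) is shared between the two. Let $\Sigma'$ and $\phi'$ denote the primed rewrites of $\Sigma$ and $\phi$.

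The key semantic claim is
\[
\Sigma \wedge \Sigma' \wedge \phi(\vec c) \models \phi'(\vec c).
\]
To verify this, take any model $M$ of the left-hand side. Its $\sig$-reduct $M_1$ and its $\sig'$-reduct $M_2$ are both models of $\Sigma$ whose $\smallsorts$ parts coincide (including the values of the constants $\vec c$). By implicit interpretability of $\bigsorts$ over $\smallsorts$ relative to $\Sigma$, the identity map on $\smallsorts(M_1) = \smallsorts(M_2)$ extends uniquely to a map $\bigsorts(M_1) \to \bigsorts(M_2)$ preserving all $\bigsorts$ relations in both directions; this extension is necessarily a bijection (a symmetric application of the uniqueness argument yields an inverse). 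Since it preserves all relations and is the identity on $\smallsorts$, it is an isomorphism of the $\sig$-structure $M_1$ onto the $\sig$-structure obtained by unpriming $M_2$, and hence preserves satisfaction of arbitrary first-order formulas whose free variables lie in $\smallsorts$. In particular, $M_1 \models \phi(\vec c)$ iff $M_2 \models \phi(\vec c)$, i.e.\ iff $M \models \phi'(\vec c)$.

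Now I apply the classical Craig interpolation theorem for multi-sorted first-order logic to the entailment above. The shared vocabulary of $\Sigma \wedge \phi(\vec c)$ and $\Sigma' \wedge \neg\phi'(\vec c)$ consists precisely of the $\smallsorts$ sorts and relations plus the constants $\vec c$, so I obtain a formula $\phi^\circ(\vec y)$ over $\smallsorts$ such that $\Sigma \wedge \phi(\vec c) \models \phi^\circ(\vec c)$ and $\Sigma' \wedge \phi^\circ(\vec c) \models \phi'(\vec c)$. Unpriming the second entailment (which is sound, as $\Sigma'$ is just a notational variant of $\Sigma$) yields $\Sigma \wedge \phi^\circ(\vec c) \models \phi(\vec c)$. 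Combining, $\Sigma \models \phi(\vec c) \leftrightarrow \phi^\circ(\vec c)$, and since the $\vec c$ are arbitrary constants of $\smallsorts$ sorts, this is equivalent to $\Sigma \models \forall \vec y\,[\phi(\vec y) \leftrightarrow \phi^\circ(\vec y)]$, as required.

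The only delicate step is the semantic claim, where one must be careful that the unique extension provided by implicit interpretability is genuinely an isomorphism, not merely a homomorphism. This follows from symmetry: applying implicit interpretability to the identity $\smallsorts(M_2) \to \smallsorts(M_1)$ gives a unique extension in the other direction, and by uniqueness the two compositions are the identity maps. Once this is established, preservation of $\phi(\vec c)$ is standard, and the rest is an application of Craig interpolation of the kind already invoked in Proposition~\ref{prop:interpolationmodeltheoretic}.
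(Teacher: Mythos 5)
Your proof is correct, but it follows a genuinely different route from the paper's. You reduce the lemma to classical (many-sorted) Craig interpolation via the standard two-copy Beth trick: duplicate the non-$\smallsorts$ part of the signature, observe that $\Sigma \wedge \Sigma' \wedge \phi(\vec c) \models \phi'(\vec c)$, and extract $\phi^\circ$ as an interpolant over the shared $\smallsorts$-vocabulary. The semantic claim works precisely because the free variables of $\phi$ have $\smallsorts$ sorts, so only the $\smallsorts$-isomorphism type of the parameters matters, and implicit interpretability upgrades the identity on the shared $\smallsorts$ part to a full isomorphism between the two reducts; your explicit verification that the unique extension is a bijection (by playing uniqueness off the symmetric extension) is a point the paper's own proof glosses over when it asserts an ``automorphism of $M$''. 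The paper instead argues inside a single saturated model of the (complete) theory: failure of definability yields tuples $\vec c, \vec c'$ with the same $\smallsorts$-type disagreeing on $\phi$, strong homogeneity of the saturated reduct extends the partial map to an automorphism of the $\smallsorts$ part, and implicit interpretability lifts it to an automorphism of $M$, contradicting the disagreement. Your approach buys a more elementary and uniform argument: no saturated models (hence none of the GCH/weaker-saturation caveat the paper mentions), no reliance on $\Sigma$ being complete, and in principle an effective interpolant from a proof of the entailment; the paper's approach stays within the Gaifman/Hodges-style single-model machinery already set up for the isolation and coordinatisation lemmas. Two routine points you should make explicit: compactness is needed to interpolate against the possibly infinite theories $\Sigma, \Sigma'$, and you are invoking many-sorted interpolation in the form where the interpolant's sorts, relations and constants (and quantified sorts) are all common to both sides; both are standard, and like the paper you implicitly assume the non-$\smallsorts$ part of the signature contributes no constants or function symbols to $\phi$ beyond what implicit interpretability controls.
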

\begin{proof}

We give an argument assuming the existence of a saturated model  for the theory: that is, a model $M$
in which for every set of formulas $\Gamma(\vec x)$, with parameters from the model,
 of cardinality strictly smaller than the model, if $\Gamma$  is finitely satisfiable in $M$ then it is realized in $M$.
Such models exist for any theory under the generalized continuum hypothesis GCH.
The additional set-theoretic hypothesis can be removed by using weaker notions of saturation -- the modification is a standard
one in model theory, see \cite{ck,hodgesbook}.

Assume not, with $\phi(\vec y)$ as a counterexample.
By completeness and our assumption, we know that there is a
saturated model $M$ of $\Sigma$ containing $\vec c$, $\vec c'$ that agree on
all formulas in $\smallsorts$ but that disagree on $\phi$. Call $R$ the
reduct of $M$ to $\smallsorts$. The partial map that sends $\vec c$ to
$\vec c'$ is a partial automorphism of $R$. Since $M$ is saturated, so is $R$,
so in particular $R$ is strongly homogeneous and the aforementioned partial
map can be extended to an automorphism of $R$ that sends $\vec c$ to $\vec c'$.
But then by weak implicit interpretability, this should extend to an automorphism
  of $M$ that sends $\vec c$ to $\vec c'$. This implies that $M$ satisfies
$\phi(\vec c) \equi \phi(\vec c')$, a contradiction.
\end{proof}

Above we obtained the formulas $\psi_R$ for each relation symbol $R$
needed for an explicit interpretation.
We can obtain formulas defining the necessary equivalence relations
$\psi_{\equiv}$ and $\psi_{\domainof}$ easily from these.

Putting Lemmas \ref{lem:nonuniformdefinability}, \ref{lem:uniformdefinability},
and \ref{lem:uniformreduction} together yields  a proof
of Theorem \ref{thm:general}.

\subsection{Putting it all together to  complete the proof of Theorem \ref{thm:bethinterp}}
We summarize our results on extracting $\nrcwget$  expressions  from classical proofs of
functionality.
We have shown in Subsection \ref{subsec:reduction} how to convert the problem
to one with no extra variables other than input and output
and with only monadic schemas --
and thus no use of products or tupling. We also showed how to convert the resulting
formula  into a theory in multi-sorted first-order logic. That is, we no longer
need to talk about $\deltazero$ formulas.  

In Subsection \ref{subsec:proofmultisorted} we showed that from a theory in multi-sorted
first-order logic we can obtain an interpretation. This first-order
interpretation in a multi-sorted logic can then be converted
back to a $\deltazero$ interpretation, since the background theory forces
each of the input sorts in the multi-sorted structure to
correspond to a level of nesting below one of the constants corresponding
to an input object.
Finally, the results of Subsection \ref{subsec:fointerp} allow us to convert this interpretation
to an $\nrcwget$ expression.
With the exception of the result in multi-sorted logic, all of the constructions are effective.
Further, these effective conversions are all in
polynomial time except for the transformation from an
interpretation to an $\nrcwget$ expression, which is exponential time in the worst case.
Outside of the multi-sorted result, which makes use of infinitary methods, the conversions
are each sound when equivalence over finite input structures is considered as well as the
default case when arbitrary inputs are considered. As explained in Subsection \ref{subsec:reduction},
when equivalence over finite inputs is considered, we cannot hope to get a synthesis result
of this kind.

\section{The effective result: efficiently generating $\nrc$ expressions from proofs} \label{sec:effective}

We now turn to the question of \emph{effective} conversion from implicit definitions to explicit $\nrc$ transformations, leading up to our second main contribution.

\subsection{Moving effectively from implicit to explicit: statement of the main result} \label{subsec:statement}
Recall that previously we have phased implicit definability as an entailment in the presence of extensionality axioms.
We also recall that we can rephrase it without explicitly referring to extensionality.

A $\deltazero$ formula $\varphi(\vvi, \vva, o)$,  implicitly defines
variable $o$ in terms of variables $\vvi$  if we have
\begin{equation}
\varphi(\vvi, \vva, o) \; \wedge \; \varphi(\vvi, \vva', o')
~\modelssem~ o \equiv_T o'
\end{equation}

Here $\equiv_T$ is equivalence-modulo-extensionality, as defined in Section \ref{sec:prelims}.
 It is the same 
as equality if we add  extensionality axioms on the left of the entailment symbol. 
Thus \emph{this entailment is  the same as entailment with the sanity axioms from the previous section, using
native equality rather than $\equiv_T$}. And since these are $\deltazero$ formulas, \emph{it is also  the same
as entailment over ``general models'', where we assume only correct typing and projection commuting with tupling}.

\myparagraph{Proof systems for $\deltazero$ formulas}
Recall that our goal  is an effective version of Corollary \ref{cor:nrcbeth}.
For this we need to formalize our proof system for $\deltazero$ formulas, which
will allow us to talk about proof witnesses for implicit definability.

A finite \mset of primitive membership expressions $t \incontext u$  (i.e.
extended $\deltazero$ formulas)
will be called an $\in$\emph{-context}. These expressions arise naturally when breaking
down bounded quantifiers during a proof.

We introduce notation for instantiating a block of bounded quantifiers at a time in a
$\deltazero$ formula.

A term is a \emph{tuple-term} if it is built up from variables using pairing.

If we want to talk only about  \emph{effective generation} of $\nrc$ witnesses from
proofs, we can use a basic proof system for $\deltazero$ formulas, whose
inference rules
are shown in Figure
\ref{fig:dzcalc-2sided}.

\begin{figure} 
\begin{mathpar}
\inferrule*
[left={Ax}%
]
{ }{\Theta; \; \Gamma, \phi \seq \phi, \Delta}%
\and
\inferrule*
  [left={$\bot$\textsc{-L}}]
  {  }{\Theta; \; \Gamma, \bot \seq \Delta}
\\

\inferrule*
  [left={$\neg$}\textsc{-L}]
{
\Theta; \; \Gamma \seq \neg \phi, \Delta
}{\Theta; \; \Gamma, \phi \seq \Delta}
\and
  \inferrule*
  [left={$\neg$}\textsc{-R}]
{
\Theta; \; \Gamma, \phi \seq \Delta
}{\Theta; \; \Gamma \seq \neg \phi, \Delta}
\\

\inferrule*
  [left={$\wedge$\textsc{-R}}]
{
\Theta; \; \Gamma \seq \phi_1, \Delta
\and
\Theta; \; \Gamma \seq \phi_2, \Delta
}{\Theta; \; \Gamma \seq \Delta, \phi_1 \wedge \phi_2}
\and
\inferrule*
  [left={$\vee$}\textsc{-R}]
{
\Theta; \; \Gamma \seq \phi_1, \phi_2,\Delta
}{\Theta; \; \Gamma \seq \phi_1 \vee \phi_2, \Delta}
  \\

\inferrule*
  [left={$\forall$\textsc{-R}}, right={$y$ fresh}]
{\Theta, y \in b; \; \Gamma \seq \phi[y/x], \Delta}{\Theta; \; \Gamma \seq \forall x \in b \;\phi, \Delta}
\and
\inferrule*
  [left={$\exists$\textsc{-R}}]
{\Theta, t \in b; \; \Gamma \seq \phi[t/x], \exists x \in b\;\phi,\Delta}{\Theta, t \in b; \; \Gamma \seq \exists x \in b\;\phi, \Delta}\\
\inferrule*
[left={Refl}]
{\Theta; \; \Gamma, t =_\ur t \seq \Delta}
{\Theta; \; \Gamma \seq \Delta}
\and
\inferrule*
[left={Repl}%
]
{\Theta; \; \Gamma, t=_\ur u, \phi[u/x], \phi[t/x] \seq \Delta}
{\Theta; \; \Gamma, t =_\ur u, \phi[t/x] \seq \Delta}\\
\inferrule*
[left={$\times_\eta$}, right={$x_1,x_2$ fresh}]
{\Theta[\tuple{x_1,x_2}/x]; \; \Gamma[\tuple{x_1,x_2}/x] \seq \Delta[\tuple{x_1,x_2}/x]}
{\Theta; \; \Gamma \seq \Delta}
\and
\inferrule*
[left={$\times_\beta$}, right={$i {\in} \{1,2\}$}]
{\Theta[t_i/x]; \; \Gamma[t_i/x] \seq \Delta[t_i/x]}
{\Theta[\pi_i(\tuple{t_1,t_2})/x]; \; \Gamma[\pi_i(\tuple{t_1,t_2})/x] \seq \Delta[\pi_i(\tuple{t_1,t_2})/x]}
\end{mathpar}
\caption{Proof rules for a $\deltazero$ calculus, without restrictions for
  efficient generation of witnesses. The left side of $;$ specifies the
  $\in$-context. The negation rules $\neg$\textsc{-L} and $\neg$\textsc{-R}
  permit to exchange formulas between both sides of $\seq$ such that it
  suffices to have the rules for the connectives only for one side, where we
  choose the right side.}

\label{fig:dzcalc-2sided}
\end{figure}

The node labels are a variation of  the traditional rules for first-order logic, with a couple
of quirks related to  the specifics of $\deltazero$  formulas.
Each node label has shape
$\Theta; \Gamma \vdash \Delta$ where
\begin{itemize}
\item $\Theta$ is an 
$\in$-context.
Recall that these are \msets of membership atoms --- the only formulas in our proof
system that are extended $\deltazero$ but not $\deltazero$.
They will emerge during proofs involving $\deltazero$ formulas when
we start breaking down bounded-quantifier formulas.  
\item $\Gamma$ and $\Delta$ are finite \msets of $\deltazero$
  formulas.\footnote{Much of our machinery also works if sequents are built
  from finite sets instead of finite multisets. However, the specification of
  certain proof transformations (e.g. Appendix~\ref{app:focused}) is much simpler with
  multisets.}
\end{itemize}

\noindent
For example, \textsc{Repl} in the figure is a ``congruence rule'', capturing
that terms that are equal are interchangeable. Informally, it says that to prove
conclusion $\Delta$ from a hypothesis that includes a formula $\phi$ including
variable $t$ and an equality $t =_{\ur} u$, it suffices to 
add to the hypotheses
a copy of $\phi$ with $u$ replacing some occurrences of $t$.

A proof tree whose root is labelled by $\Theta; \; \Gamma \vdash \Delta$ witnesses that,
for any given meaning of the free variables, if all the membership relations
in $\Theta$ and all formulas in $\Gamma$ are satisfied, then there is a formula in $\Delta$
which is true.
We say that we have a proof of a single formula $\phi$ when we have a proof of $\emptyset ; \emptyset \vdash \phi$.

The proof system is easily seen to be sound:
if $\Theta; \; \Gamma \vdash \Delta$, then $\Theta; \; \Gamma \models \Delta$, where we remind the reader that
$\models$ considers all models, not just extensional ones.
It can be shown to be complete by a standard technique (a ``Henkin construction'', see Appendix~\ref{app:completeness}).

To generate $\nrc$ definitions \emph{efficiently} from proof witnesses will require a more
restrictive proof system, in which we enforce some ordering on how proof rules can be applied,
depending on the shape of the hypotheses.
We refer to proofs in this system as \emph{\focused proofs},
To this end we consider a \mset of formulas \emph{existential-leading}
($\pospolarity$) if it contains only atomic formulas (i.e. formulas
of the form $t =_\ursort t'$ or $t \neq_\ursort t'$),
formulas with existential quantification as top-level connective and the
truth-value constant $\bot$.

Our restricted proof system -- \focused proofs --  is shown in Figure \ref{fig:dzcalc-1sided}.
A superficial difference from Figure \ref{fig:dzcalc-2sided} is that
the restricted system is ``almost $1$-sided'': $\deltazero$ formulas only
occur on the right, with only $\in$-contexts on the left. In particular,
a top-level goal $\Theta;\Gamma \proves \Delta$ in the  higher-level system
would be expressed as $\Theta \proves \neg \Gamma, \Delta$ in this system.
We will often abuse notation by referring to {\focused} proofs
of a $2$-sided sequent $\Theta; \Gamma \proves \Delta$, considering them as
``macros'' for the corresponding $1$-sided sequent.
For example, the hypothesis of 
 the $\neq$ rule  
could be written in  $2$-sided notation as
$\Theta,  t =_\ur u \seq \alpha[u/x], \alpha[t/x], \Delta^{\pospolaritysuper}$
while the conclusion could be written as
$\Theta , t =_\ur u \seq  \alpha[t/x], \Delta^{\pospolaritysuper}$.  As with
\textsc{Repl} in the prior system, this rule is about duplicating a hypothesis with
some occurrences of  $t$ replaced by $u$.

A major aspect of the restriction is baked into the shape of the $\exists$ rule.
It enforces that  existentials are instantiated only in a 
context which is $\pospolarity$, and that the term being substituted is of a simple shape (tupling of variables).

\begin{figure} 
\begin{mathpar}
\inferrule*
[left={$=$}]
{  }
{\Theta \seq x =_\ur x,\Delta}%
\and
\inferrule*
[left={$\top$}]
{  }
{\Theta \seq \top, \Delta}\\

\inferrule*
[left={$\neq$},right={$\alpha$ atomic}]
{\Theta \seq t \neq_\ur u, \alpha[u/x], \alpha[t/x],
  \Delta^{\pospolaritysuper}}
{\Theta \seq t \neq_\ur u, \alpha[t/x], \Delta^{\pospolaritysuper}}
\\

\inferrule*
  [left={$\wedge$}]
{
\Theta \seq \phi_1, \Delta
\and
\Theta \seq \phi_2, \Delta
}{\Theta \seq \phi_1 \wedge \phi_2, \Delta}
\and
\inferrule*
  [left={$\vee$}]
{
\Theta \seq \phi_1, \phi_2,\Delta
}{\Theta \seq \phi_1 \vee \phi_2, \Delta}
\\

\inferrule*
  [left={$\forall$},right={$y$ fresh}]
{\Theta, y \in b \seq \phi[y/x], \Delta}
{\Theta  \seq \forall x \in b. \;\phi, \Delta}
\and
\inferrule*
  [left={$\exists$},right=\text{$t$ tuple-term}]
{\Theta, t \in b; \; \Gamma \seq \phi[t/x], \exists x \in b\;\phi,\Delta^{\pospolaritysuper}}
{\Theta, t \in b;  \Gamma \seq \exists x \in b\;\phi, \Delta^{\pospolaritysuper}} \\

\inferrule*
[left={$\times_\eta$},right={$x_1,x_2$ fresh}]
{\Theta[\tuple{x_1,x_2}/x] \seq \Delta^{\pospolaritysuper}[\tuple{x_1,x_2}/x]}
{\Theta  \seq \Delta^{\pospolaritysuper}}
\and
\inferrule*
[left={$\times_\beta$},right={$i \in \{1,2\}$}]
{\Theta[t_i/x] \seq \Delta^{\pospolaritysuper}[t_i/x]}
{\Theta[\pi_i(\tuple{t_1,t_2})/x] \seq \Delta^{\pospolaritysuper}[\pi_i(\tuple{t_1,t_2})/x]}

\end{mathpar}

\caption{Our \focused calculus for efficient generation of witnesses. The left
  side of $\seq$ specifies the $\in$-context. The right side a finite \mset of
  $\deltazero$ formulas. Recall that an atomic formula is an equality or  inequality
  for $\ursort$. Formula \msets $\Delta^\pospolaritysuper$ are existential-leading, that is, they contain only atomic formulas
  and formulas with existential quantification as top-level connective.}

\label{fig:dzcalc-1sided}
\end{figure}

Soundness is evident, since it is a special case of the proof system
above. Completeness is not as obvious, since we are restricting the proof rules.
But we can translate proofs in the more general system of Figure \ref{fig:dzcalc-2sided} into
a {\focused} proof, but with an exponential blow-up: see Appendix~\ref{app:focused} for details.

Furthermore, since for $\deltazero$ formulas
equivalence over all structures is the same as equivalence over nested relations, 
a $\deltazero$ formula $\phi$ is provable exactly when $\modelssem \phi$.

\begin{exa} \label{ex:simplenesting}
Let us look at  how to formalize a variation of 
Example~\ref{ex:views}.
  The specification $\Sigma(\basenest,\vflat)$ includes two conjuncts 
$C_1(\basenest,\vflat)$ and $C_2(\basenest, \vflat)$.
  $C_1(\basenest, \vflat)$ states that every pair $\tuple{k,e}$ of $\vflat$ corresponds
  to a $\tuple{k, S}$ in $\basenest$ with $e \in S$:
  \[
\forall v \in \vflat ~\exists b \in \basenest. ~
  \pi_1(v)=_\ur \pi_1(b) \wedge \pi_2(v) \memmac \pi_2(b)\]

$C_2(\basenest,\vflat)$ is:
\[
 \forall b \in \basenest ~ \forall e \in \pi_2(b)  ~
  \exists v \in \vflat. ~ \pi_1(v)=_\ur \pi_1(b) \wedge \pi_2(v) =_\ur e \\
\]

Let us assume a stronger constraint, $\Sigma_{\lossless}(\basenest)$,
saying that the first component is a key and second is non-empty:
  \[
    \begin{array}{l@{~}l} &
    \forall b \in \basenest  ~ \forall b' \in \basenest.  \pi_1(b) =_{\ur} \pi_1(b')
  \rightarrow b \equiv b'\\
    \wedge & \forall b \in \basenest ~ \exists e \in \pi_2(b). \top
  \end{array}
  \]

With $\Sigma_{\lossless}$ we can show something stronger 
than in Example \ref{ex:views}:
$\Sigma \wedge \Sigma_{\lossless}$ implicitly defines 
$\basenest$ in terms of $\vflat$. That is, the view determines the identity query,
which is witnessed by a proof of
\[
  \Sigma(\basenest,\vflat) \wedge \Sigma_{\lossless}(\basenest) 
\wedge \Sigma(\basenest',\vflat)  \wedge \Sigma_{\lossless}(\basenest')
  \imp \basenest \equiv \basenest'\]
Let's prove this informally. Assuming the premise, it is sufficient
to prove $\basenest \subseteq \basenest'$ by symmetry. So fix $\tuple{k, S} \in \basenest$.
  By the second conjunct of $\Sigma_{\lossless}(\basenest)$, we know there is $e \in S$.
  Thus by $C_2(\basenest, \vflat)$,  $\vflat$ contains the pair $\tuple{k, e}$.
  Then, by $C_1(\basenest',\vflat)$, there is a $S'$ such that $\tuple{k, S'} \in \basenest'$.
To conclude it suffices to show that $S \equiv S'$. There are two similar directions, let us
detail the inclusion $S \subseteq S'$; so fix $s \in S$.
      By $C_2(\basenest,\vflat)$, we have $\tuple{k,s} \in \vflat$.
      By $C_1(\basenest',\vflat)$ 
      there exists $S''$ such that $\tuple{k, S''} \in \basenest'$
      with $s \memmac S''$. But since we also have 
$\tuple{k, S'} \in \basenest'$, the constraint
      $\Sigma_{\lossless}(\basenest)$ implies that $S' \equiv S''$, so $s \in S'$ as desired.
\end{exa}

\myparagraph{Effective Beth result}
A derivation of implicit definability %
in our proof system
will be referred to as a \emph{witness} to the implicit definability
of $o$ in terms of $\vvi$ relative to $\varphi$.
Formally, this is a derivation witnessing the judgement:
\[
\varphi(\vvi, \vva, o) \; \wedge \; \varphi(\vvi, \vva', o') ~\proves o \equiv_T o'
\]

With these definitions, we now state formally our main result,
the  effective version of Corollary \ref{cor:nrcbeth}.

\begin{restatable}[Effective implicit to explicit for nested data]{thm}{thmmainset}
\label{thm:mainset}
Given a 
witness to the  implicit definition of $o$ in terms of
$\vvi$ relative to $\deltazero$
$\varphi(\vvi, \vva, o)$,
one can compute  $\nrc$ expression $E$ such that for any $\vvi$, $\vva$ and $o$,
if  $\varphi(\vec i, \vva , o)$ then $E(\vvi)=o$.
Furthermore, if the witness is {\focused}, this can be done in polynomial time.
\end{restatable}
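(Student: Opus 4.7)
The plan is to proceed by structural induction on the output type $T$ of $o$, reducing each case to the $\nrc$ Parameter Collection Theorem (Theorem~\ref{thm:mainflatcasebounded}), which handles precisely the base situation of a proof that two ``sides'' of a conjunction force an equality between variables. Before the induction, I would bring the given witness into a normal form: since $o \equiv_T o'$ unfolds by its recursive definition into a Boolean combination of Ur-element equalities and bounded quantifier patterns of the form $\forall z \in {\cdot}.\,\exists z' \in {\cdot}.\,z \equiv_{T'} z'$, the focused proof can be rearranged so that each leaf of the $\equiv_T$-decomposition corresponds to a subproof with a localized goal.

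For the base case $T = \ursort$, the hypothesis is literally $\varphi(\vvi,\vva,o) \wedge \varphi(\vvi,\vva',o') \proves o =_\ur o'$. Viewing the two occurrences of $\varphi$ as left/right formulas, whose only common free variables are $\vvi$, the Parameter Collection Theorem yields an $\nrc$ expression $E(\vvi)$ with $\varphi(\vvi,\vva,o) \proves o = E(\vvi)$, and the focused hypothesis on the input proof guarantees this extraction runs in $\ptime$. For $T = T_1 \times T_2$, I decompose $o \equiv_T o'$ into the two projection equivalences, apply the inductive hypothesis to each (after rewriting quantifier witnesses using $\rnpaireta$ and $\rnpairbeta$-style rules that are already in the calculus), and pair the results via $\langle E_1, E_2 \rangle$.

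The set case $T = \sett(T')$ is the core of the argument. The proof of $o \subseteq o'$ must, after focused normalization, proceed by introducing a fresh element variable $e$ and deriving $\varphi(\vvi,\vva,o) \wedge e \in o \wedge \varphi(\vvi,\vva',o') \proves \exists e' \in o'.\, e \equiv_{T'} e'$. I would invoke the inductive hypothesis in a parameterized form: applied to this subproof, it yields an $\nrc$ term $E_e(\vvi, e)$ of type $T'$ such that, under the hypotheses, $E_e(\vvi,e) \equiv_{T'} e$ and $E_e(\vvi,e) \in$ some input-definable source collection. The difficulty is then to ``harvest'' all such $e$: the focused proof of the bounded existential reveals through which chain of bounded quantifications over sub-collections of $\vvi$ each element $e$ of $o$ is witnessed, and these chains can be read off as a bounded comprehension $\bigcup \{\{E_e(\vvi,e)\} \mid e \in C(\vvi)\}$, where $C$ is an $\nrc$ expression built from the enclosing bounded quantifier witnesses.

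The main obstacle, and the reason the synthesis is not a one-shot invocation of Parameter Collection, is precisely this harvesting step: Parameter Collection produces one expression per equality, but at a set type we need to uniformly enumerate witnesses for infinitely many potential elements. I expect this to be handled by a single recursive pass over the focused proof that threads a ``current binding context'' of bounded variables and, at each $\exists$-\textsc{R} on the right or $\forall$-introduction on the left, extends an ambient $\nrc$ comprehension rather than re-invoking Parameter Collection from scratch. This is where the focusing discipline pays off: the $\pospolarity$-restricted shape of right contexts ensures that the elements being produced originate from a bounded set of input-side terms traceable through the proof, keeping the size of the synthesized $\nrc$ expression, and hence the runtime, polynomial in the size of the witness.
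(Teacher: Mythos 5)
There is a genuine gap, and it sits exactly where you locate the "main obstacle": the set case. You try to carry the equality-form statement ($E(\vvi)=o$) through the whole induction, patching it with a ``parameterized'' inductive hypothesis $E_e(\vvi,e)\equiv_{T'}e$ and a harvesting pass over the proof. But in the subproofs obtained by descending into $o\equiv_{\sett(T')}o'$ (your sequent $\varphi, e\in o, \varphi' \vdash \exists e'\in o'.\,e\equiv_{T'}e'$), the object under consideration is \emph{not} a function of $\vvi$, so no expression over $\vvi$ alone can equal it, and the parameterized form $E_e(\vvi,e)\equiv_{T'}e$ is satisfied trivially by $E_e\eqdef e$ and carries no information. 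The paper resolves precisely this by strengthening the induction to a membership (``collection'') statement, Theorem~\ref{thm:mainsetinter}: from a focused proof of $\Theta;\,\varphi,\psi\vdash\exists r'\in_p o'.\,r\equiv_T r'$ one extracts $E(\vvi)$ with $r\in E(\vvi)$. Your harvesting step --- reading a comprehension bound $C(\vvi)$ off the quantifier witnesses so that $o=\bigcup\{\{E_e(\vvi,e)\}\mid e\in C(\vvi)\}$ --- is exactly the unproven core: even granted a definable superset of the elements of $o$, you give no mechanism to decide which of its members actually lie in $o$, and no such per-element test exists at the intermediate stages. In the paper this is where Parameter Collection is actually deployed: Lemma~\ref{lem:equivsettoequivalence} rewrites the goal as $\exists r'\in_p o'.\,\forall z\in a.\,(z\memmac r\leftrightarrow z\memmac r')$ with a fresh bound $a$, Theorem~\ref{thm:mainflatcasebounded} then yields $E^{\mathrm{coll}}(\vvi,a)$ with $a\cap r\in E^{\mathrm{coll}}(\vvi,a)$, and instantiating $a$ with the inductively obtained superset $E\IH(\vvi)$ (from Lemma~\ref{lem:effnestedproof-down} plus the inductive hypothesis) produces a definable set having $r$ itself as a member.

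Two further misplacements compound this. At base type $\ur$ you invoke Parameter Collection to conclude $o=E(\vvi)$, but its conclusion always has the shape ``a comprehension is an \emph{element} of $E$'', never an equality with an Ur-element; the paper instead uses the interpolation theorem (Theorem~\ref{thm:interpolationdeltafocus}) to get $\theta(\vvi,o)$, collects all Ur-elements beneath $\vvi$, filters by $\theta$, and applies $\nrcget$ --- and $\nrcget$ is unavoidable here, which your sketch never mentions. Finally, your induction never recovers equality at the end: after the collection statement one still needs the top-level interpolation step of the paper's proof of Theorem~\ref{thm:mainset}, producing $\kappa(\vvi,r)$ and setting $E_\kappa(\vvi)=\{x\in E(\vvi)\mid\kappa(\vvi,x)\}$ to cut the candidate set down to exactly $o$. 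Without (i) the strengthened collection-form induction, (ii) Parameter Collection applied at set type, and (iii) the final interpolation filter, the proposal does not go through as written.
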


\paragraph{Application to views and queries}
We now state the consequence for effective rewriting queries over views mentioned in
the introduction.
Consider a query given by
 $\nrc$ expression $E_Q$ over inputs $\vec B$ and
$\nrc$ expressions $E_{V_1} \ldots E_{V_n}$ over $\vec B$.
$E_Q$ is \emph{determined} by 
$E_{V_1} \ldots E_{V_n}$, if
every two nested relations (finite or infinite) interpreting $\vec B$
that agree on the output of each $E_{V_i}$
agree on the output of $E_Q$. 
An $\nrc$ rewriting of $E_Q$ in terms of $E_{V_1} \ldots E_{V_n}$ is
an expression $R(V_1 \ldots V_n)$ such that for any nested relation
$\vec B$, if we evaluate each $E_{V_i}$ on $\vec B$ to obtain $V_i$ and
evaluate $R$ on the resulting $V_1 \ldots V_n$, we obtain $Q(\vec B)$.

Given $E_Q$ and $E_{V_1} \ldots E_{V_n}$, let
$\Sigma_{\vec V, Q}(\vec V, \vec B, Q, \ldots)$ conjoin the input-output specifications, as defined in Section \ref{sec:prelims},
for  $E_{V_1} \ldots E_{V_n}$ and $E_Q$. This formula has variables
$\vec B, V_1 \ldots V_n, Q$ along with auxiliary variables for subqueries.
 A proof witnessing determinacy of $E_Q$  by
$E_{V_1} \ldots E_{V_n}$, is a proof that $\Sigma_{V,Q}$ implicitly
defines $Q$ in terms of $\vec V$.

\begin{cor} \label{cor:effdeterminacy} From a witness that a set of $\nrc$ views $\vec V$ determines
an $\nrc$ query $Q$, we can produce an $\nrc$  rewriting
of $Q$ in terms of $\vec V$. If the witness is {\focused}, this can be done
in $\ptime$.
\end{cor}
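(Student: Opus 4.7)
The plan is to reduce the corollary directly to Theorem~\ref{thm:mainset} applied to the specification $\Sigma_{\vec V, Q}$. First I would construct, from the $\nrc$ expressions $E_{V_1}, \ldots, E_{V_n}$ and $E_Q$, the $\deltazero$ input-output specification $\Sigma_{\vec V, Q}(\vec V, \vec B, Q, \vec u)$ by recursively naming each intermediate subexpression with a fresh auxiliary variable in $\vec u$ and conjoining its local input-output specification. This avoids the complexity-theoretic obstruction mentioned in Section~\ref{sec:prelims} for general $\nrc$: by paying the cost of extra variables for subexpressions, the overall specification has size polynomial in the sum of the sizes of $E_Q$ and the $E_{V_i}$, and can be produced in $\ptime$.

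Next I would argue that the determinacy hypothesis is exactly implicit definability in the sense of Theorem~\ref{thm:mainset}, with input tuple $\vvi = \vec V$, output $o = Q$, and auxiliary parameters $\vva = \vec B, \vec u$ (the base data together with all subquery variables). The witness of determinacy is, by the definition just preceding the corollary, a proof of
\[
\Sigma_{\vec V, Q}(\vec V, \vec B, Q, \vec u) \wedge \Sigma_{\vec V, Q}(\vec V, \vec B', Q', \vec u')
\; \proves \; Q \equiv Q',
\]
which is precisely the judgement required as input to Theorem~\ref{thm:mainset}. Applying that theorem then yields an $\nrc$ expression $R(\vec V)$ such that whenever $\Sigma_{\vec V, Q}$ holds, $R(\vec V) = Q$. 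Since evaluating each $E_{V_i}$ on any base instance $\vec B$ produces values $V_i$ making $\Sigma_{\vec V, Q}$ true (for appropriate choices of $\vec u$ recording the intermediate computations of $E_Q$), this $R$ is exactly the desired rewriting.

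For the $\ptime$ claim in the {\focused} case, both steps preserve the complexity: the construction of $\Sigma_{\vec V, Q}$ is polynomial, and Theorem~\ref{thm:mainset} guarantees polynomial-time extraction of $R$ from a {\focused} witness. The main subtlety I anticipate is verifying that a {\focused} proof of determinacy really does yield a {\focused} proof of the implicit-definability judgement over $\Sigma_{\vec V, Q}$ without blowup. This should hold because determinacy is phrased from the start as the entailment displayed above, so no translation between proof systems is needed; the witness provided is already in exactly the form consumed by Theorem~\ref{thm:mainset}. Beyond this bookkeeping, the corollary is a direct specialisation rather than a new argument.
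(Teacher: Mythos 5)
Your proposal takes essentially the same route as the paper: the corollary is a direct specialisation of Theorem~\ref{thm:mainset}, with $\vvi = \vec V$, $o = Q$, and $\vec B$ together with the subquery variables playing the role of the parameters $\vva$, the determinacy witness being by definition exactly the judgement $\Sigma_{\vec V,Q}(\vec V,\vec B,\vec u, Q) \wedge \Sigma_{\vec V,Q}(\vec V,\vec B',\vec u', Q') \proves Q \equiv Q'$ that the theorem consumes, and correctness of the resulting rewriting following from the defining property of input-output specifications.

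One caveat on a side claim you make: you assert that $\Sigma_{\vec V,Q}$ can be produced in $\ptime$ from arbitrary $\nrc$ expressions by naming each subexpression with a fresh auxiliary variable, thereby ``avoiding'' the obstruction noted in Section~\ref{sec:prelims}. That obstruction already concerns specifications with auxiliary variables (the definition of an input-output specification permits extra objects), and the paper states, citing \cite{koch}, that the conversion cannot be done efficiently for general $\nrc$ under complexity-theoretic hypotheses; the difficulty is that subexpressions occurring under comprehension binders depend on the bound variables, so naming them requires encoding function graphs rather than single objects, and the naive naming scheme does not discharge this. Fortunately this claim is not needed for the corollary: the $\ptime$ bound is measured in the size of the given witness, whose conclusion already contains $\Sigma_{\vec V,Q}$, so no cost for constructing the specification from $E_Q$ and the $E_{V_i}$ is charged. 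With that claim removed, your argument is the paper's.
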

The notion of determinacy of a query over views
relative  to a $\deltazero$ theory (e.g. the key constraint in Example \ref{ex:views}) is a straightforward generalization of the definitions above, and
Corollary \ref{cor:effdeterminacy} extends to this setting.

In the case where we are dealing with flat relations, the effective version
is well-known: see Toman and Weddell's \cite{tomanweddell}, and the discussion
in \cite{franconisafe,interpbook}.

We emphasize that the result involves equivalence up to extensionality,
which underlines the distinction from the classical Beth theorem. If
we wrote  out implicit definability up to extensionality as an entailment involving
two copies of the signature, we would  run into problems in applying the
standard proof of Beth's theorem.

\subsection{Tools for the effective Beth theorem}
  \label{subsec:tools}

\subsubsection{Interpolation}
The first tool for our effective Beth theorem,  Theorem \ref{thm:mainset},
will
be an effective version of interpolation Proposition \ref{prop:interpolationmodeltheoretic}. 
Recall that interpolation results state
that if we have an entailment involving
two formulas, a ``left'' formula $\phi_L$ and a ``right'' formula $\phi_R$, we can get an ``explanation''
for the entailment that factors through an expression only involving
non-logical symbols (in our case, variables) that
are common to $\phi_L$ and $\phi_R$.

\begin{thm}
\label{thm:interpolationdeltafocus}
Let $\Theta$ be an $\in$-context and $\Gamma, \Delta$ finite \msets of $\deltazero$ formulas.
Then from any proof of $\Theta; \; \Gamma \vdash \Delta$,
we can compute in linear time an extended $\deltazero$ formula $\theta$ with $\FV(\theta) \subseteq \FV(\Theta, \Gamma) \cap \FV(\Delta)$
such that $\Theta; \; \Gamma \vdash \theta$ and $\emptyset; \; \theta \vdash \Delta$.
\end{thm}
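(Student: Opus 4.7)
The plan is to adapt the standard Maehara interpolation method, proceeding by induction on the proof of $\Theta; \Gamma \seq \Delta$. For the induction to go through, we first strengthen the statement to allow arbitrary partitions of the sequent: given a proof of $\Theta_L \uplus \Theta_R; \Gamma_L \uplus \Gamma_R \seq \Delta_L \uplus \Delta_R$, compute an extended $\deltazero$ formula $\theta$ with $\FV(\theta) \subseteq \FV(\Theta_L, \Gamma_L, \Delta_L) \cap \FV(\Theta_R, \Gamma_R, \Delta_R)$ such that both $\Theta_L; \Gamma_L \seq \theta, \Delta_L$ and $\Theta_R; \Gamma_R, \theta \seq \Delta_R$ are derivable. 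The theorem is then the special case where the partition assigns $\Theta, \Gamma$ to the left and $\Delta$ to the right.

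We then induct on the last rule. Axioms yield the shared formula (possibly negated) when the two principal occurrences straddle the partition, and $\top$ or $\bot$ when they lie on the same side; $\bot$-L is handled similarly. Propositional rules ($\wedge$-R, $\vee$-R, $\neg$-L, $\neg$-R) combine premise interpolants using the matching connective---conjunction for branching inferences, disjunction for linear ones---with the negation rules simply relocating a formula across $\seq$ and negating the interpolant when it crosses the partition. The tupling rules $\times_\eta$ and $\times_\beta$ apply the corresponding substitution to the premise interpolant, which preserves the free-variable bound since the substituted terms already live on the appropriate side. For $\forall$-R and $\exists$-R, we assign the newly introduced membership atom $y \in b$ (resp.\ $t \in b$) to the side of the corresponding principal formula; a fresh eigenvariable is then absent from the opposite side and therefore cannot occur in the premise interpolant, so no further abstraction is required.

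The main subtlety lies with the equality-substitution rule Repl (and Refl), which permits a hypothesis $t =_\ur u$ to rewrite inside another formula on the same side of $\seq$: if $t =_\ur u$ and $\phi[t/x]$ are assigned to opposite sides of the Maehara partition, the derived $\phi[u/x]$ is ``justified'' using an equality that crosses the split, and the interpolant must transport this information between the sides. We resolve this in the standard way by adding an equality atom (or membership atom inherited from the $\in$-context) to the interpolant when the crossing occurs; this is possible precisely because the interpolant is permitted to be \emph{extended} rather than pure $\deltazero$, explaining the shape of the conclusion. Because each rule contributes only a bounded local update to the interpolant, and previously constructed interpolants are never re-traversed, a single bottom-up pass over the proof yields the claimed linear-time algorithm.
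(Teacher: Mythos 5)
Your overall strategy (Maehara-style strengthening to an arbitrary partition, induction on the last rule, connective-matching combinations, substitution for $\times_\eta$/$\times_\beta$) is the same as the paper's, but your treatment of the $\exists$ rule contains a genuine gap. In this calculus the membership atom $t \in b$ is \emph{not} newly introduced by the $\exists$ rule --- it must already be present in the $\in$-context $\Theta$ of the conclusion, so its side is fixed by the given partition and you cannot ``assign it to the side of the principal formula.'' More importantly, the witness $t$ is an arbitrary (tuple-)term, not a fresh eigenvariable, so your claim that ``no further abstraction is required'' fails. Concretely, take the partition the theorem itself needs ($\Theta,\Gamma$ on the left, $\Delta$ on the right) and an $\exists$ step whose principal formula $\exists x \in b\,\phi$ lies in $\Delta$ while $t \in b$ lies in $\Theta$. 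In the premise, $\phi[t/x]$ puts $\FV(t)$ on the right, so the premise interpolant may legitimately mention variables of $t$; in the conclusion those variables need not occur in $\Delta$ at all, so the inherited interpolant violates the free-variable condition. The paper repairs exactly this case by abstracting the offending variables with \emph{bounded} quantification over $b$ (whose variables are common, since $b$ occurs both in $\exists x\in b\,\phi$ and in $t \in b$), using the $\existsvars{x_1 \ldots x_n} t \in b\,.\,\phi$ and $\forall x_1 \ldots x_n| t \in b\,.\,\phi$ constructions; without some such device your induction does not close.

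This also corrects your diagnosis of why the interpolant is only \emph{extended} $\deltazero$: the equalities introduced by Repl/Refl are at sort $\ur$ and are already plain $\deltazero$ atoms, so they are not the reason. The extended formulas (memberships and equalities $=_T$ at higher types) arise from the $\in$-context atoms and from the bounded-abstraction step just described, where using $\equiv_T$ instead of $=_T$ would blow up exponentially and destroy the linear-time bound. Your handling of the remaining rules, and the linear-time accounting via bounded local updates, is otherwise in line with the paper's argument.
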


The $\theta$ produced by the theorem is a \emph{Craig interpolant}.
Craig's interpolation theorem \cite{craig57interp}
states that when $\Gamma \proves \Delta$ with $\Gamma, \Delta$  in first-order logic, such
 a $\theta$ exists in
first-order logic.
Our variant  states one can find $\theta$ in $\deltazero$ efficiently from
a proof of the entailment in either of our $\deltazero$ proof systems.
We have stated the result for the $2$-sided system. It holds also for the %
{\focused} system, where the partition of the formulas into left and right of the proof
symbol is arbitrary.
The argument
is induction on proof length, roughly following prior
interpolation algorithms \cite{smullyan}.
See Appendix~\ref{app:deltazerointerpolation}.

\label{page-plain-deltazero-due-to-equality}
We compare with the model-theoretic statement
Proposition~\ref{prop:interpolationmodeltheoretic}. There,  the interpolant is
$\deltazero$, while here in the effective version it is \textit{extended}
$\deltazero$. This is due to the linear time requirement, which leads to the
involvement of equalities $=_T$ in the interpolation algorithm. The
construction would also work in plain $\deltazero$ if $\equiv_T$ is used
instead. However, $\equiv_T$ is a shorthand for a formula whose size is
exponential in the term depth of the type $T$.

\subsubsection{Some admissible rules}

As we mentioned earlier, our {\focused} proof system is extremely low-level, and
so it is convenient to have higher-level proof rules as macros. We formalize
this below.

\begin{defi}
  \label{def:polyadm}
A rule with premise $\Theta' \vdash \Delta'$ and conclusion $\Theta \vdash \Delta$
  \[ \dfrac{\Theta' \vdash \Delta'}{\Theta \vdash \Delta} \]
is \emph{(polytime) admissible} in a given calculus if a proof of
the conclusion $\Theta \vdash \Delta$ in that calculus can be computed from a
proof of the premise $\Theta' \vdash \Delta'$ (in polynomial time).
\end{defi}

Up to rewriting the sequent to be one-sided, all the rules in
Figure~\ref{fig:dzcalc-2sided} are polytime admissible in the {\focused} calculus.
Our main theorem will rely on the polytime admissibility within
the {\focused} calculus of
additional rules that involve chains of existential quantifiers.
To state them, we need
to introduce a generalization of bounded quantification:
``quantifying over subobjects of a variable''. 
\begin{definition} \label{def:subtypeocc}
For every type $T$, define a subset of the non-empty words over the three-letter 
alphabet $\{1,2,\membersub \}$ of \emph{subtype occurrences of $T$} inductively as follows:
\begin{itemize}
\item If $p$ is a subtype occurrence of $T$ or the empty word  then the concatenation
$\membersub,  p$ is a subtype occurrence of $\sett(T)$.
\item If $i \in \{1,2\}$ and $p$ is a subtype occurrence of $T_i$,
$i, p$ is a subtype occurrence of $T_1 \times T_2$.
\end{itemize}
Given subtype occurrence $p$ and  quantifier symbol $\mathbf{Q} \in \{\forall, \exists\}$,
define the notation $\mathbf{Q} \; x \in_p t. \phi$ by induction on $p$:
\begin{itemize}
\item $\mathbf{Q} \; x \in_\membersub t. \phi$ is $\mathbf{Q} \; x \in t$
\item $\mathbf{Q} \; x \in_{\membersub, p} t. \phi$ is $\mathbf{Q} \; y \in t.
\mathbf{Q} \; x \in_p y. \phi$ with $y$ a fresh variable
\item $\mathbf{Q} \; x \in_{i, p} t. \phi$ is $\mathbf{Q} \; x \in_p \pi_i(t). \phi$
  when $i \in \{1,2\}$.
\end{itemize}
\end{definition}

Now we are ready to state the results we need on admissibility,
referring in each case to the {\focused} calculus. Some
further rules, that are easily seen to be admissible,  are used in the  appendices.
All proofs are found in Appendix \ref{app:derivations}.
The first states that if we have proven that there exists a
 subobject of $o'$ equivalent to object $r$, then we can prove
that for each element $z$ of $r$ there is a corresponding equivalent
subobject $z'$ within $o'$. Furthermore, this can be done effectively, and the
output proof has at most the same size: that is, there is not even a polynomial blow-up involved.

\begin{restatable}{lem}{effnestedproofdown}
  \label{lem:effnestedproof-down}
Assume $p$ is a subtype occurrence for the type of the term $o'$. The following is polytime admissible
\[\dfrac{\Theta \vdash \Delta,\; \exists {r' \inq_p o'}. \; r \equiv_{\sett({T'})} r'}{
\Theta, z \in r \vdash \Delta, \; \exists {z' \inq_{\membersub, p} o'}. \; z \equiv_{{T'}} z'
  }\]
Furthermore, the size of the output proof is at most the size of the input proof.
\end{restatable}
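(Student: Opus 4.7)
The semantic content of the admissibility statement is transparent: unfolding macros, $r \equiv_{\sett(T')} r'$ becomes $r \subseteq_{T'} r' \wedge r' \subseteq_{T'} r$, whose first conjunct further expands to $\forall z \in r.\; \exists z' \in r'.\; z \equiv_{T'} z'$. Combined with a witness of $\exists r' \inq_p o'$ and the hypothesis $z \in r$, instantiation immediately gives $\exists z' \inq_{mp} o'.\; z \equiv_{T'} z'$. The lemma thus reduces to performing this semantic step at the proof-theoretic level without blow-up.

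My plan is to transform the input proof $\pi$ by structural induction. The distinguished formula $\exists r' \inq_p o'.\; r \equiv_{\sett(T')} r'$ is $\pospolarity$-headed, so in the {\focused} calculus it can be principal only for the $\exists$-R rule that introduces a tuple-term witness $t$ for $r'$; every other rule treats it passively and the induction hypothesis can be applied to each premise before re-applying that rule. When a principal $\exists$-R step instantiates $r'$ by $t$, the continuation of $\pi$ proves $r \equiv_{\sett(T')} t$ in a context extended with $t \inq_p o'$. Unfolding this macro, the proof splits via $\wedge$-R into the two inclusions, and the $r \subseteq_{T'} t$ branch proceeds by $\forall$-R with a fresh eigenvariable $w$ and hypothesis $w \in r$, then by an $\exists$-R introducing a tuple-term witness $s$ for $\exists z' \in t.\; w \equiv_{T'} z'$. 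I discard the dual $t \subseteq_{T'} r$ branch, strip the $\wedge$-R and $\forall$-R wrappers, substitute $z$ for $w$ throughout the remaining subproof, and finally package the $t \inq_p o'$ witness together with the internal witness $s$ into a single $\exists z' \inq_{mp} o'$ introduction, relying on the fact that tuple-terms are closed under the composition required by the macro definition of $\inq_{mp}$.

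The main obstacle is the size bound rather than correctness. Each step is either a strict reduction (dropping the unused inclusion branch, removing the $\wedge$-R and $\forall$-R rules) or a size-preserving operation (capture-avoiding substitution of $z$ for $w$, macro rewriting), and no rule application is ever duplicated. The delicate point is verifying that the chain of $\exists$-R introductions needed to discharge $\inq_{mp}$ matches precisely the chain already present in $\pi$ for $\inq_p$ together with the $\exists z' \in t$ step arising from the $\subseteq$ macro; this alignment is immediate since the $\inq_p$-macro is defined compositionally in $p$, so prepending one $m$-step corresponds to one additional $\exists$-R already supplied by the $\subseteq_{T'}$ subproof. With these ingredients, the output proof has size at most that of $\pi$.
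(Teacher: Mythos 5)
You correctly isolate the only non-trivial case (the $\exists$ rule applied with the distinguished formula principal) and correctly note that focusing forces the subsequent $\wedge$/$\forall$ decomposition of $r \equiv_{\sett(T')} w$, since that formula is not existential-leading. The gap is in what you assume happens \emph{above} the $r \subseteq_{T'} w$ branch. After the $\forall$ step, the sequent $\Theta, z \in r \vdash \Delta^{\EL},\, z \memmac w,\, \exists r' \inq_p o'.\, r \equiv_{\sett(T')} r'$ is entirely existential-leading on the right, so nothing forces the next step to be an $\exists$ rule witnessing $z \memmac w$ by a single tuple-term $s$: the subproof may instead re-instantiate the retained copy of the distinguished formula with a \emph{different} witness (possibly several times, recreating the same pattern higher up), it may witness $z \memmac w$ with different terms in different sub-branches created by later decompositions, or it may close without ever touching $z \memmac w$. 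Moreover, even when a witness $s$ exists, the subproof you extract proves a sequent whose succedent still carries the retained side formulas ($z \memmac w$ and the distinguished existential), which cannot simply be deleted from a proven sequent; so your ``packaging'' of the $t$-chain with $s$ does not yield a proof of the target sequent in general. Your assessment that correctness is transparent and only the size bound is delicate has it backwards.

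The paper's proof handles exactly this by staying genuinely inductive \emph{inside} the principal case: it applies the induction hypothesis to the whole $r \subseteq w$ branch (a strictly smaller proof that still contains the distinguished formula), obtaining $\Theta, z \in r \vdash \Delta^{\EL},\, z \memmac w,\, \exists z' \inq_{mp} o'.\, z \equiv_{T'} z'$, and then disposes of $z \memmac w$ not by extracting a witness but by re-using the same specialization chain for $w$ through the admissible non-maximal block-$\exists$ rule (Lemma~\ref{lem:existsblockadm}), exploiting that $\exists r'' \inq_p o'.\, z \memmac r''$ is syntactically the goal formula and therefore merges with the copy already present. Your treatment of the passive cases and the general size accounting are fine, but without the recursive call on the inclusion branch and this $\memmac$-to-goal conversion, the construction breaks on any input proof that uses the distinguished existential more than once or never instantiates the membership formula.
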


The second result states that we can move between an equivalence of $r, r'$
and a universally-quantified  biconditional between memberships in $r$
and $r'$. Because we are dealing with $\deltazero$ formulas, the universal quantification
has to be bounded by some additional variable $a$.

\begin{restatable}{lem}{equivsettoequivalence}
  \label{lem:equivsettoequivalence}
  The following is polytime admissible (where $p$ is a subtype occurrence of the type of $o'$)
  \[ 
\dfrac{ \Theta \seq \Delta, \exists r' \inq_p o'. \; r \equiv_{\sett({T'})} r'}{ \Theta \seq \Delta, \exists r' \inq_p o'. \forall z \inq a. \; z \memmac
  r \leftrightarrow z \memmac r'}
  \]
\end{restatable}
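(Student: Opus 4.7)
My plan is to factor the proof through two ingredients: a standalone ``body-level'' implication between the inner formulas of the two existentials, and a polytime-admissible monotonicity transformation for the outer $\exists r' \inq_p o'$ binder. Since both the premise and conclusion share this same outer quantifier, the task reduces to transforming a proof whose existential body is $r \equiv_{\sett(T')} r'$ into one whose body is the weaker $\forall z \inq a. \; z \memmac r \leftrightarrow z \memmac r'$.

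First, I would construct a standalone derivation of
\[ r \equiv_{\sett(T')} r' \proves \forall z \inq a. \; (z \memmac r) \leftrightarrow (z \memmac r'). \]
Unfolding the macro on the left yields the two inclusions $\forall w \inq r.\; w \memmac r'$ and $\forall w \inq r'.\; w \memmac r$. To prove the left-to-right direction of the biconditional, one introduces a fresh $z \inq a$, assumes $z \memmac r$, obtains via the $\memmac$ macro a witness $y \inq r$ with $z \equiv_{T'} y$, applies the first inclusion to get $y' \inq r'$ with $y \equiv_{T'} y'$, and concludes $z \memmac r'$ by transitivity of $\equiv_{T'}$. The reverse direction is symmetric. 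Transitivity of $\equiv_{T'}$ is provable at each type with a derivation polynomial in $|T'|$ by induction on the type structure, so the whole derivation is polynomial in the size of the input.

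Second, I would appeal to a polytime-admissible monotonicity rule for the outer bounded existential: given a proof of $\Theta \proves \Delta, \exists r' \inq_p o'. \; \phi(r')$ together with a derivation of $\phi(y) \proves \psi(y)$ for a fresh $y$, one can produce a proof of $\Theta \proves \Delta, \exists r' \inq_p o'. \; \psi(r')$. This transformation proceeds bottom-up on the input proof: at each application of the $\exists$ rule of Figure~\ref{fig:dzcalc-1sided} that introduces $\exists r' \inq_p o'.\; \phi(r')$ with some tuple-term witness $t$, we reuse $t$ to introduce $\exists r' \inq_p o'.\; \psi(r')$ and splice in the standalone implication derivation with $y$ substituted by $t$, converting the subproof of $\phi(t)$ into one of $\psi(t)$. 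Instantiating this rule with $\phi(r') = r \equiv_{\sett(T')} r'$ and $\psi(r') = \forall z \inq a.\; (z \memmac r) \leftrightarrow (z \memmac r')$ delivers the result.

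The main obstacle will be the bookkeeping for the monotonicity transformation, because the $\exists$ rule in the {\focused} calculus retains its principal formula in the premise (contraction is built in), so the existential may be introduced at several nodes of the input proof tree, each requiring its own splicing of the implication derivation. A secondary care is that $z \memmac r$ itself expands via $\exists y \inq r$, so one must verify the $\exists$-rule side conditions (in particular that the body context remains $\pospolarity$) when constructing the standalone derivation. The overall size blow-up is at most multiplicative in the size of the implication derivation (polynomial in $|T'|$) and the input proof size, yielding the desired polynomial-time bound.
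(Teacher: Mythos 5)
Your second ingredient is where the argument breaks. The ``monotonicity rule'' you appeal to --- from a proof of $\Theta \seq \Delta, \exists r' \inq_p o'.\,\phi(r')$ plus a derivation of $\phi(y) \seq \psi(y)$, produce a proof of $\Theta \seq \Delta, \exists r' \inq_p o'.\,\psi(r')$ --- is nothing but a cut against the implication derivation, and cut is neither a rule of the {\focused} calculus nor among the admissible rules established for it. Polytime admissibility of such a composition cannot simply be invoked: in cut-free sequent calculi, eliminating cuts generally incurs super-polynomial blow-up, so a polynomial-cost composition principle would itself require a substantial proof, and in this restricted instance it is essentially the content of the lemma you are trying to prove. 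Your sketch of how the splice would go also does not survive inspection of the $\exists$ rule of Figure~\ref{fig:dzcalc-1sided}: its premise is the whole sequent $\Theta, t \in b \seq \phi[t/x], \exists x \in b\,\phi, \Delta^{\pospolaritysuper}$, and above it the decomposition of $\phi[t/x]$ is interleaved with work on the side formulas, uses of $\Theta$, and possibly further instantiations of the retained existential (the contraction you yourself note). There is therefore no isolated ``subproof of $\phi(t)$'' that can be swapped for a proof of $\psi(t)$; making the replacement go through sequent-by-sequent is exactly the induction you would still have to carry out.

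The paper avoids this entirely: it proceeds by induction on the input derivation, and in the only non-trivial case --- the $\exists$ rule with $\exists r' \inq_p o'.\, r \equiv_{\sett(T')} r'$ principal, instantiated with some witness $w$ --- the {\focused} discipline forces the derivation above to decompose $r \equiv_{\sett(T')} w$ by $\wedge$ followed by two $\forall$ steps, yielding two sub-derivations whose conclusions carry the native membership contexts $y \in w$ and $x \in r$. These are transformed by the induction hypothesis (which handles the retained existential) and then converted by Corollary~\ref{cor:memctxadm} into the two implications $z \memmac w \imp z \memmac r$ and $z \memmac r \imp z \memmac w$, which after weakening in $y \in a$ are reassembled by $\wedge$, $\forall$ and $\exists$ into the biconditional over the fresh bound $a$ --- no cut and no transitivity lemma is needed. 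Your first ingredient (the standalone derivation of $r \equiv_{\sett(T')} r' \seq \forall z \inq a.\, z \memmac r \leftrightarrow z \memmac r'$ via transitivity of $\equiv_{T'}$) is provable and could plausibly be given a focused derivation of size polynomial in $|T'|$ --- the replacement machinery behind Lemma~\ref{lem:gencongruence} is what would power it --- but without a justified composition step it cannot be connected to the input proof, so the proposal as it stands has a genuine gap.
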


\subsubsection{The $\nrc$ Parameter Collection Theorem}
Our last tool is a kind of interpolation result
connecting $\deltazero$ formulas and $\nrc$:

\begin{restatable}[$\nrc$ Parameter Collection]{thm}{mainflatcasebounded}
\label{thm:mainflatcasebounded}
Let $L$, $R$ be sets of variables with $C = L \cap R$ and
\begin{itemize}
\item $\phi_{L}$ and $\lambda(z)$ $\deltazero$ formulas over $L$
\item $\phi_{R}$ and $\rho(z,y)$ $\deltazero$ formulas over $R$
\item $r$ a variable of $R$ and $c$ a variable of $C$.
\end{itemize}

Suppose that we have an EL-normalized proof of
\begin{align*}
\phi_L \wedge \phi_R ~  \imp ~ \exists y \in_p r  ~ \forall z \in c  ~
(\lambda(z) \equi \rho(z, y))
\end{align*}

Then one may compute in polynomial time an $\nrc$ expression $E$ with
free variables in $C$
such that
\begin{align*}
\phi_L 
\wedge \phi_R \imp
  \{ z \in c \mid \lambda(z)\} \in E
\end{align*}
\end{restatable}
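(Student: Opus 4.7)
The plan is to combine Craig interpolation (Theorem~\ref{thm:interpolationdeltafocus}) with an explicit ``parameter collection'' step that enumerates over tuples of subobjects drawn from the common variables $C$. Although the hypothesis only promises a witness $y \in_p r$ coming from the $R$-side, interpolation will let us replace this witness by a tuple of $C$-parameters, and the resulting parameterized family of candidate subsets of $c$ will be $\nrc$-enumerable.

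First I would massage the given proof, using the polytime-admissible rules of Section~\ref{subsec:tools} (in particular analogs of Lemmas~\ref{lem:effnestedproof-down} and~\ref{lem:equivsettoequivalence}) to bring the biconditional $\lambda(z) \equi \rho(z,y)$ into a form that cleanly separates uses of $\lambda$ (involving $L$-variables) from uses of $\rho$ (involving $R$-variables). The goal of this rewriting is that the $L$-side of the entailment should refer to $\lambda$ only through the abstract set $S = \{z \in c \mid \lambda(z)\}$, so that focused Craig interpolation, applied with $\phi_L$ and the $\lambda$-commitments on the left and $\phi_R$ together with the $\rho$-commitments on the right, yields an extended $\deltazero$ interpolant $\theta$ characterizing $S$ through the common variables $C$ alone.

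From $\theta$, I would then extract a parameterized $\deltazero$ predicate $\psi(z,\vec{c})$, with $z$ bounded by $c$ and $\vec c$ a tuple of subobjects of variables in $C$, such that under $\phi_L \wedge \phi_R$ there exists $\vec c$ satisfying $\forall z \in c \, (\lambda(z) \equi \psi(z,\vec c))$. The desired expression is then
\begin{equation*}
E \ :=\ \bigcup \{ \{ \{z \in c \mid \psi(z, \vec c)\} \} \mid \vec c \in \Pi \},
\end{equation*}
where $\Pi$ is an $\nrc$-definable enumeration of all parameter tuples built from bounded subobjects of $C$-variables, and Proposition~\ref{prop:verify} turns $\psi$ into an $\nrc$ Boolean predicate. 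Since some $\vec c \in \Pi$ satisfies the biconditional, $\{z \in c \mid \lambda(z)\} \in E$. The polynomial-time bound follows from the polynomial complexity of interpolation and the admissible rules.

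The main obstacle is forcing the interpolant into the required template form $\exists \vec c \, \forall z \in c \, (\lambda(z) \equi \psi(z,\vec c))$: a generic interpolant from Theorem~\ref{thm:interpolationdeltafocus} need not have this shape, but without it parameter collection is impossible. To produce the template I would proceed by induction on the structure of the interpolant, pushing the membership conditions $z \in S$ into uniform positions via polytime-admissible manipulations, until every occurrence of $S$ appears as a uniform biconditional with a $C$-definable predicate. The inductive step must handle higher-type quantifiers introduced by subtype occurrences~$p$, which is where the admissibility tools of Section~\ref{subsec:tools} carry the main burden.
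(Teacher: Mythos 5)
Your outline founders on its central step. A Craig interpolant from Theorem~\ref{thm:interpolationdeltafocus} is a single (extended) $\deltazero$ formula over the \emph{free} common variables, and in general the set $S=\{z\in c\mid\lambda(z)\}$ is \emph{not} definable over $C$ alone -- that is the whole reason the statement is about membership in an $\nrc$-definable family rather than about a single comprehension. What you actually need is definability of $S$ by a common formula \emph{with parameters bounded in common data}, i.e.\ a formula $\psi(z,\vec c)$ together with a guarantee $\exists\vec c\,\forall z\in c\,(\lambda(z)\equi\psi(z,\vec c))$. Plain interpolation gives you nothing of this template shape: there is no sequent you can hand to Theorem~\ref{thm:interpolationdeltafocus} whose interpolant is forced to be a parameterized biconditional in $z$, and introducing a fresh set variable for $S$ does not help, since it must be axiomatized via $\lambda$ on the $L$-side and hence is not a common symbol. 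Your proposed repair -- ``induction on the structure of the interpolant, pushing membership conditions into uniform positions'' -- is exactly the hard content of the theorem left unargued; no polytime bound can even be assessed for it. The paper itself flags this obstruction: a genuine parameter-collection theorem for first-order logic (Theorem~\ref{thm:mainflatcase} in Appendix~\ref{app:paramfo}) is proved there by a dedicated Maehara-style induction, not by invoking interpolation, and even that statement is noted to be insufficient to derive the $\nrc$ version, partly because its parameters are unboundedly quantified and its definitions come ``up to disjunction,'' neither of which survives the passage to $\deltazero$/$\nrc$.

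The paper's actual proof does not factor through any such template. It generalizes the statement to Lemma~\ref{lem:nrcparamcoll} and proceeds by induction \emph{over the focused proof tree} of the hypothesis, simultaneously constructing the $\nrc$ expression $E$ and a ``switch'' formula $\theta$ over $C$ satisfying $\Theta_L\modelssem\Delta_L,\theta\vee\Lambda\in E$ and $\Theta_R\modelssem\Delta_R,\neg\theta$. The decisive case is the $\exists$ rule applied to the goal $\cG$ with a witness $x\in c$: there the inductively produced formula $\theta_2\IH$ is itself used as the comprehension predicate, and $E$ is assembled as $\{\{x\in c\mid\theta_2\IH\}\}\cup\bigcup\{E_1\IH\cup E_2\IH\mid x\in c\}$, so the candidate definitions and their parameters are harvested from the proof (elements of $c$, and bound elements $w\in b$ with $b$ common at other $\exists$ steps, handled via the $\existsvars{\cdot}$ notation), not enumerated a priori as arbitrary tuples of subobjects of $C$-variables. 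The nested existentials arising from the subtype occurrence $p$ are dealt with by a separate, easy case of the same induction, not by the admissibility lemmas. So while your instinct that ``definable from the right with parameters implies definable from common data with parameters'' matches the paper's stated intuition, the proposal as written has a genuine gap precisely where that intuition must be turned into a construction.
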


If $\lambda$ was a ``common formula'' ---
one using only variables in $C$
---  then the nested
relation
$ \{ z \in c \mid \lambda(z)\}$ would be definable over $C$ in $\nrc$ via $\deltazero$-comprehension.
Unfortunately $\lambda$ is a ``left formula'', possibly with variables outside of $C$.
Our hypothesis is that it is equivalent to a 
``parameterized right formula'': a formula with variables in $R$ and parameters that lie below them.
Intuitively,
this can happen only if $\lambda$ can be rewritten to a formula $\rho'(z, x)$
with variables of $C$ and a distinguished $c_0 \in C$ such that
\begin{align*}
  \phi_L \wedge \phi_R ~  \imp ~ \exists x \in_p  c_0 ~ \forall z \in c  ~
(\lambda(z) \equi \rho'(z, x))
\end{align*}

And if this is true, we can use an $\nrc$ expression
over $C$ to define a set that will contain
the correct ``parameter'' value $x$ defining $\lambda$.
From this we can define a set containing the nested
relation $ \{ z \in c \mid \lambda(z)\}$.
A formalization of this rough intuition --- ``when left formulas are equivalent to parameterized right formulas, they are equivalent to parameterized
common formulas'' --- is given in~ Section \ref{app:paramfo}, where a similar
statement that does not mention $\nrc$ is proven for first-order logic
(sadly it does not seem strong enough to derive Theorem \ref{thm:mainflatcasebounded}).

We now give the proof of the theorem.

To get the desired conclusion, we need to prove a more general statement by induction
over proof trees. Besides making the obvious generalization to handle
two \msets of
formulas instead of the particular formulas  $\phi_L$ and $\phi_R$, as well as some corresponding left and right $\in$-contexts,
that may appear during the proof, we need to additionally generate a new formula
$\theta$ that only uses common variables, which can replace $\phi_R$ in the conclusion.
This is captured in the following lemma:

\begin{restatable}{lem}{lemnrcparamcoll}
  \label{lem:nrcparamcoll}
Let $L$, $R$ be sets of variables with $C = L \cap R$ and
\begin{itemize}
\item $\Delta_{L}, \lambda(z)$ a \mset of $\deltazero$ formulas over $L$
\item $\Delta_{R}, \rho(z,y)$ a \mset of $\deltazero$ formulas over $R$
\item $\Theta_L$ (respectively $\Theta_R$) an $\in$-context over $L$
(respectively over $R$)
\item $r_1, \ldots, r_k$ variables of $R$ and $c$ a variable of $C$.
\item we write $\cG(r_i)$ for $\exists y \in_{p_i} r_i ~ \forall z \in c ~ (\lambda(z) \equi \rho(z,y))$
\end{itemize}

Suppose that we have a
\normalized
proof tree with conclusion
\begin{align*}
\Theta_L, \Theta_R \vdash \Delta_L, \Delta_R, \cG(r_1), \ldots, \cG(r_k)
\end{align*}

Then one may compute in polynomial time an $\nrc$ expression $E$
and an extended $\deltazero$ formula $\theta$ using only variables from $C$
such that
\begin{align*}
\Theta_L ~~\modelssem~~ \Delta_L, \theta \vee
  \{ z \in c \mid \lambda(z) \} \in E
\quad \text{and} \quad
\Theta_R \modelssem \Delta_R, \neg \theta
\end{align*}
\end{restatable}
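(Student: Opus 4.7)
The plan is to proceed by induction on the structure of the {\focused} proof tree whose conclusion has the form $\Theta_L, \Theta_R \vdash \Delta_L, \Delta_R, \cG(r_1), \ldots, \cG(r_k)$. At each node, the induction hypothesis furnishes an $\nrc$ expression $E$ and an extended $\deltazero$ formula $\theta$ over $C$ with the two required entailments. Throughout, the global invariant is that all non-$\cG$ formulas on the sequent admit a clean left/right partition by their free variables, while each goal $\cG(r_i)$ is ``straddling'' the partition in a controlled way: its outer $\exists y \in_{p_i} r_i$ is over a right variable, but it contains both $\lambda(z)$ (left) and $\rho(z,y)$ (right). This is what makes such a goal essentially an \emph{interpolation-with-a-witness} task: $\theta$ plays the role of a Craig interpolant for $\Delta_L$ versus $\Delta_R$, and $E$ records, along proof branches, the set of possible tuple-term witnesses used when an $\exists$ rule fires on some $\cG(r_i)$.

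First I would dispatch the easy base and propositional cases. For \textsc{Ax}, $=$, $\top$, $\bot$, and $\neq$, we take either $\theta = \bot$ with $E = \emptyset$, or $\theta = \top$ with $E = \{\emptyset\}$, depending on which side the leaf belongs to; the entailments reduce to purely propositional checks. For $\wedge$ and $\vee$ the combination is routine: in a branching case, take conjunctions/disjunctions (or componentwise unions for $E$), being careful that when the principal formula lies in $\Delta_L$ we only combine the left-side data, and symmetrically on the right. For \rnallR introducing a fresh $y \in b$, if $b$ is on the left or common, we stay on the left; if $b$ is on the right, invoke Lemma~\ref{lem:effnestedproof-down}/\ref{lem:equivsettoequivalence} macros as needed to keep the shape of the surviving $\cG(r_i)$ usable. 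The $\times_\eta$ and $\times_\beta$ rules pass through by substitution: $E$ and $\theta$ are rewritten under the same substitution, since the tuple-term equalities they introduce are definable in any model.

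The heart of the argument is the case where \rnexR is applied with principal formula $\cG(r_i) = \exists y \in_{p_i} r_i. \forall z \in c. (\lambda(z) \equi \rho(z,y))$. The rule picks a tuple-term $t$ and produces a subgoal containing $\forall z \in c. (\lambda(z) \equi \rho(z,t))$ together with an extended $\in$-context witnessing $t \in_{p_i} r_i$. The variables occurring in $t$ are all on the right (they trace back through $r_i$), so by induction on the premise we obtain $E'$ and $\theta'$; the recipe is then to set $E$ to collect, over the new context, the appropriate subobject of $r_i$ that the witness $t$ names --- this is where the $\nrc$ expression genuinely grows --- and to adjust $\theta$ so that $\theta \Rightarrow \theta'$ but accounts for the extra $\in$-context contribution on the right. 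Technically this is where the ``parameter collection'' happens: we harvest the witness via a binding comprehension $\bigcup \{ \{\ldots\} \mid \vec{u} \in r_i, \ldots\}$ that ranges over the positions in $r_i$ indicated by $p_i$, then intersect with a $\verify$-test (Proposition~\ref{prop:verify}) of the $\lambda(z) \equi \rho(z, \vec u)$ condition restricted to its common-variable content given by $\theta$. The case where \rnexR fires on a non-$\cG$ formula in $\Delta_R$ is easier: $t$ is a right tuple-term, and we update $\theta$ by a standard interpolation step without touching $E$.

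The main obstacle I anticipate is twofold. First, controlling the \textbf{interaction between $E$ and $\theta$} across the inductive invariants: $\theta$ must be strong enough, given $\Theta_R$, to refute $\Delta_R$ whenever the disjunction on the left forces us into the $\{z\in c \mid \lambda(z)\} \in E$ branch, so the two quantities cannot be chosen independently --- the correct formulation probably requires a strengthened invariant that ties the two entailments via the common $\in$-context $\Theta \cap (\Theta_L \cup \Theta_R)$. Second, the \textbf{\rnexR case on $\cG(r_i)$ with a non-trivial tuple-term $t$}: ensuring that the witness-collection via $\bigcup$ stays within polynomial size and correctly matches the subtype occurrence $p_i$ will require case analysis on the shape of $t$ and careful use of $\nrcget$ to extract the element named by $t$ from $r_i$. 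Once this case is handled, the rest fits into a straightforward structural induction, and applying the lemma to the single-conclusion sequent of Theorem~\ref{thm:mainflatcasebounded} with $k=1$, $\Delta_L = \Delta_R = \emptyset$, and $\Theta_L = \Theta_R = \emptyset$ yields the theorem, since the $\theta$-branch becomes vacuous and only the $\nrc$ membership survives.
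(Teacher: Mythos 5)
There is a genuine gap, and it sits exactly at the case you identify as the heart of the argument. Your recipe for the \rnexR{} step on $\cG(r_i)$ is to ``harvest the witness via a binding comprehension $\bigcup\{\ldots \mid \vec u \in r_i,\ldots\}$'' ranging over the positions of $r_i$ indicated by $p_i$, possibly using $\nrcget$ to extract the element named by the tuple-term. But $r_i$ is a variable of $R$ that need not belong to $C$, and the lemma requires that $E$ (and $\theta$) use \emph{only} variables from $C$. An expression that iterates over $r_i$ is therefore inadmissible; indeed, if one were allowed to range over $r_i$ the whole Parameter Collection statement would be nearly trivial. The paper's construction avoids $r_i$ entirely: once the last leading existential of $\cG$ is instantiated by a term $w$ (witnessed in $\Theta_R$), the {\focused} discipline forces the subsequent block $\forall$/$\wedge$/$\vee$ breaking down $\forall z \in c.(\lambda(z)\equi\rho(z,w))$, and the induction hypothesis is applied to the two leaves of that forced block, yielding $(\theta_1\IH,E_1\IH)$ and $(\theta_2\IH,E_2\IH)$ with free variables in $C\cup\{x\}$ where $x \in c$ is the fresh bound variable. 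The new data are then
\[
\theta \eqdef \exists x \in c.\;\theta_1\IH \wedge \theta_2\IH,
\qquad
E \eqdef \bigl\{\{x \in c \mid \theta_2\IH\}\bigr\} \;\cup\; \bigcup\{E_1\IH \cup E_2\IH \mid x \in c\},
\]
so the only new candidate definition of $\{z\in c\mid\lambda(z)\}$ is the comprehension of the \emph{common} variable $c$ by the interpolant-like common formula $\theta_2\IH$, and all unions are taken over $x \in c$, never over $r_i$. This is the idea your proposal is missing: the right-hand parameter below $r_i$ is never touched; it is replaced by a common formula produced by the interpolation bookkeeping itself.

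Two further points. Your claim that the \rnexR{} case on a non-$\cG$ formula leaves $E$ untouched is false in the mixed subcases: when the membership fact $w \in b$ licensing the instantiation lives in $\Theta_L$ but the existential formula is in $\Delta_R$ (or symmetrically), the term $w$ may contain left-only variables, and the paper must both quantify them away in $\theta$ (via the $\existsvars{x_1\ldots x_n} w\in b$ / $\forall x_1\ldots x_n| w \in b$ gadgets, which is where extended $\deltazero$ and the polynomial bound matter) \emph{and} replace $E\IH$ by $\bigcup\{E\IH \mid w \in b\}$, which is legitimate precisely because the bound $b$ has its free variables in $C$. Finally, your closing reduction to Theorem~\ref{thm:mainflatcasebounded} is slightly off: one takes $\Delta_L,\Delta_R$ to carry $\neg\phi_L,\neg\phi_R$ (not empty), and the $\theta$-branch is not ``vacuous'' but is discharged by combining $\phi_L \modelssem \theta \vee \Lambda \in E$ with $\phi_R \modelssem \neg\theta$. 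These are fixable, but without the $\{x\in c\mid\theta_2\IH\}$ construction the central case of your induction does not go through.
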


The theorem follows easily from this lemma, so we focus on proving the lemma,
by induction over the size of the proof of 
$\Theta_L, \Theta_R \vdash \Delta_L, \Delta_R, \cG(r_1), \ldots, \cG(r_k)$,
making a case distinction according to which
rule is applied last.
The way $\theta$ will be built will, perhaps unsurprisingly, be very reminiscent
of the way interpolants are normally constructed in standard proof systems
\cite{fittingbook,smullyan}.

For readability, we adopt the
following conventions:
\begin{itemize}
  \item We write $\goalList$ for the multiset of formulas $\cG(r_1), \ldots, \cG(r_k)$
and $\Lambda$ for the expression $\{ z \in c \mid \lambda(z) \}$.
\item For formulas and $\nrc$ expressions obtained by applying the
induction hypothesis, we use the names $\theta\IH$ and $E\IH$
    (or $\theta_1\IH, \theta_2\IH$ and $E_1\IH, E_2\IH$ when the induction hypothesis
is applied several times). In each subcase, our goal will be to build suitable
$\theta$ and $E$.
\item We will color pairs of terms, formulas and \msets of  formulas
according to whether they are %
part of either $\colL{\Theta_L; \Delta_L}$
or $\colR{\Theta_R; \Delta_R}$ either at the start of the case analysis or when
we want to apply the induction hypothesis.
In particular, the last sequent of the proof
under consideration will be depicted as
\[ \colL{\Theta_L}, \colR{\Theta_R} \vdash \colL{\Delta_L}, \colR{\Delta_R}, \cG \]
\item Unless it is non-trivial, we leave checking that the free variables in our
proposed definition for $E$ and $\Theta$ are taken among variables of $C$ 
to the reader.
\end{itemize}

In two of the cases below, we will make use of some syntactic sugar on top of bounded quantification. 
We introduce
\[\existsvars{x_1 \ldots x_n} t \in b\,.\, \phi\; \text{ and }\;
\forall x_1 \ldots x_n| t \in b\,.\, \phi\] as notation. This will be an extended $\deltazero$ formula, intuitively quantifying
over variables $x_1,\ldots,x_n$ that occur in tuple-term $t$, which is bounded by
set $b$. Here $\phi$ is an extended $\deltazero$ formula. %
The variables other than $x_1,\ldots,x_n$ occurring
in $t$ remain free in the resulting formula.

\label{page:existsvars}

To make this precise, let $\varpathref(t,v,u)$ be the set of the terms
expressed with projection applied to~$u$ that refer to an occurrence of
variable $v$ in tuple-term $t$, ordered according to the occurrences when $t$ is printed.
Formally, $\varpathref(t,t,u) \eqdef \{u\}$; $\varpathref(t,v,u) \eqdef \{\}$
if $t$ is a variable other than $v$; and $\varpathref(\la t_1, t_2\ra,v,u)
\eqdef \varpathref(t_1,v,\pi_1(u)) \cup \varpathref(t_2,v,\pi_2(u))$. We write
$\varpathref_i(t,v,u)$ for the $i$-th member of $\varpathref(t,v,u)$. We can
then define $\existsvars{x_1 \ldots x_n} t \in b\,.\, \phi$ as
{\renewcommand{\arraycolsep}{1pt}
\[
\begin{array}{ll}
\exists y \in b\, .\, &
\phi[\varpathref_1(t,x_1,y)/x_1,\ldots,\varpathref_1(t,x_n,y)/x_n]\; \land\\
& \bigwedge_{z \text{ is a variable occurring in } t \text{ other than } x_1,\ldots,x_n}
  z = \varpathref_1(t,z,y)\; \land\\
  & \bigwedge_{u \text{ is a variable occurring in } t \text{ and }
    \varpathref_j(t,y,u) \text{ for } j > 1 \text{ in } \varpathref(t,y,u)}
    \varpathref_1(t,u,y) = \varpathref_j(t,u,y)
\end{array}\]}
$\forall x_1 \ldots x_n| t \in b\,.\, \phi$ can be defined analogously.

Here equality $=_T$ (written above without the type decoration) of
\textit{extended} $\deltazero$ formulas comes into play, in contrast to
$\equiv_T$, to meet the polynomial time requirements of
Theorems~\ref{thm:interpolationdeltafocus} and~\ref{thm:mainflatcasebounded}
as indicated on p.~\pageref{page-plain-deltazero-due-to-equality}.

As an example for the notation consider
\[\existsvars{x_1 x_2}\la \la x_1, x_2 \ra, z\ra \in b\,.\, x_1 \equiv x_2,\]
which stands for
\[\exists y \in b\, .\, \pi_1(\pi_1(y)) \equiv \pi_2(\pi_1(y)) \land z = \pi_2(y).\]

With these conventions in mind, let us proceed.

\begin{itemize}
  \item If the last rule applied is the $\top$ rule, in both cases we are
    going to take $E \eqdef \emptyset$, but pick $\theta$ to be $\bot$ or $\top$
    according to whether $\top$ occurs in $\colL{\Delta_L}$ or $\colR{\Delta_R}$;
    we leave checking the details to the reader.
  \item If the last rule applied is the $\wedge$ rule, we have two cases according
    to the position of the principal formula $\phi_1 \wedge \phi_2$. In both
    cases, $E$ will be obtained by unioning $\nrc$ expressions obtained from the
    induction hypothesis, and $\theta$ will be either a disjunction or a conjunction.
    \begin{itemize}
      \item If we have $\colL{\Delta_L} = \colL{\phi_1 \wedge \phi_2, \Delta_L'}$,
        so that the proof has shape
        \begin{mathpar}
          \inferrule*{
            \colL{\Theta_L}, \colR{\Theta_R} \vdash \colL{\phi_1, \Delta_L'}, \colR{\Delta_R}, \goalList
            \and
            \colL{\Theta_L}, \colR{\Theta_R} \vdash \colL{\phi_2, \Delta_L'}, \colR{\Delta_R}, \goalList
          }
          {
            \colL{\Theta_L}, \colR{\Theta_R} \vdash \colL{\phi_1 \wedge \phi_2, \Delta_L'}, \colR{\Delta_R}, \goalList
          }
        \end{mathpar}
        by the induction hypothesis, we have $\nrc$ expressions
        $E_1\IH$, $E_2\IH$ and formulas $\theta_1\IH, \theta_2\IH$ such that
        \[
          \begin{array}{c !\qquad!{\text{and}}!\qquad c}
            \colL{\Theta_L} \modelssem \colL{\phi_1, \Delta_L'},
            \theta_1\IH \vee \Lambda \in E_1\IH
            &
            \colR{\Theta_R} \modelssem \colR{\Delta_R}, \neg \theta_1\IH
\\
            \colL{\Theta_L} \modelssem \colL{\phi_2, \Delta_L'},
            \theta_2\IH \vee \Lambda \in E_1\IH
            &
            \colR{\Theta_R} \modelssem \colR{\Delta_R}, \neg \theta_2\IH
          \end{array}
        \]
        In that case, we take $E = E_1\IH \cup E_2\IH$ and $\theta \eqdef \theta_1\IH \vee \theta_2\IH$.
        Weakening the properties on the left column, we have
        \[
            \colL{\Theta_L} \modelssem \colL{\phi_i, \Delta_L'},
            \theta \vee \Lambda \in E\]
        for both $i \in \{1,2\}$, so we have
\[
            \colL{\Theta_L} \modelssem \colL{\phi_1 \wedge \phi_2, \Delta_L'},
            \theta \vee \Lambda \in E
\]
as desired. Since $\neg \theta = \neg \theta_1\IH \wedge \neg\theta_2\IH$, we get
        \[
          \colR{\Theta_R} \modelssem \colR{\Delta_R}, \neg \theta
        \]
        by combining both properties from the right column.
      \item The dual case where $\colR{\Delta_R} = \colR{\phi_1 \wedge \phi_2, \Delta_R'}$ is handled similarly, except that we set
$\theta \eqdef \theta_1\IH \wedge \theta_2\IH$.
    \end{itemize}
  \item Suppose the last rule applied is $\vee$ with principal formula $\phi_1 \vee \phi_2$.
    Depending on whether
    $\colL{\Delta_L} = \colL{\phi_1 \vee \phi_2, \Delta_L'}$
    or
    $\colR{\Delta_R} = \colR{\phi_1 \vee \phi_2, \Delta_R'}$, the proof will
    end with one of the following steps
\[
\dfrac{
\colL{\Theta_L}, \colR{\Theta_R} \vdash
\colL{\phi_1, \phi_2, \Delta_L'},\colR{\Delta_R},\goalList}
{\colL{\Theta_L}, \colR{\Theta_R} \vdash
\colL{\phi_1 \vee \phi_2, \Delta_L'},\colR{\Delta_R},\goalList}
\qquad \text{or} \qquad
\dfrac{
\colL{\Theta_L}, \colR{\Theta_R} \vdash
\colL{\Delta_L'},\colR{\phi_1,\phi_2,\Delta_R},\goalList}
{\colL{\Theta_L}, \colR{\Theta_R} \vdash
\colL{\Delta_L'},\colR{\phi_1 \vee \phi_2, \Delta_R},\goalList}
\]
In both cases, we apply the inductive hypothesis according to the obvious
splitting of contexts and \msets of formulas, to get an $\nrc$ definition $E\IH$
along with  a formula $\theta\IH$ that satisfy the desired semantic property.
We set $E \eqdef E\IH$ and $\theta \eqdef \theta\IH$.
\item Suppose the last rule applied is $\forall$ with principal formula $\forall x \in b. \phi$. As in the previous
case,  
    depending on  whether
    $\colL{\Delta_L} = \colL{\forall x \in b. \phi, \Delta_L'}$
    or
    $\colR{\Delta_R} = \colR{\forall x \in b. \phi, \Delta_R'}$, the proof will
    end with one of the following steps (assuming $y$ is fresh below)
\[
\dfrac{
\colL{\Theta_L, y \in b}, \colR{\Theta_R} \vdash
\colL{\phi[y/x], \Delta_L'},\colR{\Delta_R},\goalList}
{\colL{\Theta_L}, \colR{\Theta_R} \vdash
\colL{\forall x \in b. \phi, \Delta_L'},\colR{\Delta_R},\goalList}
\qquad \text{or} \qquad
\dfrac{
\colL{\Theta_L}, \colR{\Theta_R, y \in b} \vdash
\colL{\Delta_L'},\colR{\phi[y/x],\Delta_R},\goalList}
{\colL{\Theta_L}, \colR{\Theta_R} \vdash
\colL{\Delta_L'},\colR{\forall x \in b. \phi, \Delta_R},\goalList}
\]
In both cases, we again apply the inductive hypothesis according to the obvious
splitting of contexts and \msets of formulas to get an $\nrc$ definition $E\IH$
and a formula $\theta\IH$ that satisfy the desired semantic property.
We set $E \eqdef E\IH$ and $\theta \eqdef \theta\IH$.
\item Now we consider  the case where the
last rule applied is $\exists$. Here we have two main subcases,
  according to whether the formula that is instantiated in the premise, that
  is, the principal formula, belongs
  to $\goalList$ or not. In the first case, we have two further subcases according to whether the instantiated formula
  still has a leading existential quantifier or not.

  \begin{itemize}
    \item If the principal formula is of the shape $\cG = \exists y \in r ~ \forall z \in c.  ~ (\lambda(z) \equi \rho(z, y))$
      (so in particular, $\cG$ has a single leading existential quantifier)
      and is
      a member of $\goalList$,
      the proof necessarily has shape 
      \[
        \scalebox{0.9}{
        \text{
          \AXC{$\colL{\Theta_L}, \colR{\Theta_R}, \colL{x \in c} \vdash \colL{\Delta_L}, \colR{\Delta_R},
      \colR{\neg \rho(x,w)}, \colL{\lambda(x)},
      \goalList$}
\myLeftLabel{$\vee$}
      \UIC{$\colL{\Theta_L}, \colR{\Theta_R}, \colL{x \in c} \vdash \colL{\Delta_L}, \colR{\Delta_R},
\rho(x, w) \imp \lambda(x),
      \goalList$}
      \AXC{$\colL{\Theta_L}, \colR{\Theta_R}, \colL{x \in c} \vdash \colL{\Delta_L}, \colR{\Delta_R},
      \colL{\neg \lambda(x)}, \colR{\rho(x, w)},
      \goalList$}
\myLeftLabel{$\vee$}
\UIC{$\colL{\Theta_L}, \colR{\Theta_R}, x \in c \vdash \colL{\Delta_L}, \colR{\Delta_R},
\lambda(x) \imp \rho(x, w),
      \goalList$}
\myLeftLabel{$\wedge$}
\BIC{$\colL{\Theta_L}, \colR{\Theta_R}, x \in c \vdash \colL{\Delta_L}, \colR{\Delta_R},
\lambda(x) \equi \rho(x, w),
      \goalList$}
\myLeftLabel{$\forall$}
\UIC{$\colL{\Theta_L}, \colR{\Theta_R} \vdash \colL{\Delta_L}, \colR{\Delta_R},
\forall z \in c.  ~ (\lambda(z) \equi \rho(z, w)),
      \goalList$}
\myLeftLabel{$\exists$}
\UIC{$\colL{\Theta_L}, \colR{\Theta_R} \vdash \colL{\Delta_L}, \colR{\Delta_R}, \goalList$}
\DisplayProof
        }}\]
where $x$ is a fresh variable
So in particular, we have two strict subproofs with respective conclusions
      \[
\begin{array}{llcl}
  &  \colL{\Theta_L, x \in c}, \colR{\Theta_R} &\vdash & \colL{\lambda(x), \Delta_L}, \colR{\neg \rho(x,w), \Delta_R},\goalList\\
\text{and}&
      \colL{\Theta_L, x \in c}, \colR{\Theta_R} &\vdash & \colL{\neg\lambda(x),\Delta_L}, \colR{\rho(x,w), \Delta_R},
      \goalList
\end{array}
\]
Applying the inductive hypothesis, we obtain $\nrc$ expressions $E_1\IH$, $E_2\IH$
and formulas $\theta_1\IH, \theta_2\IH$ which contain free variables in $C \cup \{x\}$ such that all of the following hold
\begin{align}
&
  \colL{\Theta_L, x \in c} \modelssem \phantom{\neg}\colL{\lambda(x), \Delta_L}, \theta_1\IH \vee \Lambda  \in E_1\IH
\label{align:1} \\
  \text{and} \qquad &
  \colL{\Theta_L, x \in c} \modelssem \colL{\neg \lambda(x), \Delta_L}, \theta_2\IH \vee \Lambda  \in E_2\IH
  \label{align:2} \\
  \text{and} \qquad &
  \phantom{x \in c,}\colR{\Theta_R} \modelssem \colR{\neg\rho(x,w),\Delta_R}, \neg \theta_1\IH
  \label{align:3} \\
  \text{and} \qquad &
  \phantom{x \in c,} \colR{\Theta_R} \modelssem \phantom{\neg}\colR{\rho(x,w), \Delta_R}, \neg \theta_2\IH
  \label{align:4}
\end{align}
With this in hand, we set
\[ \theta \eqdef \exists x \in c. ~ \theta_1\IH \wedge \theta_2\IH \qquad \text{and} \qquad
         E \eqdef \left\{\left\{ x \in c \mid \theta_2\IH\right\}\right\} ~\cup ~ \bigcup\left\{ E_1\IH \cup E_2\IH \mid x \in c \right\}
      \]
Note in particular that the free variables of $E$ and $\theta$ are contained in $C$, since we bind $x$.
The bindings of $x$ have radically different meaning across the two main components 
$E_1 \eqdef \left\{\left\{ x \in c \mid \theta_2\IH\right\}\right\}$ and $ E_2 \eqdef \left\{ E_1\IH \cup E_2\IH \mid x \in c \right\}$
of $E = E_1 \cup E_2$. $E_1$ consists of a single definition corresponding to the restriction of $c$ to $\theta_2\IH$, and there $x$
plays the role of an element being defined. On the other hand, $E_2$ corresponds to the joining of all the definitions obtained inductively,
which may contain an $x \in c$ as a parameter. So  we have two families of potential definitions for $\Lambda$ indexed by $x \in c$ that we join together.
Now let us show that we have the desired semantic properties. First we need to show that
$E$ contains a definition for $\Lambda$ under the right hypotheses, i.e.,
\begin{small}
\begin{align}
\colL{\Theta_L} \modelssem \colL{\Delta_L}, \exists x \in c. \; \theta_1\IH \wedge \theta_2\IH, \Lambda \in \left( 
\left\{\left\{ x \in c \mid \theta_2\IH\right\}\right\} ~\cup ~ \bigcup\left\{ E_1\IH \cup E_2\IH\mid x \in c \right\}\right)
\label{align:5}
\end{align}
\end{small}
which can be rephrased as
\begin{align*}
\colL{\Theta_L} \modelssem \colL{\Delta_L}, \exists x \in c. \; \theta_1\IH \wedge \theta_2\IH,
\Lambda = \left\{ x \in c \mid \theta_2\IH\right\}, \exists x \in c. \; \Lambda \in E_1\IH \cup E_2\IH
\end{align*}
Now concentrate on the statement
$\Lambda = \left\{x \in c \mid \theta_2\IH\right\}$.
It would follow from 
the two inclusions
$\Lambda \subseteq \left\{x \in c \mid \theta_2\IH\right\}$
and
$\left\{x \in c \mid \theta_2\IH\right\} \subseteq \Lambda$, so, recalling that
$\Lambda = \{ x \in c \mid \lambda(x)\}$, the overall conclusion would follow from
having
\begin{align*}
&
\colL{\Theta_L}, x \in c \modelssem \colL{\Delta_L}, \exists x \in c. \; \theta_1\IH \wedge \theta_2\IH,
  \lambda(x) \imp \theta_2\IH, \exists x \in c. \; \Lambda \in E_1\IH \cup E_2\IH
\\
  \text{and}\qquad
  &
\colL{\Theta_L}, x \in c \modelssem \colL{\Delta_L}, \exists x \in c. \; \theta_1\IH \wedge \theta_2\IH,
  \theta_2\IH \imp \lambda(x), \exists x \in c. \; \Lambda \in E_1\IH \cup E_2\IH
\end{align*}
Those in turn follow from the following two statements
\begin{align*}
&
\colL{\Theta_L}, x \in c \modelssem \colL{\Delta_L}, \theta_1\IH \wedge \theta_2\IH,
  \neg \lambda(x), \theta_2\IH, \Lambda \in E_1\IH \cup E_2\IH
\\
  \text{and}\qquad
  &
\colL{\Theta_L}, x \in c \modelssem \colL{\Delta_L}, \theta_1\IH \wedge \theta_2\IH,
  \neg\theta_2\IH, \lambda(x), \Lambda \in E_1\IH \cup E_2\IH
\end{align*}
which are straightforward consequences of~\ref{align:2} and~\ref{align:1} respectively.
This concludes the proof of~\ref{align:5}.

Now we only need to prove a final property, which is
\begin{align*}
\colR{\Theta_R} \modelssem \colR{\Delta_R}, \forall x \in c. \; \neg \theta_1\IH \vee \neg \theta_2\IH 
\end{align*}
which is equivalent to the validity of
\begin{align*}
\colR{\Theta_R}, x \in c \modelssem \colR{\Delta_R}, \neg \theta_1\IH \vee \neg \theta_2\IH
\end{align*}
which can be obtained by combining~\ref{align:3} and~\ref{align:4} with excluded
middle for $\rho(x,w)$.
\item If the principal formula is of the shape $\cG = \exists r' \in r ~ \cG'$ where
  $\cG'$ begins with another existential quantifier and $\cG$ is a member of $\goalList = \cG, \goalList'$,
      the proof necessarily has shape
\[
\dfrac{\colL{\Theta_L}, \colR{\Theta_R} \vdash \colL{\Delta_L}, \colR{\Delta_R}, \cG', \goalList}
{\colL{\Theta_L}, \colR{\Theta_R} \vdash \colL{\Delta_L}, \colR{\Delta_R}, \goalList}
\]
then we can apply the induction hypothesis where in lieu of $\goalList$ we have $\cG', \goalList$ and
obtain $\theta\IH$ and $E\IH$. It is then clear that we can simply set $\theta \eqdef \theta\IH$ and $E \eqdef E\IH$.
\item If the principal formula is not a member of $\goalList$, then we have two subcases corresponding to
  whether the principal formula under consideration occurs in $\colL{\Delta_L}$ or $\colR{\Delta_R}$,
  and whether the relevant membership statement that witnesses the instantiation
  is a member of $\colL{\Theta_L}$ or $\colL{\Theta_R}$. Let us list all of the different alternatives
    \begin{itemize}
      \item If the last step of the proof has shape
\[
  \dfrac{\colL{\Theta_L'}, \colL{w \in b}, \colR{\Theta_R} \vdash \colL{\Delta_L'}, \colL{\phi[w/x], \exists x \in b~ \phi}, \colR{\Delta_R},
        \goalList}{\colL{\Theta_L'}, \colL{w \in b}, \colR{\Theta_R} \vdash \colL{\Delta_L'}, \colL{\exists x \in b~ \phi}, \colR{\Delta_R}, \goalList}
  \]
      with $\colL{\Theta_L} = \colL{\Theta_L', w \in b}$ and $\colL{\Delta_L} = \colL{\Delta_L', \exists x \in b ~ \phi}$,
      we can conclude by setting $\theta \eqdef \theta\IH$ and $E \eqdef E\IH$, essentially because the set of free variables $L, R$ and $C$ can be taken
      to be the same in the premise.
      \item In the dual case where the last step has shape
\[
  \dfrac{\colL{\Theta_L}, \colR{\Theta_R', w \in b} \vdash \colL{\Delta_L}, \colR{\Delta_R', \phi[w/x], \exists x \in b~ \phi},\goalList}{\colL{\Theta_L},
        \colR{\Theta_R, w \in b} \vdash \colL{\Delta_L}, \colR{\Delta_R', \exists x \in b~ \phi},\goalList}
\]
      with $\colR{\Theta_R} = \colR{\Theta_R', w \in b}$ and $\colR{\Delta_R} = \colL{\Delta_R', \exists x \in b ~ \phi}$,
      we can also conclude immediately by setting $\theta \eqdef \theta\IH$ and $E \eqdef E\IH$.
      \item If the last step has shape 
\[
  \dfrac{\colL{\Theta_L'}, \colL{w \in b}, \colR{\Theta_R} \vdash \colL{\Delta_L}, \colR{\Delta_R', \phi[\colL{w}/x], \exists x \in b~ \phi},
        \goalList}{\colL{\Theta_L'}, \colL{w \in b}, \colR{\Theta_R} \vdash \colL{\Delta_L}, \colR{\Delta_R', \exists x \in b~ \phi}, \goalList}
\]
      with $\colL{\Theta_L} = \colL{\Theta_L', w \in b}$ and $\colR{\Delta_R} = \colR{\Delta_R', \exists x \in b ~ \phi}$,
      we need to do something non-trivial. We can still use the inductive hypothesis to obtain $\theta\IH$ and $E\IH$, but they may feature variables
      $x_1, \ldots, x_n$ of $w$ that are in $L$ as free variables. But we also have that the free variables of $b$ are included in $C$.
      With that in mind, we can set $\theta \eqdef \existsvars{x_1 \ldots x_n} w \in b\,.\, \theta\IH$ and $E \eqdef \bigcup
        \left\{ E\IH \mid w \in b \right\}$.
      \item If the last step has shape
\[
  \dfrac{\colL{\Theta_L}, \colR{\Theta_R, w \in b} \vdash \colL{\Delta_L', \phi[\colR{w}/x], \exists x \in b~ \phi}, \colR{\Delta_R},
        \goalList}{\colL{\Theta_L}, \colR{\Theta_R', w \in b} \vdash \colL{\Delta_L', \exists x \in b~ \phi}, \colR{\Delta_R}, \goalList}
\]
      with $\colR{\Theta_R} = \colR{\Theta_R', w \in b}$ and $\colL{\Delta_L} = \colL{\Delta_L', \exists x \in b ~ \phi}$,
      we proceed similarly by setting $\theta \eqdef \forall x_1 \ldots x_n| w \in b\,.\, \theta\IH$ and $E \eqdef \bigcup 
      \left\{ E\IH \mid w \in b \right\}$.
    \end{itemize}
\end{itemize}
\item The case of the $=$ rule can be handled exactly as the $\top$ rule.
\item For the $\neq$ rule, we distinguish several subcases:
  \begin{itemize}
    \item If we have $\colL{\Delta_L} = \colL{y \neq_\ur z, \alpha[y/x], \Delta_L'}$
      or $\colR{\Delta_R} = \colR{y \neq_\ur z, \alpha[y/x], \Delta_R'}$, so that the
      last step has one of the two following shapes
      \[
        \dfrac{
          \colL{\Theta_L}, \colR{\Theta_R} \vdash \colL{y \neq_\ur z, \alpha[y/x], \alpha[z/x], \Delta_L'
        }, \colR{\Delta_R'}, \goalList
      }{
          \colL{\Theta_L}, \colR{\Theta_R} \vdash \colL{y \neq_\ur z, \alpha[y/x], \Delta_L'
        }, \colR{\Delta_R'}, \goalList
      }
      \qquad
        \dfrac{
          \colL{\Theta_L}, \colR{\Theta_R} \vdash \colL{\Delta_L'}, \colR{y \neq_\ur z, \alpha[y/x], \alpha[z/x], \Delta_R'
        }, \goalList
      }{
        \colL{\Theta_L}, \colR{\Theta_R} \vdash \colL{\Delta_L'}, \colR{y \neq_\ur z, \alpha[y/x], \Delta_R'
        }, \goalList
      }
    \]
    we can apply the induction hypothesis to obtain some $\theta\IH$ and $E\IH$ such that setting $\theta \eqdef \theta\IH$ and $E \eqdef E\IH$ solves this
      subcase; we leave checking the additional properties to the reader.
      \item Otherwise, if we have
        $\colL{\Delta_L} = \colL{y \neq_\ur z, \Delta_L'}$,
        $\colR{\Delta_R} = \colR{\alpha[y/x], \Delta'_R}$ and a last step of
        shape
      \[
        \dfrac{
          \colL{\Theta_L}, \colR{\Theta_R} \vdash \colL{y \neq_\ur z, \Delta_L'
        }, \colR{\alpha[y/x], \alpha[z/x],\Delta_R'}, \goalList
      }{
          \colL{\Theta_L}, \colR{\Theta_R} \vdash \colL{y \neq_\ur z, \Delta_L'
        }, \colR{\alpha[y/x], \Delta_R'}, \goalList
      }
    \]
      In that case, the inductive hypothesis gives $\theta\IH$ and $E\IH$ with
      free variables in $C \cup \{z\}$ such that
      \[
        \colL{\Theta_L} \modelssem \colL{y \neq_\ur z, \Delta_L'}, \theta\IH, \Lambda \in E\IH
      \quad \text{and} \quad \colR{\Theta_R} \modelssem \colR{\alpha[y/x], \alpha[z/x], \Delta_R'}, \neg \theta\IH \]
    We then have two subcases according to whether $z \in C$ or not
    \begin{itemize}
        \item If $z \in C$, we can take $\theta \eqdef \theta\IH \wedge y =_\ur z$ and $E \eqdef E\IH$. Their free variables are in $C$
          and we only need to check
      \[
        \colL{\Theta_L} \modelssem \colL{y \neq_\ur z, \Delta_L'}, \theta\IH \wedge y =_\ur z, \Lambda \in E\IH
      \quad \text{and} \quad \colR{\Theta_R} \modelssem \colR{\alpha[y/x], \Delta_R'}, \neg \theta\IH, y \neq_\ur z \]
which follow easily from the induction hypothesis.
        \item Otherwise, we take $\theta \eqdef \theta\IH[y/z]$ and $E \eqdef E\IH[y/z]$.
          In that case, note that we have $\alpha[z/x][y/z] = \alpha[y/x]$ (which would not be necessarily the case if $z$ belonged to $C$).
          This allows to conclude that we have
      \[
        \colL{\Theta_L} \modelssem \colL{y \neq_\ur z, \Delta_L'}, \theta\IH[y/z], \Lambda \in E\IH[y/z]
        \quad \text{and} \quad \colR{\Theta_R} \modelssem \colR{\alpha[y/x], \Delta_R'}, \neg \theta\IH[y/z] \]
          directly from the induction hypothesis.
      \end{itemize}
      \item Otherwise, if we have
        $\colL{\Delta_L} = \colL{\alpha[y/x], \Delta_L'}$,
        $\colR{\Delta_R} = \colR{y \neq_\ur z, \Delta'_R}$ and a last step of
        shape
      \[
        \dfrac{
          \colL{\Theta_L}, \colR{\Theta_R} \vdash
          \colL{\alpha[y/x], \alpha[z/x],\Delta_L'},
          \colR{y \neq_\ur z, \Delta_R'},
\goalList
      }{
  \colL{\Theta_L}, \colR{\Theta_R} \vdash
          \colL{\alpha[y/x],\Delta_L'},
          \colR{y \neq_\ur z, \Delta_R'},
\goalList
        }
    \]
      In that case, the inductive hypothesis gives $\theta\IH$ and $E\IH$ with
      free variables in $C \cup \{z\}$ such that
      \[
        \colL{\Theta_L} \modelssem \colL{\alpha[y/x],\alpha[z/x], \Delta_L'}, \theta\IH, \Lambda \in E\IH
      \qquad \text{and} \qquad \colR{\Theta_R} \modelssem \colR{y \neq_\ur z, \Delta_R'}, \neg \theta\IH \]
    We then have two subcases according to whether $z \in C$ or not
    \begin{itemize}
        \item If $z \in C$, we can take $\theta \eqdef \theta\IH \vee y \neq_\ur z$ and $E \eqdef E\IH$. Their free variables are in $C$
          and we only need to check
      \[
        \colL{\Theta_L} \modelssem \colL{\alpha[y/x], \Delta_L'}, \theta\IH \vee y \neq_\ur z, \Lambda \in E\IH
      \hspace{1.6em} \text{and} \hspace{1.6em} \colR{\Theta_R} \modelssem \colR{y \neq_\ur z, \Delta_R'}, \neg \theta\IH \wedge y =_\ur z \]
which follow easily from the induction hypothesis.
        \item Otherwise, we take $\theta \eqdef \theta\IH[y/z]$ and $E \eqdef E\IH[y/z]$.
          In that case, note that we have $\alpha[z/x][y/z] = \alpha[y/x]$ (which would not be necessarily the case if $z$ belonged to $C$).
          This allows to conclude that we have
      \[
        \colL{\Theta_L} \modelssem \colL{\alpha[y/x], \Delta_L'}, \theta\IH[y/z], \Lambda \in E\IH[y/z]
        \qquad \text{and} \qquad \colR{\Theta_R} \modelssem \colR{y \neq_\ur z, \Delta_R'}, \neg \theta\IH[y/z] \]
          directly from the induction hypothesis.
      \end{itemize}
  \end{itemize}
\item If the last rule applied is $\times_\eta$, the proof has shape
\[
  \dfrac{\colL{\Theta_L}[\tuple{x_1,x_2}/x],
         \colR{\Theta_R}[\tuple{x_1,x_2}/x] \vdash 
         \colL{\Delta_L}[\tuple{x_1,x_2}/x],
         \colR{\Delta_R}[\tuple{x_1,x_2}/x], \goalList[\tuple{x_1,x_2}/x]}{\colL{\Theta_L}, \colR{\Theta_R} \vdash \colL{\Delta_L}, \colR{\Delta_R}, \goalList}
\]
and one applies the inductive hypothesis as expected to get $\theta\IH$ and
$E\IH$ such that
\[
\begin{array}{ll@{}l}
  & \colL{\Theta_L}[\tuple{x_1,x_2}/x] &\modelssem
         \colL{\Delta_L}[\tuple{x_1,x_2}/x], \theta\IH, \Lambda[\tuple{x_1,x_2}/x] \in E\\
\text{and}&
         \colR{\Theta_R}[\tuple{x_1,x_2}/x] &\modelssem 
         \colR{\Delta_R}[\tuple{x_1,x_2}/x], \neg\theta\IH
\end{array}
\]
and with free variables included in $C$ if $x \notin C$ or
$\left(C \cup \{x_1, x_2\}\right) \setminus \{x\}$ otherwise.
In both cases, it is straightforward to check that taking
$\theta \eqdef \theta\IH[\pi_1(x)/x_1,\pi_2(x)/x_2]$ and
$E \eqdef E\IH[\pi_1(x)/x_1,\pi_2(x)/x_2]$ will yield the desired result.
\item Finally, if the last rule applied is the $\times_\beta$ rule, it has shape
\[
  \dfrac{
    (\colL{\Theta_L},
         \colR{\Theta_R})[x_i/x] \vdash 
         (\colL{\Delta_L},
         \colR{\Delta_R},\goalList')[x_i/x]}{
         (\colL{\Theta_L},
         \colR{\Theta_R})[\pi_i(\tuple{x_1,x_2})/x] \vdash 
         (\colL{\Delta_L},
         \colR{\Delta_R},\goalList')[\pi_i(\tuple{x_1,x_2})/x]}
\]
and we can apply the induction hypothesis to get satisfactory
$\theta\IH$ and $E\IH$ (moving from $\goalList$ to $\goalList'[x_i/x]$ is unproblematic, as we can assume the
lemma works for $\goalList$ with arbitrary subformulas $\lambda$ and $\rho$); it is easy to see that we can set $\theta \eqdef \theta\IH$
and $E \eqdef E\IH$ and conclude.
\end{itemize}

This completes the proof of Lemma~\ref{lem:nrcparamcoll}.

\subsection{Proof of the main result} \label{subsec:proof:ain}
We now turn to the proof of our second main result, which we recall from the earlier
subsection:
\thmmainset*

We have as input a proof of
\[ \phi(\vvi, \vva, o) \wedge \phi(\vvi, \vva', o') \imp o \equiv_T o'\]
and we want an $\nrc$ expression $E(\vvi)$ such that
\[ \phi(\vvi, \vva, o) \modelssem E(\vvi) \equiv_T o\]

This will be a consequence of the following theorem.
\begin{restatable}{thm}{mainsetinter}
\label{thm:mainsetinter}
Given $\deltazero$ $\varphi(\vvi, \vva, o)$ and $\psi(\vvi, \vvb, o')$
together with a {\focused}
proof with conclusion
  \[\Theta(\vvi, \vva, r); \; \varphi(\vvi, \vva, r), \psi(\vvi, \vvb, o') \vdash \exists {r' \in_p o'}. \; r \equiv_T r'\]
we can compute in polynomial time an $\nrc$ expression $E(\vvi)$ such that
  \[ \Theta(\vvi, \vva, r); \; \varphi(\vvi, \vva, r), \psi(\vvi, \vvb, o') ~\modelssem~ r \in E(\vvi)\]
\end{restatable}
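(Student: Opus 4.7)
The plan is to prove Theorem~\ref{thm:mainsetinter} by induction on the type $T$, with the $\nrc$ Parameter Collection Theorem (Theorem~\ref{thm:mainflatcasebounded}) as the main engine and the admissible rules of Lemmas~\ref{lem:effnestedproof-down} and~\ref{lem:equivsettoequivalence} as auxiliary tools. Throughout, we partition the variables into ``left'' (those of $\varphi, \Theta$ together with $r$), ``right'' (those of $\psi$ together with $o'$), and ``common'' $C = \vvi$. The product case $T = T_1 \times T_2$ is routine: using $\times_\eta$, we split $r \equiv_T r'$ into the two component equivalences $\pi_i(r) \equiv_{T_i} \pi_i(r')$, locate $\pi_i(r')$ inside $o'$ at the appropriately extended position, apply the induction hypothesis at each strictly smaller type $T_i$ to get $E_i(\vvi)$ with $\pi_i(r) \in E_i$, and take $E \eqdef \bigcup \{ \bigcup \{ \{\langle x, y\rangle\} \mid y \in E_2 \} \mid x \in E_1\}$, which contains $\langle \pi_1(r), \pi_2(r)\rangle = r$.

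For the base case $T = \ur$, the hypothesis reduces to $\exists r' \in_p o'. \; r =_\ur r'$. We intend to apply the $\nrc$ Parameter Collection Theorem with $\lambda(z) \eqdef (z =_\ur r)$ on the left and $\rho(z, y) \eqdef (z =_\ur y)$ on the right, where $y$ witnesses $\in_p o'$. Given any common set $c$ that must contain $r$, the hypothesis yields a proof of $\exists y \in_p o'. \forall z \in c. \; (\lambda(z) \equi \rho(z, y))$, from which parameter collection extracts an expression $E_0$ with $\{z \in c \mid z = r\} \in E_0$; since this comprehension equals $\{r\}$, we obtain $r \in \bigcup E_0$. To produce the common $c$ containing $r$, we exploit the $\in$-context $\Theta$, whose entries trace membership chains witnessing where $r$ sits on the left side; applying an interpolation step (Theorem~\ref{thm:interpolationdeltafocus}) to the relevant sub-proof lets us extract a common set built from $\vvi$ by $\deltazero$-comprehension that must contain $r$.

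The set case $T = \sett(T')$ is the crux. From the given proof of $\exists r' \in_p o'. \; r \equiv_{\sett(T')} r'$, we first apply Lemma~\ref{lem:effnestedproof-down} to obtain, at no increase in proof size, a proof of $\exists z' \in_{mp} o'. \; z \equiv_{T'} z'$ under the extended $\in$-context $\Theta, z \in r$. Invoking the induction hypothesis at the strictly smaller type $T'$ produces an $\nrc$ expression $E'(\vvi)$ over the common variables with $z \in E'(\vvi)$ whenever $z \in r$; semantically, $r \subseteq E'(\vvi)$. Next, we apply Lemma~\ref{lem:equivsettoequivalence} to the original proof, taking $E'(\vvi)$ (treated as a fresh variable on the common side, introduced via a let-binding) in the role of the bound $a$, to obtain a proof of $\exists r' \in_p o'. \forall z \in E'(\vvi). \; (z \memmac r \equi z \memmac r')$. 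Finally, the $\nrc$ Parameter Collection Theorem, applied with $\lambda(z) \eqdef (z \memmac r)$, $\rho(z, y) \eqdef (z \memmac y)$, and $c \eqdef E'(\vvi)$, yields an expression $E_0$ with $\{z \in E'(\vvi) \mid z \memmac r\} \in E_0$; since $r \subseteq E'(\vvi)$, the comprehension coincides extensionally with $r$, so $r \in E_0$, and we set $E \eqdef E_0$.

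The main obstacle is the coherent treatment of $\nrc$ expressions as ``common variables'' in the applications of parameter collection and Lemma~\ref{lem:equivsettoequivalence}; this can be handled by a uniform mechanism that extends the $\in$-context with defining entries for freshly introduced common variables corresponding to previously synthesized $\nrc$ subexpressions, and then substitutes these expressions back at the end. A secondary subtlety arises in the base case, where localizing the Ur-element $r$ on the left side through the chain of memberships in $\Theta$ passing through $\vva$ requires a preliminary normalization of the proof via interpolation, so that $r$ appears as an element of some set $\nrc$-definable over $\vvi$, at which point parameter collection applies.
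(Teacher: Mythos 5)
Your handling of the crucial case $T=\sett(T')$ is essentially the paper's own argument: descend with Lemma~\ref{lem:effnestedproof-down}, apply the induction hypothesis at $T'$ to get a superset expression $E'(\vvi)\supseteq r$, rephrase the equivalence as a bounded biconditional via Lemma~\ref{lem:equivsettoequivalence}, and feed the result to Theorem~\ref{thm:mainflatcasebounded} with $\lambda(z)\eqdef z\memmac r$, finally plugging $E'(\vvi)$ in for the bound. One point where your version is needlessly heavy: there is no need for any ``let-binding''/context-extension mechanism to treat synthesized $\nrc$ expressions as common variables inside the proof system. The paper keeps the bound $a$ as a genuine fresh variable of sort $\sett(T')$ through all proof-theoretic steps (so $c\eqdef a$ in the Parameter Collection application, not $c\eqdef E'(\vvi)$), and substitutes $E'(\vvi)$ for $a$ only in the final \emph{semantic} entailment $\modelssem$, which is legitimate because that entailment holds for every value of $a$. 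Your product case (cartesian product of the two component collections) matches the paper's unelaborated ``straightforward'' case; just note that ``the appropriately extended position'' is not literally available, since the $\in_p$ notation only denotes paths ending in a membership step, so one first replaces $r$ by $\langle r_1,r_2\rangle$ via $\times_\eta$ and weakens the conjunction under the existential to get conclusions of the shape the induction hypothesis expects.

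The genuine divergence, and the one place your argument is not complete as written, is the base case $T=\ur$. The paper simply takes $E(\vvi)$ to be the $\nrc$ expression collecting all Ur-elements in the transitive closure of $\vvi$: soundness of the given proof forces $r$ to occur hereditarily in $\vvi$ (transposing $r$ with a fresh Ur-element in the right-hand objects $\vvb,o'$ only would otherwise preserve the hypotheses and refute the conclusion), so this single expression already suffices, with no interpolation and no Parameter Collection. Your route needs two extra ingredients you do not supply: (i) an admissible transformation turning $\exists r'\in_p o'.\,r=_\ur r'$ into $\exists r'\in_p o'.\,\forall z\in c\,(z=_\ur r\equi z=_\ur r')$ --- Lemma~\ref{lem:equivsettoequivalence} is stated only for set-type $\equiv$, so this analogue at $\ur$ would have to be proved separately; and (ii) a common set that provably contains $r$. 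Item (ii) is the real gap: the interpolant of Theorem~\ref{thm:interpolationdeltafocus} is a \emph{formula} $\kappa(\vvi,r)$, and to turn it into a set by comprehension you still need a bounding common set containing $r$, which is exactly the transitive-closure fact above --- and once you have that fact, your whole base-case detour through Parameter Collection is superfluous. Finally, the polynomial-time claim requires the bookkeeping the paper does (one recursive call per type constructor on a proof of non-increasing size, plus polynomial post-processing); your structure supports it, but you should say so explicitly.
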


That is, we can find an $\nrc$ query that ``collects answers''.
Assuming Theorem \ref{thm:mainsetinter}, let's prove the main result.

\begin{proof}[Proof of Theorem \ref{thm:mainset}]
  We assume $o$ has a set type, deferring the simple product and Ur-element
  cases (the latter using $\nrcget$) to Appendix~\ref{app:mainthm}.
Fix  an implicit definition of $o$ up to extensionality relative
to $\varphi(\vvi, \vva, o)$ and a {\focused} proof of
  \[\varphi(\vvi, \vva, o) \; \wedge \; \varphi(\vvi, \vva', o') ~\proves~ o \equiv_{\sett(T)} o' \]
We can apply Lemma \ref{lem:effnestedproof-down}, in the simple case where
$p$ is the ``empty path'',
to obtain a {\focused} derivation of
  \begin{align}
r \in o ; \; \varphi(\vvi, \vva, o), \varphi(\vvi, \vva', o')  
    \vdash~ \exists r' \inq o' \; r \equiv_T r' \label{algn:ent1}
  \end{align}
Then applying Theorem \ref{thm:mainsetinter} gives
an $\nrc$ expression $E(\vvi)$ such that
\[
\varphi(\vvi, \vva, o) \wedge r \memmac o \wedge \; \varphi(\vvi, \vva', o') ~  \modelssem ~ r \in E(\vvi)
\]
Thus, the object determined by $\vvi$ is always contained in $E(\vvi)$.
Recall that by (\ref{algn:ent1}), we have a derivation of
  \[
    r \in o; \; \varphi(\vvi, \vva, o) \vdash \varphi(\vvi, \vva', o')  \rightarrow \exists r' 
  \inq o'\; r \equiv_T r'\]
and applying interpolation (Theorem \ref{thm:interpolationdeltafocus}) to that gives
  a $\deltazero$ formula $\kappa(\vec i,  r)$ such that the following are valid
\begin{align}
  r \in o \wedge \varphi(\vvi, \vva, o) \imp \kappa(\vvi, r) \label{algn:inter1}\\
  \kappa(\vvi, r) \wedge \varphi(\vvi, \vva', o') \imp \exists r' \inq o'. \; r \equiv_T r' \label{algn:inter2}
\end{align}

  We claim that $E_\kappa(\vvi) = \left\{ x \in E(\vvi) \mid \kappa(\vvi, x)\right\}$ is the
desired $\nrc$ expression. To show this, assume $\varphi(\vvi, \vva, o)$ holds.
  We know already that $o \subseteq E(\vvi)$ and, by~(\ref{algn:inter1}), every $r \in o$ satisfies
  $\kappa(\vvi, o)$, so $o \subseteq E_\kappa(\vvi)$.
  Conversely, if $x \in E_{\kappa}(\vvi)$, we have $\kappa(\vvi,x)$, so by~(\ref{algn:inter2}),
  we have that $x \in o$, so $E_\kappa(\vvi) \subseteq o$. So $E_\kappa(\vvi) = o$, which concludes the proof.
\end{proof}

We now turn to the proof of Theorem \ref{thm:mainsetinter}.

\myparagraph{Proof of Theorem \ref{thm:mainsetinter}}
We prove the theorem %
by induction over the type $T$. 
We only prove the inductive step
for set types: the inductive case for products is straightforward.

For $T = \ur$, the base case of the induction, it is clear that we can take for $E$ an expression computing the set of all $\ur$-elements
in the transitive closure of $\vvi$. This can clearly be done in $\nrc$.

So now, we assume $T = \sett({T'})$ and that Theorem~\ref{thm:mainsetinter} holds up to ${T'}$.
We have a {\focused} derivation of
\begin{align}
\Theta; \; \varphi(\vvi, r), \; \psi(\vvi, o') ~\vdash~ \exists {r' \in_p o'}. \; r \equiv_{\sett(T')} r'
  \label{algn:inputder}
\end{align}
omitting the additional variables for brevity. 

From our input derivation, we can easily see that each element
of $r$ must be equivalent to some element below $o'$.
This is reflected in Lemma~\ref{lem:effnestedproof-down}, which allows us to
efficiently compute a proof of
\begin{align}
  \Theta, z \in r; \; \varphi(\vvi, r), \; \psi(\vvi, o') ~\vdash~ \exists z' \inq_{mp} o'. \; z \equiv_{T'} z'
  \label{align:inputderonelvldown}
\end{align}
We can then apply the inductive hypothesis of our main theorem at sort ${T'}$,
which is strictly smaller than $\sett({T'})$, on that new proof.
This yields an $\nrc$ expression $E\IH(\vvi)$
of type $\sett({T'})$ such that 
\[\Theta, z \in r; \; \varphi(\vvi,r), \psi(\vvi, o') \modelssem z \in E\IH(\vvi)\]
That is, our original hypotheses
entail $r \subseteq  E\IH(\vvi)$.

Thus, we have used the inductive hypothesis to get a ``superset expression''. But
now we want an expression that has $r$ as an element. We will do this
by  unioning a collection of definable subsets of $E\IH(\vvi)$.
To get these, we come back to our input derivation (\ref{algn:inputder}).
By Lemma \ref{lem:equivsettoequivalence}, we can efficiently compute a derivation of
\[\Theta; \; \varphi(\vvi, r), \psi(\vvi, o') \vdash \exists r' \inq_p
o'\,  \forall z \inq a  \;
(z \memmac r \equi z \memmac r')\]
where we take $a$ to be a fresh variable of sort $\sett({T'})$. 
Now, applying our $\nrc$ Parameter Collection result (Theorem~\ref{thm:mainflatcasebounded})
we obtain an $\nrc$ expression $E^{\mathrm{coll}}(\vvi, a)$ satisfying
\begin{align*}
\Theta; \; \varphi(\vvi, r), \psi(\vvi, o') ~\modelssem 
  a \cap r  \in E^{\mathrm{coll}}(\vvi, a)
\end{align*}
Now, recalling that we have $r \subseteq E\IH(\vvi)$ and instantiating $a$ to be
$E\IH(\vvi)$, we can conclude that
\begin{align*}
\Theta; \; \varphi(\vvi, r), \psi(\vvi, o') ~\modelssem 
  r  \in E^{\mathrm{coll}}(\vvi, E\IH(\vvi))
\end{align*}
Thus we can take $E^{\mathrm{coll}}(\vvi, E\IH(\vvi))$ as an explicit definition.

\myparagraph{Complexity}
Now let us sketch the complexity analysis of the underlying transformation.
The main induction is the type $T$ of the object to be defined, and most of the
lemmas we use have a complexity that
depend on the size of the proofs, which is commensurate with the size of the proof
tree multiplied by the size of the input sequent, that we will write $n$. Let
us call $C(T,n)$ the time complexity of our procedure
and show it can be taken to be polynomial. For the base case and the product case,
we have that $C(\ur, n)$ is $\bigO(n^k)$ for $k \ge 1$.
For set types $\sett(T)$, we first have a polynomial-time procedure in $n$ to obtain
the new proof in Lemma~\ref{align:inputderonelvldown}, and we have a recursive call
on this proof. Note that Lemma~\ref{lem:effnestedproof-down} also tells us
that this proof has size at most $n$, so the recursive call has complexity
at most $C(T, n)$. Then the subsequent transformations are simply polynomial-time
on the input proof, so we have for some exponent $k$ large enough
\[ C(\sett(T), n) \le C(T, n) + n^k\]
A similar analysis for products yields that
\[C(T_1 \times T_2, n) \le C(T_1, n) + C(T_2,n) + n^k\]
All in all, if we call $s(T)$ the size of a type defined in the obvious way,
we have
\[C(T, n) = \bigO(s(T) n^k)\]
by induction on $T$, so since $s(T) \le n$, the
overall time complexity is indeed polynomial in the size of the input derivation.
\qed

\section{Discussion and future work} \label{sec:conc}

Our first contribution implies that whenever a set of $\nrc$ views determines
an $\nrc$ query, the query is rewritable over the views in $\nrc$.
By our second result, from a proof witnessing determinacy in our {\focused} proof system, we can efficiently generate the rewriting. Both results apply to  a setting
where we have determinacy with respect to constraints and views, as in
Example \ref{ex:views}, or to general $\deltazero$ implicit definitions that
may not stem from views.

In terms of impact on databases, a crucial limitation of our work is that we do not yet
know how to find the proofs. In the case of relational data,
we know of many ``islands of decidability'' where proofs of determinacy can be found effectively -- e.g. 
for views and queries in guarded logics \cite{gnfjsl}.
But it remains open to find similar decidability results for views/queries in fragments of $\nrc$.

It is possible to use our proof system without full automation -- simply search for a proof, and then
when one finds one, generate the rewriting. We have had some success with this approach in the relational setting,
where standard theorem proving technology can be applied \cite{usijcai17}. But for the proof systems proposed
here, we do not have either our own theorem prover or a reduction to a system that has been implemented in the past.
The need to find proofs automatically is pressing since
our system is so low-level that it is difficult to do proofs by
hand. Indeed, a formal proof of implicit definability for Example \ref{ex:views}, 
or even the simpler Example \ref{ex:simplenesting},
 would come  to several pages. 

In \cite{benediktpradicpopl}, we introduce an intuitionistic version of our proof system,
and give a specialized algorithm for generating $\nrc$ transformations from proofs of implicit
definability within this system.  The algorithm for the intuitionistic case is considerably simpler
than for the proof systems we present here, and would probably make a good starting point for
an implementation of the system.

The implicit-to-explicit methodology requires
a proof of implicit definability, which implies implicit definability over all instances, not just finite ones.
This requirement is necessary:
one cannot hope to convert implicit definitions over finite instances
to explicit  $\nrc$ queries, even ineffectively.  We do not believe that this
is a limitation in practice. See Appendix~\ref{app:finite} for details.

For the effective Beth result,  the key proof tool was the $\nrc$ Parameter Collection theorem,
Theorem \ref{thm:mainflatcasebounded}. There is an intuition behind this
theorem  that concerns a general  setting,
where  we have a first-order theory $\Sigma$ that factors into a conjunction
of two formulas $\Sigma_L \wedge \Sigma_R$, and from this we have
a notion of a ''left formula'' (with predicates from $\Sigma_L$),
a ''right formula'' (predicates from $\Sigma_R$), and a ``common formula''
(all predicates occur in both $\Sigma_L$ and $\Sigma_R$). Under the
hypothesis that a left formula $\lambda$ is 
definable from a right formula with parameters, we can conclude that
the left formula must actually be definable from a common formula
with parameters: see Appendix~\ref{app:paramfo} for a formal version and the
corresponding proof.

Our work contributes to the broader topic of proof-theoretic vs model-theoretic techniques for interpolation and definability theorems. For Beth's theorem, 
there are reasonably short model-theoretic \cite{Lyndon59, ck} and proof-theoretic arguments \cite{craig57beth,fittingbook}.
In database terms, you can argue semantically that relational
algebra
is complete
for  rewritings of queries determined by views, and producing
a rewriting from a proof of determinacy is not   that difficult.
But for a number of results  on definability proved in the 60's and 70's 
\cite{changbeth,makkai,kueker,gaifman74}, there
are short model-theoretic arguments, but no proof-theoretic ones.
For our $\nrc$ analog of Beth's theorem, the situation is more similar
to the latter case:
the model-theoretic proof of completeness 
is relatively short and elementary, but generating explicit
definitions from proofs is much more challenging.
We hope that our results and  tools represent a step towards
providing  effective versions, and towards understanding the relationship between
model-theoretic and proof-theoretic arguments.

\myparagraph{Acknowledgements}
We thank Szymon Toru\'nczyk and Ehud Hrushovski for pointing us towards the model-theoretic approach to these results.
This paper extends abstracts appearing in POPL 2021 \cite{benediktpradicpopl} and PODS \cite{bethpods}
We thank in particular
 the POPL conference reviewers for their detailed feedback. 
Most of all we are deeply grateful to the reviewers of LMCS for their detailed
comments on the submission.

This research was funded in whole or in part by EPSRC grant EP/T022124/1.
Funded by the Deutsche Forschungsgemeinschaft (DFG, German Research
Foundation) -- Project-ID~457292495.
For the purpose of Open Access, the authors have applied a CC BY
public copyright license to any Author Accepted Manuscript (AAM)
version arising from this submission.

\bibliographystyle{alphaurl}
\bibliography{obj}

\appendix
\section{Comparison to the situation with finite instances}
\label{app:finite}
Our result concerns a specification $\Sigma(\vec I, O \ldots)$
such that $\vec I$ implicitly defines $O$. This can be defined
``syntactically'' -- via the existence of a proof (e.g. in our own proof system).
Thus, the class of queries that we deal with could be called
the ``provably implicitly definable queries''.
The same class of queries can also be defined semantically, and this is how implicitly
defined queries are often presented. But in order to be equivalent to the
proof-theoretic version, we need the implicit definability of the object
$O$ over $\vec I$ to holds considering
all nested relations $\vec I, O \ldots$, not just finite ones.
Of course, the fact that when you phrase the property
 semantically requires referencing unrestricted instance  does not mean that
our results depend on the existence of infinite nested relations.

Discussions of finite vs. unrestricted instances appear in many other papers
(e.g. \cite{benediktpradicpopl}). And the results in this submission do not raise
any new issues with regard to the topic. But we discuss what
happens if we take the obvious analog of the semantic definition, but
using only finite instances.
Let us say that
a $\deltazero$ specification $\Sigma(\vec I, O \vec A)$ \emph{implicitly
defines $O$ in terms of $\vec I$ over finite instances} if for any
\emph{finite} nested relations $\vec I, O, \vec A, O', \vec A'$, if
$\Sigma(\vec I, O, \vec A) \wedge \Sigma(\vec I, O', \vec A')$ holds,
then $O=O'$.
If this holds, then $\Sigma$ defines a query, and we call
such a query \emph{finitely implicitly definable}.

This class of queries is reasonably well understood, and 
we summarize what is known about it:

\begin{itemize}
\item \emph{ Can finitely implicitly definable queries always be defined in $\nrc$?}
The answer is a resounding ``no'': one can implicitly define the powerset query over
finite nested relations. Bootstrapping this, one can define iterated
powersets, and show that the expressiveness of implicit definitions
is the same as queries in $\nrc$ enhanced with powerset -- a query
language with non-elementary complexity. Even in the setting of relational
queries, considering only finite instances leads to a query class
that is not known to be in PTIME \cite{kolaitisimpdef}.

\item \emph{Can we generate explicit definitions from specifications $\Sigma$, given
a proof that $\Sigma$   implicitly defines $O$ in terms
of $\vec I$ over finite instances?} It depends on what you mean by ``a proof'', but
in some sense there is no way to make sense of the question:
there is no recursively enumerable complete proof system for such definitions. This follows
from the fact that the set of finitely
implicitly definable queries is not computably enumerable.

\item \emph{Is sticking to specifications $\Sigma$ that are implicit definitions
over all inputs -- as we do in this work -- too strong?} Here the answer can
not be definitive. But we know of no evidence that this is too restrictive
 in practice.
Implicit specifications suffice to specify any $\nrc$ query. And
the  answer to the first question above says that if we modified
the definition in the obvious way to get a larger class,  we would allow specification of
 queries that do not admit efficient evaluation. The answer to the second
question above says that we do not have a witness to membership in this
larger class.
\end{itemize}

\section{Proof of Proposition \ref{prop:verify}: obtaining\texorpdfstring{\\}{} $\nrc$ expressions that verify \texorpdfstring{$\deltazero$}{Δ₀} formulas}
\label{app:fointerpbool}
  
Recall that in the body of the paper, we claimed the following statement, concerning
the equivalence of $\nrc$ expressions of Boolean type and $\deltazero$ formulas:

\medskip

There is a polynomial time
function  taking an extended $\deltazero$ formula $\phi(\vec x)$ and producing
 an $\nrc$ expression $\verify_\phi(\vec x)$, where the expression  takes as input
$\vec x$ and returns true if and only if $\phi$ holds.

\medskip

We refer to this as the ``Verification Proposition'' later on in these supplementary
materials.

\begin{proof}
First, one should note that every term in the logic can be translated to
a suitable $\nrc$ expression of the same sort. For example, a variable in
the logic corresponds to a variable in $\nrc$. 

We prove the proposition by induction over the formula $\phi(\vec x)$; we
only treat the case for half of the connectives, as the dual connectives
can be then be handled similarly using De Morgan rules and the fact that
boolean negation is definable in $\nrc$:
\begin{itemize}
\item If $\phi(\vec x)$ is an equality $t =_T t'$, then one can translate
  that to the $\nrc$ expression $\bigcup \{ \{\langle\rangle\} \mid z \in \{t\} \cap \{t'\}\}$.
\item If $\phi(\vec x)$ is a membership $t \in_T t'$, then it can be translated
  to the $\nrc$ expression $\bigcup \{ \{\langle\rangle\} \mid z \in \{\{t\}\} \cap \{t'\}\}$.
\item If $\phi(\vec x)$ is a disjunction $\phi_1(\vec x) \vee \phi_2(\vec x)$,
we take $\verify_\phi(\vec x) = \verify_{\phi_1}(\vec x) \cup \verify_{\phi_2}(\vec x)$.
\item If $\phi(\vec x)$ begins with a bounded existential quantification $\exists z \in y \; \psi(\vec x, y, z)$,
we simply set $\verify_\phi(\vec x, y) = \bigcup \{ \verify_{\psi(\vec x, y, z)} \mid z \in y \}$.
\qedhere
\end{itemize}
\end{proof}

Note that the converse (without the polynomial time bound) also holds; this will follow
from the more general result on moving from $\nrc$ to interpretations
that is proven later in the supplementary materials.

\section{First part of Proposition \ref{prop:reducemonadic}:\texorpdfstring{\\}{} Reduction to monadic schemas for $\nrc$}
\label{app:reducemonadicnrc}

In the body of the paper
we mentioned that it is possible to reduce questions about
definability within $\nrc$ to the case of monadic schemas.
We now give the details of this reduction.

Recall that  \emph{monadic type} is a type built only using
the atomic type $\ursort$ and the type constructor
$\sett$. Monadic types are in one-to-one correspondence with
natural numbers by setting
$\ursort_0 \eqdef \ursort$ and $\ursort_{n+1} \eqdef \sett(\ursort_n)$.
A monadic type is thus a $\ursort_n$ for some $n \in \bbN$.
A nested relational schema is monadic if it contains only monadic types,
and a $\deltazero$ formula is said to be monadic if it all of its variables have monadic types.

We start with a version of the reduction only for $\nrc$ expressions:

\begin{prop} \label{prop:reducemonadic-nrc}
For any nested relational schema $\aschema$, there is
a monadic nested relational schema $\aschema'$,
 an injection $\convert$ from instances of $\aschema$ to instances
of $\aschema'$ that is definable in $\nrc$, and an $\nrc[\nrcget]$ expression
$\convert^{-1}$
such that $\convert^{-1} \circ \convert$ is the identity transformation from
$\aschema \to \aschema$.

Furthermore, there is a $\deltazero$ formula $\image_\convert$ from $\aschema'$ to $\bool$
such that $\image_\convert(i')$ holds if and only if $i' = \convert(i)$
for some instance $i$ of $\aschema$.
\end{prop}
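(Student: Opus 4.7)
The plan is to proceed by structural induction on the types appearing as (subtypes of) declarations in $\aschema$. For each such type $T$ I will define a monadic type $\widehat{T}$, an $\nrc$ expression $\convert_T : T \to \widehat{T}$, an $\nrc[\nrcget]$ expression $\convert_T^{-1} : \widehat{T} \to T$, and a $\deltazero$ formula $\image_T(x)$, such that $\convert_T^{-1} \circ \convert_T$ is the identity on $T$ and $\image_T(x)$ holds iff $x$ lies in the image of $\convert_T$. The schema $\aschema'$ is then obtained by replacing each declaration $o : T$ of $\aschema$ by a declaration $\widehat{o} : \widehat{T}$, and $\convert$, $\convert^{-1}$ and $\image_{\convert}$ are assembled componentwise from the per-type constructions.

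The base and the set cases are routine: $\widehat{\ursort} = \ursort$ with both conversions the identity and $\image_\ursort \equiv \top$; for $\unit$, take $\widehat{\unit} = \sett(\ursort)$, encode $\unitobj$ as $\emptyset$, let the inverse return $\unitobj$ unconditionally, and set $\image_\unit(x) \equiv \forall y \in x.\, \bot$; for $\sett(T)$, set $\widehat{\sett(T)} = \sett(\widehat{T})$, convert by mapping $\convert_T$ over the members using binding union, take $\convert_{\sett(T)}^{-1}$ to be the analogous map of $\convert_T^{-1}$, and set $\image_{\sett(T)}(x) \equiv \forall y \in x.\, \image_T(y)$.

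The main work is in the product case, where I use a Kuratowski-style encoding $\la a, b \ra \mapsto \{\{a'\}, \{a', b'\}\}$ with $a', b'$ the monadic encodings of $a$ and $b$. This requires the two components to live at a common monadic type, so I will define $\depth$ on types with $\widehat{T} = \ursort_{\depth(T)}$ and choose $\depth(T_1 \times T_2)$ large enough that both $\widehat{T_1}$ and $\widehat{T_2}$ embed into a common $\ursort_m$ via iterated singleton lifts (themselves $\nrc$-definable and $\deltazero$-recognisable), with the Kuratowski wrapping accounting for the extra two levels of nesting. The inverse recovers the first component by intersecting the two members of the pair and applying $\nrcget$ (with a case split on whether the encoding has one or two members so as to handle $a' = b'$) and the second analogously; $\image_{T_1 \times T_2}$ asserts in $\deltazero$ that $x$ has at most two members, is a well-formed Kuratowski pair of elements of the common lifted type, and that the recovered components lie in $\image_{T_1}$ and $\image_{T_2}$ respectively.

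The main obstacle will be the bookkeeping in the product case: making the depth-alignment into explicit $\nrc$ lifts and $\deltazero$-definable image predicates at every intermediate level, and ensuring that each use of $\nrcget$ in $\convert_T^{-1}$ is applied to a subexpression that is provably a singleton whenever $\image_T$ holds (so that the resulting transformation really is the identity after composition, not merely on a subset of inputs). Once these constructions are in place, verifying $\convert_T^{-1} \circ \convert_T = \mathrm{id}_T$ and the characterisation of $\image_T$ reduces to a routine induction on $T$, and the statement for the whole schema follows by gluing the per-type constructions.
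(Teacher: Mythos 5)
Your proposal is correct and follows essentially the same route as the paper's proof: a depth function on types, Kuratowski encoding of pairs over a common monadic level reached by iterated singleton lifts, projections recovered via intersection/difference with a case split on singleton versus two-element encodings (using $\nrcget$ for the downcasts), and inductively built $\deltazero$ image predicates, all glued componentwise over the schema. The only cosmetic difference is your treatment of $\unit$ (encoding it as $\emptyset$ at type $\sett(\ursort)$ rather than via a constant at depth $0$), which does not change the argument.
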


To prove this we  give an encoding of
general nested relational schemas into monadic nested relational schemas that
will allow us to reduce the equivalence between $\nrc$ expression,
interpretations, and implicit definitions to the case where
input and outputs are monadic.

Note that it  will turn out to be crucial to check that this encoding may be
defined \emph{either} through $\nrc$ expressions or interpretations, but
in this subsection we  will give the definitions in terms of $\nrc$ expressions.

The first step toward defining these encodings is actually to
emulate in a sound way the cartesian product structure for
types $\ursort_n$. Here ``sound'' means that we should give
terms for pairing and projections that satisfy the usual
equations associated with cartesian product structure.

\begin{prop}
\label{prop:nrc-monadicproduct}
For every $n_1, n_2 \in \bbN$, there are $\nrc$ expressions
$\widehat{\pair}(x,y) : \ursort_{n_1}, \ursort_{n_2} \to \ursort_{\max(n_1,n_2) + 2}$
and $\nrcwget$ expressions $\widehat\pi_i(x) : \ursort_{\max(n_1,n_2) + 2} \to \ursort_{n_i}$ for $i \in \{1,2\}$
such that the following equations hold
$$
\widehat{\pi}_1\left(\widehat{\pair}(a_1,a_2)\right) ~ = ~ a_1
\qquad \qquad
\widehat{\pi}_2\left(\widehat{\pair}(a_1,a_2)\right) ~ = ~ a_2
$$
Furthermore, there is a $\deltazero$ formula $\image_{\widehat{\pair}}(x)$ such that
$\image_{\widehat{\pair}}(a)$ holds if and only if there exists $a_1, a_2$ such
that $\widehat\pair(a_1,a_2) = a$. In such a case, the following also holds
$$\widehat{\pair}(\widehat\pi_1(a),\widehat\pi_2(a)) ~ = ~ a$$
\end{prop}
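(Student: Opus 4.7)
The plan is to use a Kuratowski-style encoding of pairs, adapted to accommodate the fact that the two components may sit at different levels in the monadic hierarchy. Let $m = \max(n_1, n_2)$. I first define auxiliary ``lifting'' $\nrc$ expressions $\kw{lift}_{k}(x) : \ursort_{j} \to \ursort_{j+k}$ that wrap the input in $k$ nested singletons, i.e. $\kw{lift}_{0}(x) \eqdef x$ and $\kw{lift}_{k+1}(x) \eqdef \{\kw{lift}_{k}(x)\}$. Dually, I define ``unlifting'' expressions $\kw{unlift}_k(x) : \ursort_{j+k} \to \ursort_j$ using iterated $\nrcget$, so that $\kw{unlift}_k \circ \kw{lift}_k$ is the identity. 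Note the essential use of $\nrcget$ here. Set $a_1' \eqdef \kw{lift}_{m - n_1}(a_1)$ and $a_2' \eqdef \kw{lift}_{m - n_2}(a_2)$, both in $\ursort_m$, and define
\[\widehat{\pair}(a_1, a_2) \eqdef \{\{a_1'\},\; \{a_1', a_2'\}\} \; : \; \ursort_{m+2}.\]

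For the projections, given $p : \ursort_{m+2}$ that is a Kuratowski pair, I can recover the first component as $\kw{unlift}_{m - n_1}(\nrcget(\bigcap p))$, where $\bigcap p$ is expressible in $\nrc$ (in our two-element case, it is just $x_1 \cap x_2$ for $p = \{x_1, x_2\}$, which is $\nrc$-definable using the difference operator). For the second component, I use the fact that $\bigcup p \setminus \bigcap p$ is $\{a_2'\}$ when $a_1' \neq a_2'$ and $\emptyset$ otherwise; using the conditional macro of $\nrc$ I set $\hat{\pi}_2(p)$ to be $\kw{unlift}_{m - n_2}(\nrcget(\bigcup p \setminus \bigcap p))$ when that set is nonempty and $\kw{unlift}_{m - n_2}(\nrcget(\bigcap p))$ otherwise. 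The two required equations follow by straightforward case analysis on whether $a_1' = a_2'$, using the fact that the $\kw{unlift}/\kw{lift}$ pair cancels.

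For the image formula, $\image_{\widehat{\pair}}(a)$ must assert that $a$ is of the correct Kuratowski shape and that its underlying elements lie in the image of the appropriate lifts. Being in the image of $\kw{lift}_k$ is a $\deltazero$ property: it says that at each of the $k$ wrapping levels the enclosing set is a singleton, which is expressible by statements of the form $\exists y \in z\; \forall y' \in z\; y = y'$ iterated down the nesting. Putting these together, $\image_{\widehat{\pair}}(a)$ is the conjunction of: there exist $s_1, s_2 \in a$ with $s_1 \subseteq s_2$, $s_1$ is a singleton, $s_2$ has at most two elements, $a$ itself has at most two elements, the unique element of $s_1$ is in the image of $\kw{lift}_{m - n_1}$, and every element of $s_2 \setminus s_1$ is in the image of $\kw{lift}_{m - n_2}$. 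Once $\image_{\widehat{\pair}}(a)$ holds, the retraction identity $\widehat{\pair}(\hat{\pi}_1(a), \hat{\pi}_2(a)) = a$ is immediate from the cancellation of $\kw{lift}$ and $\kw{unlift}$ and the reconstruction of the Kuratowski pair from its union and intersection.

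The main subtlety — rather than a technical obstacle — is handling the asymmetry between $n_1$ and $n_2$ cleanly: the lift/unlift machinery is what lets us pretend both components live at the same level so that standard Kuratowski tricks work, while still recovering the original typings on the way out. The other point requiring some care is ensuring that the image formula stays $\deltazero$, which works because singleton-testing and bounded cardinality constraints only require bounded quantifiers relativized to $a$ and its members.
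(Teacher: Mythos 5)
Your $\widehat{\pair}$ and projections follow essentially the paper's route: Kuratowski pairs over components lifted to the common level $\ursort_{\max(n_1,n_2)}$ by iterated singletons, with iterated $\nrcget$ as the inverse; your way of extracting the components (via $\bigcap p$ and $\bigcup p \setminus \bigcap p$ plus a nonemptiness conditional) is a harmless variant of the paper's $\allpairs$-based case split, and the two equations $\hat{\pi}_i(\widehat{\pair}(a_1,a_2)) = a_i$ do hold for it.

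The gap is in $\image_{\widehat{\pair}}$. Your formula only asserts the \emph{existence} of suitable $s_1, s_2 \in a$; nothing forces every element of $a$ to be one of them, and you do not exclude $s_1 = s_2$. So $a = \{\{b\}, c\}$ with $b$ in the image of $\kw{lift}_{m-n_1}$ and $c$ an arbitrary element of $\ursort_{m+1}$ (say a three-element set) satisfies every clause with $s_1 = s_2 = \{b\}$, yet is not of the form $\widehat{\pair}(a_1,a_2)$; hence the ``only if'' direction of the required biconditional fails, and with it the retraction identity $\widehat{\pair}(\hat\pi_1(a),\hat\pi_2(a)) = a$ on such $a$. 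A second, independent omission concerns the degenerate case $a = \{\{b\}\}$ (the two lifted components coincide): membership in the image then also requires $b$ to lie in the image of $\kw{lift}_{m-n_2}$, since $b = a_2'$ must itself be an iterated singleton of depth $m-n_2$; but you impose the $\kw{lift}_{m-n_2}$ condition only on the elements of $s_2 \setminus s_1$, which is empty here. For instance, with $n_2 < n_1 = m$ your formula accepts every $\{\{b\}\}$, although only iterated singletons $b$ of depth $m-n_2$ arise as pairs. Both defects are repairable within $\deltazero$ (require every $s \in a$ to coincide, up to $\equiv$, with $s_1$ or $s_2$, and in the singleton case demand the inner element lie in both lift-images) --- this is in effect what the paper does by conjoining a shape predicate for Kuratowski pairs with image-of-lift checks applied to \emph{both} projections --- but as written your formula does not characterize the image, so the ``Furthermore'' part of the statement is not established.
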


\begin{proof}
We adapt the Kuratowski encoding of pairs $(a,b) \mapsto \{ \{a\},\{a,b\}\}$.
The notable thing here is that, for this encoding
to make sense in the typed monadic setting, the types of $a$ and $b$ need to be the same.
This will not be an issue because we have $\nrc$-definable embeddings 
$$\uparrow_n^m : \ursort_n \to \ursort_m$$
for $n \le m$ defined as the $m - n$-fold composition of the singleton transformation $x \mapsto \{x\}$.
This will be sufficient to define the analogues of pairing for monadic types and thus
to define $\convert_T$ by induction over $T$.
On the other hand, $\convert^{-1}_T$ will require a suitable encoding of projections.
This means that to decode an encoding of a pair, we need to make use of a transformation inverse to
the  singleton construct $\uparrow$. But we have this  thanks to the $\nrcget$ construct. We let
$$\downarrow_n^m : \ursort_m \to \ursort_n$$ the transformation inverse to
 $\uparrow_n^m$, defined as the $m - n$-fold composition of $\nrcget$.

Firstly, we define the family of transformations $\widehat{\pair}_{n_1,n_2}(x_1,x_2)$,
where $x_i$ is an input of type $\ursort_{n_i}$ for $i \in \{1,2\}$ and
the output is of type $\ursort_{\max(n_1,n_2)+2}$, as follows
$$\widehat{\pair}_{n_1,n_2}(x_1,x_2) \; \eqdef \;
\{ \{ \uparrow x_1 \}, \{\uparrow x_1, \uparrow x_2\}\}$$

The associated projections
$\widehat\pi_i^{n_1,n_2}(x)$ where $x$ has type $\ursort_{\max(n_1,n_2)+2}$
and the output is of type $\ursort_{n_i}$
are a bit more challenging to construct.
The basic idea is that there is
first a case distinction to be made for encodings $\widehat{\pair}_{n_1,n_2}(x_1,x_2)$:
depending on whether  $\uparrow x_1 = \uparrow x_2$
or not. This can be actually tested by an $\nrc$ expression.
Once this case distinction is made, one may informally
compute the projections as follows:
\begin{itemize}
\item if $\uparrow x_1 = \uparrow x_2$, both projections can be computed
as a suitable downcasting $\downarrow$ (the depth of the downcasting is determined by the output
type, which is not necessarily the same for both projections).
\item otherwise, one needs to single out the singleton $\{\uparrow x_1\}$ and the
two-element set $\{ \uparrow x_1, \uparrow x_2 \}$ in $\nrc$. Then, one
may compute the first projection by downcasting the singleton, and the second projection
by first computing $\{\uparrow x_2\}$ as a set difference and then downcasting with $\downarrow$.
\end{itemize}

We now give the formal encoding for projections, making a similar case distinction.  
To this end, we first define a generic $\nrc$ expression
$$\allpairs_T(x) : \sett(T) \to \sett(T \times T)$$
computing all the pairs of distinct elements of its input $x$
$$\allpairs_T(x) = \bigcup \{ \bigcup \{ \{(y, z)\} \mid y \in x \setminus \{z\}\} \mid z \in x \}$$
Note in particular that $\allpairs(i) = \emptyset$ if and only if $i$ is a singleton or the empty set.
The projections can thus be defined as
$$
\begin{array}{l!~c!~l}
\widehat{\pi}_1(x) &\eqdef&
\case\left(\allpairs(x) = \emptyset,~ \downarrow x,~ \downarrow \bigcup\{ \pi_1(z) \cap \pi_2(z) \mid z \in \allpairs(x) \}\right) \\
\widehat{\pi}_2(x) &\eqdef& \case\left(\allpairs(x) = \emptyset, ~
\downarrow x, ~ \downarrow (x \setminus \uparrow \widehat{\pi}_1(x)))\right) \\
\end{array}
$$
These definitions crucially ensure that,
for every object $a_i$ with $i \in \{1,2\}$, we  have
$$\widehat{\pi}_i\left(\widehat{\pair}(a_1,a_2)\right) ~ = ~ a_i$$

Now all remains to be done is to define $\image_{\widehat\pair}$.
Before that, it is helpful to define a formula $\image_{\uparrow_n^m}(x)$
which holds if and only if $x$ is in the image of $\image_{\uparrow_n^m}$.

As a preliminary step, define generic $\deltazero$ formulas $\issing(x)$ and $\istwo(x)$
taking an object of type $\sett(T)$ and returning a Boolean indicating
whether the object is a singleton or a two-element set.
Defining $\image_{\uparrow_n^m}$ is straightforward using $\issing$ and Boolean connectives.
Then $\image_{\widehat{\pair}_{n,n}}(x)$ can be defined as follows for each $n \in \bbN$
$$
\begin{array}{lcl}
\image_{\widehat{\pair}_{n,n}}(x) &\eqdef&
\left(\issing(x) \wedge \image_{\widehat{\pair}_{n,n}}^{\issing}(x)\right) \vee 
\left(\istwo(x) \wedge \image_{\widehat{\pair}_{n,n}}^{\issing}(x)\right)
\\ 
\image_{\widehat{\pair}_{n,n}}^\issing(x) &\eqdef& \exists z \in x~~ \issing(z) \\
\image_{\widehat{\pair}_{n,n}}^\istwo(x) &\eqdef& \exists z \; z' \in x~~ (\istwo(z) \wedge \issing(z') \wedge \forall y \in z~~ y \in z')
\end{array}
$$
Then, the more general $\image_{\widehat{\pair}_{n_1,n_2}}$ can be defined using
$\image_{\uparrow_{n_i}^{m}}$ where $m = \max(n_1,n_2)$.
$$
\begin{array}{lll}
\image_{\widehat{\pair}_{n_1,n_2}}(x)
&\eqdef&
\image_{\widehat{\pair}_{m,m}}(x) ~\cap~
\image_{\uparrow_{n_1}^m}(\widehat\pi_1(x)) ~\cap~
\image_{\uparrow_{n_2}^m}(\widehat\pi_2(x))
\end{array}
$$

One can then easily check that $\image_{\widehat{\pair}}$ does have the advertised property:
if $\image_{\widehat{\pair}}(a)$ holds for some
object $a$, then there are $a_1$ and $a_2$ such that $\widehat\pair(a_1,a_2) = a$ and we have
\begin{align*}&\widehat{\pair}(\widehat\pi_1(a),\widehat\pi_2(a)) ~ = ~ a\qedhere\end{align*}
\end{proof}

We are now ready to give the proof of the proposition  given
at the beginning of this subsection.

\begin{proof}

$\convert_T$, $\convert^{-1}_T$ and $\image_{\convert_T}$ are defined
by induction over $T$. Beforehand, define the
map $d$ taking a type $T$ to a natural number $d(T)$ so
that $\convert$ maps instances of type $T$ to monadic types $\ursort_{d(T)}$.
$$
\begin{array}{l!~c!~ l !\qquad l!~c!~l}
d(\ursort) &=& 0 &
d(\sett(T)) &=& 1 + d(T) \\
d(T_1 \times T_2) &=& 2 + \max(d(T_1),d(T_2))
&
d(\unit) &=& 0 \\
\end{array}
$$

$\convert_T$, $\convert^{-1}_T$ and $\image_{\convert_T}$ are then defined by the following rules,
where we write $\map\left(z \mapsto E\right)(x)$ for the $\nrc$ expression $\bigcup\{\{E\} \mid z \in x \}$. 

$$
\begin{array}{lcl}
\convert_\ursort(x) &\eqdef& x \\
\convert_{\sett(T)}(x) &\eqdef& \map\left(z \mapsto \convert_{T}(z)\right)(x) \\
\convert_\unit(x) &\eqdef& c_0 \\
\convert_{T_1 \times T_2}(x) &\eqdef& \widehat{\pair}(\convert_{T_1}(\pi_1(x)), \convert_{T_2}(\pi_2(x))) \\
\\
\convert^{-1}_\ursort(x) &\eqdef& x \\
\convert^{-1}_{\sett(T)}(x) &\eqdef& \map\left(z \mapsto \convert^{-1}_{T}(z)\right)(x) \\
\convert_\unit(x) &\eqdef& \unitobj \\
\convert^{-1}_{T_1 \times T_2}(x) &\eqdef& \left\< \convert^{-1}_{T_1}(\widehat\pi_1(x)), \convert_{T_2}^{-1}(\widehat\pi_2(x)) \right\> \\
\\
\image_{\convert_\ursort}(x) &\eqdef& \true \\
\image_{\convert_{\sett(T)}}(x) &\eqdef& \forall z \in x ~~ \image_{\convert_{T}}(z) \\
\image_{\convert_{T_1 \times T_2}}(x) &\eqdef&
\image_{\pair_{d(T_1),d(T_2)}}(x) \wedge
\image_{\convert_{T_1}}(\widehat\pi_1(x)) \wedge \image_{\convert_{T_2}}(\widehat\pi_2(x))
\end{array}
$$
It is easy to check, by induction over $T$, that for every object $a$ of type $T$
$$\convert^{-1}(\convert(a)) = a$$
and that for every object $b$ of type $\ursort_{d(T)}$, if $\image_{\convert_T}(b) = \true$, then it lies in the image
of $\convert_T$ and $\convert(\convert^{-1}(b)) = b$.
\end{proof}

\section{Second part of Proposition \ref{prop:reducemonadic}:\texorpdfstring{\\}{} Monadic reduction for interpretations}
\label{app:reducemonadicinterp}

We have  seen so far that it is possible to reduce questions about
definability within $\nrc$ to the case of monadic schema.
Now we turn to the analogous statement for interpretations, given by
the following proposition:

\begin{prop}
\label{prop:reducemonadic-interp}
For any object schema $\aschema$, there is
 a monadic nested relational schema $\aschema'$,
 a $\deltazero$ interpretation $\interp_\convert$ from instances of $\aschema$ to instances
of $\aschema'$, and another interpretation
$\interp_{\convert^{-1}}$ from instances of $\aschema$ to instances of $\aschema'$ compatible
with $\convert$ and $\convert^{-1}$ as defined in Proposition~\ref{prop:reducemonadic-interp} in the following sense:
for every instance $I$ of $\aschema$ and
for every instance $J$ of $\aschema'$ in the codomain of $\convert$, we have
$$\convert^{-1}(J) = \collapse(\interp_{\convert^{-1}}(J)) \qquad \qquad \convert(I) = \collapse(\interp_\convert(I))$$
\end{prop}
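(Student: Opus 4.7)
The plan is to reduce Proposition~\ref{prop:reducemonadic-interp} to Proposition~\ref{prop:reducemonadic-nrc} by invoking the expressive equivalence between $\nrcwget$ and $\deltazero$ interpretations (Theorem~\ref{thm:coddnrc}). First, I would take the $\nrc$ expressions $\convert$ and $\convert^{-1}$ produced by Proposition~\ref{prop:reducemonadic-nrc} and apply the $\nrcwget$-to-interpretation direction of Theorem~\ref{thm:coddnrc} to obtain corresponding $\deltazero$ interpretations $\interp_\convert$ and $\interp_{\convert^{-1}}$, with input and output signatures given by $\aschema_T$ for the relevant $T$.

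Next, I would argue that the compatibility condition in the statement is immediate from the very definition of ``a $\deltazero$ interpretation defines a transformation'' from Section~\ref{subsec:fointerp}: for $\interp$ to define a nested relational transformation $\trans$ means exactly that $\trans(\inobj) = \collapse(\outobj, M')$, where $M'$ is the output structure produced by $\interp$ on the logical encoding $M_\inobj$ of $\inobj$. Since Theorem~\ref{thm:coddnrc} chooses $\interp_\convert$ and $\interp_{\convert^{-1}}$ to define the same transformations as the underlying $\nrc$ expressions, the required equations $\convert(I) = \collapse(\interp_\convert(I))$ and $\convert^{-1}(J) = \collapse(\interp_{\convert^{-1}}(J))$ fall out directly, the latter being stated only for $J$ in the image of $\convert$ because that is the domain on which $\convert^{-1}$ genuinely inverts $\convert$.

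The only real caveat is non-circularity: one must check that Theorem~\ref{thm:coddnrc} is not itself established via the interpretation half of Proposition~\ref{prop:reducemonadic}. Inspecting its statement, it is a self-contained equivalence, and the monadic reduction is only invoked later to transport implicit-definability hypotheses; so invoking it here is legitimate. The main (minor) bookkeeping obstacle is to verify that the encoding of output types used in the $\nrc$-to-interpretation translation matches the $\aschema_T$-encoding assumed by the nested-relational collapse, so that the two notions of output agree up to $\collapse$.

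Alternatively, and arguably more transparently, one may give a direct construction by induction on $T$ that mirrors the clauses for $\convert_T, \convert^{-1}_T$ in Proposition~\ref{prop:reducemonadic-nrc}: the Kuratowski-style macros $\widehat{\pair}, \hat\pi_1, \hat\pi_2$ are replaced by $\deltazero$ formulas defining the same graphs on $\aschema_{\ursort_n}$, and each recursive use of $\convert_T$ or $\convert^{-1}_T$ becomes a composition of $\deltazero$ interpretations (legitimate by Proposition~\ref{prop:fointerp-comp}). This direct route has the side benefit that the formula $\image_\convert$ already exhibited in Proposition~\ref{prop:reducemonadic-nrc} is visibly $\deltazero$ on the interpretation side as well, so no further work is needed on that point.
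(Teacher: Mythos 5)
Your primary route has a circularity problem that your non-circularity check misses. You verify the statement of Theorem~\ref{thm:coddnrc} is self-contained, but the issue lives in its \emph{proof}: the $\nrcwget$-to-interpretation direction (Lemma~\ref{lem:forward}) begins precisely by ``assuming input and output schemas are monadic, using the reductions to monadic schemas given previously,'' i.e.\ it reduces an arbitrary expression $E$ to $\convert(E(\convert^{-1}(x)))$ and then composes with $\interp_{\convert}$ and $\interp_{\convert^{-1}}$ --- the very interpretations whose existence is asserted by Proposition~\ref{prop:reducemonadic-interp}. So, as the paper is organized, deriving Proposition~\ref{prop:reducemonadic-interp} from Theorem~\ref{thm:coddnrc} is circular; the translation from $\nrc$ to interpretations is not available independently of the monadic reduction on the interpretation side. (One could in principle reprove Lemma~\ref{lem:forward} directly for non-monadic schemas and then your first argument would go through, but that is extra work you have not done, and it is exactly the work the monadic reduction is designed to avoid.)

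Your fallback in the last paragraph is, however, exactly the paper's proof, and it should be your main argument rather than an aside: one builds interpretation-level analogues $\interp_{\uparrow}$, $\interp_{\downarrow}$, $\interp_{\widehat\pair}$, $\interp_{\hat\pi_1}$, $\interp_{\hat\pi_2}$ of the Kuratowski macros (the projections requiring genuinely new $\deltazero$ formulas, e.g.\ distinguishing the singleton from the two-element set with $\issing$/$\istwo$-style predicates and an iterated-membership predicate), shows that singleton, union, map and pairing of interpretation-defined transformations are available (Proposition~\ref{prop:interp-sing-cup-map-pair}, with a quotient step as in Proposition~\ref{prop:mostowski-collapse-interp} to restore extensionality after pairing), and then replays the recursive clauses defining $\convert_T$ and $\convert^{-1}_T$ using closure under composition (Proposition~\ref{prop:fointerp-comp}). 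If you flesh out those interpretation combinators rather than gesturing at ``replace the macros by $\deltazero$ formulas defining the same graphs,'' you have the intended proof; as written, the substantive content of the direct route is still left implicit.
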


Before proving Proposition~\ref{prop:reducemonadic-interp}, it is helpful to check that
a number of basic $\nrc$ connectives may be defined at the level of interpretations.
To do so, we first present a technical result for more general interpretations.

\begin{prop}
\label{prop:mostowski-collapse-interp}
For any sort $T$, there is an interpretation
of $\aschema_T$ into $\aschema_T$ taking a models $M$ whose every sort is non-empty and $\booltype$
has at least two elements to a model $M$ of $\oneobjth(T)$.
Furthermore, we have that $M'$ is (up to isomorphism) the largest quotient of $M'$ satisfying $\oneobjth(T)$.
\end{prop}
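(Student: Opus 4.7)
The plan is to define the interpretation by taking the extensional equivalence macros $\equiv_S$ from Section~\ref{sec:prelims} as the equivalence relations of the interpretation, so that the resulting quotient is the standard Mostowski collapse, only carried out internally via $\deltazero$ formulas rather than via recursion on a well-founded structure. Concretely, for each subtype $S$ of $T$, I would set $\phi^S_\equiv(x,y) \eqdef x \equiv_S y$ and $\phi^S_\domainof(x) \eqdef \top$; the output sort associated to each $S$ is $S$ itself (so the interpretation lives over $\aschema_T$ internally). Projection and tupling terms are interpreted by themselves, and the constants $\unitobj$, $\booltt$, $\boolff$ are interpreted unchanged. The two relations of $\aschema_T$ are interpreted by $\phi_{=_S}(x,y) \eqdef x \equiv_S y$ and $\phi_{\in_S}(x,y) \eqdef x \memmac_S y$, where $\memmac$ is the membership-up-to-extensionality macro.

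Next, I would verify the well-formedness conditions of a $\deltazero$ interpretation: symmetry and transitivity of $\equiv_S$ are immediate by induction on $S$; congruence for $\memmac$ and pairing follows from the inductive definition of $\equiv_S$; and the fact that all formulas are $\deltazero$ is visible from their definition (the only quantifiers introduced by $\equiv_S$ are bounded). The non-emptiness hypothesis guarantees that the bounded quantifiers used to compare elements at each sort have witnesses where needed, and the hypothesis that $\booltype$ has at least two elements ensures that the interpretations of $\booltt$ and $\boolff$ remain inequivalent in the quotient, so that Boolean distinctions are not collapsed.

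Then comes the verification that the interpreted structure $M'$ satisfies every axiom of $\oneobjth(T)$. Congruence of equality with respect to formulas is automatic since our interpreted $=_S$ is $\equiv_S$ and $\deltazero$ formulas are built up from primitives that respect $\equiv_S$ (this is a straightforward induction on formulas, using that both $\memmac$ and $\equiv$ are themselves congruences). The projection and tupling laws reduce to the observation that $\pi_i(\la t_1, t_2\ra) \equiv_{T_i} t_i$ and $\la \pi_1(u), \pi_2(u)\ra \equiv_{T_1 \times T_2} u$ already hold in $M$ as $\deltazero$-valid equalities under $\equiv_T$. The singleton axiom for $\unit$ is trivial since $\equiv_\unit$ is $\top$. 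Most importantly, extensionality holds by construction: if $\forall z \in_T x\; z \in_T y$ and $\forall z \in_T y\; z \in_T x$ then, unfolding the interpretation of $\in_T$ as $\memmac_T$, we exactly get $x \equiv_{\sett(T)} y$, which is the interpretation of $=_{\sett(T)}$.

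The step I expect to be the main obstacle is the final maximality claim: $M'$ is, up to isomorphism, the largest quotient of $M$ satisfying $\oneobjth(T)$. To prove it I would consider any other quotient $N$ of $M$ satisfying $\oneobjth(T)$, given by a congruence $\sim_S$ on each sort, and show by structural induction on $S$ that $x \sim_S y$ implies $x \equiv_S y$, thereby exhibiting the canonical map $N \to M'$. The base cases (for $\ursort$ and $\unit$) are immediate, and the set case uses extensionality in $N$ together with the inductive hypothesis; the product case uses the projection laws in $N$. This shows $\sim_S$ refines $\equiv_S$, so $M'$ is a quotient of $N$, establishing maximality up to isomorphism.
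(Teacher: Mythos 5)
Your construction of the interpretation itself is essentially the paper's: the paper also defines the collapse as a definable quotient by extensional equivalence (it writes the equivalence one sort at a time, in terms of equality and membership at the lower sorts, where you use the recursive macros $\equiv_S$ and $\memmac_S$); your variant has the mild advantage of staying within bounded ($\deltazero$) formulas, and your verification that the quotient satisfies the axioms of $\oneobjth(T)$ --- congruence, the pairing laws (which, as you implicitly assume, must already hold in $M$ up to $\equiv$), the unit axiom, and extensionality --- is the intended, if unstated, content of the paper's very terse proof.

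The genuine gap is in the maximality step, which you yourself flag as the main obstacle: you have the comparison backwards, and the implication you propose to prove is false. If $N = M/{\sim}$ is an arbitrary quotient of $M$ satisfying $\oneobjth(T)$, it is not true that $x \sim_S y$ implies $x \equiv_S y$; already the base case fails, since a congruence identifying two distinct Ur-elements $a \neq b$ (and, correspondingly, the objects built from them) still has an extensional quotient, yet $a \not\equiv_\ur b$. Consequently $M'$ is not a quotient of every such $N$ --- in this example a map $N \to M'$ compatible with the projections from $M$ would have to split the class of $a$ and $b$ --- and even if it were, that would exhibit $M'$ as the \emph{coarsest} quotient satisfying the theory, not the largest. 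What is true, and what the maximality claim requires, is the converse inclusion: show by induction on $S$ that $x \equiv_S y$ implies $x \sim_S y$, using exactly the ingredients you list (at $\ur$, $\equiv_\ur$ is native equality; at $\unit$, the unit axiom in $N$; at products, the pairing laws in $N$; at set sorts, the inductive hypothesis together with extensionality in $N$). This yields the canonical surjection $M' \to N$, so every quotient of $M$ satisfying $\oneobjth(T)$ is a quotient of $M'$, which is precisely the sense in which $M'$ is largest. Your sketched induction steps in fact supply the right moves for this direction --- extensionality in $N$ is used to conclude $x \sim y$, not $x \equiv y$ --- so the repair is to reverse the stated implication and the resulting factorization.
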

\begin{proof}
This interpretation corresponds to a quotient of the input, that is definable at
every sort
\[
\begin{array}{lcl!\quad lcl}
\varphi_\equiv^{\sett(T)}(x,y) &=&
\forall z~(z \in x \equi z \in y)
& \varphi_\equiv^{T_1 \times T_2}(x,y) &=& \pi_1(x) = \pi_1(y) \wedge \pi_2(x) = \pi_2(y) \\
\varphi_\equiv^{\unit}(x,y) &=& \top
& \varphi_\equiv^{\ursort}(x,y) &=& x =_\ursort y \hfill\qedhere
\end{array}
\]
\end{proof}

\begin{prop}
\label{prop:interp-sing-cup-map-pair}
The following $\deltazero$-interpretations are definable:
\begin{itemize}
\item $\interp_\sing$ defining the transformation $x \mapsto \{x\}$.
\item $\interp_\cup$ defining the transformation $x,y \mapsto x \cup y$.
\end{itemize}
Furthermore, assuming that $\interp$ is a $\deltazero$-interpretation
defining a transformation $E$ and $\interp'$ is a $\deltazero$-interpretation
defining a transformation $R$, the following $\deltazero$-interpretations
are also definable:
\begin{itemize}
\item $\map(\interp)$ defining the transformation $x \mapsto \{ E(y) \mid x \in y\}$.
\item $\<\interp,\interp'\>$ defining the transformation $x,y \mapsto (E(x),F(y))$.
\end{itemize}
\end{prop}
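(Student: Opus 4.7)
The plan is to construct each of the four claimed interpretations explicitly and verify that they satisfy the conditions in the definition of a $\deltazero$ interpretation (i.e.\ all constituent formulas are $\deltazero$, the partial equivalence relations are honest, and relations/functions are respected). For each construction, I also need to check that the Mostowski collapse of the output structure yields the advertised object-level transformation.

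I would start with $\interp_\sing$, which is the simplest case. The input schema associated to sort $T$ has a distinguished constant $\inobj$. The output schema associated to $\sett(T)$ has the sorts of $\aschema_T$ plus the new sort $\sett(T)$. I encode every subtype of $T$ trivially by itself (with $\phi_\equiv$ being equality and $\phi_\domainof$ being $\top$), and encode the new sort $\sett(T)$ by $\unit$, where the unique element represents the singleton output; the membership predicate $\in_T$ is then defined by $\phi_{\in_T}(t,u) \eqdef (t \equiv_T \inobj)$, which is $\deltazero$ because $\equiv_T$ expands to a $\deltazero$ formula. The collapse of the resulting structure is precisely $\{\inobj\}$. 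For $\interp_\cup$, I proceed analogously: the input has two constants $\inobj_1, \inobj_2$ of sort $\sett(T)$, the output sort $\sett(T)$ is again encoded by $\unit$, and $\phi_{\in_T}(t,u) \eqdef (t \memmac \inobj_1 \lor t \memmac \inobj_2)$, which directly mirrors the semantics of union.

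Next I would describe $\map(\interp)$. Suppose $\interp$ has input sort $T$ with constant $\inobj$ and encodes each output sort $\sort'$ by a tuple of sorts $\interpsort(\sort')$ using formulas $\phi_{\equiv}^{\sort'}$, $\phi_\domainof^{\sort'}$, $\phi_R$, and output terms. For $\map(\interp)$, the input has a constant $\inobj^\star$ of sort $\sett(T)$; I ``parametrise'' the whole interpretation by a fresh variable $y$ ranging over $\inobj^\star$. Concretely, each output sort $\sort'$ of $\interp$ gets encoded by the tuple $T \cdot \interpsort(\sort')$ (prepending one $T$-coordinate), and every defining formula is modified by relativising its ambient quantifiers to lie below that parameter $y$ and replacing $\inobj$ with the parameter. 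The new outer sort $\sett(T')$ is encoded by $\unit$, with membership defined by saying that a parameter $y \in \inobj^\star$ witnesses it. Equivalence on the outer encoded sort of $\interp$ is then $\phi_\equiv^{\sort'}(\vec x, \vec x')$ conjoined with the requirement that the distinguished coordinates lie in $\inobj^\star$ and define equivalent images. The key check is that parametrisation preserves the well-definedness conditions of $\interp$ and that quantifiers stay bounded by $y$ (which in turn is bounded by $\inobj^\star$), keeping everything $\deltazero$.

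Finally, for $\<\interp,\interp'\>$, I take the disjoint union of the two interpretations (assuming WLOG their auxiliary sorts are disjoint), add one new sort $T \times T'$ encoded by the concatenation $\interpsort_\interp(T) \cdot \interpsort_{\interp'}(T')$, and define the projection function symbols by the obvious two halves of the tuple and the pairing term symbolically, with the equivalence relation given coordinatewise. The output constant of sort $T \times T'$ is interpreted by the concatenation of the output constants of $\interp$ and $\interp'$. The main obstacle throughout, and particularly for $\map$, is verifying the four conditions from the definition of $\deltazero$ interpretation simultaneously: preservation of the axioms of $\Sigma(T_\oneobj)$, bounded-quantifier form, congruence of all defined relations/functions under $\phi_\equiv$, and that $\collapse$ applied to the resulting output structure recovers the intended transformation. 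Each of these is checked by a straightforward but tedious induction on the relevant types and on the structure of $\interp$.
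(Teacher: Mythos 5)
Your constructions of $\interp_\sing$, $\interp_\cup$ and $\map(\interp)$ are essentially the paper's own: the outer set sort is interpreted by $\unit$, membership for union is the disjunction of the two input memberships, and $\map(\interp)$ is obtained by prepending to each coordinate tuple a ``tag'' drawn from the input set, substituting the tag variable for the input constant of $\interp$, and using the (inductively defined) equivalence to make membership congruent. Those three cases are fine.

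The pairing case is where your proposal has a genuine gap. The output sorts of $\<\interp,\interp'\>$ are the subtypes of $T_1 \times T_2$, so any common subtype $U$ of $T_1$ and $T_2$ is literally \emph{one} output sort; these are not ``auxiliary sorts'' that can be renamed apart, because they are forced by the output type. Hence at such a $U$ you must merge two encodings of different coordinate shapes, $\interpsort_{\interp}(U)$ and $\interpsort_{\interp'}(U)$, into tuples of a single uniform shape --- ``take the disjoint union'' does not by itself produce an interpretation. The paper's device is a tagged union: a $\booltype$-coordinate selecting the side, concatenated with \emph{both} coordinate blocks, the unused block being padded arbitrarily; this is exactly where the hypotheses that $\booltype$ is among the sorts and that every sort of $M_o$ is non-empty are used. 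Second, even after this merge, a subobject of sort $U$ that occurs under both components is represented twice (once per tag), and a coordinatewise equivalence does not identify the two copies; the resulting structure then violates extensionality, so it is not an interpretation of $\Sigma(T_1\times T_2)$ in the paper's sense (which requires all theorems of the output theory, including extensionality, to be preserved). The paper repairs this by post-composing with the definable quotient (collapse) interpretation of Proposition~\ref{prop:mostowski-collapse-interp}. Your argument omits both the uniform-shape tagging and this final quotient step; with those two additions the pairing case goes through, but as written it does not.
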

\begin{proof}\hfill
\begin{itemize}
\item
For the singleton construction $\{e\}$ with $e$ of type $T$, we take the interpretation
$\interp_e$ for $e$, where $e$ itself is interpreted by a constant $c$ and we add an extra
level represented by an input constant $c'$. Then $\varphi_\domainof^{\sett(T)}(x)$ is set to $y = c'$ and
$\phi_\in^T(x,y)$ to $x = c \wedge y = c'$.
\item The empty set $\{\}$ at type $\sett(T)$ is given by the trivial interpretation
where $\varphi_\domainof^{\sett(T)}(x)$ is set to $x = c$ for some constant $c$
and $\varphi_\domainof^{T'}$ is set to false for $T'$ a component type of $T$, as well as all the $\varphi_\in^T$.
\item For the binary union $\cup : \sett(T), \sett(T) \to \sett(T)$, the interpretation is easy: $T$
is interpreted as itself. The difference between input and output
is that $\sett(T) \times \sett(T)$ is not an output sort and that $\sett(T)$ is interpreted as a single
element, the constant $\unitobj$ of $\unit$.
$$
\def\arraystretch{1.1}
\begin{array}{lcl}
\phi_\domainof^{\sett(T)}(x) &~~ \eqdef ~~& x = \unitobj \\
\phi_\in^T(z,x) &~~ \eqdef ~~& z \in \pi_1(\inobj) \vee z \in \pi_2(\inobj)
\end{array}
$$
\item We now discuss the $\map$ operator. Assume that we have an interpretation
$\interp$ defining a transformation $S \to T$ that we want to lift to an interpretation $\map(\interp) : \sett(S) \to \sett(T)$.
Let us write $\psi_\domainof^{T'}$, $\psi_\in^{T'}$ and $\psi_\equiv^{T'}$ for the formulas making
up $\interp$ and reserve the $\phi$ formulas for $\map(\interp)$.
At the level of sort, let us write $\interpsort^\interp$ and $\interpsort^{\map(\interp)}$ to distinguish the two.

For every $T' \subtype T$ such that $T'$ is not a cartesian product or a component type of $\booltype$,
we set $\interpsort^{\map(\interp)}(T') = S, \interpsort^{\interp}$. This means that objects of
sort $T'$ are interpreted as in $\interp$ with an additional tag of sort $S$.
We interpret the output object $\sett(T)$ as a singleton by setting $\interpsort^{\map(\interp)}(\sett(T)) = \unit$.

Assuming that $T \neq \ursort, \unit$, $\map(\interp)$ is determined by setting the following 
$$
\def\arraystretch{1.1}
\begin{array}{lcl}
\phi_\domainof^\ursort(a) &~~\eqdef~~& \exists s \in \inobj~~ \psi_\domainof(a)[s/\inobj] \\
\phi_\in^\ursort(a, s, \vec x) &~~\eqdef~~& \psi_\in^\ursort(a,\vec x)[s/\inobj] \\
\\
\phi_\domainof^{T'}(s, \vec x) &~~\eqdef~~& \psi_\domainof^{T'}(x)[s'/\inobj] \\
\phi_\in^{T'}(s, \vec x, s', \vec y) &~~\eqdef~~& \exists \vec{x'}~~ \psi_\in^{T'}(\vec{x'}, \vec y)[s'/\inobj] \wedge \phi^{T'}_\equiv(s,\vec x,s',\vec x') \\
\\
\phi_\domainof^{T}(s, \vec x) &~~\eqdef~~& s \in \inobj \\
\phi_\in^{T}(s, \vec x) &~~\eqdef~~& \phi_\domainof^T(s, \vec x) \\
\end{array}
$$
where $[x/\inobj]$ means that we replace occurrences of the constant
$\inobj$ by the variable $x$ and sorts $T'$ and $T' \times T''$ are component types of
$T$. Note that this definition is technically by induction over the type, as we use
$\phi^{T'}_\equiv$ to define $\phi_\in^{T'}$.
In case $T$ is $\ursort$ or $\unit$, the last two formulas $\phi_\domainof^{T}$ and $\phi_\in^{T}$ need to change.
If $T = \unit$, then we set
$$\phi_\domainof^{\unit}(c_0) \eqdef \phi_\in^\unit(c_0,c_0) \eqdef \exists s \in \inobj~~ \top$$
and if $T = \ursort$, we set
$$\phi_\domainof^{\ursort}(a) \eqdef \phi_\in^\ursort(a) \eqdef  \exists s \in \inobj~~ \psi_\domainof(a)[s/\inobj]$$
\item Finally we need to discuss the pairing of two interpretation-definable
transformations $\< \interp_1, \interp_2 \> : S \to T_1 \times T_2$.
Similarly as for map we reserve
$\phi_\domainof^T$, $\phi_\in^T$ and $\phi_\equiv^T$
formulas for the interpretation $\< \interp_1, \interp_2 \>$.
We write
$\psi_\domainof^T$, $\psi_\in^T$ and $\psi_\equiv^T$
for components of $\interp$ and
$\theta_\domainof^T$, $\theta_\in^T$ and $\theta_\equiv^T$
for components of $\interp'$.

Now, the basic idea is to interpret output sorts of $\< \interp_1, \interp_2 \>$ as tagged unions
of elements that either come from $\interp_1$ or $\interp_2$.
Here, we exploit the assumption that $\aschema_T$ contains the sort $\booltype$.
and that every sort is non-empty to interpret the tag of the union.
The union itself is then encoded as a concatenation of a tuple representing
a would-be element form $\interp_1$ with another tuple representing a would-be element from $\interp_2$,
the correct component being selected with the tag. For that second trick to work, note that
we exploit the fact that every sort has a non-empty denotation in the input structure.
Concretely, for every $T$ component type of either $T_1$ or $T_2$, we thus set
$$
\def\arraystretch{1.1}
\begin{array}{lcll}
\interpsort^{\<\interp_1,\interp_2\>}(T) &~~ \eqdef ~~& \booltype, \interpsort^{\interp_1}(T), \interpsort^{\interp_2}(T)
\\
\phi_\domainof^T(u, \vec x, \vec y) &~~ \eqdef ~~& (u = \booltt \wedge \psi_\domainof^T(\vec x)) \vee (u \neq \booltt \wedge \theta_\domainof^T(\vec y)) \\
\phi_\in^T(u, \vec x, \vec y, u', \vec x', \vec {y'}) &~~ \eqdef ~~& (u = u' = \booltt \wedge \psi_\in^T(\vec x, \vec x')) \vee (u = u' = \boolff \wedge \theta_\in^T(\vec {y}, \vec {y'})) \\
\phi_\equiv^T(u, \vec x, \vec y, u', \vec x', \vec {y'}) &~~ \eqdef ~~& (u = u' = \booltt \wedge \psi_\equiv^T(\vec x, \vec x')) \vee (u = u' = \boolff \wedge \theta_\equiv^T(\vec {y}, \vec {y'})) \\
\end{array}
$$
Note that this interpretation does not quite correspond to a pairing
because it is not a complex object interpretation: the interpretation of common subobjects of
$T_1$ and $T_2$ are not necessarily identified, so the output is not necessarily a model of $\oneobjth$.
This is fixed by postcomposing with
the interpretation of Proposition~\ref{prop:mostowski-collapse-interp} to obtain $\<I_1,I_2\>$.\qedhere
\end{itemize}
\end{proof}

\begin{proof}[Proof of Proposition~\ref{prop:reducemonadic-interp}]

Similarly as with Proposition~\ref{prop:reducemonadic-nrc}, we define auxiliary
interpretations $\interp_{\uparrow}$, $\interp_{\downarrow}$
$\interp_{\widehat\pair}$, $\interp_{\hat\pi_1}$ and $\interp_{\hat\pi_2}$
mimicking the relevant constructs of Proposition~\ref{prop:reducemonadic-nrc}.
Then we will dispense with giving the recursive definitions of $\interp_{\convert_T}$
and $\interp_{\convert^{-1}_T}$, as they will be obvious from inspecting the 
cases given in the proof of Proposition~\ref{prop:reducemonadic-nrc} and
replicating them using Proposition~\ref{prop:interp-sing-cup-map-pair} together with closure under composition
of interpretations. %

$\interp_{\uparrow}$, $\interp_{\downarrow}$
and $\interp_\pair$ are easy to define through Proposition~\ref{prop:interp-sing-cup-map-pair}, so
we focus on the projections $\interp_{\hat\pi^{n_1,n_2}_1}$ and $\interp_{\hat\pi^{n_1,n_2}_2}$,
defining transformations from $\ursort_m$ to $\ursort_{n_i}$ for $i \in \{1,2\}$ where $m \eqdef \max(n_1,n_2)$.
Note that in both cases, the output sort is part of the input sorts. Thus an output sort will be interpreted
by itself in the input, and the formulas will be trivial for every sort lying strictly below the output sort:
we take 
$$\phi_{\in_{\ursort_{k}}}(x,y) ~\eqdef~ x \in y \wedge \phi_\domainof^{\ursort_{k+1}}(y) \qquad
  \phi_{\equiv}^{\ursort_k}(x,y) ~\eqdef~ x = y \qquad
  \phi_\domainof^{\ursort_k}(x) ~\eqdef~ \top
$$
for every $k < n_i$ ($i$ according to which projection we are defining).
The only remaining important data that we need to provide are the 
formulas $\phi_\domainof^{\ursort_{n_i}}$,
which, of course, differ for both projections. We provide those below, calling $o_{in}$ the designated input
object. For both cases, we use an auxiliary predicate  $x \in^k y$ standing for $\exists y_1 \in y \ldots \exists y_{k-1}
\in y_{k-2}~~ x \in y_{k-1}$ for $k > 1$; for $k=0,1$, we take $x \in^1 y$ to be $x \in y$ and $x \in^0 y$ for $x = y$.
\begin{itemize}
\item For $\interp_{\hat\pi^{n_1,n_2}_1}$, we set
$$\phi_\domainof^{\ursort_{n_1}}(x) ~ \eqdef ~ \forall z \in o_{in}~\exists z' \in z ~~~ x \in^{m-n_1} z'$$
The basic idea is that the outermost $\forall\exists$ ensures that we compute the intersection of the two sets
contained in the encoding of the pair. 
\item For $\interp_{\hat\pi^{n_1,n_2}_2}$, first note that there are obvious $\deltazero$-predicates $\issing(x)$ and
$\istwo(x)$ classifying singletons and two element sets. This allows us to write the following $\deltazero$ formula
$$
\small
\phi_\domainof^{\ursort_{n_1}}(x) ~ \eqdef ~
\bigvee \left[ \begin{array}{l}
\issing(x) \wedge \forall z \in o_{in}~\exists z' \in z~~~ x \in^{m-n_2} z' \\
\istwo(x) \wedge \exists z \; z' \in o_{in}~ \exists y \in z'~ (y \notin z \wedge x \in^{m-n_2} z')
\end{array}\right.
$$
\end{itemize}
It is then easy to check that, regarded as transformations, those interpretation also implement the
projections for Kuratowski pairs.
\end{proof}

\section{Proof of Theorem \ref{thm:coddnrc}: converting\texorpdfstring{\\}{} between $\nrcwget$ expressions and interpretations}
\label{app:fointerpproof}

In the body of the paper we claimed that $\nrcwget$ expressions have
the same expressiveness as interpretations.
One direction of this expressive equivalence is given in the
following lemma:

\begin{lem} \label{lem:forward} There is an $\exptime$ computable function
taking an $\nrcwget$ expression $E$ to an equivalent FO interpretation $\interp_E$.
\end{lem}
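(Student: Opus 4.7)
The plan is to proceed by structural induction on the $\nrcwget$ expression $E$, building the interpretation $\interp_E$ compositionally using Proposition~\ref{prop:fointerp-comp} (closure of $\deltazero$ interpretations under composition) together with the building blocks supplied by Proposition~\ref{prop:interp-sing-cup-map-pair}. To avoid having to handle every way that product types can appear in schemas, I would first invoke Proposition~\ref{prop:reducemonadic} to reduce to the case where the input and output schemas are monadic, pre- and post-composing with the interpretation-definable encodings $\convert$ and $\convert^{-1}$. The base cases are then routine: a variable $x$ of type $T$ is interpreted by the designated input constant of sort $T$; $\unitobj$ and $\emptyset$ give trivial interpretations; and tupling and projection are handled by $\<\interp_{E_1}, \interp_{E_2}\>$ and by selecting a component in a composed interpretation.

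For the remaining cases I would lean on composition. The singleton $\{E\}$ is handled by $\interp_{\sing} \circ \interp_E$ and the union $E_1 \cup E_2$ by $\interp_\cup \circ \<\interp_{E_1}, \interp_{E_2}\>$. The binding union $\bigcup \{E \mid x \in E'\}$ decomposes as $\interp_{\flatten} \circ \map(\interp_E) \circ \interp_{E'}$, where $\interp_{\flatten}$ is a flattening interpretation definable analogously to $\interp_\cup$: its membership formula captures pairs $(z, \vec y)$ with $\vec y$ encoding an outer element and $z$ encoding one of its members. Set difference is handled by a variant of the union interpretation whose domain formula additionally requires non-membership in the subtracted set, with the test supplied by Proposition~\ref{prop:verify} applied to a suitable $\deltazero$ membership formula. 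Finally, $\nrcget_T(E)$ is given by composing $\interp_E$ with an interpretation for $\sett(T) \to T$ whose sort-$T$ domain formula selects an element encoded as a member of $E$'s output, with a fallback to some default element of $T$ (available since each sort is non-empty) in case the singleton assumption fails. After assembling the pieces, Proposition~\ref{prop:mostowski-collapse-interp} ensures we can postcompose with the collapse interpretation so that the final result is a genuine interpretation of $\Sigma(T_{\text{out}})$.

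The main obstacle is meeting the $\exptime$ bound. Composing interpretations substitutes tuples of variables for each variable in the inner formulas and concatenates the tuple-dimensions used to encode each sort, so a single composition step can cause a polynomial blow-up in both tuple-dimension and formula size. Iterating this through the syntax tree of $E$, in particular across nested binding unions (each of which brings its own $\map$ and $\flatten$ layer), produces interpretations whose encoding dimension and formula size are exponential in the syntactic depth of $E$, matching the claimed complexity. Verifying, by induction on $E$, that the resulting interpretation respects the background theory $\Sigma(T_{\text{out}})$ and that $\collapse$ applied to its output coincides with the $\nrc$ semantics of $E$ is a tedious but routine book-keeping exercise, closely paralleling the correctness arguments already carried out in Propositions~\ref{prop:interp-sing-cup-map-pair} and~\ref{prop:mostowski-collapse-interp}.
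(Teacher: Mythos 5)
Your proposal matches the paper's own proof in all essentials: reduce to monadic schemas via the $\convert$/$\convert^{-1}$ encodings, then induct on the structure of $E$, using closure under composition together with the singleton, union, pairing and $\map$ interpretations, the decomposition $\bigcup \circ \map \circ E'$ for binding union, a direct interpretation for set difference, and a domain formula with a default element for $\nrcget$. The only slight wrinkle is that for set difference you do not need Proposition~\ref{prop:verify} (which goes from formulas to $\nrc$, the opposite direction); the non-membership test is written directly as a formula of the interpretation, exactly as in the paper.
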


As we mentioned in the body of the paper,  
very similar results occur in the prior literature, going
as far back as \cite{simulation}. 

\begin{proof}%
We can assume that the input and output schemas are monadic, using the
reductions to monadic schemas given previously.
Indeed, if we solve the problem
for expressions where input and output
schemas are monadic, we can reduce the problem of finding an interpretation for an
arbitrary $\nrcwget$ expression $E(x)$ as follows:
construct a $\deltazero$ interpretation $\interp$
for the  expression $\convert(E(\convert^{-1}(x)))$ --
where $\convert$ and $\convert^{-1}$
are taken as in Proposition~\ref{prop:reducemonadic-nrc} --
and then, using closure under composition of interpretations (see e.g.~\cite{xqueryinterp}),
one can then leverage Proposition~\ref{prop:reducemonadic-interp}
to produce the composition of $\interp_{\convert^{-1}}$, $\interp$ and $\interp_{\convert}$
which is equivalent to the original expression $E$.

The argument proceeds by induction on the structure of $E : \vec T \to S$ in
$\nrc$. Some atomic operators were treated in the prior section, like singleton
$\cup$, tupling, and projections.  Using
closure of interpretations under composition, we are thus able to translate
compositions of those operators.
We are only left with a few cases.
\begin{itemize}
\item
For the set difference, since interpretations are closed under
composition, it suffices to prove that we can code the transformation
$$(x, y) ~~\mapsto~~ x \setminus y$$
at every sort $\sett(\ursort_n)$. Each sort gets interpreted by itself.
We thus set
$$
\def\arraystretch{1.1}
\begin{array}{lcll}
\phi_\domainof^{\ursort_n}(z) &~~\eqdef~~&
z \in \pi_1(\inobj) \wedge z \notin \pi_1(\inobj) \\
\phi_\domainof^{\ursort_k}(z) &~~ \eqdef ~~&
\exists z' ~~(\phi_\domainof^{\ursort_n} \wedge z \in^{n-k} z') \\
\phi_\in^{\ursort_k}(z, z') &~~ \eqdef ~~& z \in z' \wedge \phi^{\ursort_k}_\domainof(z) \wedge \phi^{\ursort_{k+1}}_\domainof(z') \\
\end{array}
$$
\item To get $\nrcwget$ expressions, it suffices 
to create a
$\deltazero$ interpretation corresponding to  $\nrcget$
which follows
$$\phi_\domainof^\ursort(a) ~~ \eqdef~~ (\exists !~ z \in \inobj ~~ z = a) \vee (\neg (\exists !~ z \in \inobj) \wedge a = c_0)$$
\item For the binding operator
$$\bigcup \{ E_1 \mid x \in E_2 \}$$
we exploit the classical decomposition
$$\bigcup ~~\circ~~ \map(E_1) ~~\circ~~ E_2$$
As interpretations are closed under composition and the mapping operations
was handled in Proposition~\ref{prop:interp-sing-cup-map-pair},
it suffices to give an interpretation for the expression $\bigcup : \sett(\sett(T)) \to \sett(T)$
for every sort $T$. This is straightforward: each sort gets interpreted as itself,
except for $\sett(T)$ itself which gets interpreted as the singleton $\{c_0\}$.
The only non-trivial case is  the following
\begin{align*}
& \phi_\in^{T}(x,y) ~~\eqdef~~ \phi_\domainof^T ~~\eqdef~~ \exists y' \in \inobj ~~ x \in y'
\qedhere
\end{align*}
\end{itemize}
\end{proof}

\myparagraph{From interpretations to $\nrcwget$ expressions}
The other direction of the expressive equivalence is provided by the following
lemma:

\begin{lem} \label{lem:back} There is a polynomial time function
taking a $\deltazero$ interpretation to an equivalent $\nrcwget$ expression.
\end{lem}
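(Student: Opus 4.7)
The plan is, informally, to construct from the interpretation $\interp$ an $\nrcwget$ expression that enumerates all tuples representing output elements at each sort, and then reassembles the output nested relation level by level by following the membership relation that $\interp$ provides. Using closure of $\nrcwget$ expressions and interpretations under composition together with Propositions~\ref{prop:reducemonadic-nrc} and~\ref{prop:reducemonadic-interp}, I first reduce to the case where the input and output schemas are monadic, of the form $\ursort_{n_1}$ and $\ursort_{n_2}$ respectively; so from here on every sort involved has shape $\ursort_k$.

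For each $k \le n_1$, an auxiliary $\nrc$ expression $\activedom_{\ursort_k}(\inobj)$ enumerating the subobjects of $\inobj$ at level $k$ is produced by iterated unnesting from $\inobj$. Then for each output sort $\sort = \ursort_k$ with $\interpsort(\sort) = \ursort_{i_1}, \ldots, \ursort_{i_m}$, I form using $\deltazero$-comprehension and the Verification Proposition~\ref{prop:verify} the expression
\[
\aseq_{\sort} \;\eqdef\; \left\{ \vec x \in \activedom_{\ursort_{i_1}}(\inobj) \times \cdots \times \activedom_{\ursort_{i_m}}(\inobj) \;\middle|\; \verify_{\phi^{\sort}_\domainof}(\vec x) \right\}
\]
which enumerates all representative tuples for $\sort$ (possibly with several representatives in the same $\phi^\sort_\equiv$-class). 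I then recursively define an $\nrc$ expression $\kw{Build}_{\ursort_k}(\vec x)$ that converts a representative tuple into the actual nested relation it codes. In the base case $k=0$, since $\interp$ preserves $\ursort$, the representation is itself an Ur-element, so $\kw{Build}_{\ursort_0}(x) \eqdef x$. In the inductive step, writing $\phi_{\in_k}$ for the input-schema formula that interprets $\in$ between $\ursort_{k-1}$ and $\ursort_k$,
\[
\kw{Build}_{\ursort_k}(\vec x) \;\eqdef\; \bigcup \left\{ \{\kw{Build}_{\ursort_{k-1}}(\vec y)\} \;\middle|\; \vec y \in \aseq_{\ursort_{k-1}}, \; \verify_{\phi_{\in_k}}(\vec y, \vec x) \right\}
\]
Correctness, that is, distinct tuples in the same $\phi^\sort_\equiv$-class produce the same nested relation, follows from the congruence axioms imposed on $\phi_{\in_k}$ and $\phi^\sort_\equiv$ in the definition of interpretation together with set extensionality (equivalent representatives collapse automatically under the semantics of $\bigcup$).

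Finally, the interpretation encodes the output constant $\outobj$ as a tuple $\overline{\outobj}$ of closed terms in the input schema, and the desired $\nrcwget$ expression is $\kw{Build}_{\ursort_{n_2}}(\overline{\outobj})$. The main obstacle is the polynomial time claim rather than correctness: one must check that each $\activedom$, $\aseq_\sort$, $\verify_\phi$, and $\kw{Build}$ has size polynomial in $|\interp|$, that the chain of recursive $\kw{Build}$'s has depth bounded by $n_2$ (which is itself part of the input), that the monadic reduction is $\ptime$ via the relevant parts of Propositions~\ref{prop:reducemonadic-nrc} and~\ref{prop:reducemonadic-interp}, and that the Cartesian products $\activedom_{\ursort_{i_1}} \times \cdots \times \activedom_{\ursort_{i_m}}$ reduce to nested binary products whose total expression size is polynomial in $m$.
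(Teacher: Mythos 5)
Your construction is essentially the paper's own proof: after the same reduction to monadic schemas, you build the same two families of expressions (iterated-unnesting collectors of the input's subobjects at each level, and level-by-level decoders that turn a representative tuple into the object it codes by ranging over candidate tuples and filtering with $\verify$ applied to the interpretation's domain and membership formulas), with the same well-definedness argument via the congruence constraints and extensionality. The only minor divergence is the final extraction step: you plug the closed-term tuple interpreting the output constant into the decoder, whereas the paper collects the decoded elements of the top output sort and applies $\nrcget$ to the resulting singleton --- an inessential variation of the same argument.
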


This direction is not used directly in the conversion
from implicitly definable transformations to $\nrcwget$, 
but it is of interest
in showing that $\nrcwget$ and $\deltazero$ interpretations are equally expressive.

\begin{proof}[Proof of Lemma \ref{lem:back}]
Using the reductions to monadic schemas, it suffices to show this
for transformations that have monadic input schemas as input and output.

Fix a $\deltazero$ interpretation $\interp$ with input $\ursort_n$ and output $\ursort_m$.

Before we proceed, first note that for every $d \le m$,
there is an $\nrc$ expression
$$E_d : \ursort_n \to \sett(\ursort_d)$$
collecting all of the subobjects of its input of sort $\ursort_d$.
It is formally defined by the induction over $n - d$.
$$E_m(x) ~~ \eqdef~~ \{x\} \qquad \qquad E_d(x) = \bigcup E_{d-1}(x)$$
Write $E_{d_1, \ldots, d_k}(x)$ for $\< E_{d_1}, \ldots, E_{d_k} \>(x)$
for every tuple of integers $d_1 \ldots d_k$.

For $d \le m$, let $d_1, \ldots, d_k$ be the tuple such that the output
sort $\ursort_d$ is interpreted by the list of input sorts
$\ursort_{d_1}, \ldots, \ursort_{d_k}$.
By induction over $d$, we build  $\nrc$ expressions
$$E_d : \ursort_m, \ursort_{d_1}, \ldots, \ursort_{d_k} \to \ursort_d$$
such that, provided that $\phi_\domainof^{\ursort_d}(\vec a)$ and $\phi_\domainof^{\ursort_{d+1}}(\vec b)$ hold, we have
$$\phi_\in^{\ursort_d}(\vec a, \vec b) \qquad \qquad  \text{if and only if} \qquad \qquad  E_d(\vec a) \in E_{d+1}(\vec b)$$

For $E_0 : \ursort_m, \ursort \to \ursort$, we simply take the second projection.
Now assume that $E_d$ is defined and that we are
looking to define $E_{d+1}$.
We want to set
\begin{align*}
E_{d+1}(x_{in},\vec{y}) \eqdef \{E_d(x_{in}, \vec x) \mid \vec{x} \in E_{d_1, \ldots, d_k}(x_{in}, \vec y) \wedge \verify_{\phi_{\in}^i}(x_{in}, \vec x, y_{in}, \vec y) \}
\end{align*}
which is $\nrc$-definable as follows 
$$\bigcup \left\{
\case\left(\verify_{\phi_{\in}^i}(x_{in}, \vec x, y_{in}, \vec y), ~ \{E_d(x_{in})\}, ~ \{\}\right)
\mid \vec x \in E_{d_1,\ldots,d_k}(x_{in}) \right\}
$$
where $\verify$ is given as in the Verification Proposition proven earlier in the 
supplementary materials
and $\{E(\vec x, \vec y) \mid \vec x \in E'(\vec y)\}$ is a
notation for $\bigcup \{ \ldots \bigcup \{ E(\vec x, \vec y) \mid x_1 \in \pi_1(E'(\vec y)) \} \ldots \mid x_k \in \pi_k(E'(\vec y))\}$.
It is easy to check that the inductive invariant holds.

Now, consider the  transformation $E_m : \ursort_n, \ursort_{m_1}, \ldots, \ursort_{m_k} \to \ursort_m$.
The transformation
$$R ~~ \eqdef ~~\{ E_m(x_{in}, \vec y) \mid \vec y \in E_{m_1, \ldots, m_k}(x_{in}) \wedge \phi_\domainof^{\ursort_m}(\vec y)\}$$
is also $\nrc$-definable using $\verify$.
Since the inductive invariant holds at level $m$, $R$ returns the singleton containing the output of $\interp$. Therefore $\nrcwget(R) : \ursort_n \to \ursort_m$ is the desired $\nrcwget$ 
expression equivalent to the interpretation $\interp$.
\end{proof}

Note that  the argument can be easily modified to
 produce an $\nrcwget$ expression that is \emph{composition-free}:  in
union expressions $\bigcup\{E_1 \mid x \in E_2\}$, the range $E_2$ of
the variable $x$ is always another variable.
In composition-free expressions, we allow as a native  construct $\case(B,E_1,E_2)$ where $B$ is a Boolean combination
of atomic transformations with Boolean output, since we can not use composition to derive the conditional from the other operations.

Thus every $\nrcwget$ expression can be converted to one that is composition-free, and similarly
for $\nrcwget$.  The analogous
statements have been observed before for related languages like XQuery \cite{xqueryinterp}.

\section{Completeness of proof systems} \label{sec:app-complete}
\label{app:completeness}

In the body of the paper we mentioned that the completeness
of the high-level proof system  of Figure \ref{fig:dzcalc-2sided}
can be argued using a standard method -- see, for example \cite{smullyanbook}.
We outline this for the reader who is unfamiliar with these arguments. In Appendix \ref{app:focused} we will prove
that the lower level system is complete, by reducing it to completeness of the higher level system.

We turn to the outline of the general method.
One has a sequent $\Theta;\; \Gamma \seq \Delta$ that is not provable.
We want to construct a countermodel: one that satisfies all the formulas in $\Theta$ and $\Gamma$
but none of the formulas in $\Delta$.
We construct a  tree with $\Theta;\; \Gamma \seq \Delta$ at the root by iteratively
applying applicable inference rules in reverse: in ``proof search mode'', generating
subgoals from goals. We apply the  rules whenever possible in a given order,
enforcing some fairness constraints: a rule that is active must be 
eventually applied along an infinite search,
and if a choice of terms must be made (as with the $\exists$-R rule),
all possible choices of terms are eventually made in an application of the rule.
For example, if we have a disjunction $\rho_1 \vee \rho_2$ on the right, we may
immediately ``apply $\vee$-R'': we generate a subgoal where on the
right hand side we add $\rho_1, \rho_2$.
Finite branches leading to
a sequent that does not match the conclusion of any rule or
axiom are artificially extended to infinite branches by
repeating the topmost sequent.

By assumption, this process does not produce a proof,
 and thus we have
 an infinite  branch $b$ of the tree.
 We create a model $M_b$ whose elements are the closure under tupling of the
 variables of types, other than product types, that appear on the branch.
 The model interprets these ``non-tuple'' variables by themselves, while
 variables with a product type are interpreted by -- possibly nested -- tuples
 of ``non-tuple'' variables.
 Specifically, $M_b$
 interprets terms $t$ by these domain elements as follows.
\begin{itemize}
\item If $t$ is a variable of type $\ursort$ or of a set type,
  then $t^{M_b}$ is $t$. 
\item If $t$ is a variable of
  type $\unit$ or is the term $\la\ra$, then $t^{M_b}$ is $\la\ra$.  
\item If $t$ is a variable of a product type, then $t^{M_b}$ is $\la
  x_1^{M_b}, x_2^{M_b}\ra$, where $t$ in the role of $x$ and $x_1,x_2$ are the
  respective parameters of an instance of rule $\rulepair$ on the branch. By
  our assumption that all applicable rules are applied in some fair way, there
  must be such an instance. Since it replaces all occurrences of $t$ with a
  pair of fresh variables, there is  exactly one such application,
  such that $x_1$ and $x_2$ are uniquely determined.
\item If $t$ is a term $\la t_1, t_2\ra$, then   $t^{M_b} = \la t_1^{M_b}, t_2^{M_b}\ra$.
\item If $t$ is a term $\pi_1(t')$, then $t^{M_b}$ is $x_1$, where $t'^{M_b} =
  \la x_1, x_2 \ra$.
\item If $t$ is a term $\pi_2(t')$, then $t^{M_b}$ is $x_2$, where $t'^{M_b} =
  \la x_1, x_2\ra$.
\end{itemize}
The memberships
correspond to the  membership atoms that appear on the left of any
sequent in $b$, and also the atoms that appear negated on the right hand
side of any sequent.

We claim that $M_b$ is the desired countermodel. 
It suffices to show
that for every sequent $\Theta; \; \Gamma, \seq \Delta$ in $b$,
$M_b$ is a counterexample to the sequent: it satisfies the conjunction of
formulas on the left and none of the formulas on the right.
We prove this by induction on the logical complexity of the formula.
Each inductive step will involve the assumptions about inference rules
not terminating proof search.
For example, suppose for some sequent $b_i$ in $b$ of the above form,
 $\Delta$ contains $\rho_1 \vee \rho_2$, we want
to show that $M_b$ satisfies $\neg (\rho_1 \vee \rho_2)$. 

But we know that in some successor (viewed root-first) $b_j$ of $b_i$, we
would have applied $\vee$-R, and thus have a descendant with $\rho'_1,
\rho'_2$ within the right, where $\rho'_1$ is identical to $\rho_1$ and
$\rho'_2$ to $\rho_2$, except possibly modified by applications of rules
$\rulepair$ and $\ruleproj$ in between $b_i$ and~$b_j$. By induction $M_b$
satisfies $\neg \rho_1'$ and $\neg \rho_2'$. The rules $\rulepair$ and
$\ruleproj$ just perform replacement of terms, where the replaced and the
replacing terms have the same denotation in $M_b$. 
Hence $M_b$ also satisfies $\neg \rho_1$ and
$\neg \rho_2$. Thus $M_b$ satisfies $\neg (\rho_1 \vee \rho_2)$ as desired.
The other connectives and quantifiers are handled similarly.

\section{Completeness -- translating to \normalized form} \label{app:focused}

We show the completeness of our \normalized calculus of
Figure~\ref{fig:dzcalc-1sided} by giving a translation from proofs in the
high-level calculus of Figure~\ref{fig:dzcalc-2sided}, whose completeness can be
established with standard techniques, as outlined in
Appendix~\ref{app:completeness}. We will give a precise account of the
translation that starts from the following \name{base calculus}.

\begin{defi}
  \label{def:dzcalc:base}
Define the \name{base calculus} for $\deltazero$ formulas as the following
modification of the \normalized calculus of Figure~\ref{fig:dzcalc-1sided}.
\begin{enumerate}
\item Remove all $\EL$ decorations, i.e. do not require any contexts or
  sequents to be $\EL$.
\item In the $\exists$ rule drop the requirement that $t$ is a tuple-term.
\end{enumerate}
\end{defi}

This base calculus may be seen as an intermediate between the high-level and
the \normalized system. The high-level calculus in essence just provides some
sugar on the base calculus, such that translation between both is
straightforward. %
On the
other hand, the base calculus is a relaxed version of the \normalized
calculus: if a base calculus proof satisfies the respective $\EL$ requirements
and all instances of the $\exists$ rule have as $t$ a tuple-term, the proof
can be considered as a proof in the \normalized calculus.
The objective of our translation is then to enforce these constraints. The core
technique for achieving the $\EL$ requirements is \emph{permutation of rules},
investigated for standard systems in \cite{kleene:permutability} and
\cite[Section~5.3]{bpt}, adapted here to our proof system and applied in
specific ways.
Of particular relevance for us is the interaction of rule
permutation with the following property of sequents in a proof.

\begin{defi}
  \label{def:rule:dominated}
  A top-level occurrence of a logic operator $\lor$, $\land$, or $\forall$ in
  a proof sequent is called \defname{\ruledominated} if in all branches
  through the sequent it is a
  descendant\footnote{In the non-strict sense: an occurrence is a descendant
  of itself.} of the principal formula of an instance of the respective rule
  that introduces the operator. A proof is called \defname{\ALruledominated}
  if the top-level occurrences of $\lor$, $\land$ and $\forall$ in all its
  sequents are rule dominated.
\end{defi}

Recall that a multiset of formulas is $\EL$ if its members are
positive atoms, negative atoms, existential quantifications and the
truth-value constant $\bot$. In other words, it does not contain formulas with
$\lor$, $\land$, $\forall$ or $\top$ as top-level operator.
The
following lemma is then easy to see.

\begin{lem}
  \label{lem:ruledom:el}
  In an \ALruledominated proof, any sequent that is not below\footnote{In the
  non-strict sense: the conclusion of a rule instance is below the rule
  instance.} an instance of one of the rules $\lor$, $\land$, or $\forall$ has
  the $\EL$ property.
\end{lem}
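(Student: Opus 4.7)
The plan is to argue by contraposition and essentially unfold definitions. Suppose $S$ is a sequent in an \ALruledominated proof that fails the $\EL$ property; I will show $S$ must lie below some instance of $\lor$-R, $\land$-R, or $\forall$-R, contradicting the hypothesis on $S$.

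Failing $\EL$ means $S$ contains a top-level occurrence of $\lor$, $\land$, $\forall$, or $\top$. First I would treat the three cases $\lor$, $\land$, $\forall$ uniformly. By the \ALruledominated hypothesis and Definition~\ref{def:rule:dominated}, the offending occurrence in $S$ is rule-dominated, so in every branch through $S$ it is a descendant of the principal formula of some instance $R$ of the matching introduction rule. That principal formula sits in $R$'s conclusion $C$, which by the footnote lies below $R$; and since descendants of a formula occurrence appear only in sequents at or below the sequent containing that occurrence, $S$ lies at or below $C$, and hence below $R$. This is the desired contradiction.

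The only potential snag is the $\top$ case, which is not literally covered by the \ALruledominated definition. I expect this to be discharged either by convention, reading the axiom $\Theta \seq \top, \Delta$ as the zero-premise degenerate analogue of $\land$-R (so $\top$-occurrences are implicitly rule-dominated and the same argument applies), or because the context of application in Appendix~\ref{app:focused} arranges that stray top-level $\top$ occurrences do not arise in the sequents being analysed. Either way, the substantive content is the three-case definition chase---rule-dominated $\Rightarrow$ descendant of a principal formula $\Rightarrow$ sequent at or below the rule's conclusion $\Rightarrow$ sequent below the rule---which is consistent with the paper's description of the lemma as ``easy to see''.
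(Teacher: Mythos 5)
Your argument is exactly the intended one: the paper offers no explicit proof (it introduces the lemma as "easy to see"), and the definition chase you give --- rule-domination supplies a matching introduction instance whose principal formula has the offending occurrence as a descendant, so the sequent lies at or below that instance's conclusion and hence below the instance --- is precisely that easy argument. Your caveat about $\top$ is also well spotted and is resolved in the way you guessed second: the \ALruledominated condition indeed says nothing about $\top$, and the normalization procedure deals with it outside the lemma by truncating/replacing derivations of sequents containing $\top$ with applications of the $\top$ rule, whose conclusion carries no $\EL$ constraint.
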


Correspondingly, our translation achieves the $\EL$ requirements by first
extending the upper part of the given proof to bring it into \ALruledominated
form and then permuting $\lor$, $\land$, or $\forall$ down over $\exists$,
$\neq$, $\rulepair$ and $\ruleproj$, where the \ALruledominated property is
preserved. In a final step, parts of the proof are rewritten with special
transformations to ensure that in all instances of the $\exists$ rule the term
$t$ is a tuple-term. Figure~\ref{fig:proc:normalize} specifies the steps of
this method in detail.

To make claims about the involved rule permutations precise, we define
\name{permutable} as follows.
\begin{defi}
  Rule~$R_{\alpha}$ (with one or two premises) is said to be
  \defname{permutable down over} rule~$R_{\beta}$ (with one premise) if
  for all instances $\alpha$ of $R_{\alpha}$ and $\beta$ of $R_{\beta}$ such
  that
  \begin{enumerate}
  \item $\alpha$ is immediately above $\beta$ where the conclusion of $\alpha$
    is the premise of $\beta$,
  \item the principal formulas of $\alpha$ are disjoint with the
    active formulas of $\beta$,
  \end{enumerate}
  $\alpha$ or $\beta$ is void, i.e. has a single premise that is identical to
  its conclusion, or there is a deduction from the premises of $\alpha$ to the
  conclusion of $\beta$ that consists of instances~$\beta'_i$ of $R_{\beta}$,
  one for each premise of $\alpha$, whose conclusions are the premises of an
  instance $\alpha'$ of $R_{\alpha}$, where $\alpha'$ has the same conclusion
  as $\beta$.
\end{defi}

Also more general versions of \name{permutability} are possible, where, e.g.,
permutation of a one-premise rule down over a multi-premise rule like $\land$
is considered \cite{bpt}, but not required for our purposes.

\begin{figure}

  \fbox{
\begin{minipage}{0.96\textwidth}
  \noindent\textsc{Input: } A proof in the base calculus for $\deltazero$
  formulas (Def.~\ref{def:dzcalc:base}).
 
  \smallskip
  \noindent\textsc{Method: }

   \begin{enumerate}
   \item \label{proc:normalize:step:pre} \textit{Bringing to \ALruledominated
     form by extending upwards at the top.} Bring the proof into
     \ALruledominated form by exhaustively extending its top nodes as follows:
     Select a top node (instance of $=$ or $\top$) whose sequent contains a
     formula occurrence with $\lor$, $\land$ or $\forall$ as top-level
     operator. Attach an instance of the rule for the respective operator such
     that the former top sequent becomes its conclusion, with the formula
     occurrence as principal formula.
   \item \label{proc:normalize:step:permute} \textit{Permuting rules to
     achieve the $\EL$ condition.} Permute all instances of $\lor$, $\land$
     and $\forall$ down over instances of $\exists$, $\neq$, $\rulepair$ and
     $\ruleproj$. Replace all derivations of sequents that have $\top$ as a
     member with applications of the $\top$ rule. Add the $\EL$ superscript to
     the contexts of the instances of $\exists$ and $\neq$ as well as to the
     sequents in instances of $\rulepair$ and $\ruleproj$, as required by the
     \normalized calculus.
   \item \label{proc:normalize:step:ex} \textit{Bringing terms $t$ in all
     instances of rule $\exists$ to tuple-term form}. As long as there is an
     occurrence of $\pi_1$ or $\pi_2$ in a term $t$ of an instance of
     $\exists$, rewrite the proof part rooted at that instance with
     proof transformation $\elimprojvar$ or $\elimprojpair$.
   \end{enumerate}

  \noindent\textsc{Output: } A proof in the \normalized calculus for
  $\deltazero$ formulas (Figure~\ref{fig:dzcalc-1sided}) with the same bottom
  sequent as the input proof.
  
\end{minipage}
}

\caption{\Normalization procedure.}
\label{fig:proc:normalize}
\end{figure}

Permutability properties of the rules of our base calculus, along with proof
transformations to achieve the tuple-term property required for instances of
the $\exists$ rule, justify the following claim.
\begin{thm}[Completeness of the \normalized calculus for 
    $\deltazero$ formulas (Figure~\ref{fig:dzcalc-1sided})]
  \label{thm:dzcalc:1sided:complete}
  A $\deltazero$ formula provable in the base calculus for $\deltazero$
  formulas (Def.~\ref{def:dzcalc:base}) is also provable in the \normalized
  calculus for $\deltazero$ formulas (Figure~\ref{fig:dzcalc-1sided}).
\end{thm}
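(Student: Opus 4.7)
The plan is to follow the three-step procedure set out in Figure~\ref{fig:proc:normalize} and verify that each step is justified by the appropriate permutability/transformation lemmas. Given any base-calculus proof $\pi$ with bottom sequent $\sigma$, we produce a normalized proof of $\sigma$ by successively applying the three transformations.

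The first step is to ensure the \ALruledominated property. For every top node of $\pi$ (an instance of $=$ or $\top$), if its sequent still contains a top-level $\lor$, $\land$ or $\forall$ that is not the descendant of a rule for that operator, we extend upwards by attaching an instance of the corresponding rule, making the chosen formula principal. The new top sequents are still instances of $=$ or $\top$ (the extending rules do not affect the formulas already making the sequent axiomatic), so the extension preserves validity and terminates since each extension strictly reduces the multiset of $\lor/\land/\forall$ occurrences that violate rule-domination. The output is a proof in \ALruledominated form.

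The heart of the argument is the second step: permuting instances of $\lor$, $\land$, $\forall$ downward over instances of $\exists$, $\neq$, $\rulepair$, $\ruleproj$. I would first state and prove a technical lemma asserting that each such pair is permutable in the sense of the paper's definition, by a case analysis that exhibits, for each configuration, the rewritten local derivation. The non-trivial cases are (i) permutation over $\rulepair$ and $\ruleproj$, which perform substitutions and hence require that we apply the substitution to all the premises of the $\lor/\land/\forall$ instance; (ii) permutation over $\exists$, where the side condition that the principal formulas are disjoint from the active ones is essential (if the same formula is principal in both rules, it is necessarily a descendant, contradicting the configuration). By exhaustively applying these permutations, and appealing to Lemma~\ref{lem:ruledom:el}, every sequent appearing as premise or conclusion of an $\exists$, $\neq$, $\rulepair$ or $\ruleproj$ instance becomes $\EL$. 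Rule-domination is preserved throughout because the permutations only move rule instances and do not introduce fresh top-level $\lor/\land/\forall$ occurrences that fail to be descendants of a rule introducing them. Finally, any subderivation whose endsequent contains $\top$ can be replaced by the $\top$ axiom, and the $\EL$ annotations can be added to match the shape of the normalized rules.

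The third step ensures that in every instance of $\exists$ the witness term $t$ is a tuple-term, i.e.\ built from variables using only pairing. We measure the offending occurrences by the number of $\pi_1,\pi_2$ symbols occurring in such witness terms, and repeatedly apply the proof transformations $\elimprojvar$ and $\elimprojpair$: the former handles a subterm $\pi_i(x)$ by inserting an appropriate $\rulepair$ instance below the $\exists$ (splitting $x$ into a pair of fresh variables so the projection collapses via $\ruleproj$), the latter handles $\pi_i(\la t_1,t_2\ra)$ by directly applying $\ruleproj$. Each application strictly decreases the measure, so the process terminates. The resulting proof is in the normalized calculus and has the same bottom sequent $\sigma$ as $\pi$.

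The main obstacle will be the second step: getting the bookkeeping of the permutations right, especially in the presence of $\rulepair$ and $\ruleproj$, which substitute into the whole sequent and can therefore duplicate or erase formula occurrences that the $\lor/\land/\forall$ rule is trying to decompose. The cleanest route is to prove the permutability lemmas once and for all, and then show that applying them in a fixed top-down order reduces a well-chosen measure (e.g., the sum over nodes of the depth of the nearest $\exists/\neq/\rulepair/\ruleproj$ ancestor that still lies above a $\lor/\land/\forall$), so that the permutation phase terminates. Steps 1 and 3 are then comparatively routine.
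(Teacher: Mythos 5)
Your plan is correct and follows essentially the same route as the paper: extend the leaves to reach the rule-dominated form, permute $\lor$, $\land$, $\forall$ down over $\exists$, $\neq$, $\rulepair$, $\ruleproj$ (justified by dedicated permutation lemmas, with the $\rulepair$/$\ruleproj$ and variable-capture cases being the delicate ones, handled in the paper via a pure-variable convention), and finally eliminate projections from $\exists$-witness terms with the $\elimprojvar$/$\elimprojpair$ transformations. The only difference is presentational; the paper packages the same steps as the procedure of Figure~\ref{fig:proc:normalize} together with Lemmas~\ref{lemma:permute:generic}, \ref{lemma:permute:down:over:pairing:projection} and~\ref{lemma:exists:tuple-term}.
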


\begin{proof}
  Figure~\ref{fig:proc:normalize} shows a procedure for translating a proof in
  the base calculus to a proof in the \normalized calculus. Both proofs have
  the same bottom sequent, that is, prove the same $\deltazero$ formula. For
  step~\ref{proc:normalize:step:pre} of the procedure it is evident that the
  result is still a legal proof with the same bottom sequent and is in
  \ALruledominated form.

  The rule permutations in step~\ref{proc:normalize:step:permute} are
  justified by Lemma ~\ref{lemma:permute:generic} and
  Lemma~\ref{lemma:permute:down:over:pairing:projection}, presented below. It
  is easy to see from the particular deduction conversions used to justify
  these lemmas that the permutations preserve the \ALruledominated property of
  the proof. From Lemma~\ref{lem:ruledom:el}, the $\EL$ property required by
  the \normalized calculus for instances of $\exists$, $\neq$, $\rulepair$ and
  $\ruleproj$ then follows.
  Step~\ref{proc:normalize:step:permute} finishes with straightforward
  conversions to let the proof literally match with the \normalized calculus:
  truncating redundant proof parts above sequents with $\top$ as a member and
  adding the $\EL$ decoration.
  
  Step~\ref{proc:normalize:step:ex} is justified by
  Lemma~\ref{lemma:exists:tuple-term}, presented below in
  Section~\ref{subsec:tuple:terms}. The overall bottom sequent with the proven
  formula is unaltered by all involved permutations and transformations. The
  proof is then a proof in the \normalized calculus with the same bottom
  sequent as the initially given proof.
\end{proof}

In the following Sects.~\ref{subsec:perm:generic}--\ref{subsec:tuple:terms} we
supplement the permutation lemmas and proof transformations referenced in
Fig~\ref{fig:proc:normalize} and in the proof of
Theorem~\ref{thm:dzcalc:1sided:complete}.

\subsection{Permutability justified by generic permutation schemas}
\label{subsec:perm:generic}

We begin our justification of the permuting of rules, used in the normalization algorithm.
Permutability among rules that correspond to logic operators can be shown with
straightforward generic schemas. The following lemmas express cases that were
used in the proof of Theorem~\ref{thm:dzcalc:1sided:complete}.

\begin{lem}
  \label{lemma:permute:generic}
  In the base calculus $\lor$, $\land$ and $\forall$ are permutable down
  over $\exists$ and $\neq$.
\end{lem}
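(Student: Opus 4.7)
The plan is to proceed by a case analysis on the six rule pairs $(R_\alpha, R_\beta)$ with $R_\alpha \in \{\lor, \land, \forall\}$ and $R_\beta \in \{\exists, \neq\}$. In each case I would exhibit an explicit schematic conversion from the configuration in which an instance of $R_\alpha$ sits immediately above an instance of $R_\beta$ (with disjoint principal and active formulas) to one in which one or two copies of $R_\beta$ sit above a single instance of $R_\alpha$, yielding an identical bottom sequent. This is exactly what the definition of \emph{permutable down} preceding the lemma requires.

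Each pair permutes cleanly because the rules involved act on largely orthogonal parts of a sequent. The rules $\lor$, $\land$ and $\forall$ rewrite a single principal formula on the right (with $\forall$ additionally extending $\Theta$ by a fresh membership $y \in b$ in its premise). The rule $\exists$ instantiates a right-hand existential with a term $t$ for which $t \in b'$ already sits in $\Theta$, adding $\phi[t/x]$ on the right without altering $\Theta$ or touching other right-hand formulas. The rule $\neq$ duplicates an atomic formula $\alpha[t/x]$ into $\alpha[u/x]$ on the right, in the presence of a disequality $t \neq_\ur u$. Given the disjointness requirement between the principal formulas of $\alpha$ and the active formulas of $\beta$, swapping the order of the two rules is a purely schematic rewrite.

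As a representative example, the $\lor$-over-$\exists$ schema takes the premise $\Theta, t \in b' \vdash \phi_1, \phi_2, \psi[t/x], \exists x \in b'.\psi, \Delta$, first applies $\exists$ to obtain $\Theta, t \in b' \vdash \phi_1, \phi_2, \exists x \in b'.\psi, \Delta$, and then applies $\lor$ to conclude $\Theta, t \in b' \vdash \phi_1 \lor \phi_2, \exists x \in b'.\psi, \Delta$, reversing the original order. The $\land$ cases are analogous but duplicate $\beta$: two $\exists$ (or $\neq$) instances, one per premise of $\land$, sit above a single $\land$-instance. The $\forall$ cases simply push the eigenvariable membership $y \in b$ into the $\Theta$-component of the $\beta$-instance before $\beta$ fires.

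The main (minor) obstacle is freshness bookkeeping in the $\forall$-over-$\beta$ cases: one must verify that the eigenvariable $y$ chosen for the original $\forall$-instance does not collide with the terms, bounds or active formulas of $\beta$ after the swap. Since $y$ was chosen fresh relative to the entire conclusion of the $\forall$-instance, it cannot occur in $t$, $b'$, $u$, in the atom $\alpha$, or in any member of $\Delta$; consequently $\exists$ or $\neq$ may be applied beneath the new position of $y \in b$ without capture, while the freshness side condition of $\forall$ is preserved after the permutation. With these six schematic rewrites and the freshness checks in place, the lemma follows.
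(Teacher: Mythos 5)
Your proposal is correct and follows essentially the same route as the paper: explicit schematic rewrites swapping the adjacent rule instances, duplicating the $\exists$/$\neq$ instance above the two premises of $\land$, and checking that the eigenvariable freshness of $\forall$ is preserved after the swap. The paper merely factors your six cases into two generic schemas (one-premise over one-premise, and $\land$ over one-premise), exploiting the common shape of the one-premise rules, but the content is the same.
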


Lemma~\ref{lemma:permute:generic} hold since all of the involved permutations
can be performed according to two generic schemas, one for permuting a
one-premise rule down over a one-premise rule and a second schema for
permuting the two-premise rule $\land$ down over a one-premise rule. The
considered one-premise rules $\neq, \lor, \forall, \exists$ are all of the
form
\begin{equation}
  \label{eq:form:one:premise:rule}
\begin{array}{c}
  \Theta, C \seq A, \Delta\\\midrule
  \Theta, C' \seq A', \Delta,
\end{array}
\end{equation}
where $C, C'$ are (possibly empty) multisets of membership atoms and $A, A'$
are multisets of formulas. For instances of the rule, $C$ and $A$ constitute
the \name{active formulas} and $C', A'$ the \name{principal formulas}.

Consider an instance $\alpha$ of a one-premise rule $R_\alpha$ followed by an
instance $\beta$ of a one-premise rule $R_\beta$, both rules of the
form~(\ref{eq:form:one:premise:rule}). Let the active formulas of the rule
instances be $C, A$ and $D, B$, respectively, and let their principal formulas
be $C', A'$ and $D', B'$, respectively. Assume that the multiset $A'$ of
principal formulas of $\alpha$ and the multiset $B$ of active formulas of
$\beta$ have no formula occurrences in common. ($C'$ and $D$ are not
constrained in this way.) The two rule applications then match the left side
of the following permutation schema. Except in the case where $R_\alpha =
\exists$ and $R_\beta = \forall$, the schema can be applied to permute the
instance of $R_\alpha$ down over that of $R_\beta$.
\begin{equation}
  \label{eq:perm:schema:one:one}
  \begin{array}{rc}
    \tabrulename{$R_{\alpha}$} 
    & \Theta, C, D \seq A, B, \Delta\\\cmidrule{2-2}
    \tabrulename{$R_{\beta}$} 
    & \Theta, C',D \seq A', B, \Delta\\\cmidrule{2-2}
    & \Theta, C',D' \seq A', B', \Delta
  \end{array}
  \;\;\text{ becomes }\;\;
  \begin{array}{rc}
    \tabrulename{$R_{\beta}$} 
    & \Theta, C, D \seq A, B, \Delta\\\cmidrule{2-2}
    \tabrulename{$R_{\alpha}$} 
    & \Theta, C, D' \seq A, B', \Delta\\\cmidrule{2-2}
    & \Theta, C',D' \seq A', B', \Delta
  \end{array}
\end{equation}

Permutation schema~(\ref{eq:perm:schema:one:one}) can for example be applied
to permute $\forall$ down over $\exists$:
\[\begin{array}{rc}
\tabrulename{$\forall$}
& \Theta, t \in c, y \in b \seq
  \phi[y/x],
  \psi[t/z],
  \exists z \in c \fdot \psi, \Delta\\\cmidrule{2-2}
\tabrulename{$\exists$}
& \Theta, t \in c\seq
  \forall x \in b \fdot \phi,
  \psi[t/z],
  \exists z \in c \fdot \psi,
  \Delta\\\cmidrule{2-2}
& \Theta, t \in c\seq
  \forall x \in b \fdot \phi,
  \exists z \in c \fdot \psi,
  \Delta
\end{array}
\]
\noindent becomes
\[\begin{array}{rc}
\tabrulename{$\exists$}
& \Theta, t \in c, y \in b \seq
  \phi[y/x],
  \psi[t/z],
  \exists z \in c \fdot \psi,
  \Delta\\\cmidrule{2-2}
\tabrulename{$\forall$}
& \Theta, t \in c, y \in b \seq
  \phi[y/x],
  \exists z \in c \fdot \psi,
  \Delta\\\cmidrule{2-2}
& \Theta, t \in c\seq
  \forall x \in b \fdot \phi,
  \exists z \in c \fdot \psi,
  \Delta
\end{array}
\]

The excluded case of permuting $\exists$ down over $\forall$ is not required
for our purposes. To illustrate how this case might fail, consider the
following instantiation, which shows two legal inferences matching the left
side of permutation schema~(\ref{eq:perm:schema:one:one}), but clearly not
permitting permutation.
\[\begin{array}{rc}
\tabrulename{$\exists$}
& \Theta, y \in b \seq
  \phi[y/x],
  \psi[y/z],
  \exists z \in b \fdot \psi,
  \Delta\\\cmidrule{2-2}
\tabrulename{$\forall$}
& \Theta, y \in b \seq
  \phi[y/x],
  \exists z \in b \fdot \psi,
  \Delta\\\cmidrule{2-2}
& \Theta\seq
  \forall x \in b \fdot \phi,
  \exists z \in b \fdot \psi,
  \Delta
\end{array}
\]

An instance of the two-premise rule $\land$ can be permuted below an instance
of a one-premise rule $R_{\beta}$ of the form~(\ref{eq:form:one:premise:rule})
with active formulas $D, B$ and principal formulas $D', B'$ according to the
following permutation schema.
\[
\begin{array}{rccrc}
  \tabrulename{$\land$} &
  \Theta, D \seq \phi_1, B, \Delta &&&
   \Theta, D \seq \phi_2, B, \Delta\\\cmidrule{2-5}
  \tabrulename{$R_{\beta}$} &
  \multicolumn{4}{c}{\Theta, D \seq (\phi_1 \land \phi_2), B, \Delta}\\\cmidrule{2-5}
  & \multicolumn{4}{c}{\Theta, D' \seq (\phi_1 \land \phi_2), B', \Delta}
\end{array}
\]
\begin{flalign}
  \label{eq:perm:schema:and:one}
\text{becomes} &&
\end{flalign}
\[
\begin{array}{rccrc}
  \tabrulename{$R_{\beta}$} &
  \Theta, D \seq \phi_1, B, \Delta &&
  \tabrulename{$R_{\beta}$} &
  \Theta, D \seq \phi_2, B, \Delta\\\cmidrule{2-2}\cmidrule{5-5}
  \tabrulename{$\land$} &
  \Theta, D' \seq \phi_1, B', \Delta &&&
  \Theta, D' \seq \phi_2, B', \Delta\\\cmidrule{2-5}
  & \multicolumn{4}{c}{\Theta, D' \seq (\phi_1 \land \phi_2), B', \Delta}
  \end{array}
\]

As an example for the permutation schema~(\ref{eq:perm:schema:and:one})
consider permuting $\land$ down over $\neq$:
\[\begin{array}{rcc}
\tabrulename{$\land$}
& \Theta \seq \phi_1,
  t \nequ u, \alpha[u/z], \alpha[t/z],
  \Delta\;\;
  & \;\;\Theta \seq \phi_2,
  t \nequ u, \alpha[u/z], \alpha[t/z],
  \Delta\\\cmidrule{2-3}
\tabrulename{$\neq$}
& \multicolumn{2}{c}{\Theta \seq \phi_1 \land \phi_2,
  t \nequ u, \alpha[u/z], \alpha[t/z],
  \Delta}\\\cmidrule{2-3}
& \multicolumn{2}{c}{\Theta \seq \phi_1 \land \phi_2,
  t \nequ u, \alpha[t/z],
  \Delta}
\end{array}  
\]
\noindent becomes
\[\begin{array}{rccrc}
\tabrulename{$\neq$}
& \Theta \seq \phi_1,
  t \nequ u, \alpha[u/z], \alpha[t/z],
  \Delta
  & &
\tabrulename{$\neq$}  
  & \Theta \seq \phi_2,
  t \nequ u, \alpha[u/z], \alpha[t/z],
  \Delta\\\cmidrule{2-2}\cmidrule{5-5}
  \tabrulename{$\land$}  
  & \Theta \seq \phi_1,
  t \nequ u, \alpha[t/z],
  \Delta\;\;
  &&& \;\;\Theta \seq \phi_2,
  t \nequ u, \alpha[t/z]
  \Delta\\\cmidrule{2-5}
  & \multicolumn{4}{c}{\Theta \seq \phi_1 \land \phi_2,
    t \nequ u, \alpha[t/z],
  \Delta}
\end{array}  
\]

\subsection{Permutability down over pairing and projection}
\label{subsec:perm:pair:proj}

We now continue the justification of the rule permutations used in proof normalization.
The rules $\rulepair$ and $\ruleproj$ do not fit the generic
shape~(\ref{eq:form:one:premise:rule}) underlying the generic permutation
schemas~(\ref{eq:perm:schema:one:one}) and (\ref{eq:perm:schema:and:one}).
Hence permuting a rule that corresponds to a logic operator down under
$\rulepair$ or $\ruleproj$ requires dedicated schemas. We consider
rule~$\ruleproj$ there just for $\pi_1$ (for $\pi_2$ the analogy is obvious).
The following lemma expresses the cases of permutability over $\rulepair$ and
$\ruleproj$ needed in the proof of Theorem~\ref{thm:dzcalc:1sided:complete}.

\begin{lem}
  \label{lemma:permute:down:over:pairing:projection}
  In the base calculus $\lor$, $\land$ and $\forall$ are permutable down
  over $\rulepair$ and $\ruleproj$. 
\end{lem}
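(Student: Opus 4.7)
The plan is to justify the lemma by exhibiting, for each of the six combinations $(\lor,\land,\forall) \times (\rulepair,\ruleproj)$, an explicit permutation schema analogous to those of Sect.~\ref{subsec:perm:generic}. The reason these cases require dedicated schemas, as already noted in the text, is that $\rulepair$ and $\ruleproj$ do not fit the shape~(\ref{eq:form:one:premise:rule}): they have no distinguished active or principal formulas and instead act by a uniform substitution $\sigma$ on every formula in the sequent, where $\sigma = [\tuple{x_1,x_2}/x]$ in the case of $\rulepair$ and $\sigma = [\pi_j(\tuple{t_1,t_2})/x]$ in the case of $\ruleproj$.

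The first observation, which drives all six schemas, is that if an instance~$\alpha$ of $\lor$, $\land$ or $\forall$ sits immediately above an instance~$\beta$ of $\rulepair$ or $\ruleproj$ (with substitution $\sigma$), then \textit{every} formula in the sequent shared by $\alpha$ and $\beta$ is of the form $\psi\sigma$ for some pre-image $\psi$; in particular the principal formula of~$\alpha$ has shape $\psi\sigma$ where $\psi$ has the same top-level connective as the principal formula. Consequently, the active formulas of $\alpha$ are themselves images under $\sigma$ of the syntactic children of $\psi$ (with bound variables renamed fresh for the vocabulary of $\sigma$). This lets us ``lift'' the logical step above~$\sigma$.

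For the single-premise cases I would then write, taking $\lor$ over $\rulepair$ as a template:
\[
\begin{array}{rc}
\tabrulename{$\lor$} & \Theta\sigma \seq \phi_1\sigma, \phi_2\sigma, \Delta\sigma \\\cmidrule{2-2}
\tabrulename{$\rulepair$} & \Theta\sigma \seq (\phi_1 \vee \phi_2)\sigma, \Delta\sigma \\\cmidrule{2-2}
& \Theta \seq \phi_1 \vee \phi_2, \Delta
\end{array}
\;\;\text{ becomes }\;\;
\begin{array}{rc}
\tabrulename{$\rulepair$} & \Theta\sigma \seq \phi_1\sigma, \phi_2\sigma, \Delta\sigma \\\cmidrule{2-2}
\tabrulename{$\lor$} & \Theta \seq \phi_1, \phi_2, \Delta \\\cmidrule{2-2}
& \Theta \seq \phi_1 \vee \phi_2, \Delta
\end{array}
\]
The schema for $\forall$ is identical in shape, with the additional (routine) $\alpha$-renaming of the eigenvariable $y$ so that it is distinct from $x$, $x_1$, $x_2$ (resp. $t_1$, $t_2$); this preserves the freshness side-condition after the permutation. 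The two-premise case of $\land$ mimics the pattern of schema~(\ref{eq:perm:schema:and:one}): the single instance of $\rulepair$ (or $\ruleproj$) is replicated above each of the two premises of $\land$, and the $\land$ is reassembled at the bottom. The $\ruleproj$ variants are literally the same schemas with $\sigma = [\pi_j(\tuple{t_1,t_2})/x]$ in place of $[\tuple{x_1,x_2}/x]$.

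The main obstacle is simply bookkeeping: one must verify that the pre-images $\phi_i$, $\psi$ of the principal and active formulas under~$\sigma$ are uniquely determined (which they are, because $\sigma$ is a capture-avoiding substitution for a single variable), and that the eigenvariable conditions on $\forall$ remain satisfied after permutation (immediate by renaming). No non-trivial side conditions analogous to the excluded $\exists$-over-$\forall$ case of Lemma~\ref{lemma:permute:generic} arise here, since $\rulepair$ and $\ruleproj$ neither bind nor consume quantifier-sensitive variables in their conclusion.
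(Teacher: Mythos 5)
Your schemas for $\lor$ and $\land$ over $\rulepair$ and $\ruleproj$ are fine and essentially coincide with the paper's. The gap is in the $\forall$ case, which you dismiss as ``identical in shape'' up to a routine renaming of the eigenvariable $y$, claiming that no non-trivial side conditions arise. This is exactly where the real work of the lemma lies, and renaming $y$ does not help: $y$ is already fresh by the eigenvariable condition, and the problematic interactions involve the \emph{bound} variable $x$ of $\forall x \in b \fdot \phi$ and the variable/terms of the substitution $\sigma$, not $y$. Concretely, your permuted figure needs $\phi\sigma[y/x] = \phi[y/x]\sigma$, i.e.\ that the two substitutions commute, and this fails in precisely the cases the paper has to treat separately. (i) If $x$ coincides with the variable replaced by $\rulepair$, then $\sigma$ (acting only on free occurrences) does nothing to the $\forall$-formula, and under the pure-variable convention it does nothing to the rest of the sequent either; the correct move is to \emph{delete} the now-void $\rulepair$ instance, not to permute it by the same schema. (ii) For $\ruleproj$, the bound variable $x$ may occur inside $t_1$ or $t_2$ (this genuinely happens when the redex $\pi_j(\la t_1,t_2\ra)$ occurs only under the binder, so the pure-variable condition does not exclude it); then the substitutions do not commute, and the permuted projection step must be performed with the \emph{modified} terms $t_1[y/x],t_2[y/x]$ — so the $\ruleproj$ variant is not ``literally the same schema'' as the $\rulepair$ one, contrary to your claim. (iii) The case where $x$ equals the template variable of $\ruleproj$ also needs its own schema, since there $\sigma$ still acts on $\Theta$, $b$ and $\Delta$ even though it cannot touch $\phi$.

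Two smaller points. You nowhere invoke the pure-variable convention (or any equivalent normalization of bound variables), but without it even the case analysis above cannot be carried out — the paper assumes it explicitly and uses it in both delicate cases. And your appeal to uniqueness of pre-images under $\sigma$ is shaky for $\ruleproj$ (the conclusion may contain other occurrences of $\pi_j(\la t_1,t_2\ra)$ not produced by the rule); what actually pins the pre-image down is the rule instance itself, not injectivity of the substitution. The fix is to replace the uniform ``lifting'' claim for $\forall$ by an explicit case analysis on whether $x$ equals the substituted variable and, for $\ruleproj$, on whether $x$ occurs in $t_1,t_2$, giving in each case either a void-rule deletion or a permutation with suitably adjusted terms.
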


The permutation schemas for $\lor$ and $\land$ down over $\rulepair$ and
$\ruleproj$ are straightforward. In their specification we use the following
shorthands for a formula or multiset of formulas $\Gamma$: $\Gamma^*$ stands
for $\Gamma[\la x_1,x_2\ra/x]$, $\Gamma'$ for $\Gamma[t_1/x]$ and $\Gamma''$
for $\Gamma[\pi_1(\la t_1,t_2\ra)/x]$. The respective permutation schemas then
are as follows.

\[\begin{array}{rc}
\tabrulename{$\lor$}
& \Theta^* \seq \phi_1^*,\phi_2^*,
  \Delta^*\\\cmidrule{2-2}
\tabrulename{$\rulepair$}
& \Theta^* \seq \phi_1^* \lor \phi_2^*,
   \Delta^*\\\cmidrule{2-2}
   & \Theta \seq
   \phi_1 \lor \phi_2,
   \Delta
\end{array}  
\;\; \text{ becomes }
\begin{array}{rc}
\tabrulename{$\rulepair$}
& \Theta^* \seq \phi_1^*,\phi_2^*,
  \Delta^*\\\cmidrule{2-2}
\tabrulename{$\lor$}
& \Theta \seq \phi_1,\phi_2,
  \Delta\\\cmidrule{2-2}
   & \Theta \seq
   \phi_1 \lor \phi_2,
   \Delta
\end{array}  
\]

\medskip

\[\begin{array}{rcc}
\tabrulename{$\land$}
& \Theta^* \seq \phi_1^*,
  \Delta^*\;\;
  & \;\;\Theta^* \seq \phi_2^*,
  \Delta^*\\\cmidrule{2-3}
\tabrulename{$\rulepair$}
& \multicolumn{2}{c}{\Theta^* \seq \phi_1^* \land \phi_2^*,
  \Delta^*}\\\cmidrule{2-3}
& \multicolumn{2}{c}{\Theta \seq \phi_1 \land \phi_2,
  \Delta}
\end{array}  
\;\;\text{ becomes }\;\;
\begin{array}{rccrc}
\tabrulename{$\rulepair$}
& \Theta^* \seq \phi_1^*,
  \Delta^*
  & &
\tabrulename{$\rulepair$}  
  & \Theta^* \seq \phi_2^*,
  \Delta^*\\\cmidrule{2-2}\cmidrule{5-5}
  \tabrulename{$\land$}  
  & \Theta \seq \phi_1,
  \Delta\;\;
  &&& \;\;\Theta \seq \phi_2,
  \Delta\\\cmidrule{2-5}
  & \multicolumn{4}{c}{\Theta \seq \phi_1 \land \phi_2,
  \Delta}
\end{array}  
\]

\medskip

\[\begin{array}{rc}
\tabrulename{$\lor$}
& \Theta' \seq \phi_1',\phi_2',
  \Delta'\\\cmidrule{2-2}
\tabrulename{$\ruleproj$}
& \Theta' \seq \phi_1' \lor \phi_2',
   \Delta'\\\cmidrule{2-2}
   & \Theta'' \seq
   \phi_1'' \lor \phi_2'',
   \Delta''
\end{array}  
\;\; \text{ becomes }\;\;
\begin{array}{rc}
\tabrulename{$\ruleproj$}
& \Theta' \seq \phi_1',\phi_2',
  \Delta'\\\cmidrule{2-2}
\tabrulename{$\lor$}
& \Theta'' \seq \phi_1'',
\phi_2'',
   \Delta''\\\cmidrule{2-2}
   & \Theta'' \seq
   \phi_1'' \lor \phi_2'',
   \Delta''
\end{array}  
\]

\medskip

\[\begin{array}{rcc}
\tabrulename{$\land$}
& \Theta' \seq \phi_1',
  \Delta'\;\;
  & \;\;\Theta' \seq \phi_2',
  \Delta'\\\cmidrule{2-3}
\tabrulename{$\ruleproj$}
& \multicolumn{2}{c}{\Theta' \seq \phi_1' \land \phi_2',
  \Delta'}\\\cmidrule{2-3}
& \multicolumn{2}{c}{\Theta'' \seq \phi_1'' \land \phi_2'',
  \Delta}
\end{array}  
\;\;\text{ becomes }\;\;
\begin{array}{rccrc}
\tabrulename{$\ruleproj$}
& \Theta' \seq \phi_1',
  \Delta'
  & &
\tabrulename{$\ruleproj$}  
  & \Theta' \seq \phi_2',
  \Delta'\\\cmidrule{2-2}\cmidrule{5-5}
  \tabrulename{$\land$}  
  & \Theta'' \seq \phi_1'',
  \Delta''\;\;
  &&& \;\;\Theta'' \seq \phi_2'',
  \Delta''\\\cmidrule{2-5}
  & \multicolumn{4}{c}{\Theta'' \seq \phi_1'' \land \phi_2'',
  \Delta''}
\end{array}  
\]

Permuting $\forall$ down over $\rulepair$ and $\ruleproj$ is more
intricate, because of the potential interaction of involved variables and
substitutions.
We make here explicit use of the assumption w.l.o.g. that bound and free
variables are kept distinct.

For permuting $\forall$ down
over $\rulepair$ the given deduction is, under the assumption $x \neq z$, as
follows.
\begin{equation}
  \label{eq:all:pair:given}
\begin{array}{rc}
\tabrulename{$\forall$}
& \Theta[\la z_1,z_2\ra/z], y \in b[\la z_1,z_2\ra/z] \seq
\phi[\la z_1,z_2\ra/z,y/x],
\Delta[\la z_1,z_2\ra/z]\\\cmidrule{2-2}
\tabrulename{$\rulepair$}
& \Theta[\la z_1,z_2\ra/z] \seq
\forall x \in b[\la z_1,z_2\ra/z] \fdot \phi[\la z_1,z_2\ra/z],
\Delta[\la z_1,z_2\ra/z]\\\cmidrule{2-2}
& \Theta \seq
\forall x \in b \fdot \phi,
\Delta
\end{array}
\end{equation}
We can exclude the cases $x = z_1$ or $x = z_2$, which would violate the
freshness requirement of $\rulepair$ because $x$ already appears in the bottom
sequent.

\noindent
Assuming $x \neq z$, $x \neq z_1$ and $x \neq z_2$ it holds that
\begin{gather*}
\phi[\la z_1,z_2\ra/z,y/x] = \phi[y/x, \la z_1,z_2\ra/z]
\end{gather*}
and thus (\ref{eq:all:pair:given}) becomes
\begin{equation}
  \label{eq:all:pair:alldifferent}
\begin{array}{rc}
\tabrulename{$\rulepair$}
& \Theta[\la z_1,z_2\ra/z], y \in b[\la z_1,z_2\ra/z] \seq
\phi[y/x,\la z_1,z_2\ra/z],
\Delta[\la z_1,z_2\ra/z]\\\cmidrule{2-2}
\tabrulename{$\forall$}
& \Theta, y \in b \seq
\phi[y/x],
\Delta\\\cmidrule{2-2}
& \Theta \seq
\forall x \in b \fdot \phi,
\Delta
\end{array}
\end{equation}

So far, for permuting $\forall$ down over $\rulepair$ we assumed $x \neq z$.
It remains to consider the case $x = z$.
Since $x$ occurs bound in the bottom sequent, it cannot occur in $\Theta$,
$\Delta$ and $b$. Since $\phi$ is in that sequent in the scope of a quantification
upon $x$, the $\rulepair$ inference cannot have any effect on $\phi$.
Hence, in the case $x = z$ the $\rulepair$ inference would
have no effect at all. Just deleting this void inference is then the
result of permuting. This concludes the specification of permuting
$\rulepair$ down over $\forall$.

We now turn to permuting $\forall$ down over $\ruleproj$. Assuming w.l.o.g.
that $z$ does not occur in any sequent of the proof and thus $x \neq z$, the given
deduction is as follows.
\begin{equation}
  \label{eq:all:proj:given}
\begin{array}{rc}
\tabrulename{$\forall$}
& \Theta[t_1/z], y \in b[t_1/z] \seq
\phi[t_1/z,y/x],
\Delta[t_1/z]\\\cmidrule{2-2}
\tabrulename{$\ruleproj$}
& \Theta[t_1/z] \seq
\forall x \in b[t_1/z] \fdot \phi[t_1/z],
\Delta[t_1/z]\\\cmidrule{2-2}
& \Theta[\pi_1(\la t_1,t_2\ra)/z] \seq
  \forall x \in b[\pi_1(\la t_1,t_2\ra)/z] \fdot \phi[\pi_1(\la t_1,t_2\ra)/z],
  \Delta[\pi_1(\la t_1,t_2\ra)/z]
\end{array}
\end{equation}
In the case where $x$ occurs neither in $t_1$ nor in $t_2$ it holds that
\begin{gather*}
  \phi[t_1/z,y/x] =  \phi[y/x,t_1/z]\\
  \phi[y/x,\pi_1(\la t_1, t_2\ra)/z] =
  \phi[\pi_1(\la t_1, t_2\ra)/z,y/x]
\end{gather*}
and thus (\ref{eq:all:proj:given}) becomes
\[\begin{array}{rc}
\tabrulename{$\ruleproj$}
& \Theta[t_1/z], y \in b[t_1/z] \seq
\phi[y/x,t_1/z],
\Delta[t_1/z]\\\cmidrule{2-2}
\tabrulename{$\forall$}
& \Theta[\pi_1(\la t_1,t_2\ra)/z], y \in b[\pi_1(\la t_1,t_2\ra)/z] \seq
\phi[\pi_1(\la t_1,t_2\ra)/z,y/x],
\Delta[t_1/z]\\\cmidrule{2-2}
& \Theta[\pi_1(\la t_1,t_2\ra)/z] \seq
  \forall x \in b[\pi_1(\la t_1,t_2\ra)/z] \fdot \phi[\pi_1(\la t_1,t_2\ra)/z],
  \Delta[\pi_1(\la t_1,t_2\ra)/z]
\end{array}
\]

We now consider the case where $x$ occurs in $t_1$ or in $t_2$.
Since $x$ occurs bound in the bottom sequent, there can be no free occurrences
of $x$ in that sequent.
Hence~$z$, mapped in that sequent to a pair in which~$x$ occurs,
cannot occur in $\Theta$, $\Delta$ and $b$. The given deduction
(\ref{eq:all:proj:given}) then specializes to
\begin{equation}
  \label{eq:proj:given:x:zi}
\begin{array}{rc}
\tabrulename{$\forall$}
& \Theta, y \in b \seq
\phi[t_1/z,y/x],
\Delta\\\cmidrule{2-2}
\tabrulename{$\ruleproj$}
& \Theta \seq
\forall x \in b \fdot \phi[t_1/z],
\Delta\\\cmidrule{2-2}
& \Theta \seq
  \forall x \in b \fdot \phi[\pi_1(\la t_1,t_2\ra)/z],
  \Delta
\end{array}
\end{equation}
Since $x \neq z$ it holds that
\begin{gather}
  \phi[t_1/z,y/x] = \phi[y/x,t_1[y/x]/z]\\
  \label{eq:proj:given:x:zi:equiv:2}
  \phi[y/x, \pi_1(\la t_1, t_2\ra)[y/x]/z] = \phi[\pi_1(\la t_1, t_2\ra)/z,y/x]
\end{gather}
and thus (\ref{eq:proj:given:x:zi}) becomes
\begin{equation}
  \label{eq:proj:perm:x:zi}
\begin{array}{rc}
\tabrulename{$\ruleproj$}
& \Theta, y \in b \seq
\phi[y/x,t_1[y/x]/z],
\Delta\\\cmidrule{2-2}
\tabrulename{$\forall$}
& \Theta, y \in b \seq
\phi[\pi_1(\la t_1,t_2\ra)/z,y/x],
\Delta\\\cmidrule{2-2}
& \Theta \seq
  \forall x \in b \fdot \phi[\pi_1(\la t_1,t_2\ra)/z],
  \Delta
\end{array}
\end{equation}
In~(\ref{eq:proj:perm:x:zi}) the $\ruleproj$ inference viewed top-down
replaces occurrences of $t_1[y/x]$, which, in case $t_1 = x$ is $y$, with
$\pi_1(\la t_1, t_2\ra)[y/x]$.
Justified by~(\ref{eq:proj:given:x:zi:equiv:2}), the middle sequent is shown
in~(\ref{eq:proj:perm:x:zi}) in the form suitable as premise for the
application of $\forall$.

This concludes the specification of permuting $\forall$ down over $\ruleproj$
and thus the specification of all permutation schemas required for
step~\ref{proc:normalize:step:permute} of the \normalization procedure
(Figure~\ref{fig:proc:normalize}).

\subsection{Conversion to tuple-terms}
\label{subsec:tuple:terms}

The objective of this conversion is to ensure that the term~$t$ in instances
of the $\exists$ rule is always a tuple-term, i.e., does not involve the
projection functions $\pi_1$ and $\pi_2$. This is achieved with proof
transformations where instances of $\rulepair$ and $\ruleproj$ are inserted
below instances of $\exists$ and the effects of these interspersed rules are
propagated upwards. The following lemma states the desired overall property of
this transformation.

\begin{lem}
  \label{lemma:exists:tuple-term}
  A $\deltazero$ formula provable in the base calculus with a proof where the
  $\EL$ requirements of the \normalized calculus of
  Figure~\ref{fig:dzcalc-1sided} are met (but not necessarily its tuple-term
  condition of the $\exists$ rule) is also provable in the \normalized
  calculus.
\end{lem}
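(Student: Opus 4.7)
The plan is to eliminate, by repeated rewriting of the proof above each offending $\exists$ instance, every projection occurring in the term $t$ of such an instance. I would formalize this as two proof transformations, $\elimprojvar$ and $\elimprojpair$, covering respectively the cases where a projection is applied to a variable ($\pi_i(x)$) or directly to a pair ($\pi_i(\langle t_1, t_2\rangle)$), and interleave their applications until no projection remains inside any term used by an $\exists$ inference.

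The transformation $\elimprojpair$ is the semantic core. Given a subproof $P$ whose root is an $\exists$ instance whose principal term $t$ contains a subterm $\pi_i(\langle t_1, t_2\rangle)$, I would rewrite $P$ by uniformly replacing the pattern $\pi_i(\langle t_1, t_2\rangle)$ by $t_i$ in every sequent of $P$, and then cap the resulting derivation at the bottom with a single instance of $\times_\beta$ to restore the original conclusion. This is legitimate because each rule of the calculus is compatible with the uniform rewriting of that pattern by its $\beta$-reduct (the rules act on the syntactic shape of formulas but are semantically insensitive to the interchange of $\pi_i(\langle t_1, t_2\rangle)$ and $t_i$), and because the $\EL$ decoration on right-hand sides is unaffected by such a replacement. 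Dually, $\elimprojvar$ reduces to $\elimprojpair$: given a subproof whose root $\exists$ instance contains $\pi_i(x)$ for some variable $x$, pick two fresh variables $x_1, x_2$, substitute $\langle x_1, x_2\rangle$ for $x$ throughout the subproof, and cap the result with an instance of $\times_\eta$. The subterm $\pi_i(x)$ becomes $\pi_i(\langle x_1, x_2\rangle)$, which is then removable by $\elimprojpair$.

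To establish termination of the exhaustive application of these steps I would use a lexicographic measure on the proof: first the total number $v$ of subterms of the form $\pi_i(y)$ with $y$ a variable, occurring inside principal terms of $\exists$ instances, and second the total number $p$ of projection symbols in such principal terms. A step of $\elimprojvar$ strictly decreases $v$, since the exposed $\pi_i(x)$ turns into $\pi_i(\langle x_1, x_2\rangle)$ and no longer contributes to $v$. A step of $\elimprojpair$ strictly decreases $p$ without increasing $v$, since the rewrite $\pi_i(\langle t_1, t_2\rangle) \mapsto t_i$ removes one projection and does not create any new projection on a bare variable. When no step applies, every $\exists$ in the resulting proof uses a tuple-term, as required; the $\EL$ conditions are preserved because the only new rule instances introduced ($\times_\beta$ and $\times_\eta$) already fit the \normalized template and inherit their $\EL$ contexts from the current conclusion.

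The main difficulty I expect concerns the bookkeeping around uniform substitution through an entire subproof: one must respect the freshness conditions of $\forall$ and $\times_\eta$ (which can be arranged by an initial $\alpha$-renaming to keep bound and free variables disjoint from $x_1, x_2$), and verify that the substituted pattern interacts correctly with the multiset structure used in the $\neq$, $\times_\beta$ and $\times_\eta$ rules, where specific term occurrences are tracked by substitution positions. Because both $\elimprojvar$ and $\elimprojpair$ are designed so that the newly inserted bottom rule has the original conclusion sequent of $P$ as its conclusion, correctness of the bottom sequent of the whole proof is immediate.
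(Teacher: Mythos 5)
Your construction is essentially the paper's own proof: the same two transformations, each obtained by rewriting the whole subproof above the offending $\exists$ instance (replacing $\pi_i(\langle t_1,t_2\rangle)$ by $t_i$, respectively substituting a fresh pair $\langle x_1,x_2\rangle$ for the variable) and then capping it below with a single instance of $\times_\beta$, respectively $\times_\eta$, whose conclusion is the original sequent; the reduction of the variable case to the pair case and the observation that the $\EL$ side conditions are inherited at the insertion point also match the paper.

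There are, however, two points where your write-up would not go through as stated. First, the termination measure is wrong as ordered: $\elimprojpair$ \emph{can} create new projections on bare variables, e.g.\ rewriting the only available redex in $\pi_1(\pi_2(\langle c,x\rangle))$ yields $\pi_1(x)$, so $v$ increases while $p$ decreases and your lexicographic pair $(v,p)$ with $v$ most significant does not go down. The fix is simply to swap the components: $\elimprojpair$ strictly decreases the projection count $p$, while $\elimprojvar$ leaves $p$ unchanged and strictly decreases $v$, so $(p,v)$ decreases lexicographically (the paper gives no explicit measure, only exhaustive application plus a complexity remark, so supplying a correct one is a welcome addition). Second, the ``bookkeeping'' you defer is precisely where the paper does concrete work: if the subproof already contains $\times_\eta$ instances on the substituted variable $z$, these become ill-formed or void after you replace $z$ by $\langle x_1,x_2\rangle$ at the root, and the paper removes them while identifying their fresh variables with $x_1,x_2$; likewise pre-existing $\times_\beta$ instances on the rewritten redex become void under $\elimprojpair$ and are deleted. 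An initial $\alpha$-renaming does not address this, since the variables introduced by $\times_\eta$ are free above that rule, not bound; what is needed is exactly the uniform renaming and deletion of the now-redundant rule instances that the paper spells out.
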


Lemma~\ref{lemma:exists:tuple-term} presupposes a proof whose contexts and
sequents meet the $\EL$ requirements of the \normalized calculus. The claimed
tuple-term condition can be achieved by exhaustively applying the proof
transformations $\elimprojvar$ and $\elimprojpair$, specified below, which are
applicable whenever there is an occurrence of $\pi_1$ or $\pi_2$ in the term
$t$ of an instance of the $\exists$ rule. These transformations lead from a
legal proof in the base calculus to another one with the same bottom formula.
They require and preserve the $\EL$ requirements from the \normalized
calculus.

\subsubsection*{Proof transformation $\elimprojvar$}
This transformation applies to an instance of the $\exists$ rule whose term
$t$ has an occurrence of $\pi_1(z)$ or $\pi_2(z)$, where $z$ is a variable.
The instance of the $\exists$ rule is replaced with an instance of $\exists$
followed by an instance of $\rulepair$ and the subproof above is adjusted,
according to the following schema.
\[\small
  \begin{array}{rc}
    & P\\[-1ex]
    & \prooftreesymbol\\
        \tabrulename{$\exists$}
    & \Theta, t \in b \seq \phi[t/x], \exists x \in b\; \phi, \Delta_\EL\\\cmidrule{2-2}
    & \Theta, t \in b \seq \exists x \in b\, .\, \phi, \Delta_\EL
  \end{array}
  \;\textrm{ is replaced by }\;
  \begin{array}{rc}
    & P^*\\[-1ex]
    & \prooftreesymbol\\
    \tabrulename{$\exists$}
    & \Theta^*, t^* \in b^* \seq \phi^*[t^*/x], \exists x \in b^*\; \phi, \Delta^*_\EL\\\cmidrule{2-2}
    \tabrulename{$\rulepair$} 
    & \Theta^*, t^* \in b^* \seq \exists x \in b^*\, .\, \phi^*, \Delta^*_\EL\\\cmidrule{2-2}
    & \Theta, t \in b \seq \exists x \in b\, .\, \phi, \Delta_\EL
  \end{array}
\]
The starred components are specified there as follows: For a term, formula,
or multiset of formulas $\Gamma$, $\Gamma^*$ stands for $\Gamma[\la
  z_1,z_2\ra/z]$, where $z_1,z_2$ are fresh variables. Subproof~$P^*$ is
obtained from~$P$ by removing each instance of $\rulepair$ upon the variable
$z$ and some pair $\la z_1', z_2'\ra$, while replacing and $z_1'$ with $z_1$
and of $z_2'$ with $z_2$, and then replacing all occurrences of $z$ with $\la
z_1, z_2\ra$.

\subsubsection*{Proof transformation $\elimprojpair$.}
This transformation applies to an instance of the $\exists$ rule whose term
$t$ has an occurrence of $\pi_1(\la t_1, t_2\ra)$ (the case for $\pi_2(\la
t_1, t_2\ra)$ is analogous). The instance of the $\exists$ rule is replaced
with an instance of $\exists$ followed by an instance of $\ruleproj$ and the
subproof above is adjusted, according to the following schema.
\[\small
  \begin{array}{rc}
    & P\\[-1ex]
    & \prooftreesymbol\\
        \tabrulename{$\exists$}
    & \Theta, t \in b \seq \phi[t/x], \exists x \in b\; \phi, \Delta_\EL\\\cmidrule{2-2}
    & \Theta, t \in b \seq \exists x \in b\, .\, \phi, \Delta_\EL
  \end{array}
  \;\textrm{ is replaced by }\;
  \begin{array}{rc}
    & P'\\[-1ex]
    & \prooftreesymbol\\
    \tabrulename{$\exists$}
    & \Theta, t' \in b \seq \phi[t'/x], \exists x \in b\; \phi, \Delta_\EL\\\cmidrule{2-2}
    \tabrulename{$\ruleproj$} 
    & \Theta, t' \in b \seq \exists x \in b\, .\, \phi, \Delta_\EL\\\cmidrule{2-2}
    & \Theta, t \in b \seq \exists x \in b\, .\, \phi, \Delta_\EL
  \end{array}
  \]
The primed components are specified there as follows: Term $t'$ is $t$ with
the occurrence of $\pi_1(\la t_1, t_2\ra)$ replaced by $t_1$. Subproof $P'$ is
$P$ with that replacement propagated upward. It may be obtained by performing
the inference steps of $P$ in upward direction, but now starting from $\Theta,
t' \in b \seq \phi[t'/x], \exists x \in b\; \phi, \Delta_\EL$ instead of
$\Theta, t \in b \seq \phi[t/x], \exists x \in b\; \phi, \Delta_\EL$.
Instances of $\ruleproj$ and $\rulepair$ may get void there (identical premise
and conclusion) and can then be removed.

\subsection{Complexity considerations}
\label{sec:normalization:complexity}

In step~\ref{proc:normalize:step:pre} of the \normalization procedure
(Figure~\ref{fig:proc:normalize}), through the two-premise rule $\land$ a leaf
can get extended upwards by a number of nodes that is exponential in the
number of occurrences of $\land$ in its conclusion. Hence the time required
for step~\ref{proc:normalize:step:pre} is exponential in the number of
occurrences of $\land$ in a leaf conclusion.

Step \ref{proc:normalize:step:permute} where the proof is permuted, may,
through permuting $\land$ down over one-premise rules, increase the number of
nodes. However, none of the permutations increases the height~$h$ or  the
number of leaves~$l$ of the proof tree. Both $h$ and $l$ are less than or
equal to the number $n$ of nodes of the input proof, the output of
step~\ref{proc:normalize:step:pre}. Since a tree with height~$h$ and number of
leaves~$l$ can in general not have more than $h \times l$ nodes, our proof tree
can never have more than $n^2$ nodes when we perform the permutations. Since
for each node we can apply at most $h$ downward permutation steps, and $h \leq
n$, we can conclude that step~\ref{proc:normalize:step:permute}
involves not more that $n^3$ permutation steps.

Step~\ref{proc:normalize:step:ex}, where instances of $\rulepair$ and
$\ruleproj$ are inserted below instances of $\exists$ and
adjustments to these insertions are propagated upwards,
is polynomial. This step does not increase the
number of branches, but may increase the length of a branch by twice the number
of occurrences of~$\pi_i$ in terms $t$ of instances of $\exists$ on the
branch.

Thus, for \normalization we have exponential time complexity. However,
the source of the exponential complexity
can be associated with the first stage of the procedure, extending the proof
tree at its leaves, which may be seen as a blow up of \emph{irrelevant} parts
of the proof that accompany the axioms. The rule permutation stage is only a
polynomial-time operation.

\section{\texorpdfstring{$\deltazero$}{Δ₀} interpolation: proof sketch of Theorem \ref{thm:interpolationdeltafocus}}
\label{app:deltazerointerpolation}

We recall the statement:

\medskip

Let $\Theta$ be an $\in$-context and $\Gamma, \Delta$ finite \msets of $\deltazero$ formulas.
Then from any proof of $\Theta; \; \Gamma \vdash \Delta$
we can compute in linear time a $\deltazero$ formula $\theta$ with $\FV(\theta) \subseteq \FV(\Theta, \Gamma) \cap \FV(\Delta)$,
such that $\Theta; \; \Gamma \vdash \theta$ and $\emptyset; \; \theta \vdash \Delta$

\medskip

Recall also that we claim this for both the higher-level $2$-sided system
and the $1$-sided system, where
the $2$-sided syntax is a ``macro'':  $\Theta; \; \Gamma \vdash \Delta$
is a shorthand for $\Theta \vdash \neg \Gamma, \Delta$, where $\neg \Gamma$
is itself a macro for dualizing connectives. 
Thus in the $1$-sided version, we are arbitrarily classifying some of the 
$\deltazero$ formulas as Left and the others as Right, and our interpolant
must be common according to that partition.

We stress that there are no new ideas needed in proving Theorem  \ref{thm:interpolationdeltafocus} --- unlike
for our main tool, the Parameter Collection Theorem, or our final result.
The construction for Theorem  \ref{thm:interpolationdeltafocus} proceeds exactly as in prior interpolation theorems 
for similar calculi \cite{takeuti:1987,bpt,smullyan}. Similar constructions
are utilized in works for query reformulation
in databases, so for a presentation geared towards
a database audience one can check   \cite{tomanweddell} or the later
\cite{ interpbook}.

We explain the argument for the higher-level $2$-sided system.
We prove a more general statement, where we partition the context and
the formulas 
on both sides of $\vdash$ into Left and Right.
So we have 
\[
\Theta_L \Theta_R;  \Gamma_L, \Gamma_R \vdash \Delta_L, \Delta_R
\]
And our inductive invariant is that
we will compute in linear time a $\theta$  such that:
\begin{align*}
\Theta_L; \Gamma_L \vdash  \theta, \Delta_L \\
\Theta_R; \Gamma_R, \theta \proves \Delta_R
\end{align*}
And we require that $FV(\theta) \subseteq FV(\Theta_L, \Gamma_L, \Delta_L) \cap
 FV(\Theta_R,\Gamma_R, \Delta_R)$.
This generalization 
is used to handle the negation rules, as we explain below.

We proceed by induction on the depth of the proof tree.

One of the base cases is where we have a trivial proof tree,
which uses rule (Ax) to derive:
\[
\Theta; \Gamma, \phi \proves  \phi, \Delta
\]
We do a case distinction on where the occurrences of $\phi$ sit in our partition.
Assume the occurrence on the left is in $\Gamma_L$ and the occurrence
on the right is in $\Delta_R$. 
Then we can take our interpolant $\theta$ to be $\phi$.
Suppose the occurrence on the left is $\Gamma_L$ and the occurrence on the right
is in $\Delta_L$. Then we can take $\theta$ to be $\bot$.
The other base cases are similar.

The inductive cases for forming the interpolant will work
``in reverse'' for each proof rule. That is,  if we used an inference rule
to derive sequent $S$ from sequents $S_1$ and $S_2$, we will partition
the sequents $S_1$ and $S_2$ based on the partition of $S$.
We will then apply induction to our partitioned
sequent for $S_1$ to get an interpolant $\theta_1$, and also
apply induction to our partitioned version
of $S_2$ to get an interpolant $\theta_2$.
We
then put them together to get the interpolant for the partitioned sequent
$S$. This ``putting together''
will usually reflect the semantics of the connective mentioned in the proof rule.

Consider the case where the last rule applied is the $\neg$-L rule: this is the case
that motivates the more general invariant involving partitions.
We have a partition of the final sequent $\Theta; \Gamma, \phi \proves  \Delta$.
We form a partition of the sequent $\Theta; \Gamma \proves \neg \phi, \Delta$
by placing  $\neg \phi$ on the same side (Left, Right) as $\phi$ was in the original
partition. We then get an interpolant $\theta$ by induction. We just
use $\theta$ for the final interpolant.

We  consider the inductive case for $\wedge$-R.  We have
two top sequents, one for each conjunct. We partition them in the obvious way:
each $\phi_i$ in the top is in the same partition that $\phi_1 \wedge \phi_2$ was
in the bottom. Inductively we take the interpolants
$\theta_1$ and $\theta_2$ for each sequent. We again do a case analysis based on
whether $\phi_1 \wedge \phi_2$ was in $\Delta_L$ or in $\Delta_R$. 

Suppose
$\phi_1 \wedge \phi_2$ was in $\Delta_R$, so $\Delta_R=\phi_1 \wedge \phi_2, \Delta'_R$.
Then we arranged that each $\phi_i$ was
in $\Delta_R$ in the corresponding top sequent. So we know 
that $\Theta_L; \Gamma_L \vdash  \theta_i, \Delta_L$
and $\Theta_R; \Gamma_R, \theta_i \proves \phi_i, \Delta'_R$ for $i=1,2$.
Now we can set the interpolant $\theta$ to be 
 $\theta_1 \wedge \theta_2$.

In the other case, $\phi_1 \wedge \phi_2$ was in $\Delta_L$, say $\Delta_L=\phi_1 \wedge \phi_2, \Delta'_L$.
Then we would arrange each $\phi_i$ to be ``Left'' in the corresponding top sequent, so we know that
$\Theta_L; \Gamma_L \vdash  \theta_i, \phi_i, \Delta'_L$
and $\Theta_R; \Gamma_R, \theta_i \proves \Delta_R$ for $i=1,2$.
We set $\theta=\theta_1 \vee \theta_2$ in this case.

With the $\exists$ rule, a term in the inductively-assumed
   $\theta'$ for the top sequent may become illegal for the
   $\theta$ for the bottom sequent, since it has a free variable that is not common. In this case, the term 
   in $\theta$ is replaced by a quantified variable, where the
   quantifier is existential or universal, depending on the
   partitioning, and bounded according to the requirements
   for $\deltazero$ formulas.
   The non-common variables are then replaced and variable constraints are
   added as described with the notation $\existsvars{x_1 \ldots x_n} t \in
   b\,.\, \phi$ and $\forall x_1 \ldots x_n| t \in b\,.\, \phi$ on
   p.~\pageref{page:existsvars}.

\section{Proofs of polytime admissibility}
\label{app:derivations}

The goal of this section is to prove most claims of polytime
admissibility made in the body of the paper, crucially those of
Subsection~\ref{subsec:proof:ain}.
Recall that a rule
\[\dfrac{\Theta \vdash \Delta}{\Theta' \vdash \Delta'}\]
is \emph{polytime admissible} if we can compute in polynomial time 
a proof of the conclusion $\Theta' \vdash \Delta'$ from a proof of
the antecedent $\Theta \vdash \Delta$.

Throughout this section we deal with the {\focused} proof system of
Figure \ref{fig:dzcalc-1sided}.

\subsection{Standard rules}

Here we collect some useful standard sequent calculi rules, which are all polytime
admissible in our system. 
The arguments for these rules are straightforward.

\begin{lem}
  \label{lem:wkadm}
  The following weakening rule is polytime admissible:
\[
  \dfrac{\Theta \vdash \Delta}{\Theta' \vdash \Delta, \Delta'}
\]
\end{lem}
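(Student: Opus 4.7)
The plan is to prove the lemma by structural induction on the given proof $\pi$ of $\Theta \vdash \Delta$, where I implicitly assume $\Theta' \supseteq \Theta$ (the intended reading of weakening on the $\in$-context). At each sequent of $\pi$, I propagate the additional membership atoms $\Theta' \setminus \Theta$ and the additional formulas $\Delta'$ to build a corresponding proof of $\Theta' \vdash \Delta, \Delta'$. The base cases ($=$, $\top$, $\bot$) are immediate since the principal atom that triggers the axiom is still present on the appropriate side after weakening. For the inductive cases involving $\wedge$, $\vee$, and $\forall$ — which carry no $\EL$ side condition — I apply the induction hypothesis to each premise and then reapply the same rule, noting that in the $\forall$ case the freshness side condition on the eigenvariable can be preserved by $\alpha$-renaming against the free variables of $\Theta' \setminus \Theta$ and $\Delta'$.

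The main obstacle, and the only nontrivial aspect of the proof, is the $\EL$ side condition on the rules $\neq$, $\exists$, $\times_\eta$ and $\times_\beta$: these require the non-principal part of the right-hand side to be existential-leading. Propagating an arbitrary $\Delta'$ upward across such a rule would in general violate this side condition if $\Delta'$ contains formulas whose top-level operator is $\land$, $\lor$ or $\forall$. The way to handle this is to first eliminate such formulas in $\Delta'$ by \emph{prefixing} the proof at the root with the corresponding right-hand decomposition rules ($\wedge$, $\vee$, $\forall$), which have no $\EL$ restriction and which reduce the problem to weakening by an $\EL$-multiset $\Delta'_{\EL}$ together with some additional membership atoms absorbed into the $\in$-context. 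Once $\Delta'$ is $\EL$, the induction goes through uniformly for every rule of the calculus.

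Finally, I must check the polynomial time bound. The root-level decomposition of $\Delta'$ produces a tree whose size is linear in the size of $\Delta'$ (since each rule strictly reduces the total connective count). Above that, every sequent of $\pi$ is enlarged by at most $|\Theta' \setminus \Theta| + |\Delta'_{\EL}|$ occurrences, and no branching is added. Thus the whole transformation runs in time polynomial (in fact linear) in $|\pi| + |\Theta'| + |\Delta'|$, as required by Definition~\ref{def:polyadm}.
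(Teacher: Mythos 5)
Your proof is correct, and it supplies exactly the detail the paper leaves implicit (the paper simply declares these admissibility claims routine). The one genuinely non-routine point is the $\EL$ side condition on the rules $\neq$, $\exists$, $\times_\eta$, $\times_\beta$, which you correctly identify and correctly neutralize: first decompose the non-$\EL$ formulas of $\Delta'$ at the root with the unrestricted $\wedge$, $\vee$, $\forall$ rules, then run the ``add everywhere'' induction with only an $\EL$ residue on the right and extra membership atoms on the left. That is the standard argument one would write out here.

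Two small corrections, neither affecting correctness. First, the complexity accounting is overstated: decomposing conjunctions in $\Delta'$ does branch, and above each leaf of that decomposition you need a separate weakened copy of the input proof, so ``no branching is added'' is false as written and the construction is not linear; the honest bound is on the order of $|\Delta'|$ copies of the input proof, each enlarged per sequent by the added material, which is still comfortably polynomial --- and polynomial is all that Definition~\ref{def:polyadm} requires. Second, a few routine cases deserve a word: $\top$ is not existential-leading, so any $\top$ surviving the decomposition is handled by closing that branch at once with the $\top$ rule rather than by the induction; the freshness renaming you invoke for $\forall$ is needed equally for the fresh variables $x_1,x_2$ of $\times_\eta$; and in the $\times_\eta$/$\times_\beta$ cases the rule's substitution must also be applied to the added material, so the induction hypothesis should be phrased for an arbitrary added $\in$-context and arbitrary added $\EL$ multiset (which then change along the recursion), not literally for the fixed $\Theta' \setminus \Theta$ and the fixed $\EL$ residue of $\Delta'$.
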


\begin{lem}
  \label{lem:wedgeprojadm}
  The following inference, witnessing the invertibility of the $\wedge$ rule, is polytime admissible for both $i \in \{1,2\}$:
\[
  \dfrac{\Theta \vdash \phi_1 \wedge \phi_2, \Delta}{\Theta \vdash \phi_i, \Delta}\]
\end{lem}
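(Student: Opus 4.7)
The plan is to argue by structural induction on the given proof of the premise $\Theta \vdash \phi_1 \wedge \phi_2, \Delta$, transforming it into a proof of $\Theta \vdash \phi_i, \Delta$. Crucially, an inspection of Figure~\ref{fig:dzcalc-1sided} shows that the formula $\phi_1 \wedge \phi_2$, whose top-level connective is $\wedge$, is \emph{not} in $\pospolaritysuper$. Hence it cannot appear as a passive formula in the right-hand side of any rule carrying a $\Delta^{\pospolaritysuper}$ side-condition (namely $\neq$, $\exists$, $\times_\eta$, $\times_\beta$). This already limits the case analysis considerably: the only last rules we need to consider are those without that restriction, plus the case where $\phi_1 \wedge \phi_2$ is itself the principal formula.

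The key case is when the last rule is $\wedge$ with $\phi_1 \wedge \phi_2$ as its principal formula, so the proof ends with
\[ \dfrac{\Theta \vdash \phi_1, \Delta \qquad \Theta \vdash \phi_2, \Delta}{\Theta \vdash \phi_1 \wedge \phi_2, \Delta} \]
in which case we simply discard the other premise and return the derivation of $\Theta \vdash \phi_i, \Delta$. In the base cases ($=$ and $\top$), the axiom remains valid after replacing $\phi_1 \wedge \phi_2$ by $\phi_i$, since the principal axiom formula lies in $\Delta$ and is preserved. For the remaining inductive cases ($\wedge$ with a different principal formula, $\vee$, and $\forall$), $\phi_1 \wedge \phi_2$ appears unchanged in the premises' right-hand sides, so we recursively invoke the induction hypothesis on each premise and reassemble with the same last rule applied to the transformed premises. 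Note that no variable-freshness side-conditions are disturbed, since the $\in$-context and the rest of $\Delta$ are unchanged.

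For the complexity analysis, each node of the input proof tree gives rise to at most one node in the output proof tree (and in the principal-$\wedge$ case, an entire subtree is simply dropped). Each transformation step is a constant-size rewrite, so the whole procedure runs in linear time in the size of the input proof. The main (very mild) obstacle is simply the bookkeeping of verifying that each rule we may encounter is compatible with having $\phi_1 \wedge \phi_2$ in the context; once one notices the $\pospolaritysuper$ restriction handles most rules for free, the argument is routine.
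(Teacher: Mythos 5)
Your argument is correct: the $\pospolaritysuper$ restriction on the $\neq$, $\exists$, $\times_\eta$ and $\times_\beta$ rules does exclude them as the last rule of any sequent whose right-hand side contains $\phi_1 \wedge \phi_2$, and the remaining induction-and-reassembly is sound (freshness for $\forall$ is preserved because $\FV(\phi_i) \subseteq \FV(\phi_1 \wedge \phi_2)$, not merely because the context is unchanged). The paper gives no explicit proof of this lemma --- it is listed among the standard rules whose polytime admissibility is declared straightforward --- and your induction over the input derivation is exactly the standard argument intended.
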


\begin{lem}
  \label{lem:allinvadm}
  The following, witnessing the invertibility of the $\forall$ rule, is polytime admissible:
  \[ \dfrac{\Theta \vdash \forall x \in t. \phi, \Delta}{\Theta, x \in t \vdash \phi, \Delta}\]
\end{lem}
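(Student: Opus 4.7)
The plan is to prove Lemma~\ref{lem:allinvadm} by a straightforward induction on the input derivation of $\Theta \vdash \forall x \in t.\,\phi, \Delta$, commuting the inversion step through each rule. Up to renaming via the pure-variable condition, I may assume that $x$ does not clash with any proper variable of the input proof (this keeps the transformation linear in size).

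First I would handle the base cases, where the last rule is $=$ or $\top$. In these cases the distinguished formula $\forall x \in t.\,\phi$ sits inertly in $\Delta$, and the sequent $\Theta, x \in t \vdash \phi, \Delta$ is still an instance of the same axiom, so the transformation is trivial. Next, for the inductive case, I split on whether the last rule has $\forall x \in t.\,\phi$ as principal formula. If so, that rule must be the $\forall$ rule, and the premise reads $\Theta, y \in t \vdash \phi[y/x], \Delta$ for a fresh $y$; applying the renaming $[x/y]$ yields a derivation of the desired sequent $\Theta, x \in t \vdash \phi, \Delta$ of the same size. If the last rule does not touch $\forall x \in t.\,\phi$, I would recursively invert on each premise (which still contains an occurrence of $\forall x \in t.\,\phi$ or of a term obtained from it by the substitutions associated with $\times_\eta$ and $\times_\beta$) and then re-apply the same rule below; this is a local transformation that produces a legal proof of the target sequent.

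The main subtlety — which I would expect to be the principal obstacle — lies with the rules that perform term substitutions, namely $\times_\eta$, $\times_\beta$, $\neq$, and $\exists$. For these, the occurrence of $\forall x \in t.\,\phi$ on the right may already have been changed to $\forall x \in t'.\,\phi'$ in the premise, where $t'$ and $\phi'$ are obtained by the rule's substitution; the recursive invocation of inversion should apply to the modified formula, and then the subsequent re-application of the substitution rule must act coherently on the newly exposed $x \in t$ membership atom in the context. By the blanket renaming of $x$ ensured at the start, $x$ does not occur in any substituted term and the substitutions commute with the addition of $x \in t$ on the left. This is precisely analogous to the permutation arguments already carried out in Appendix~\ref{app:focused} for pushing $\forall$ down under $\times_\eta$ and $\times_\beta$, so the bookkeeping follows the same pattern.

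Finally, for the polynomial-time claim, I would observe that the transformation traverses the input proof once, producing at each node a proof of at most the same size plus a constant overhead, so the output proof has size linear in the input and the whole construction runs in linear time. This in particular is strictly better than the polynomial bound required for admissibility in the sense of Definition~\ref{def:polyadm}.
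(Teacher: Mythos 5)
Your induction is the standard inversion argument, and since the paper leaves this lemma's proof to the reader as ``straightforward'', it is essentially the intended one: the base cases ($=$, $\top$), the commuting cases ($\wedge$, $\vee$, and a non-principal $\forall$, with the eigenvariable freshness secured by your up-front renaming), the principal $\forall$ case handled by renaming the fresh variable back to $x$, and the linear size/time bound are all fine.

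The one thing to correct is where you place the ``main subtlety''. The lemma is stated for the \focused calculus of Figure~\ref{fig:dzcalc-1sided}, in which the rules $\exists$, $\neq$, $\times_\eta$ and $\times_\beta$ require the \emph{entire} right-hand multiset of their conclusion to be existential-leading; since $\forall x \in t.\,\phi$ is not existential-leading, no sequent containing it on the right can be the conclusion of any of these rules, so those cases of your induction are vacuous and the case analysis collapses to the easy ones. Moreover, the repair you sketch would not actually be legal if these cases did occur: after inverting, the right-hand side contains $\phi$, which need not be existential-leading, so re-applying $\exists$, $\neq$, $\times_\eta$ or $\times_\beta$ beneath it would violate their side condition (this is precisely why the normalization in Appendix~\ref{app:focused} permutes $\forall$, $\wedge$, $\vee$ \emph{down} over these rules rather than the reverse). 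The clean statement of your inductive step is therefore: the last rule of the input proof must be one of $=$, $\top$, $\wedge$, $\vee$, $\forall$, and each of these is handled as you describe; with that observation the proof is complete.
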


\begin{lem}
  \label{lem:substitutivity}
  The following substitution rule is polytime admissible:
  \[\dfrac{\Theta \vdash \Delta}{\Theta[t/x] \vdash \Delta[t/x]}
  \]
\end{lem}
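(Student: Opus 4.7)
The plan is to proceed by straightforward structural induction on the input proof tree of $\Theta \vdash \Delta$, applying the substitution $[t/x]$ to every sequent and verifying that each rule instance remains valid. Before starting, I would perform a preemptive alpha-renaming pass on the proof: all bound variables and all variables introduced as ``fresh'' by the $\forall$ and $\times_\eta$ rules are renamed to avoid clashing with $\{x\} \cup \FV(t)$. Alpha-renaming itself is a variable-for-variable substitution that preserves every rule shape (including tuple-term constraints) and introduces no complications, so this preprocessing is safe.

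Then for each rule, substitution commutes in the obvious way. The base cases ($=$ and $\top$) are trivial: $(x =_\ur x)[t/x] = (t =_\ur t)$ is still an instance of $=$, and $\top[t/x] = \top$. For the propositional cases ($\wedge$, $\vee$, $\neq$), the top-level connective of the principal formula is unaffected by substitution, so we invoke the inductive hypothesis on the premise(s) and reapply the same rule. For $\forall x' \in b.\, \phi$, after renaming we have $x' \notin \FV(t) \cup \{x\}$ so the freshness side-condition is preserved and the substitution commutes with the rule. The $\times_\eta$ case is analogous. For $\times_\beta$, substitution commutes because the rule simply replaces a subterm of the form $\pi_i(\tuple{t_1,t_2})$ by $t_i$, and this replacement is stable under an additional substitution (after renaming to keep $x$ distinct from the local variables of the rule).

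The main obstacle will be the $\exists$ rule, which requires its witness $s$ to be a \emph{tuple-term}, i.e., built only from variables and pairing. If $t$ contains projections, then substituting $[t/x]$ into a tuple-term witness $s$ (where $x \in \FV(s)$) can produce $s[t/x]$ that is no longer a tuple-term, breaking the side-condition. I would handle this by restricting attention to the regime in which the lemma is actually invoked in the paper: substitution is used either for a variable-for-variable renaming, or to substitute a tuple-term (in particular, a pair of fresh variables coming from $\times_\eta$) for a variable. In both cases, $s[t/x]$ remains a tuple-term and the $\exists$ rule applies unchanged. Should one need the fully general statement, one could postprocess any offending $\exists$ instance by splitting the substitution through a $\times_\eta$ step followed by $\times_\beta$ reductions, but I would flag that this is unnecessary for the uses of this lemma elsewhere in the paper.

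For the complexity bound: the transformed proof has exactly the same tree shape as the input, with each node relabelled by applying $[t/x]$ to the sequent and (at most) performing a constant-size alpha-renaming. So the construction runs in linear time in the combined size of the input proof and $t$, which is polynomial as required.
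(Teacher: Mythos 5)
Your argument is correct and is essentially the one the paper intends: Lemma~\ref{lem:substitutivity} is listed among the ``standard rules'' with no written proof beyond the remark that the induction is straightforward, and your structural induction (with a preliminary renaming of eigenvariables away from $\{x\}\cup\FV(t)$) is exactly that routine argument. The one genuinely non-routine point is the one you isolate: the {\focused} $\exists$ rule demands a tuple-term witness, so substituting a $t$ containing projections can violate the side condition. Your restriction to $t$ a variable or tuple-term does cover every use of the lemma in the paper (the only substantive invocation, in Appendix~\ref{app:mainthm}, substitutes a pair $\tuple{o_1,o_2}$ of fresh variables), and the repair you sketch for the general case --- inserting a $\times_\eta$ step and $\times_\beta$ reductions at offending $\exists$ instances --- is precisely the $\elimprojvar$/$\elimprojpair$ proof transformations of Appendix~\ref{app:focused}, which run in polynomial time, so the unrestricted statement goes through as well. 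Two minor points: substitution can multiply each sequent by a factor of $|t|$, so the cost is $O(|t|)$ times the proof size rather than linear in the combined size (still comfortably polynomial), and it is worth saying explicitly that the $\EL$ side conditions survive substitution because top-level connectives are unchanged; both are immediate.
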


\subsection{Admissibility of generalized congruence}

Recall the admissibility claim concerning the rule related to congruence:

\begin{restatable}{lem}{gencongruence}
  \label{lem:gencongruence}
The following generalized congruence rule is polytime admissible:
\[ \dfrac{\Theta[t/x,t/y]\seq \Delta[t/x,t/y]}{\Theta[t/x,u/y] \seq \neg (t \equiv u), \Delta[t/x,u/y]}
  \]
\end{restatable}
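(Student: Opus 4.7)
The plan is to proceed by induction on the type $T$ common to $y$, $t$, and $u$, unfolding the macro $\equiv_T$ at each step of the induction.

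For the base case $T = \ur$, the equivalence $t \equiv_\ur u$ is simply $t =_\ur u$, and the rule reduces to a standard replacement rule for atomic formulas, derivable from the $\neq$ rule of Figure~\ref{fig:dzcalc-1sided} together with weakening and substitutivity (Lemmas~\ref{lem:wkadm} and~\ref{lem:substitutivity}): add $\neg(t \equiv_\ur u)$, i.e., $t \neq_\ur u$, to the right-hand side of the premise, then the $\neq$ rule rewrites the relevant atomic occurrences of $t$ into $u$. The case $T = \unit$ is immediate, since $\equiv_\unit$ is definitionally $\top$. For $T = T_1 \times T_2$, use $\times_\eta$ to expose pair components of $t$ and $u$ and $\times_\beta$ to simplify the resulting projections; the hypothesis $t \equiv u$ then unfolds to the conjunction $\pi_1(t) \equiv \pi_1(u) \wedge \pi_2(t) \equiv \pi_2(u)$, and the result follows by two successive applications of the inductive hypothesis.

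The interesting case is $T = \sett(T')$, where $\equiv_{\sett(T')}$ unfolds to the conjunction of $\forall z \in t.\; z \memmac u$ and $\forall z \in u.\; z \memmac t$, with $\memmac$ in turn a bounded-existential hiding an $\equiv_{T'}$. Since $\deltazero$ formulas contain no bare membership atoms, occurrences of $y$ at set type appear only on the right of bounded quantifiers in formulas or on the right of membership atoms in the $\in$-context $\Theta$; quantifier occurrences are instantiated by the original proof, producing context atoms $z \in t$. For each such atom, the conclusion's context carries the corresponding $z \in u$, and we must bridge between the two. To do so, we exploit the containment $u \subseteq t$ derived from $t \equiv u$ (which we may treat as a hypothesis by case-analysing on the disjuncts of $\neg(t \equiv u)$ available as a formula on the right) to obtain, via bounded-universal instantiation and existential witnessing, a variable $z' \in t$ together with $z \equiv_{T'} z'$. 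The inductive hypothesis at the strictly smaller type $T'$ then converts the relevant subproof, which depended on $z \in t$, into one using $z' \in t$ while discharging the required $\neg(z \equiv_{T'} z')$ on the right. Chaining these transformations through every relevant membership atom, wrapped by the appropriate invocations of Lemmas~\ref{lem:allinvadm} and~\ref{lem:existsblockadm}, yields the desired derivation.

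The main obstacle is the set case, where a naive implementation could iterate wastefully over the membership atoms in $\Theta$ across several type levels. Polynomial bounds are preserved because each auxiliary rule invoked (Lemmas~\ref{lem:wkadm}--\ref{lem:substitutivity}) is polytime admissible with output size linear in its input, and the recursion descends into a strictly smaller type at each step, so the overall transformation is polynomial in the size of the input proof and the depth of the type~$T$, as required.
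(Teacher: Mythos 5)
Your base, unit and product cases are fine (the product case needs the two applications of the inductive hypothesis to be staged carefully around $\times_\eta$/$\times_\beta$, but this can be made to work). The genuine gap is in the set case, and it is exactly the point where the paper departs from a type induction. When $T=\sett(T')$, the mismatch you must repair arises at an $\exists$ instance in the input proof whose bound is an occurrence of $y$: in the premise proof the witnessing context atom has the form $a \in t$, while in the target sequent the quantifier is bounded by $u$ (or vice versa). You propose to bridge this with $t \subseteq u$, obtaining some $z' \in u$ with $a \equiv_{T'} z'$, and then to ``convert the relevant subproof'' by the inductive hypothesis at type $T'$. But that subproof is not an instance of the congruence statement at type $T'$: above the instantiation it still contains the original set-level occurrences of $t$ versus $u$ (the retained existential, further quantifiers bounded by $y$, other context atoms), which a lemma relating only two terms of type $T'$ cannot touch. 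You would have to perform the element-level replacement $a \mapsto z'$ \emph{simultaneously} with the continuing set-level replacement $t \mapsto u$, and each further quantifier crossing generates yet another element-level pair to track. In a pure induction on $T$ there is no decreasing measure that licenses this: the remaining work is at the same type $\sett(T')$, and ``chaining through every relevant membership atom'' is not well-founded because the atoms are produced dynamically by the proof and the conversions interact.

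This is why the paper strengthens the statement before inducting: Lemma~\ref{lem:gencongruencenary} proves an $n$-ary simultaneous congruence for a tuple of term pairs $\vv t,\vv u$ of heterogeneous types, with $\neg(t\equiv u)$ replaced by the unfolded family $\cE_{\vv t,\vv u}$, and the induction is on the input \emph{proof}, not the type. In the clash case of the $\exists$ rule the tuple is extended by the witness and a fresh variable, the subset formula $\neg(t_i \subseteq u_j)$ from $\cE$ is used to rebuild the quantifier steps, and Lemma~\ref{lem:cEder} repackages $\cE$ into $\neg(t\equiv u)$ at the end; the single pass over the proof also gives the polynomial bound directly, whereas your per-atom, per-type-level recursion offers no obvious control on proof growth. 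To repair your argument you essentially need this simultaneous, proof-indexed invariant, at which point the type induction becomes superfluous.
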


Recall that in a two-sided reading  of this, the hypothesis
is 
$t \equiv u;\Theta[t/x,u/y] \seq \Delta[t/x,u/y]$.
So the rule says that if we $\Theta$ entails $\Delta$ where both
contain $t$, then if we assume $t \equiv u$ and substitute some occurrences of $t$ with
$u$ in $\Theta$, we can conclude the corresponding substitution of $\Delta$.

To prove Lemma \ref{lem:gencongruence} in the case where the terms $t$ and $u$ are of type
$\sett(T)$, we will need
a more general statement.
We are going to generalize the statement to treat tuples of terms and
use $\pospolarity$ formulas instead of $\neg (t \equiv u)$ to simplify the inductive
invariant.

Given two terms $t$ and $u$ of type $t$, define by induction the set of formulas
$\cE_{t,u}$:
\begin{itemize}
    \item If $T = \ur$, then $\cE_{t,u}$ is $t \neq_\ur u$ 
    \item If $T = T_1 \times T_2$, then
      $\cE_{t,u}$ is $\cE_{\pi_1(t),\pi_1(u)}, \cE_{\pi_2(t),\pi_2(u)}$
    \item If $T = \sett(T')$, $\cE_{t,u}$ is $\neg (t \subseteq_T u), \neg(u \subseteq_T t)$
\end{itemize}

The reader can check that $\cE_{t,u}$ is essentially $\neg (t \equiv u)$.

\begin{lem}
\label{lem:cEder}
The following rule is polytime admissible:
  \[
    \dfrac{\Theta \vdash \Delta, \cE_{t,u}}{\Theta \vdash \Delta, \neg (t \equiv u)}
  \]
\end{lem}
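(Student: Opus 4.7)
The plan is to observe that $\cE_{t,u}$ is essentially $\neg(t \equiv u)$ already expanded by De Morgan laws: every $\wedge$ appearing in the macro unfolding of $t \equiv u$ has been pushed under the negation to become a disjunction whose disjuncts are then flattened into the multiset $\cE_{t,u}$. To move back from the flattened presentation to the nested-disjunction presentation, one simply reassembles the outer $\vee$'s using the $\vee$ rule of the {\focused} calculus. I would formalize this by structural induction on the type $T$ of $t$ and $u$.

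For the base case $T = \ur$, the formulas $\cE_{t,u}$ and $\neg(t \equiv_\ur u)$ are literally the same atomic formula $t \neq_\ur u$ (after unfolding the $\neg$ macro through $=_\ur$), so the input proof is already the output proof. For $T = \sett(T')$, the macro unfolds $\neg(t \equiv u)$ to $\neg(t \subseteq u) \vee \neg(u \subseteq t)$, and $\cE_{t,u}$ is precisely the two-element multiset $\neg(t \subseteq u), \neg(u \subseteq t)$; a single instance of the $\vee$ rule below the given proof collapses these into the required disjunction. For $T = T_1 \times T_2$, $\cE_{t,u}$ is $\cE_{\pi_1(t), \pi_1(u)}, \cE_{\pi_2(t), \pi_2(u)}$, so I apply the inductive hypothesis twice, first transforming the proof of $\Theta \vdash \Delta, \cE_{\pi_1(t), \pi_1(u)}, \cE_{\pi_2(t), \pi_2(u)}$ into a proof of $\Theta \vdash \Delta, \neg(\pi_1(t) \equiv \pi_1(u)), \cE_{\pi_2(t), \pi_2(u)}$ by treating the first pair of components (and the remainder of $\cE$ as part of the context $\Delta$), and then similarly for the second component. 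A single $\vee$ application at the bottom folds the two disjuncts into $\neg(\pi_1(t) \equiv \pi_1(u)) \vee \neg(\pi_2(t) \equiv \pi_2(u))$, which is exactly $\neg(t \equiv_{T_1 \times T_2} u)$ after unfolding.

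For the complexity analysis, each recursive step adds only one or two instances of the $\vee$ rule at the base of the proof and recurses on strictly smaller types; the total amount of syntactic overhead added is linear in the size of the type $T$, so the overall transformation runs in polynomial (indeed linear) time. There is no real obstacle in the argument — the whole point is that $\cE_{t,u}$ and $\neg(t \equiv u)$ denote the same Boolean combination of atomic inequalities and set-inclusions, and the invertibility of $\vee$ in our calculus (which is built into the shape of the $\vee$ rule as a direct reading of its premise) means that rebundling the disjunction is immediate.
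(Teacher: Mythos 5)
Your proposal is correct and matches the paper's argument, which is stated there only as ``straightforward induction over $T$''; your case analysis (identity at $\ur$, one $\vee$ instance for set types, two inductive calls plus one $\vee$ for products) is exactly the intended unfolding, and the linear-overhead complexity observation is right.
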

\begin{proof}
  Straightforward induction over $T$.
\end{proof}

Since we will deal with multiple equivalences, we will adopt vector notation
$\vv t = t_1, \ldots, t_n$ and $\vv x = x_1, \ldots, x_n$ for lists of terms and
variables. Call $\cE_{\vv t, \vv u}$ the union of the $\cE_{t_i,u_i}$.
We can now state our more general lemma:

\begin{lem}
  \label{lem:gencongruencenary}
  The following generalized $n$-ary congruence rule for set variables is polytime admissible:
  \[
    \dfrac{\Theta[\vv t/\vv x, \vv t /\vv y]\seq \Delta[\vv t/\vv x,\vv t/\vv y]}{
   \Theta[\vv t/\vv x, \vv u/\vv y]\seq
  \Delta[\vv t/\vv x, \vv u/\vv y], \cE_{\vv t, \vv u}
}\]
\end{lem}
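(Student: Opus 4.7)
The plan is to prove the single-variable specialization of Lemma~\ref{lem:gencongruencenary} at each type $T$, and then derive the $n$-ary version by iteration. For the iteration, we apply the single-variable version to $y_1, y_2, \ldots, y_n$ in succession; each step accumulates $\cE_{t_i, u_i}$ on the right-hand side, and Lemma~\ref{lem:wkadm} (weakening) ensures previously accumulated formulas persist through subsequent steps. The real work is thus the single-variable case, which I would prove by induction on the type $T$.

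For $T = \ur$, we have $\cE_{t,u} = \{t \neq_\ur u\}$, so we must convert a proof of $\Theta[t/y] \vdash \Delta[t/y]$ into one of $\Theta[u/y] \vdash \Delta[u/y], t \neq_\ur u$. I would induct on the input proof, maintaining $t \neq_\ur u$ in the right-hand side of every sequent. Most rules are handled by straightforward recursion on subproofs, with the substitution propagating cleanly; the crucial step is the $=$ axiom and atomic-formula steps, where the $\neq$ rule from Figure~\ref{fig:dzcalc-1sided} is precisely what lets us substitute $u$ for tagged occurrences of $t$ while consuming the carried disequality. For $T = T_1 \times T_2$, we have $\cE_{t,u} = \cE_{\pi_1(t), \pi_1(u)} \cup \cE_{\pi_2(t), \pi_2(u)}$; using $\times_\eta$ we can assume $t$ and $u$ are tuple-terms $\la t_1, t_2\ra$ and $\la u_1, u_2\ra$, at which point the substitution factors through the components and the inductive hypotheses at $T_1$ and $T_2$ finish the job.

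The main obstacle is the set case $T = \sett(T')$, where $\cE_{t,u} = \{\neg(t \subseteq u), \neg(u \subseteq t)\}$. Here equivalence is only up to extensionality, so the substitution $t \mapsto u$ does not commute with the atomic uses of $t$ in the proof. Unfolding, $\neg(t \subseteq u)$ is an $\exists z \in t . \forall z' \in u . \neg(z \equiv_{T'} z')$. The strategy is to track how tagged occurrences of $t$ are used in the input proof: typically as a bound of a $\forall$ or $\exists$ step (introducing $z \in t$) or as a membership in the $\in$-context. For each such use, we would produce a corresponding witness $z' \in u$ with $z \equiv_{T'} z'$ and invoke the inductive hypothesis at type $T'$ to transport the proof across the $T'$-equivalence; when no such witness exists, that $z$ itself discharges the $\exists$-witness for $\neg(t \subseteq u)$ on the right, so we can dispatch this branch using the admissibility of Lemmas~\ref{lem:allinvadm} and~\ref{lem:existsblockadm}.

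The polynomial-time claim then follows because at each recursive step the input proof grows by only a constant factor that depends on the type depth (fixed in advance) of $\vv t, \vv u$, and the iteration over the $n$ pairs $(t_i, u_i)$ adds only a factor of $n$. The delicate point in the set case is book-keeping through the $\in$-context: one must show that the replacement of $z \in t$ by the appropriate $z' \in u$ combined with a $T'$-congruence derivation does not inflate the proof more than polynomially in its original size, which follows because the inductive hypothesis at $T'$ itself yields polynomially sized output proofs.
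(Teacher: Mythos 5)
Your base case and your intuition for the set case (escaping into $\neg(t \subseteq u)$ when an element of $t$ cannot be matched inside $u$) are the right ingredients, but the way you organize the induction leaves a genuine gap exactly where the lemma is delicate: the set case and the polynomial bound. The paper's proof is a \emph{single} induction over the input proof tree, with no induction on types at all; the $n$-ary simultaneous substitution is not a cosmetic generalization but the strengthening that makes that induction go through. The reason shows up in the $\exists$ case where, after substitution, the context membership reads $a' \in t_i$ while the bound of the principal formula reads $u_j$: one must introduce a fresh $z' \in u_j$ and replace the element-type witness $a'$ by $z'$ in the subproof, \emph{in addition to} the pending replacements $\vv t \mapsto \vv u$. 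The paper absorbs this by applying the induction hypothesis once to the immediate subproof with the tuple extended by the new pair $(a', z')$, then folds the resulting side formulas $\cE_{a',z'}$ into the disjunct $\neg(t_i \subseteq u_j)$ of $\cE_{\vv t, \vv u}$ via Lemma~\ref{lem:cEder} and a short $\forall$/$\exists$ block; the whole construction is one pass over the proof, with constant-size overhead per rule instance, which is what yields polytime. In your scheme (outer induction on the type, single-variable statement, $n$-ary by iteration) the replacement $a' \mapsto z'$ must be performed by a separate invocation of the type-$T'$ congruence on a subproof that has already been, or still must be, transformed by the set-level recursion. You thus re-run a congruence transformation over entire (transformed) subproofs at every problematic $\exists$ instance; such instances nest along branches, the blowups compound multiplicatively with the nesting depth, and the claim that ``each recursive step grows the proof by a constant factor depending on the type depth'' does not survive this. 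Since the polynomial-time bound is the whole point of the lemma, this is a substantive gap, not a presentational one.

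Two smaller points. First, ``produce a corresponding witness $z' \in u$ with $z \equiv_{T'} z'$ \ldots\ when no such witness exists'' is a semantic case distinction that a syntactic proof transformation cannot make; the actual mechanism is to add a fresh $z' \in u$ to the context unconditionally, carry the side formulas $\cE_{z,z'}$ (turned into $\neg(z \equiv_{T'} z')$ by Lemma~\ref{lem:cEder}) through the transformed subproof, and then $\forall$-quantify $z'$ over $u$ and $\exists$-quantify $z$ over $t$ so that the side formula becomes exactly $\neg(t \subseteq u)$. Your last sentence gestures at this, but it is the construction itself, not a fallback branch. Second, even granting the single-variable case in polynomial time, obtaining the $n$-ary rule by $n$ successive applications composes $n$ polynomial blowups, which is polynomial only if each application is essentially size-preserving -- something you would have to prove separately; the paper's simultaneous formulation sidesteps this issue as well.
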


\begin{proof}[Proof of Lemma~\ref{lem:gencongruencenary}]
We proceed by induction over the proof of $\Theta[\vv t/\vv x, \vv t/\vv y] \seq \Delta[\vv t/\vv x,\vv u/\vv y]$,
\begin{itemize}
\item If the last rule applied is the $=$ rule, i.e. we have
  \[ \dfrac{}{\Theta[\vv t/\vv x, \vv t/\vv y] \vdash
    a[\vv t/\vv x, \vv t/\vv y] =_\ur b[\vv t/\vv x, \vv t/\vv y],
    \Delta[\vv t/\vv x, \vv t/ \vv y]}\]
    and $a[\vv t/\vv x, \vv t/\vv y] = b[\vv t/\vv x, \vv t/\vv y]$, with $a,b$ variables.
    Now if $a$ and $b$ are equal, or if they belong both to either $\vv x$ or $\vv y$,
    it is easy to derive the desired conclusion with a single application of the $=$ rule.
    Otherwise, assume $a = x_i$ and $b = y_j$ (the symmetric case is handled similarly).
    In such a case, we have that $\cE_{\vv t, \vv u}$ contains $t_i \neq_\ur t_j$. So the desired
    proof
  \[ \dfrac{}{\Theta[\vv t/\vv x, \vv t/\vv y] \vdash
    t_i =_\ur u_j,
    \Delta[\vv t/\vv x, \vv t/ \vv y], \cE_{\vv t, \vv u}, t_i \neq_\ur t_j}\]
    follows from the polytime admissibility of the axiom rule.
\item Suppose the last rule applied is the $\top$ rule:
  \[ \dfrac{}{\Theta[\vv t/\vv x, \vv t/\vv y] \vdash \top, \Delta[\vv t/\vv x, \vv t/ \vv y]}\]
    Then we do not need to  apply the induction hypothesis. Instead we can immediately
apply  the $\top$ rule to obtain
    \[ \dfrac{}{\Theta[\vv t/\vv x, \vv u/\vv y] \vdash \top, \Delta[\vv t/\vv x, \vv u/ \vv y], \cE_{\vv t, \vv u}}\]

\item If the last rule applied is the $\wedge$ rule
\begin{small}
  \[ \dfrac{
    \Theta[\vv t/\vv x, \vv t/\vv y] \vdash \phi_1[\vv t/\vv x, \vv t/\vv y], \Delta[\vv t/\vv x, \vv t/ \vv y]
\qquad
    \Theta[\vv t/\vv x, \vv t/\vv y] \vdash \phi_2[\vv t/\vv x, \vv t/\vv y], \Delta[\vv t/\vv x, \vv t/ \vv y]
  }{
    \Theta[\vv t/\vv x, \vv t/\vv y] \vdash (\phi_1 \wedge \phi_2)[\vv t/\vv x, \vv t/\vv y], \Delta[\vv t/\vv x, \vv t/ \vv y]
}
    \]
\end{small}
    then the induction hypothesis gives us proofs of
    \[    \Theta[\vv t/\vv x, \vv u/\vv y] \vdash \phi_i[\vv t/\vv x, \vv u/\vv y], \Delta[\vv t/\vv x, \vv u/ \vv y], \cE_{\vv t, \vv u} \]
    for both $i \in \{1,2\}$. So we can apply the $\wedge$ rule to conclude that we have
    \[    \Theta[\vv t/\vv x, \vv u/\vv y] \vdash (\phi_1 \wedge \phi_2)[\vv t/\vv x, \vv u/\vv y], \Delta[\vv t/\vv x, \vv u/ \vv y], \cE_{\vv t, \vv u} \]
    as desired.
  \item The cases of the rules $\vee$,$\forall$ and $\times_\eta$ are equally straightforward and left to the reader.
  \item Now, let us handle the case of the $\exists$ rule. 
    \[ \dfrac{\Theta, t \in u \vdash \phi[t/x], \Delta^\pospolaritysuper}{\Theta, t \in u \vdash \exists x \in u. \; \phi, \Delta^\pospolaritysuper }\]
    So assume that $z$ is fresh wrt $\vv x, \vv y, \vv t, \vv u, a, b$ and
    that the last step of the proof is
\begin{tiny}
  \[ \dfrac{
    \Theta[\vv t/\vv x, \vv t/\vv y], a[\vv t/\vv x, \vv t/\vv y] \in b[\vv t/\vv x, \vv t/\vv y]
    \vdash \phi[a/z][\vv t/\vv x, \vv t/\vv y],
    \exists z \in c[\vv t/\vv x, \vv t/\vv y].\; \phi[\vv t/\vv x, \vv t/\vv y],
    \Delta[\vv t/\vv x, \vv t/ \vv y]
  }{
    \Theta[\vv t/\vv x, \vv t/\vv y], a[\vv t/\vv x, \vv t/\vv y] \in b[\vv t/\vv x, \vv t/\vv y]
    \vdash
    \exists z \in c[\vv t/\vv x, \vv t/\vv y].\; \phi[\vv t/\vv x, \vv t/\vv y],
    \Delta[\vv t/\vv x, \vv t/ \vv y]
    }
    \]
\end{tiny}
    with $b[\vv t/\vv x, \vv t/\vv y] = c[\vv t/\vv x, \vv t/\vv y]$.
    Set $a' = a[\vv t/\vv x, \vv u/\vv y]$, $b' = b[\vv t/\vv x, \vv u/\vv y]$, $c'=c[\vv t/\vv x, \vv u/\vv y]$.
    We have three subcases:
    \begin{itemize}
    \item If we have that $b' = c'$,
    using the induction hypothesis, we have a proof of
    \[    \Theta[\vv t/\vv x, \vv u/\vv y], a' \in b'
    \vdash \phi[a/z][\vv t/\vv x, \vv u/\vv y],
    \exists z \in c'.\; \phi[\vv u/\vv x, \vv u/\vv y],
    \Delta[\vv t/\vv x, \vv u/ \vv y], \cE_{\vv t, \vv u}\]
    we can simply apply an $\exists$ rule to that proof and we are done.
  \item Otherwise, if we have that $b' = t_i$ and $c' = u_j$ for some $i,j \le n$.
    In that case,
    extending the tuples $\vv t$ and $\vv u$ with $a'$ and a fresh variable
        $z'$ (the substitutions under consideration would be,
        we can apply the induction hypothesis to obtain a proof of
\begin{small}
\begin{align*}    \Theta[\vv t/\vv x, \vv u/\vv y, a'/z], a' \in t_i, z' \in u_j
    \vdash \\
\phi[\vv t/\vv x, \vv u/\vv y,z'/z],
    \exists z \in u_j.\; \phi[\vv u/\vv x, \vv u/\vv y],
        \Delta[\vv t/\vv x, \vv u/ \vv y], \cE_{\vv t, \vv u}, \cE_{a',z'}
\end{align*}
\end{small}
      Note that $\cE_{\vv t, \vv u}$
      contains an occurrence of $\neg (t_i \subseteq u_j)$, which expands to
        $\exists z \in t_i. \forall z' \in u_j. \neg (z \equiv z')$, so we can construct the
        partial derivation
\begin{tiny}
    \[
      \text{
        \AXC{$
    \Theta[\vv t/\vv x, \vv u/\vv y, a'/z], a' \in t_i, z' \in u_j
    \vdash \phi[\vv t/\vv x, \vv u/\vv y,z'/z'],
    \exists z \in u_j.\; \phi[\vv u/\vv x, \vv u/\vv y],
        \Delta[\vv t/\vv x, \vv u/ \vv y], \cE_{\vv t, \vv u}, \cE_{a',z'}$
        }
        \myLeftLabel{$\exists$}
        \UIC{$
    \Theta[\vv t/\vv x, \vv u/\vv y], a' \in t_i, z' \in u_j
    \vdash \exists z \in u_j. \phi[\vv t/\vv x, \vv u/\vv y],
        \Delta[\vv t/\vv x, \vv u/ \vv y], \cE_{\vv t, \vv u}, \cE_{a',z'}$
        }
        \myLeftLabel{Lemma~\ref{lem:cEder}}
        \UIC{$
    \Theta[\vv t/\vv x, \vv u/\vv y, a'/z], a' \in t_i, z' \in u_j
    \vdash 
    \exists z \in u_j.\; \phi[\vv u/\vv x, \vv u/\vv y],
        \Delta[\vv t/\vv x, \vv u/ \vv y], \cE_{\vv t, \vv u}, a' \equiv z'$}
        \myLeftLabel{$\forall$}
        \UIC{$
    \Theta[\vv t/\vv x, \vv u/\vv y, a'/z], a' \in t_i
    \vdash
    \exists z \in c'.\; \phi[\vv u/\vv x, \vv u/\vv y],
        \Delta[\vv t/\vv x, \vv u/ \vv y], \cE_{\vv t, \vv u}, \forall z' \in u_j. a' \equiv z'$}
        \myLeftLabel{$\exists$}
        \UIC{$
    \Theta[\vv t/\vv x, \vv u/\vv y, a'/z], a' \in t_i
    \vdash
    \exists z \in c'.\; \phi[\vv t/\vv x, \vv u/\vv y],
        \Delta[\vv t/\vv x, \vv u/ \vv y], \cE_{\vv t, \vv u}$}
      \DisplayProof
      }
    \]
\end{tiny}
    whose conclusion matches what we want.
    \item Otherwise, we are in a similar case where $b' = u_j$ and $c' = t_i$.
      We proceed similarly, except that we use the formula $\neg (u_j \subseteq t_i)$
        of $\cE_{t_i,u_j}$ instead of $\neg(t_i \subseteq u_j)$.
    \item For the rule $\times_\beta$, which has general shape
      \[\dfrac{\Theta[z_i/z] \vdash \Delta[z_i/z]}{\Theta[\pi_i(\tuple{z_1,z_2})/z] \vdash \Delta[\pi_i(\tuple{z_1,z_2})/z]}\]
      we can assume, without loss of generality, that $z$ occurs only once in $\Theta, \Delta$.
      Let us only sketch the case where $z$ occurs in a formula $\phi$ and the rule has shape
      \[\dfrac{\Theta \vdash \phi[z_i/z], \Delta
      }{\Theta \vdash \phi[\pi_i(\tuple{z_1,z_2})/z], \Delta}\]
      We have that $\phi[\pi_i(\tuple{z_1,z_2})/z)]$ is also of the shape $\psi[\vv t/\vv x, \vv u/\vv y]$ in our situation.
      We can also assume without loss of generality that each variable in $\vv x$ and $\vv y$ occur each a single time in $\Theta, \Delta$.
      Now if we have a couple of situations:
      \begin{itemize}
        \item If the occurrence of $z$ do not interfere with the substitution $[\vv t/\vv x, \vv u/\vv y]$, i.e., there is a formula $\theta$ such that
          \[\phi[\pi_i(\tuple{z_1,z_2})/z] = \psi[\vv t/\vv x, \vv u] = \theta[\vv t/\vv x, \vv u/\vv y, \tuple{z_1,z_2}/z]\]
            we can simply apply the induction hypothesis on the subproof and conclude with one application of $\times_\eta$.
          \item If we have that $z$ clashes with a variable of $\vv x, \vv y$, say $x_j$, but that $t_j = v[\pi_i(\tuple{z_1,z_2}/x_j)]$ for some term $v$.
            Then we can apply the induction hypothesis with the matching tuples of terms $\vv t, x_i$ and $\vv u, t_i$ and conclude by applying the $\beta$ rule.
          \item Otherwise, the occurrence of $z$ does interfere with the substitution in such a way that we have,
            say $t_j = \tuple{z_1, z_2}$. Then we can apply the induction hypothesis on the subproof with the matching tuples of terms $\vv x, z_i$ and $\vv y, \pi_i(u_j)$
            and conclude by applying the $\beta$ rule.\qedhere
      \end{itemize}
    \end{itemize}
\end{itemize}
\end{proof}

One easy consequence of the above is Lemma \ref{lem:gencongruence}:

\begin{proof}[Proof of Lemma~\ref{lem:gencongruence}]
Combine Lemma~\ref{lem:gencongruencenary} and Lemma~\ref{lem:cEder}.
\end{proof}

Another consequence is the following corollary, which will be used later in this section:

\begin{restatable}{cor}{memctxadm}
  \label{cor:memctxadm}
The following rule is polytime admissible:
\[
  \dfrac{\Theta, t \in u \seq \Delta}{\Theta \seq \neg t\memmac u, \Delta}
\]
\end{restatable}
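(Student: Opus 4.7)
The plan is to derive the admissibility directly by combining Lemma~\ref{lem:gencongruence} with one application of the $\forall$ rule, after unfolding the macro $\memmac$. Recall from the preliminaries that $t \memmac u$ abbreviates $\exists z \in u.\, t \equiv z$, so the target succedent formula $\neg (t \memmac u)$ is (up to De Morgan unfolding of the defined negation) exactly $\forall z \in u.\, \neg (t \equiv z)$. Thus the conclusion to derive reduces, via one backwards application of the $\forall$ rule, to deriving
\[
\Theta, z \in u \seq \neg (t \equiv z),\; \Delta
\]
for a variable $z$ chosen fresh with respect to $\Theta$, $t$, $u$, and $\Delta$.

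First I would pick such a fresh $z$. Then I would view the given premise proof of $\Theta, t \in u \seq \Delta$ as the result of substituting $t$ for $z$ in the template $\Theta,\, z \in u \seq \Delta$ (this is well-defined precisely because $z$ is fresh, so no unwanted occurrences are captured). That is, writing $\Theta' \eqdef \Theta \cup \{z \in u\}$, we have $\Theta'[t/z] = \Theta, t \in u$, and the premise is a proof of $\Theta'[t/z] \seq \Delta[t/z]$ (where $\Delta[t/z] = \Delta$ since $z \notin \FV(\Delta)$). Applying Lemma~\ref{lem:gencongruence} with its ``$t$'' and ``$u$'' instantiated as our $t$ and $z$ respectively, and with $y$ marking the single occurrence of $z$ in the membership atom (no occurrences marked by $x$), produces in polynomial time a proof of
\[
\Theta, z \in u \seq \neg (t \equiv z),\; \Delta.
\]
Finally, one application of the $\forall$ rule, permissible because $z$ is fresh, yields
\[
\Theta \seq \forall z \in u.\, \neg (t \equiv z),\; \Delta,
\]
which is the desired conclusion $\Theta \seq \neg\, t \memmac u,\; \Delta$ after folding the macro back up.

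The procedure adds only $O(1)$ proof steps on top of a single invocation of Lemma~\ref{lem:gencongruence}, which is itself polytime admissible, so the overall cost is polynomial in the size of the input proof. There is no real obstacle here: the only point requiring minor care is ensuring that $z$ is chosen genuinely fresh (so that the substitution in Lemma~\ref{lem:gencongruence} has the intended effect and so that the side condition of the $\forall$ rule is satisfied), and that the unfolding of $\neg (t \memmac u)$ matches the shape delivered by the generalized congruence rule, which it does by design of the macro.
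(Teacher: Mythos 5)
Your proposal is correct and is essentially the paper's own proof: unfold $\neg(t \memmac u)$ to a bounded universal, use Lemma~\ref{lem:gencongruence} (reading the premise as the instance of a template with a fresh variable in the membership atom) to obtain $\Theta, z \in u \seq \neg(t \equiv z), \Delta$, and close with one application of the $\forall$ rule. The freshness bookkeeping you spell out is exactly what makes the paper's two-step derivation legitimate, so there is nothing to add.
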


\begin{proof}
Recall that $t \memmac u$ expands to $\exists x \in u. \; x \equiv t$,
so that $\neg t\memmac u$ is $\forall x \in u. \; \neg (x \equiv t)$.
So we have

\smallskip
{\centering
      \AXC{$\Theta, t \in u \vdash \Delta$}
      \myLeftLabel{Lemma~\ref{lem:gencongruence}}
      \UIC{$\Theta, x \in u \vdash \neg (x \equiv t), \Delta$}
      \myLeftLabel{$\forall$}
      \UIC{$\Theta \vdash \neg t \memmac u, \Delta$}
\DisplayProof\par\vspace{-10pt}\qedhere}
\end{proof}

\subsection{Proof of  Lemma \ref{lem:effnestedproof-down}}

We now recall the claim of admissibility concerning rules
for ``moving down in an equivalence''. These will make use of
the notation for quantifying on a path below an object, defined
in the body in  Definition \ref{def:subtypeocc}.

\effnestedproofdown*

\begin{proof}
 First, let us consider the simpler case where $p = \membersub$.
We proceed by induction over the input proof of
  $\Theta \vdash \Delta, \exists r' \inq o'. \; r \equiv_{\sett(T')} r'$
and make a case distinction
  according to which rule was applied last. All cases are straightforward, save for one: when a $\exists$ rule is applied on the formula  $\exists r'
  \inq o'. \; r \equiv_{\sett(T')} r'$.
Let us only detail that one.

In that case, the last step has shape
  \[
    \dfrac{
      \Theta, w \in o' \vdash \Delta^\pospolaritysuper, r \equiv_{\sett(T')} w, \exists r' \inq o'. \; r \equiv_{\sett(T')} r'
    }{
\Theta, w \in o' \vdash \Delta^\pospolaritysuper, \exists r' \inq o'. \; r \equiv_{\sett(T')} r'
    }\]

Now observe that $r \equiv_{\sett(T')} w$ is not $\pospolarity$, since if you unfold the macros it begins
with a universal. Thus the final rule that is applied in the proof witnessing the hypothesis sequent
 cannot be the $\exists$ rule.
We can thus infer that the  expanded 
proof tree ends with:
  \[\small
    \text{
      \AXC{\hspace{-6pt}
        $\Theta, w \in o', z \in r \vdash \Delta^\pospolaritysuper, z \memmac w, \exists r' \inq o'. \; r \equiv_{\sett(T')} r'$\hspace{-9pt}
      }
      \UIC{\hspace{-6pt}
        $\Theta, w \in o' \vdash \Delta^\pospolaritysuper, r \subseteq w, \exists r' \inq o'. \; r \equiv_{\sett(T')} r'$\hspace{-6pt}
      }
      \AXC{
        $\vdots$
      }
      \UIC{\hspace{-6pt}
        $\Theta, w \in o' \vdash \Delta^\pospolaritysuper, w \subseteq r, \exists r' \inq o'. \; r \equiv_{\sett(T')} r'$\hspace{-9pt}
      }
      \BIC{
        $\Theta, w \in o' \vdash \Delta^\pospolaritysuper, r \equiv_{\sett(T')} w, \exists r' \inq o'. \; r \equiv_{\sett(T')} r'$
      }
      \UIC{
$\Theta, w \in o' \vdash \Delta^\pospolaritysuper, \exists r' \inq o'. \; r \equiv_{\sett(T')} r'$
      }
      \DisplayProof
    }
\]
In particular, we have a strictly smaller subproof of 
\[\Theta, w \in o', z \in r \vdash \Delta^\pospolaritysuper, z \memmac w, \exists r' \inq o'. \; r \equiv_{\sett(T')} r'\]
Recall that $\memmac$ is a macro: see Definition \ref{def:macros}.
Applying the induction hypothesis, we get a proof of
  \[\Theta, w \in o', z \in r \vdash \Delta^\pospolaritysuper, z \memmac w, \exists z' \inq_{\membersub, \membersub} o'. \; z \equiv_{\sett(T')} z'\]
Applying the $\exists$-rule
  we can then obtain a proof with conclusion
  \[
    \Theta, w \in o', z \in r \vdash \Delta^\pospolaritysuper, \exists r \inq o'. \; z \memmac r, \exists z' \inq_{\membersub, \membersub} o'. \; z \equiv_{\sett(T')} z'\]
  which concludes our argument, since $\exists r \inq_p o'. \; z \memmac r$ and 
$\exists z' \inq_{\membersub, \membersub} o'. \; z \equiv_{\sett(T')} z'$ are syntactically the
same.

For the more complex case where $p$ is non-trivial, we can conduct a similar
argument, at the cost of making  the induction hypothesis more elaborate. We would
prove that the following rule is polytime admissible
\[
\dfrac{\Theta, \Theta_1, \ldots, \Theta_n \vdash \Delta, \phi_1, \ldots, \phi_n}{\Theta, \Theta_1, \ldots, \Theta_n \vdash \Delta, \exists z' \in_{\membersub, p} o'.
  \; z \equiv_{\sett(T')} z'}\]

Here $n$ is any natural number and for each $i \leq n$,  we have a decomposition of $p$ as a concatenation $p_i,p'_i$, where $p'_i$ is non-empty and multiple $p_i$ can be the same.
From  this decomposition we define $\phi_i = \exists z' \inq_{p'_i} o_i. \; z \equiv_{\sett(T')} z'$ (with $o_i = o'$ iff $p_i = \varepsilon$) and
$\Theta_i$ is a context witnessing that $o_i \inq_{p_i} o'$. Formally,
$\Theta_i = \Theta(p_i, o_i, o')$ with $\Theta(\varepsilon, x, y) = \cdot$, $\Theta(\membersub, x, y) = x \in y$,
$\Theta((p,\membersub), x, y) = \Theta(p, x, z), z \in y$ and $\Theta((p,j), x, y) = \Theta(p, x, \pi_j(y))$ when $j \in \{1,2\}$.
The induction over a proof of the premise can be carried out in an analogous way to show admissibility, and the only non-trivial case, when a rule interacts
with one of the $\phi_i$ that has $p'_i = \membersub$, is handled in the same way as the non-trivial case for $p = \membersub$.
\end{proof}

\subsection{Proof of  Lemma \ref{lem:equivsettoequivalence}}

\equivsettoequivalence*

\begin{proof}
Much like what happened in the proof of~\ref{lem:effnestedproof-down}, the proof
can be done by induction over the shape of the input derivation when $p = \membersub$.
When this is not the case, we can generalize our inductive hypothesis in a similar
way and use the same ideas. So for the sake of legibility, let us focus on
the case where $p = \membersub$ and go through the induction, making a case distinction according
to what was the last rule applied.

All cases are trivial, except for the case of the rule $\exists$, where
  $\exists r' \inq o'. r \equiv_{\sett(T')} r'$ is the principal formula. So let us focus on that one.

In that case, the proof necessarily has shape
  \[\scalebox{0.9}{\text{
\AXC{$\Theta, y \in w \vdash \Delta^\pospolaritysuper, y \memmac r, \exists r' \inq o'. \; r \equiv_{\sett(T')} r'$}
\myLeftLabel{$\forall$}
\UIC{$\Theta \vdash \Delta^\pospolaritysuper, w \subseteq_{T'} r, \exists r' \inq o'. \; r \equiv_{\sett(T')} r'$}
\AXC{$\Theta, x \in r \vdash \Delta^\pospolaritysuper, x \memmac w, \exists r' \inq o'. \; r \equiv_{\sett(T')} r'$}
\myLeftLabel{$\forall$}
\UIC{$\Theta \vdash \Delta^\pospolaritysuper, r \subseteq_{T'} w, \exists r' \inq o'. \; r \equiv_{\sett(T')} r'$}
  \myLeftLabel{$\wedge$}
\BIC{$\Theta \vdash \Delta^\pospolaritysuper, r \equiv_{\sett(T')} w, \exists r' \inq o'. \; r \equiv_{\sett(T')} r'$}
  \myLeftLabel{$\exists$}
\UIC{$\Theta \vdash \Delta^\pospolaritysuper, \exists r' \inq o'. \; r \equiv_{\sett(T')} r'$}
\DisplayProof
  }}\]
Applying the induction hypothesis to the leaves of that proof, we obtain two proofs
of
\begin{align*}
  \Theta, y \in w \vdash \Delta^\pospolaritysuper, y \memmac r, \exists r' \inq o'. \forall z \in a, \; z \memmac r \equi z \memmac r' 
\end{align*}

  and
\begin{align*}
  \Theta, x \in r \vdash \Delta^\pospolaritysuper, x \memmac w, \exists r' \inq o'. \forall z \in a, \; z \memmac r \equi z \memmac r'
\end{align*}
  so we can conclude using the admissibility of weakening and Corollary~\ref{cor:memctxadm} twice and replaying the $\wedge/\forall/\exists$ steps in the
  appropriate order (a branch of the proof derivation is elided below for lack of horizontal space):
\begin{align*}
  \text{
  \raisebox{52pt}{
  \begin{minipage}{13cm}
  \AXC{
    $\Theta, y \in w \vdash \Delta^\pospolaritysuper, y \memmac r, \exists r' \inq o'. \forall z \in a, \; z \memmac r \equi z \memmac r'$}
  \myLeftLabel{Corollary~\ref{cor:memctxadm}}
  \UIC{
    $\Theta \vdash \Delta^\pospolaritysuper, \neg (y \memmac w), y \memmac r, \exists r' \inq o'. \forall z \in a, \; z \memmac r \equi z \memmac r'$}
  \myLeftLabel{$\vee$}
  \UIC{$
    \Theta \vdash \Delta^\pospolaritysuper, {y \memmac w} \imp {y \memmac r}, \exists r' \inq o'. \forall z \in a, \; z \memmac r \equi z \memmac r'$}
  \AXC{ $\vdots$ }
   \BIC{$
    \Theta \vdash \Delta^\pospolaritysuper, {y \memmac r} \equi {y \memmac w}, \exists r' \inq o'. \forall z \in a, \; z \memmac r \equi z \memmac r'$} 
  \myLeftLabel{Lemma~\ref{lem:wkadm}}
  \UIC{$
    \Theta, y \in a \vdash \Delta^\pospolaritysuper, {y \memmac r} \equi {y \memmac w}, \exists r' \inq o'. \forall z \in a, \; z \memmac r \equi z \memmac r'$} 
  \myLeftLabel{$\forall$}
  \UIC{$
    \Theta \vdash \Delta^\pospolaritysuper, \forall z \in a. \; {z \memmac r} \equi {z \memmac w}, \exists r' \inq o'. \forall z \in a, \; z \memmac r \equi z \memmac r'$} 
  \myLeftLabel{$\exists$}
  \UIC{$
    \Theta \vdash \Delta^\pospolaritysuper, \exists r' \inq o'. \forall z \in a, \; z \memmac r \equi z \memmac r'$} 
\DisplayProof
\end{minipage}}
  }\tag*{\qedhere}
\end{align*}
\end{proof}

\section{Proof of the main theorem for non-set types}
\label{app:mainthm}
  
Recall again our main result:

\thmmainset*

 In the body of the paper, we gave a proof for the case where the type of the defined object is $\sett(T)$ for any $T$.
We now discuss the remaining cases: the base case and the inductive
case for product types.

So assume we are given an implicit
  definition $\varphi(\vv i, \vv a, o)$ and a {\focused} witness, and proceed by induction over the type $o$:
  \begin{itemize}
    \item If $o$ has type $\unit$, then, since there is only one inhabitant in
      type $\unit$, then we can take our explicit definition to be the corresponding $\nrc$ expression $\unitobj$.
    \item If $o$ has type $\ur$, then using interpolation on the entailment $\varphi(\vv i, \vv a, o) \imp (\varphi(\vv i, \vv a', o') \imp o =_\ur o')$,
      we obtain $\theta(\vv i, o)$ with $\varphi(\vv i, \vv a, o) \imp \theta(\vv i, \vv a, o)$ and $\theta(\vv i, o) \wedge \varphi(\vv i, \vv a, o') \imp
      o =_\ur o'$. 
      But then we know that $\phi$ implies
$o$ is a subobject of $\vv i$: otherwise we could find a model that contradicts
the entailment. 
There is an $\nrc$ definition $A(\vv i)$ that collects all of the $\ur$-elements 
lying beneath $\vv i$. 
      We can then take $E(\vv i) = \nrcget_T(\{ x \in A(\vv i) \mid \theta(\vv i, x)\})$ as our
      $\nrcwget$ definition of $o$.  The correctness of $E$ follows
from the properties of $\theta$ above.
    \item If $o$ has type $T_1 \times T_2$, recalling the definition of $\equiv_{T_1 \times T_2}$, we have a derivation of
      \[ \varphi(\vv i, \vv a, o) \wedge \varphi(\vv i, \vv a, o') \vdash
      \pi_1(o) \equiv_{T_1} \pi_1(o') \wedge
      \pi_2(o) \equiv_{T_2} \pi_2(o')\] 
      By Lemma~\ref{lem:wedgeprojadm}, we have proofs of
      \[
         \varphi(\vv i, \vv a, o) \wedge \varphi(\vv i, \vv a, o') \vdash
        \pi_i(o) \equiv_{T_i} \pi_i(o') \]
      for $i \in \{1,2\}$.
      Take $o_1$ and $o_2$ to be fresh variables of types $T_1$ and $T_2$. Take $\tilde\phi(\vv i, \vv a, o_1, o_2)$ to be $\phi(\vv i, \vv a, \tuple{o_1,
      o_2})$. By substitutivity (the admissible rule given
by Lemma~\ref{lem:substitutivity}) and applying the $\times_\beta$ rule, we have {\focused} proofs of
      \[ \varphi(\vv i, \vv a, \tuple{o_1,o_2}) \wedge \varphi(\vv i, \vv a, \tuple{o_1', o_2'}) \vdash
      o_i \equiv_{T_i} o'_i 
      \]
      We can apply our inductive hypothesis to obtain a definition $E\IH_i(\vv i)$ for both $i \in \{1,2\}$.
      We can then take our explicit definition to be $\tuple{E\IH_1(\vv i),E\IH_2(\vv i)}$.
  \end{itemize}

\section{Variant of Parameter Collection Theorem,\texorpdfstring{\\}{} 
Theorem \ref{thm:mainflatcasebounded}, for parameterized definability
in first-order logic}
\label{app:paramfo}

Our paper has focused on the setting of nested relations, phrasing our results
in terms of the language $\nrc$. We indicated in the conclusion of the paper that
there is a variant of the $\nrc$ parameter collection theorem, Theorem
\ref{thm:mainflatcasebounded}, for the broader context of first-order logic.
In fact, this  first-order version of the result provided the intuition
for the theorem.
In this section we present this variant.

We consider first-order logic with equality and without function symbols,
which also excludes nullary function symbols, that is, individual constants,
whose role is just taken by free individual variables. Specifically, we consider
first-order formulas with the following syntax
\[ \varphi, \psi ~~\bnfeq ~~ P(\vv x) \bnfalt \neg P(\vv x) \bnfalt x = y \bnfalt x \neq y
\bnfalt \top \bnfalt \bot
\bnfalt \varphi \wedge \psi \bnfalt \varphi \vee \psi
\bnfalt \forall x\, \varphi \bnfalt \exists x\, \varphi.
\]
Amongst the formulas given in the grammar above, those of the shape
$P(\vv x), \neg P(\vv x), x = y$ and $x \neq y$ are called \emph{literals}.
Literals come with a polarity: those of the shape $P(\vv x)$ and $x = y$ are
positive while the other, of the shape $\neg P(\vv x)$ or $x \neq y$ are
negative.

On top of this, we give some ``syntactic sugar''. We define $\neg \varphi$ by
induction over $\varphi$, dualizing every connective, including the
quantifiers, and removing doubled negation over literals. We define implication $\varphi
\imp \psi$ as an abbreviation of $\neg \varphi \vee \psi$ and bi-implication
$\varphi \equi \psi$ as an abbreviation of $(\varphi \imp \psi) \wedge (\psi
\imp \varphi)$. The set of free variables occurring in a formula~$\varphi$ is
denoted by $\FV(\varphi)$ and the set of predicates occurring in~$\varphi$ by
$\pred(\varphi)$.

Figure~\ref{fig:rhflk} shows our proof system for first-order logic. 
It is identical to a system from the prior literature.\footnote{
 G3c+\textsc{Ref}+\textsc{Repl} \cite{negri:plato:structural:2001,bpt},
in the one-sided form of GS3, discussed in Chapter~3 of \cite{bpt}, which
reduces the number of rules.} %
Like the \focused proof system we used in the body of the paper for $\deltazero$ formulas,
it is a $1$-sided calculus.
The formulas other than $\Gamma$ in the premise are the \name{active
  formulas} of the rule, while the \name{principal formulas} are the other
formulas in its conclusion. The complementary principal formulas in
\textsc{Ax} have to be literals. The replacement of symbols induced by
equality with \textsc{Repl} is only performed on negative literals.
\begin{figure}[h]
\
\input{rhflk_nosorts.tex}
\caption{One-sided sequent calculus for first-order logic with equality.}
\label{fig:rhflk}
\end{figure}

As in the body of the paper,   a proof tree or derivation is a tree whose nodes are
labelled with sequents, such that the labels of the children of a given node
and that of the node itself are the premises and conclusion, resp., of an
instance of a rule from Figure~\ref{fig:rhflk}. The \emph{conclusion} of a
proof tree is the sequent that labels its root. The proof system is closed
under cut, weakening and contraction. Closure under contraction in particular
makes it suited as basis for ``root-first'' proof search. Read in this
``bottom-up'' way, the $\exists$ rule states that a disjunction with an
existentially quantified formula can be proven if the extension of the
\pagebreak
disjunction by a copy of the formula where the formerly quantified
variable~$x$ is instantiated with an arbitrary variable~$t$ can be proven. The
existentially quantified formula is retained in the premise and may be used to
add further instances by applying~$\exists$ again in the course of the proof.

Soundness of the rules is straightforward. For example the $\exists$ rule
could be read as stating that if we deduce a disjunction in which one disjunct
is a formula $\phi$ with $t$ in it, then we can deduce the same disjunction
but with some occurrences of $t$ replaced in that disjunct with an
existentially quantified variable. Completeness of the proof system can also
be proven by a standard Henkin-style construction: indeed, since this
is really ordinary first-order logic, there are proofs in the literature
for systems that are very similar to this one~\cite{bpt,negri:plato:equality:1998}. 

Like our higher-level system in the body of the paper
(Figure~\ref{fig:dzcalc-2sided}) the first-order system in Figure~\ref{fig:rhflk}
does not impose any special constraints on the shape of the proof. We adapt
the extra conditions on contexts of the \focused system for $\deltazero$
formulas (Figure~\ref{fig:dzcalc-1sided}) to our first-order system. We
characterize a proof in the system of Figure~\ref{fig:rhflk} as
\name{\folfocused} if no application of \rulename{Ax}, $\top$, $\exists$,
\rulename{Ref}, \rulename{Repl} contains in its conclusion a formula whose
top-level connective is $\lor$, $\land$ or $\forall$.

The \folfocused property may be either incorporated directly into a
``root-first'' proof procedure by constraining rule applications or it may be
ensured by converting an arbitrary given proof tree to a \folfocused proof
tree with the same ultimate consequence.
This conversion is achieved by a straightforward adaption of the method for
the $\deltazero$ calculus from Appendix~\ref{app:focused},
Figure~\ref{fig:proc:normalize}. Only steps~(1) and~(2) of the method are
relevant here, since step~(3) is specifically for the pair terms of
$\deltazero$ formulas. In step~(2), instances of $\lor$, $\land$ or $\forall$
are permuted down over $\exists$, \rulename{Ref} and \rulename{Repl} according
to the generic permutation schemas of Lemma~\ref{lemma:permute:generic}. The
left sides of $\vdash$ in these schemas can be ignored since they represent
the $\in$-contexts for $\deltazero$ formulas. Finally, derivations of sequents
containing $\top$ are replaced by applications of rule~$\top$ and derivations
of sequents containing complementary literals $\phi, \lnot \phi$ by
applications of rule~$\rulename{Ax}$. The conversion, however, may increase
the proof size exponentially as discussed in
Section~\ref{sec:normalization:complexity}.

We now discuss our generalization of the $\nrc$ Parameter Collection Theorem from the body of the paper to this 
first-order setting.
The concept of explicit definition can
be generalized to \name{definition up to parameters and disjunction}: A family
of formulas $\chi_i(\vv z, \vv y, \vv r)$, $1 \leq i \leq n$, provides an
\name{explicit definition up to parameters and disjunction} of a formula
$\lambda(\vv z, \vv l)$ relative to a formula $\phi$ if
\begin{equation}
  \label{eq-def-pd}
\phi \entails \bigvee_{i=1}^n
\exists \vv y \forall \vv z\, (\lambda(\vv z, \vv l) \equi \chi_i(\vv z, \vv
y, \vv r)).\tag{$\star$}
\end{equation}
The entailment (\ref{eq-def-pd}) is considered with restrictions on the
predicates and variables permitted to occur in the $\chi_i$. In the simplest
case, $\lambda(\vv z,\vv l)$ is a positive literal $p(\vv z)$ with a predicate
that is permitted in $\phi$ but not in the $\chi_i$. The predicate~$p$ is then
said to be \name{explicitly definable up to parameters and disjunction} with
respect to $\phi$ \cite{ck}. 

The disjunction over
a finite family of formulas $\chi_i$ can be consolidated into a single
quantified biconditional as long as the domain 
has size at least $2$ in every model of $\phi$. 
Notice that if $\phi$ has only finite models, then by the
compactness theorem of first-order logic, the size of models must be bounded.
In such cases every formula $\lambda$ is definable 
with sufficiently many parameters.

We can now state our analog of the Parameter Collection Theorem,  Theorem~\ref{thm:mainflatcasebounded}.
\pagebreak
\begin{restatable}{thm}{mainflatcase} \label{thm:mainflatcase}
  Let $\phi$, $\psi$, $\lambda(\vv z, \vv l)$, and $\rho(\vv
  z, \vv y, \vv r)$ be first-order formulas such that
  \[\phi \land \psi
  \entails \exists \vv y \forall \vv z\, (\lambda(\vv z, \vv l) \equi \rho(\vv
  z, \vv y, \vv r)).\] Then there exist first-order formulas $\chi_i(\vv z, {\vv v}_i,
  {\vv c}_i)$, $1 \leq i \leq n$, such that
  \begin{enumerate}
  \item $\phi \land \psi \entails \bigvee_{i=1}^{n} \exists {\vv v}_i \forall
    \vv z (\lambda(\vv z, \vv l) \equi \chi_i(\vv z, {\vv v}_i, {\vv c}_i))$,
  \item ${\vv c}_i \subseteq (\FV(\phi) \cup \vv l) \cap (\FV(\psi) \cup \vv
    r)$,
  \item $\pred(\chi_i) \subseteq (\pred(\phi) \cup \pred(\lambda)) \cap
    (\pred(\psi) \cup \pred(\rho))$.
  \end{enumerate}

  \noindent
  Moreover, given a \folfocused proof of the precondition with the system of
  Figure~\ref{fig:rhflk}, a family of formulas $\chi_i$, $1 \leq i \leq n$, with
  the claimed properties can be computed in polynomial time in the size of the
  proof tree.
\end{restatable}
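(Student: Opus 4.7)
The plan is to mirror the proof strategy used for Theorem~\ref{thm:mainflatcasebounded}, but replacing the synthesis of an $\nrc$ expression by the synthesis of a finite family of first-order formulas~$\chi_i$. As in the $\nrc$ case, the statement is too rigid to support a direct induction, so I will first strengthen it to a two-sided form suitable for induction on \folfocused proofs. Concretely, I will consider sequents of the form $\colL{\Gamma_L}, \colR{\Gamma_R} \vdash \colL{\Delta_L}, \colR{\Delta_R}, \cG(r_1), \ldots, \cG(r_k)$, where $\cG(r) \eqdef \exists \vv y\forall \vv z\, (\lambda(\vv z, \vv l) \equi \rho(\vv z, \vv y, r))$ and the colouring reflects the $\phi/\psi$ split. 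The inductive output will be (a) a common formula $\theta$ using only variables and predicates admissible on both sides, and (b) a family of $\chi_i(\vv z, \vv v_i, \vv c_i)$ satisfying the constraints on predicates and free variables of the theorem, such that
\[
\colL{\Gamma_L} \entails \colL{\Delta_L},\; \theta \lor \bigvee_i \exists \vv v_i\forall \vv z\,(\lambda \equi \chi_i) \quad\text{and}\quad \colR{\Gamma_R} \entails \colR{\Delta_R},\; \neg\theta.
\]
The theorem follows by taking $\colL{\Gamma_L} = \{\phi\}$, $\colR{\Gamma_R} = \{\psi\}$, empty right-hand sides, and absorbing~$\theta$ into the disjunction via the equivalence $\theta \equiv \exists\vv v\,\forall\vv z\,(\lambda\equi\lambda) \land \theta$.

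First I would dispose of the easy structural cases. The axiom/$\top$/\textsc{Ref} bases are handled by taking $\theta \in \{\top,\bot\}$ and an empty family (or a trivial $\chi_1 \eqdef \lambda$) depending on whether the closing formula is coloured left or right. The propositional cases $\land,\lor$ and the universal case $\forall$ propagate the induction hypotheses by combining $\theta$'s conjunctively or disjunctively (exactly as in the $\nrc$ proof of Lemma~\ref{lem:nrcparamcoll}) and unioning the $\chi_i$-families. The \textsc{Repl} rule is handled as in the $\neq$ case of the $\nrc$ proof, with a side split according to whether the replaced variable belongs to the common signature; if it does not, one substitutes it out of $\theta$ and the $\chi_i$; if it does, one conjuncts an equality witness into $\theta$. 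For the auxiliary cases of $\exists$ on a non-goal formula, the critical subcase is when the witnessing variable belongs exclusively to one side: there one quantifies $\theta$ (existentially or universally according to the colour) over the variables of the witness term that are not common, which is unproblematic here because first-order quantification is unbounded, so that there is no analog of the $\existsvars{x_1\ldots x_n}$ machinery to worry about.

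The genuine work, and the expected main obstacle, is in the case where the last rule is $\exists$ applied to one of the goal formulas $\cG(r_i) = \exists \vv y\forall \vv z (\lambda \equi \rho(\vv z, \vv y, r_i))$, with some tuple of witnesses $\vv w$ instantiating~$\vv y$. Because the proof is \folfocused, the quantifier blocks and connectives of the leading $\forall \vv z$ and the biconditional are forced to be decomposed immediately, yielding two subgoals corresponding to the two directions of the biconditional, each guarded by a membership of $z$ in the common variable witnessing the $\forall \vv z$. Here I apply the induction hypothesis to both subproofs and obtain pairs $(\theta_1, \{\chi^1_j\}_j)$ and $(\theta_2, \{\chi^2_j\}_j)$. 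The new common formula is $\theta \eqdef \exists \vv x\,(\theta_1 \land \theta_2)$, where $\vv x$ abstracts the fresh variables introduced in breaking down the $\forall\vv z$ biconditional. The new family comprises the union of the two old families together with one fresh $\chi$ that directly encodes $\{z \mid \theta_2(z, \vv z, \ldots)\}$ in formula form — the set-builder construction from the $\nrc$ case translates in the first-order setting to simply taking $\chi(\vv z, \vv v, \vv c) \eqdef \theta_2(\vv v, \vv z, \vv c)$ with $\vv v$ the non-common parameters suitably abstracted. Verifying that this choice satisfies both invariants is a direct adaptation of the bookkeeping performed in the $\nrc$ case around formulas~(\ref{align:1})--(\ref{align:5}).

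The polynomial time bound on the construction is immediate from the fact that each rule in the induction contributes only a constant-bounded amount of new syntax and the family~$\{\chi_i\}$ grows linearly with the number of $\exists$ applications on goal formulas, so the total size of the output is polynomial in the size of the input proof tree.
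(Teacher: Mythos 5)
Your plan is essentially the paper's own proof: the paper also strengthens the statement to a partitioned-sequent invariant pairing a common formula $\theta$ with a growing disjunction of parameterized biconditionals (its ``$\pdepd$'' $\D$), inducts on the \folfocused proof, and in the key case (the $\exists$ step on $\cG$, whose shape is forced by focusing) combines the two branch interpolants into $\theta$ and promotes $\theta_2$ itself, with the eigenvariable renamed to $\vv z$, to the new defining formula $\chi$ --- exactly your construction up to the dual sign convention on $\theta$. The only bookkeeping you leave implicit is what the paper handles via its case split on the witness $w$ and the $\exdef$ operation (quantifying non-common parameters, including possibly members of $\vv l$, only inside the right sides of the biconditionals); your phrase ``suitably abstracted'' is where that work lives.
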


In the theorem statement, 
the free variables of $\lambda$ are
$\vv z$ and $\vv l$, and the free variables of $\rho$ are $\vv
z$, $\vv y$, and $\vv r$.
The precondition supposes an explicit
definition~$\rho$ of $\lambda$ up to parameters with respect to a conjunction
$\phi \land \psi$. The conclusion then claims that one can effectively compute
another definition of $\lambda$ with respect to $\phi \land \psi$ that is up
to parameters and disjunction and has a constrained signature: free variables
and predicates must occur in at least one of the ``left side'' formulas $\phi$
and $\lambda$ and also in at least one of the ``right side'' formulas $\psi$
and $\rho$. In other words, %
the theorem states that if
$\sig_L$ and $\sig_R$ are ``left'' and ``right'' signatures such that $\phi$
and $\lambda$ are over $\sig_L$, $\psi$ and $\rho$ are over $\sig_R$, and
$\rho$ provides an explicit definition of $\lambda$ up to parameters with
respect to $\phi \land \psi$, then one can effectively compute another
definition of $\lambda$ with respect to $\phi \land \psi$ that is up to
parameters and disjunction and is just over the intersection of the signatures
$\sig_L$ and $\sig_R$.

We now prove Theorem~\ref{thm:mainflatcase} by induction on the depth of the
proof tree, generalizing the constructive proof method for Craig interpolation
often called Maehara's method \cite{takeuti:1987,bpt,smullyan}. To simplify
the presentation we assume that the tuples $\vv z$ and $\vv y$ in the theorem
statement each consist of a single variable $z$ and $y$, respectively. The
generalization of our argument to tuples of variables is straightforward.

To specify conveniently the construction steps of the family of formulas
$\chi_i$ we introduce the following concept: A \defname{pre-defining
  equivalence up to parameters and disjunction} (briefly $\pdepd$) for a
formula~$\lambda(z, \vv l)$ is a formula~$\delta$ built up from formulas of
the form $\forall z\, (\lambda(z,\vv l) \equi \chi(z, \vec p))$ (where the
left side is always the same formula $\lambda(z,\vv l)$ but the right sides
$\chi(z, \vec p)$ may differ) and a finite number of applications of
disjunction and existential quantification upon variables from the
vectors~$\vec p$ of the right sides. The empty disjunction $\bot$ is allowed
as a special case of a $\pdepd$. By rewriting with the equivalence $\exists
v\, (\delta_1 \lor \delta_2) \equiv \exists v\, \delta_1 \lor \exists v\,
\delta_2$, any $\pdepd$ for $\lambda$ can be efficiently transformed into the
form $\bigvee_{i=1}^n \exists \vs_i \forall z\, (\lambda(z, \vv l) \equi
\chi_i(z,\vs_i,\vec r_i))$ for some natural number $n \geq 0$. That is,
although a $\pdepd$ has in general not the syntactic form of the disjunction
of quantified biconditionals in the theorem statement (thus ``pre-''), it
corresponds to such a disjunction. The more generous syntax will be convenient
in the induction.
The sets of additional variables $\vv l$ and $\vec p$ in the biconditionals
$\lambda(z,\vv l) \equi \chi(z, \vec p)$ can overlap, but the overlap will be
top-level variables that never get quantified. Although we have defined the
notion of $\pdepd$ for a general $\lambda$, in the proof we just consider
$\pdepd$s for the formula $\lambda$ from the theorem statement.

For $\pdepd$s $\delta$ we provide analogs to $\FV$ and $\pred$ that only yield
free variables and predicates of $\delta$ that occur in a right side of its
biconditionals, which helps to express the restrictions by definability
properties that constrains the signature of exactly those right sides. Recall
that we refer of these right sides as subformulas $\chi(z,\vv p)$. For
$\pdepd$~$\delta$, define $\PREDD(\delta)$ as the set of the predicate symbols
that occur in a subformula~$\chi(z,\vv p)$ of $\delta$ and define
$\ADDLVARS(\delta)$ as the set of all variables that occur in a subformula
$\chi(z,\vv p)$ of $\delta$ and are free in $\delta$. In other words,
$\ADDLVARS(\delta)$ is the set of all variables~$p$ in the vectors $\vv p$ of
the subformulas $\chi(z,\vv p)$ that have an occurrence in $\delta$ which is
not in the scope of a quantifier $\exists p$. If, for example $\delta =
\bigvee_{i=1}^n \exists \vs_i \forall z\, (\lambda(z, \vv l) \equi
\chi_i(z,\vs_i,\vec r_i))$, then $\ADDLVARS(\delta)$ is the set of all
variables in the vectors $r_i$, for $1 \leq i \leq n$. 

To build up $\pdepd$s we provide an operation that only affects the right
sides of the biconditionals, a restricted form of existential quantification.
It is for use in interpolant construction to convert a free variable in right
sides that became illegal into an existentially quantified parameter. For
variables $p,y$ define $\delta\substdef{p}{y}$ as $\delta$ after substituting
all occurrences of $p$ that are within a right side $\chi(z,\vec p)$ and are
free in $\delta$ with $y$. Define $\exdef p\, \delta$ as shorthand for
$\exists y\, \delta\substdef{p}{y}$, where $y$ is a fresh variable. Clearly
$\delta \entails \exdef p\, \delta$ and $p$ has no free occurrences in $\exdef
p\, \delta$ that are in any of the right side formulas $\chi(z,\vec p)$, i.e.,
$p \notin \ADDLVARS(\exdef p\, \delta)$. Occurrences of $p$ in the left sides
$\lambda(z,\vv l)$, if $p$ is a member of $\vv l$, are untouched in $\exdef
p\, \delta$. If $p$ is not in $\vv l$, then $\exdef p\, \delta$ reduces to
ordinary existential quantification $\exists y\, \delta[y/p]$.

We introduce the following symbolic shorthand for the parametric definition on
the right side in the theorem's precondition.
  \[
  \begin{array}{lcl@{\hspace{1em}}lcl}
    \cG & \eqdef & \exists y \forall z\, (\lambda(z,\vv l) \equi \rho(z,y, \vv r)).\\
  \end{array}
  \]
  Note that our proof rules are such that if we have a proof (\folfocused or
  not) that our original top-level ``global'' parametric definition is implied
  by some formula, then every one-sided sequent in the proof must include that
  parametric definition in it. This is because the rules that eliminate a
  formula when read ``bottom-up'' cannot apply to that parametric definition,
  whose outermost logic operator is the existential quantifier. Thus, in our
  inductive argument, we can assume that $\cG$ is always present.

  We write
  \[\seq \SL\sep \SR\sep \cG \ipol{\theta}{\D},\]
  where $\seq \SL, \SR, \cG$ is a sequent, partitioned into three components,
  multisets $\SL$ and $\SR$ of formulas and the formula~$\cG$ from the
  theorem's hypothesis, $\theta$ is a formula and $\D$ is a $\pdepd$, to
  express that the following properties hold:
  \begin{enumerate}[{I}1.]
  \item \label{prop-r-theta} $\valid \SR \lor \theta$.
  \item \label{prop-l-theta} $\valid \lnot \theta \lor \SL \lor \D$.
  \item \label{prop-sig-theta} $\pred(\theta) \subseteq (\pred(\SL) \cup \ADDPREDSLAM)
    \cap (\pred(\SR) \cup \ADDPREDSRHO)$.
  \item \label{prop-fv-theta} $\FV(\theta) \subseteq
    (\FV(\SL) \cup \ADDVARSLAM ) \cap (\FV(\SR) \cup \ADDVARSRHO)$.
  \item \label{prop-form-d} $\D$ is a $\pdepd$ for $\lambda(z,\vv l)$.
  \item \label{prop-sig-d} $\PREDD(\D) \subseteq (\pred(\SL) \cup
    \ADDPREDSLAM) \cap (\pred(\SR) \cup \ADDPREDSRHO)$.
  \item \label{prop-fv-d} $\ADDLVARS(\D) \subseteq (\FV(\SL) \cup \ADDVARSLAM) \cap
    (\FV(\SR) \cup \ADDVARSRHO)$.
  \end{enumerate}

  For a given proof with conclusion $\seq \lnot \phi, \lnot \psi, \cG$,
  corresponding to the hypothesis $\phi \land \psi \entails \cG$ of the
  theorem, we show the construction of a formula $\theta$ and $\pdepd$ $\D$
  such that
  \[\seq \lnot \phi\sep \lnot \psi \sep \cG \ipol{\theta}{\D}.\]
  From properties~\ref{prop-r-theta}--\ref{prop-l-theta} and
  \ref{prop-form-d}-- \ref{prop-fv-d} it is then straightforward to read off
  that the formula $\bigvee_{i = 1}^{n} \exists \vv {v_i} \forall \vv z
  (\lambda(\vv z, \vv l) \equi \chi_i(\vv z,{\vv v}_i,{\vv r}_i))$ obtained
  from $\D$ by propagating existential quantifiers inwards is as claimed in
  the theorem's conclusion.

  Formula $\theta$ plays an auxiliary role in the induction. For the overall
  conclusion of the proof it a side result that is like a Craig interpolant
  of~$\psi$ and $\phi \imp D$, but slightly weaker syntactically constrained
  by taking $\lambda$ and $\rho$ into account: $\FV(\theta) \subseteq
  ((\FV(\phi) \cup \vv l) \cap (\FV(\psi) \cup \vv r))$ and $\pred(\theta)
  \subseteq (\pred(\phi) \cup \pred(\lambda)) \cap (\pred(\psi) \cup
  \pred(\rho))$.

  As basis of the induction, we have to show constructions of $\theta$ and
  $\D$ such that $\seq \SL\sep \SR\sep \cG \ipol{\theta}{\D}$ holds for
  \rulename{Ax} and \rulename{$\top$}, considering each possibility in which
  the principal formula(s) can be in $\SL$ or $\SR$. For the induction step,
  there are a number of subcases, according to which rule is last applied and
  which of the partitions $\SL$, $\SR$ or $\cG$ contain the principal
  formula(s). We first discuss the most interesting case, the induction step
  where $\cG$ is the principal formula. This case is similar to the most
interesting case in the $\nrc$ Parameter Collection Theorem, covered in the body
of the paper.

  \paragraph*{Case where the principal formula is $\cG$.}

   We now give more detail on the most complex case. If the principal formula
   of a conclusion $\seq \SL\sep\SR\sep\cG$ is $\cG$, then the rule that is
   applied rule must be~$\exists$. From the \folfocused property of the proof
   it follows that the derivation tree ending in $\seq \SL\sep\SR\sep\cG$ must
   have the following shape, for some $u \notin \FV(\SL,\SR,\cG)$ and $w \neq
   u$. Note that $u$ could be either a top-level variable from $\lambda$,
   i.e., a member of $\vv l$, or one introduced during the proof.
    \[
    \begin{array}{rccrc}
      \tabrulename{$\lor$} & \seq \SL, \lnot \lambda(u,\vv l), \SR,
      \rho(u,w,\vv r), \cG
      & \hspace*{2em} &
      \tabrulename{$\lor$}
      & \seq \SL, \lambda(u, \vv l), \SR, \lnot \rho(u,w,\vv r) , \cG\\\cmidrule{2-2}\cmidrule{5-5}
      \tabrulename{$\land$} & \seq \SL, \SR, \lambda(u, \vv l) \imp
      \rho(u,w,\vv r), \cG & &
      & \seq \SL, \lambda(u, \vv l), \SR, \rho(u,w,\vv r) \imp \lambda(u, \vv l), \cG\\\cmidrule{2-5}
      \multicolumn{5}{c}{
        \begin{array}{rc}
          \tabrulename{$\forall$} & \seq \SL, \SR, \lambda(u, \vv l) \equi
          \rho(u,w,\vv r), \cG\\\cmidrule{2-2}
          \tabrulename{$\exists$} & \seq \SL, \SR,
          \forall z\, (\lambda(z, \vv l) \equi \rho(z,w,\vv r)), \cG\\\cmidrule{2-2}
          & \seq \SL,\SR,\cG
      \end{array}}
    \end{array}
    \]
The important point is that the two ``leaves'' of the above tree are both sequents 
where we can apply our induction hypothesis.
    Taking into account the partitioning of the sequents at the bottom
    conclusion and the top premises in this figure, we can express the
    induction step in the form of a ``macro'' rule that specifies the
    how we constructed the required $\theta$ and $\D$ for the conclusion,  making use of
the $\theta_1, \D_1$ and $\theta_2, \D_2$ that we get by applying  the induction hypothesis to
each of the two premises.
    \[
    \begin{array}{c@{\hspace{1em}}cl}
      \seq \SL, \lnot \lambda(u, \vv l);\, \SR, \rho(u,w,\vv r);\, \cG \ipolnarrow{\theta_1}{\D_1} &
      \seq \SL, \lambda(u, \vv l);\, \SR, \lnot \rho(u,w,\vv r);\, \cG \ipolnarrow{\theta_2}{\D_2}
      \\\cmidrule{1-2}
      \multicolumn{2}{c}{\seq \SL \sep  \SR\sep \cG \ipol{\theta}{\D},}
    \end{array}
    \]
        where $u$ is as above. The values of $\theta$ and $\D$ -- the new formula and definition
that we are building --  will depend on occurrences of~$w$, and
we give their construction in cases below:
    
    \begin{enumerate}[(i)]
    \item
      \label{eq-case-g-u}
       If $w \notin \FV(\SL) \cup \ADDVARSLAM$ or $w \in \FV(\SR) \cup \ADDVARSRHO$, then
      \[\begin{array}{lcl}
      \theta & \eqdef & \forall u\, (\theta_1 \lor \theta_2).\\
         \D & \eqdef & \exdef u\, \D_1 \lor \exdef u\, \D_2 \lor \forall z\,
         (\lambda(z, \vv l) \equi \theta_2[z/u]).
       \end{array}\]
     \item \label{eq-case-g-wu}
       Else it holds that $w \in \FV(\SL) \cup \ADDVARSLAM$ and $w \notin \FV(\SR) \cup
       \ADDVARSRHO$. Then
       \[\begin{array}{lcl}
       \theta & \eqdef & \forall w \forall u\, (\theta_1 \lor \theta_2).\\
       \D & \eqdef & \exdef w\, (\exdef u\, \D_1 \lor \exdef u\, \D_2 \lor \forall z\,
       (\lambda(z, \vv l) \equi \theta_2[z/u])).
       \end{array}\]
    \end{enumerate}

    We now verify that $\seq \SL \sep \SR\sep \cG \ipol{\theta'}{\D'}$, that
    is, properties~\ref{prop-r-theta}--\ref{prop-fv-d}, hold. The proofs for
    the individual properties are presented in tabular form, with explanations
    annotated in the side column, where \name{IH} stands for \name{induction
      hypothesis}. We concentrate on the case~\ref{eq-case-g-u} and indicate
    the modifications of the proofs for case~\ref{eq-case-g-wu} in remarks,
    where we refer to the values of $\theta$ and $\D$ for that case in terms
    of the values for the case~\ref{eq-case-g-u} as $\forall w\, \theta$ and
    $\exdef w\, \D$. In the proofs of the semantic
    properties~\ref{prop-r-theta} and~\ref{prop-l-theta} we let sequents
    stand for the disjunction of their members.
    
    \smallskip
\noindent
Property~\ref{prop-r-theta}:
\prlReset{r-theta}

\smallskip\noindent
$
\begin{arrayprf}
  \prl{r-theta-1} & \valid \SR \lor \rho(u,w,\vv r) \lor \theta_1.
  & \textrm{IH}\\
  \prl{r-theta-2} & \valid \SR \lor \lnot \rho(u,w,\vv r) \lor \theta_2.
  & \textrm{IH}\\
  \prl{r-theta-3} & \valid \SR \lor \theta_1 \lor \theta_2.
  & \textrm{by~\pref{r-theta-1}, \pref{r-theta-2}}\\
 \prl{r-theta-4} & \valid \SR \lor \forall u\, (\theta_1 \lor \theta_1)
 & \textrm{by~\pref{r-theta-3} since } u \notin \FV(\SR)\\
 \prl{r-theta-5} & \valid \SR \lor \theta. &
 \textrm{by~\pref{r-theta-4} and def. of } \theta\\
\end{arrayprf}
$

\medskip
For case~\ref{eq-case-g-wu}, it follows from the precondition $w \notin
\FV(\SR)$ and step~\pref{r-theta-5} that $\valid \SR \lor \forall w\, \theta$.

\medskip

\noindent
Property~\ref{prop-l-theta}:
\prlReset{l-theta}

\smallskip\noindent
$
\begin{arrayprf}
  \prl{l-theta-i1} & \valid \lnot \theta_1 \lor \SL \lor \lnot \lambda(u, \vv l) \lor \D_1.
  & \textrm{IH}\\
  \prl{l-theta-i2} & \valid \lnot \theta_2 \lor \SL \lor \lambda(u, \vv l) \lor \D_2.
  & \textrm{IH}\\
  \prl{l-theta-3} &
  \valid \lnot (\theta_1 \lor \theta_2) \lor \SL \lor \lnot \lambda(u, \vv l) \lor \D_1 \lor
  \theta_2. & \textrm{by \pref{l-theta-i1}}\\
  \prl{l-theta-4} &
  \valid \lnot (\theta_1 \lor \theta_2) \lor \SL \lor \lnot \lambda(u, \vv l) \lor \D_1 \lor
  \D_2 \lor \theta_2. & \textrm{by \pref{l-theta-3}}\\
  \prl{l-theta-5} &
  \valid \lnot \forall u\, (\theta_1 \lor \theta_2) \lor \SL \lor \lnot \lambda(u, \vv l) \lor \D_1 \lor \D_2 \lor \theta_2. & \textrm{by \pref{l-theta-4}}\\
  \prl{l-theta-6} &
\valid \lnot \forall u\, (\theta_1 \lor \theta_2) \lor \lnot \theta_2 \lor \SL \lor \lambda(u, \vv l) \lor \D_1 \lor \D_2. & \textrm{by \pref{l-theta-i2}}\\
  \prl{l-theta-7} &
  \valid \lnot \forall u\, (\theta_1 \lor \theta_2) \lor \SL \lor \D_1 \lor \D_2
  \;\lor\\ & \hspace{2em} (\lambda(u, \vv l) \equi \theta_2).
  & \textrm{by \pref{l-theta-6}, \pref{l-theta-5}}\\
  \prl{l-theta-8} &
  \valid \lnot \forall u\, (\theta_1 \lor \theta_2) \lor \SL \lor \exdef u\,
  \D_1 \lor \exdef u\, \D_2
  \;\lor\\ & \hspace{2em} (\lambda(u, \vv l) \equi \theta_2).
  & \textrm{by \pref{l-theta-7}}\\
  \prl{l-theta-9} &
  \valid \lnot \forall u\, (\theta_1 \lor \theta_2) \lor \SL \lor \exdef u\,
  \D_1 \lor \exdef u\, \D_2
  \;\lor\\ & \hspace{2em} \forall z\, (\lambda(z, \vv l) \equi \theta_2[z/u]). &
  \textrm{by \pref{l-theta-8}}
   \textrm{ since } u \notin \FV(\SL,\lambda(z, \vv l))\\  
  \prl{l-theta-10} &
  \valid \lnot \theta \lor \SL \lor \D. 
  & \textrm{by \pref{l-theta-9}, defs. } \theta, \D\\
\end{arrayprf}
$

\smallskip
That $u \notin \FV(\lambda(z, \vv l))$ follows from the precondition $u \notin
\FV(\cG)$. It is used in step~\pref{l-theta-9} to justify that the substitution
$[z/u]$ has only to be applied to $\theta_2$ and not to $\lambda(z, \vv l)$ and, in
addition, to justify that $u \notin \FV(\exdef u\, \D_1)$ and $u \notin
\FV(\exdef u\, \D_2)$, which follow from $u \notin \FV(\lambda(z, \vv l))$ and the
induction hypotheses that property~\ref{prop-form-d} applies to~$\D_1$
and~$\D_2$.

For case~\ref{eq-case-g-wu}, it follows from step~\pref{l-theta-10} that
$\valid \lnot \forall w\, \theta \lor \SL \lor \exdef w\, \D$.

\medskip

\noindent
Property~\ref{prop-sig-theta}:
\prlReset{sig-theta}

\smallskip\noindent
$
\begin{arrayprf}
  \prl{sig-theta-1} & \pred(\theta_1) \subseteq
  (\pred(\SL,\lnot \lambda(u, \vv l)) \cup \ADDPREDSLAM) \cap
  (\pred(\SR,\rho(u,w,\vv r)) \cup \ADDPREDSRHO). & \textrm{IH}\\
  \prl{sig-theta-2} &
  \pred(\theta_2) \subseteq (\pred(\SL,\lambda(u, \vv l)) \cup \ADDPREDSLAM) \cap
  (\pred(\SR,\lnot \rho(u,w,\vv r)) \cup \ADDPREDSRHO). & \textrm{IH}\\
  \prl{sig-theta-3} &
  \pred(\forall u\, (\theta_1 \lor \theta_1))
  & \hspace{-0.5em} \textrm{by \pref{sig-theta-1}, \pref{sig-theta-2}}\\
  \prl{sig-theta-4} & 
  \pred(\theta)
    \subseteq (\pred(\SL) \cup \ADDPREDSLAM) \cap (\pred(\SR) \cup \ADDPREDSRHO)
    & \hspace{-0.5em} \textrm{by \pref{sig-theta-3}, def. }\!\theta
\end{arrayprf}
$

\medskip
For case~\ref{eq-case-g-wu} the property follows since $\pred(\theta) =
\pred(\forall w\, \theta)$.

\pagebreak
\noindent
Property~\ref{prop-fv-theta}:
\prlReset{fv-theta}

\smallskip\noindent
$
\begin{arrayprf}
  \prl{fv-theta-1} & \FV(\theta_1) \subseteq (\FV(\SL, \lnot \lambda(u, \vv l))
  \cup \ADDVARSLAM)
  \cap (\FV(\SR, \rho(u,w,\vv r)) \cup \ADDVARSRHO). & \textrm{IH}\\
  \prl{fv-theta-2}  & \FV(\theta_2) \subseteq
  (\FV(\SL, \lambda(u, \vv l)) \cup \ADDVARSLAM) \cap
  (\FV(\SR, \lnot \rho(u,w,\vv r)) \cup \ADDVARSRHO).& \textrm{IH}\\
  \prl{fv-theta-x1} & \FV(\theta_1) \subseteq (\FV(\SL) \cup \ADDVARSLAM \cup \{u\})
  \cap (\FV(\SR) \cup \ADDVARSRHO \cup \{u,w\}).
  & \textrm{by \pref{fv-theta-1}}\\
  \prl{fv-theta-x2} & \FV(\theta_1) \subseteq (\FV(\SL) \cup \ADDVARSLAM \cup \{u\})
  \cap (\FV(\SR) \cup \ADDVARSRHO \cup \{u,w\}).
  &\textrm{by \pref{fv-theta-2}}\\
  \prl{fv-theta-3} & \FV(\forall u\, (\theta_1 \lor \theta_2))
  \subseteq (\FV(\SL) \cup \ADDVARSLAM) \cap
  (\FV(\SR) \cup \ADDVARSRHO).
  & \textrm{by \pref{fv-theta-x1}, \pref{fv-theta-x2}},  \\
      &  \multicolumn{2}{r}{\textrm{and 
      $w \notin \FV(\SL) \cup \ADDVARSLAM$ or $w \in \FV(\SR) \cup \ADDVARSRHO$}}\\
  \prl{fv-theta-4} &
  \FV(\theta)
    \subseteq (\FV(\SL) \cup \ADDVARSLAM) \cap
    (\FV(\SR) \cup \ADDVARSRHO).
    & \textrm{by \pref{fv-theta-3}, def. } \theta\\
\end{arrayprf}
$

\medskip

For case~\ref{eq-case-g-wu} step~\pref{fv-theta-3} has to be replaced by
\[\FV(\forall w \forall u\, (\theta_1 \lor \theta_2)) \subseteq (\FV(\SL) \cup \ADDVARSLAM) \cap
(\FV(\SR) \cup \ADDVARSRHO),\] which follows just from \pref{fv-theta-x1} and
\pref{fv-theta-x2}. Instead of step~\pref{fv-theta-4} we then have
$\FV(\forall w\, \theta) \subseteq (\FV(\SL) \cup \ADDVARSLAM) \cap (\FV(\SR) \cup
\ADDVARSRHO)$.

\medskip
\noindent
Property~\ref{prop-form-d}: Immediate from the induction hypothesis and the
definition of $\D$.

\medskip\noindent
Property~\ref{prop-sig-d}:

\prlReset{sig-d}

\smallskip\noindent
$
\begin{arrayprf}
  \prl{sig-d-1} &
  \PREDD(\D_1) \subseteq (\pred(\SL, \lnot \lambda(u, \vv l)) \cup
  \ADDPREDSLAM) \cap (\pred(\SR, \rho(u,w,\vv r)) \cup \ADDPREDSRHO). &  \textrm{IH}\\
  \prl{sig-d-2} &
  \PREDD(\D_2) \subseteq (\pred(\SL, \lambda(u, \vv l)) \cup
  \ADDPREDSLAM) \cap (\pred(\SR, \lnot \rho(u,w,\vv r)) \cup \ADDPREDSRHO). &  \textrm{IH}\\
  \prl{sig-d-3} &
  \pred(\theta_2) \subseteq (\pred(\SL,\lambda(u, \vv l)) \cup \ADDPREDSLAM) \cap
  (\pred(\SR,\lnot \rho(u,w,\vv r)) \cup \ADDPREDSRHO). & \textrm{IH}\\
  \prl{sig-d-4} &
  \PREDD(\exdef u\, \D_1 \lor \exdef u\, \D_2 \lor
  \forall z\, (\lambda(z, \vv l) \equi
  \theta_2[z/u]))\; \subseteq\\
  & (\pred(\SL) \cup \ADDPREDSLAM) \cap (\pred(\SR) \cup \ADDPREDSRHO)
  & \hspace{-3.0em} \textrm{by \pref{sig-d-1}--\pref{sig-d-3}}\\
  \prl{sig-d-5} &
  \PREDD(\D) \subseteq
  (\pred(\SL) \cup \ADDPREDSLAM) \cap (\pred(\SR) \cup \ADDPREDSRHO)
  & \hspace{-3.0em} \textrm{by \pref{sig-d-4}, def. } \D
\end{arrayprf}
$

\medskip
For case~\ref{eq-case-g-wu} the property follows since $\PREDD(\D) =
\PREDD(\exdef w\, \D)$.

\medskip

\medskip\noindent
Property~\ref{prop-fv-d}:

\prlReset{fv-d}

\smallskip\noindent
$
\begin{arrayprf}
  \prl{fv-d-1} &
  \ADDLVARS(\D_1) \subseteq (\FV(\SL, \lnot \lambda(u, \vv l)) \cup
  \ADDVARSLAM) \cap (\FV(\SR, \rho(u,w,\vv r)) \cup \ADDVARSRHO). %
  & \textrm{IH}\\
  \prl{fv-d-2} &
  \ADDLVARS(\D_2) \subseteq (\FV(\SL, \lambda(u, \vv l)) \cup
  \ADDVARSLAM) \cap (\FV(\SR, \lnot \rho(u,w,\vv r)) \cup \ADDVARSRHO).
  & \textrm{IH}\\
  \prl{fv-d-3} &
  \FV(\theta_2) \subseteq
  (\FV(\SL, \lambda(u, \vv l)) \cup \ADDVARSLAM) \cap
  (\FV(\SR, \lnot \rho(u,w,\vv r)) \cup \ADDVARSRHO).
  & \textrm{IH}\\
  \prl{fv-d-1x} &
  \ADDLVARS(\D_1) \subseteq (\FV(\SL) \cup
  \ADDVARSLAM \cup \{u\}) \cap (\FV(\SR) \cup \ADDVARSRHO \cup \{u,w\}).
  & \textrm{by \pref{fv-d-1}}\\
  \prl{fv-d-2x} &
  \ADDLVARS(\D_2) \subseteq (\FV(\SL) \cup
  \ADDVARSLAM \cup \{u\}) \cap (\FV(\SR) \cup \ADDVARSRHO \cup \{u,w\}).
  & \textrm{by \pref{fv-d-2}}\\
  \prl{fv-d-3x} &
  \FV(\theta_2) \subseteq (\FV(\SL) \cup
  \ADDVARSLAM \cup \{u\}) \cap (\FV(\SR) \cup \ADDVARSRHO \cup \{u,w\}).
  & \textrm{by \pref{fv-d-3}}\\
  \prl{fv-d-4} &
  \ADDLVARS(\exdef u\, \D_1 \lor \exdef u\, \D_2 \lor
  \forall z\, (\lambda(z, \vv l) \equi
  \theta_2[z/u]))\; \subseteq\\
  & (\FV(\SL) \cup \ADDVARSLAM) \cap (\FV(\SR) \cup \ADDVARSRHO).
  & \textrm{by \pref{fv-d-1x}, \pref{fv-d-2x}, \pref{fv-d-3x},}\\
  & \multicolumn{2}{r}{\textrm{and $w \notin \FV(\SL) \cup \ADDVARSLAM$
  or $w \in \FV(\SR) \cup \ADDVARSRHO$}}\\
  \prl{fv-d-5} &
  \ADDLVARS(\D) \subseteq (\FV(\SL) \cup \ADDVARSLAM) \cap (\FV(\SR) \cup \ADDVARSRHO).
  & \textrm{by \pref{fv-d-4}, def. } \D
\end{arrayprf}
$

\medskip

For case~\ref{eq-case-g-wu} step~\pref{fv-d-4} has to be replaced by
\[\begin{array}{l}
\ADDLVARS(\exdef w (\exdef u\, \D_1 \lor \exdef u\, \D_2 \lor
\forall z\, (\lambda(z, \vv l) \equi \theta_2[z/u])))\; \subseteq\\
(\FV(\SL) \cup \ADDVARSLAM) \cap (\FV(\SR) \cup \ADDVARSRHO),
  \end{array}
  \] which follows just from
  \pref{fv-d-1x}, \pref{fv-d-2x}, \pref{fv-d-3x}. Instead of
  step~\pref{fv-d-5} we then have $\ADDLVARS(\exdef w\, \D) \subseteq (\FV(\SL)
  \cup \ADDVARSLAM) \cap (\FV(\SR) \cup \ADDVARSRHO)$.

  \medskip
  
This completes the verification of correctness, and thus ends our discussion of this
case.

We now turn to the base of the induction along with the other inductive cases.

\paragraph*{Cases where the principal formulas are in the
  $\SL$ or $\SR$ partition.}

The inductive cases where the principal formulas are in the $\SL$ or $\SR$
partition can be conveniently specified in the form of rules that lead from
induction hypotheses of the form $\seq \SL\sep \SR\sep \cG \ipol{\theta}{\D}$
as premises to an induction conclusion of the same form. Base cases can 
be taken   as rules without premises. The axioms and rules shown below
correspond to  those of the calculus, but replicated for each possible way
in which the partitions $\SL$, $\SR$ or $\cG$ of the conclusion can contain the
principal formula(s). To verify that
properties~\ref{prop-r-theta}--\ref{prop-fv-d} are preserved by each of the
constructions is straightforward, and therefore  we only 
point out a few subtleties.

\begin{enumerate}[(1)]
\item
  $\begin{array}{rcl}
    \tabrulename{\rulename{Ax}} && \hspace{2em} \tabrulecond{\phi \textit{ a positive literal}}\\
    \cmidrule{2-2} & \seq \SL, \phi, \lnot \phi\sep  \SR\sep \cG
    \ipol{\top}{\bot}\\
  \end{array}$
  \smallskip

\item
  $\begin{array}{rcl}
    \tabrulename{\rulename{Ax}} && \hspace{2em} \tabrulecond{\phi \textit{ a positive literal}}\\
  \cmidrule{2-2} & \seq \SL\sep  \SR, \phi, \lnot
  \phi\sep \cG \ipol{\bot}{\bot}
\end{array}$
  \smallskip
  
\item
  $\begin{array}{rcl}
    \tabrulename{\rulename{Ax}} && \hspace{2em} \tabrulecond{\phi \textit{ a literal}}\\
   \cmidrule{2-2} & \seq \SL, \phi\sep  \SR, \lnot \phi\sep \cG
   \ipol{\phi}{\bot}
\end{array}$
  \smallskip

\item
  $\begin{array}{rc}
    \tabrulename{$\top$}\\    
    \cmidrule{2-2} & \seq \SL, \top\sep  \SR\sep \cG
    \ipol{\top}{\bot}
\end{array}$
  \smallskip

\item  
  $\begin{array}{rc}
    \tabrulename{$\top$}\\    
    \cmidrule{2-2} & \seq \SL\sep  \SR, \top\sep \cG
    \ipol{\bot}{\bot}
\end{array}$
  \bigskip

\item
  $\begin{array}[t]{rc}
    \tabrulename{$\lor$} &
    \seq \SL, \phi_1, \phi_2\sep  \SR\sep \cG \ipol{\theta}{\D}\\
    \cmidrule{2-2} & \seq \SL, \phi_1 \lor \phi_2\sep  \SR\sep \cG
  \ipol{\theta}{\D}
\end{array}$
  \bigskip

\item
  $\begin{array}[t]{rc}
    \tabrulename{$\lor$} &
    \seq \SL\sep  \SR, \phi_1, \phi_2 \sep \cG \ipol{\theta}{\D}\\
    \cmidrule{2-2} & \seq \SL\sep  \SR, \phi_1 \lor \phi_2\sep \cG
    \ipol{\theta}{\D}
\end{array}$
  \bigskip

\item
  \label{rule-and-1}
  $\begin{array}[t]{rc@{\hspace{2em}}c}
    \tabrulename{$\land$} &
    \seq \SL, \phi_1 \sep  \SR\sep \cG \ipol{\theta_1}{\D_1}
  & \seq \SL, \phi_2\sep  \SR\sep \cG \ipol{\theta_2}{\D_2}\\\cmidrule{2-3} &
  \multicolumn{2}{c}{\seq \SL, \phi_1 \land \phi_2\sep  \SR\sep \cG
    \ipol{\theta_1 \land \theta_2}{\D_1 \lor \D_2}}
\end{array}$
  \bigskip

\item
  $\begin{array}[t]{rc@{\hspace{2em}}c}
    \tabrulename{$\land$} &    
    \seq \SL \sep  \SR, \phi_1\sep \cG \ipol{\theta_1}{\D_1}
  & \seq \SL \sep \SR, \phi_2\sep \cG \ipol{\theta_2}{\D_2}\\\cmidrule{2-3} &
  \multicolumn{2}{c}{\seq \SL\sep  \SR, \phi_1 \land \phi_2 \sep \cG
 \ipol{\theta_1 \lor \theta_2}{\D_1 \lor \D_2}}
\end{array}$
  \bigskip

\item  \label{rule-exists-1}
  $\begin{array}[t]{rc}
    \tabrulename{$\exists$} &    
    \seq \SL, \phi\subst{x}{t}, \exists x\, \phi\sep  \SR\sep \cG
  \ipol{\theta}{\D}\\\cmidrule{2-2} &
  \seq \SL, \exists x\, \phi\sep  \SR\sep \cG \ipol{\theta'}{\D'},
\end{array}$ \par
  where the values of $\theta'$ and $\D'$ depend on occurrences of~$t$:
  \begin{itemize}
  \item If $t \in \FV(\SL,\exists x\, \phi) \cup \ADDVARSLAM$, then $\theta' \eqdef
    \theta$ and $\D' \eqdef \D$.
  \item Else it holds that $t \notin \FV(\SL,\exists x\, \phi) \cup \ADDVARSLAM$.
    Then $\theta' \eqdef \exists t\, \theta$ and $\D' \eqdef \exdef t\, \D$.
  \end{itemize}
  \bigskip

\item \label{eq-case-ex-r}
  $\begin{array}[t]{rc}
    \tabrulename{$\exists$} &    
    \seq \SL\sep  \SR, \phi\subst{x}{t}, \exists x\, \phi\sep \cG \ipol{\theta}{\D}\\
  \cmidrule{2-2} & \seq \SL\sep \SR, \exists x\, \phi \sep \cG
  \ipol{\theta'}{\D'},
\end{array}$ \par
  where the values of $\theta'$ and $\D'$ depend on occurrences of~$t$:
  \begin{itemize}
  \item If $t \in \FV(\SR,\exists x\, \phi) \cup \ADDVARSRHO$, then
    $\theta' \eqdef \theta$ and $\D' \eqdef \D$.
  \item Else it holds that $t \notin \FV(\SR,\exists x\, \phi) \cup \ADDVARSRHO$. Then
    $\theta' \eqdef \forall t\, \theta$ and $\D' \eqdef \exdef t\, \D$.
  \end{itemize}
  \bigskip

\item
  $\begin{array}[t]{rcl}
    \tabrulename{$\forall$} &    
    \seq \SL, \phi[y/x]\sep  \SR\sep \cG \ipol{\theta}{\D}
  & \hspace{2em} \tabrulecond{y \notin \FV(\SL,\forall x\, \phi,\SR,\cG)}\\
    \cmidrule{2-2} & \seq \SL, \forall x\, \phi\sep  \SR\sep \cG
    \ipol{\theta}{\D}
\end{array}$
  \bigskip
  
\item
  $\begin{array}[t]{rcl}
    \tabrulename{$\forall$} &    
    \seq \SL\sep  \SR,\phi[y/x]\sep \cG \ipol{\theta}{\D}
  & \hspace{2em} \tabrulecond{y \notin \FV(\SL,\SR,\forall x\, \phi,\cG)}\\
  \cmidrule{2-2} & \seq \SL\sep  \SR, \forall x\, \phi\sep \cG
 \ipol{\theta}{\D}
\end{array}$
  \bigskip

\item \label{eq-rule-ref-l}
  $\begin{array}[t]{rcl}
    \tabrulename{\rulename{Ref}} &    
    \seq t \neq t, \SL\sep \SR\sep \cG \ipol{\theta}{\D}
    & \hspace{2em} \tabrulecond{t \in \FV(\SL) \cup \ADDVARSLAM}\\
    \cmidrule{2-2} &
    \seq \SL\sep \SR\sep \cG \ipol{\theta}{\D}
\end{array}$
  \bigskip
  
\item \label{eq-rule-ref-r}
  $\begin{array}[t]{rcl}
    \tabrulename{\rulename{Ref}} &    
    \seq \SL\sep t \neq t, \SR\sep \cG \ipol{\theta}{\D}
    & \hspace{2em} \tabrulecond{t \notin \FV(\SL) \cup \ADDVARSLAM}\\    
    \cmidrule{2-2} &
    \seq \SL\sep \SR\sep \cG \ipol{\theta}{\D}
  \end{array}$
  \bigskip
  
\item
  $\begin{array}[t]{rcl}
    \tabrulename{\rulename{Repl}} &    
    \seq t \neq u, \phi[u/x], \phi[t/x], \SL\sep \SR\sep \cG \ipol{\theta}{\D}
    & \hspace{2em} \tabrulecond{\phi \textit{ a negative literal}}
    \\\cmidrule{2-2} &
    \seq t \neq u, \phi[t/x], \SL\sep \SR\sep \cG \ipol{\theta}{\D}
\end{array}$
  \bigskip

\item
  $\begin{array}[t]{rcl}
  \tabrulename{\rulename{Repl}} &    
  \seq t \neq u, \SL\sep \phi[u/x], \phi[t/x], \SR\sep \cG \ipol{\theta}{\D}
  & \hspace{2em} \tabrulecond{\phi \textit{ a negative literal}}
  \\\cmidrule{2-2} &
  \seq t \neq u, \SL\sep \phi[t/x], \SR\sep \cG \ipol{\theta'}{\D'},
\end{array}$ \par
  where the values of $\theta'$ and $\D'$ depend on occurrences of $t$ and $u$:
  \begin{itemize}
  \item If $t \notin \FV(\phi[t/x], \SR) \cup \ADDVARSRHO$, then $\theta' \eqdef
    \theta$ and $\D' \eqdef \D$. In this subcase the precondition $t
    \notin \FV(\phi[t/x])$ implies that $x \notin \FV(\phi)$ and thus
    $\phi[u/x] = \phi[t/x]$.
  \item If $t,u \in \FV(\phi[t/x], \SR) \cup \ADDVARSRHO$, then $\theta' \eqdef
    \theta \lor t \neq u$ and $\D' \eqdef \D$.
  \item Else it holds that $t \in \FV(\phi[t/x], \SR) \cup \ADDVARSRHO$ and $u
    \notin \FV(\phi[t/x], \SR) \cup \ADDVARSRHO$. Then $\theta' \eqdef \theta[t/u]$
    and $\D' \eqdef \D\substdef{u}{t}$. For this subcase, to derive
    property~\ref{prop-r-theta} it is used that the precondition $u \notin
    \phi[t/x]$ implies that $\phi[u/x][t/u] = \phi[t/x]$.
  \end{itemize}
  \medskip
  
\item
  $\begin{array}[t]{rcl}
    \tabrulename{\rulename{Repl}} &    
    \seq \SL\sep t \neq u, \phi[u/x], \phi[t/x], \SR\sep \cG \ipol{\theta}{\D}
    & \hspace{2em} \tabrulecond{\phi \textit{ a negative literal}}
    \\\cmidrule{2-2} &
    \seq \SL\sep t \neq u, \phi[t/x], \SR\sep \cG \ipol{\theta}{\D}
  \end{array}$
  \bigskip

\item
  $\begin{array}[t]{rcl}
    \tabrulename{\rulename{Repl}} &    
    \seq \phi[u/x], \phi[t/x], \SL\sep t \neq u; \SR\sep \cG \ipol{\theta}{\D}
    & \hspace{2em} \tabrulecond{\phi \textit{ a negative literal}}
    \\\cmidrule{2-2} &
    \seq \phi[t/x], \SL\sep t \neq u, \SR\sep \cG \ipol{\theta'}{\D'},
  \end{array}$ \par
  where the values of $\theta'$ and $\D'$ depend on occurrences of $t$ and $u$:
  \begin{itemize}
  \item If $t \notin \FV(\phi[t/x], \SL) \cup \ADDVARSLAM$, then $\theta' \eqdef
    \theta$ and $\D' \eqdef \D$. In this subcase the precondition $t
    \notin \FV(\phi[t/x])$ implies that $x \notin \FV(\phi)$ and thus
    $\phi[u/x] = \phi[t/x]$.
  \item If $t,u \in \FV(\phi[t/x], \SL) \cup \ADDVARSLAM$, then $\theta' \eqdef
    \theta \land t = u$ and $\D' \eqdef \D$.
  \item Else it holds that $t \in \FV(\phi[t/x], \SL) \cup \ADDVARSLAM$ and $u
    \notin \FV(\phi[t/x], \SL) \cup \ADDVARSLAM$. Then $\theta' \eqdef \theta[t/u]$
    and $\D' \eqdef \D\substdef{u}{t}$. To derive property~\ref{prop-l-theta}
    for this subcase, that is, $\valid \lnot \theta[t/u] \lor \phi[t/x], \SL
    \lor \D\substdef{u}{t}$, it is required that $u \notin
    \FV(\D\substdef{u}{t})$, which follows from the precondition $u \notin
    \ADDVARSLAM$ of the subcase.
  \end{itemize}
\end{enumerate}

This completes the proof of Theorem~\ref{thm:mainflatcase}.

\end{document}